 \newtheorem{thm}{Theorem}[section]
 \newtheorem{lem}[thm]{Lemma}
 \newtheorem{prop}[thm]{Proposition}
 \theoremstyle{definition}
 \newtheorem{defn}[thm]{Definition}
 \theoremstyle{remark}
 \newtheorem{rem}[thm]{Remark}
 \numberwithin{equation}{section}
\begin{document}

%
%
%
%
%
%
%
%
%

\title[The Wasserstein geometry of non-linear $\sigma$ models]
 {The Wasserstein geometry of non-linear $\sigma$ models and the Hamilton--Perelman Ricci flow}

\author[Mauro Carfora]{Mauro Carfora}

\address{%
Dipartimento di Fisica, Universita` degli Studi di Pavia\\
and\\ 
Istituto Nazionale di Fisica Nucleare, Sezione di Pavia\\
via A. Bassi 6, I-27100 Pavia, Italy}

\email{mauro.carfora@pv.infn.it}

\thanks{This work has been partially supported by the PRIN Grant  2010JJ4KPA 006  \emph{Geometrical and analytical theories of finite and infinite  dimensional Hamiltonian systems}}
\subjclass{Primary 53C44; Secondary 58D25, 58D30, 81T17}

\keywords{Ricci flow, Wasserstein geometry, metric-measure spaces}


\begin{abstract}
Non linear sigma models are quantum field theories describing, in the large deviations sense, random fluctuations of harmonic maps between a Riemann surface and a Riemannian manifold. Via their formal renormalization group analysis, they provide a framework for possible generalizations of the Hamilton--Perelman Ricci flow. 
By exploiting the heat kernel embedding introduced by N. Gigli and C. Mantegazza, we show that the Wasserstein geometry of the space of probability measures over Riemannian metric measure spaces provides a natural setting for discussing the relation between non--linear sigma models and Ricci flow theory. This approach provides a rigorous model for the embedding of  Ricci flow into the renormalization group flow for non linear sigma models, and characterizes a non--trivial generalization of the Hamilton--Perelman version of the Ricci flow. We discuss in detail the monotonicity and gradient flow properties of this extended flow. 
\end{abstract}

\maketitle
\tableofcontents
\vfill\eject
\section{Introduction}

In 1982, inspired by the theory of harmonic maps, R. Hamilton published  his landmark paper introducing Ricci flow \cite{11}. Over the years, Hamilton's work has been the point of departure and the motivating example for important developments in geometric analysis, most spectacularly in G. Perelman's proof \cite{18, 19, 20} of the  Thurston geometrization program for three-manifolds \cite{thurston1,thurston2}. In the late 70's and early 80's, the Ricci flow independently appeared on the scene also in theoretical physics, in the framework of 2--dimensional Non-Linear sigma Model (\emph{NL$\sigma$M}) theory. From the formal point of view of $\infty $--dimensional geometric analysis, NL$\sigma$M's are quantum field theories describing, in the large deviations sense, random fluctuations of harmonic maps between a Riemannian surface $(\Sigma, \gamma)$ and a $n$--dimensional  Riemannian manifold $(M, g)$. Prescient remarks of their connection with Ricci flow theory date back to Polyakov's $O(n)$ sigma model  (1975) \cite{Poly} and to J. Honerkamp's Chiral multiloops paper (1972) \cite{Honer}; they were made manifest in Daniel Friedan's characterization \cite{DanThesis} of the Ricci flow  as the weak coupling limit of the renormalization group flow for NL$\sigma$M. It must be noted that Friedan's approach \cite{DanThesis,DanPRL, DanAnnPhys} was very geometric and exploited a  sophisticated control over the interplay between the Ricci flow and the diffeomorphism group, actually introducing what later on came out to be known as the DeTurck version \cite{DeTurck} of the Ricci flow. As recalled above, the rationale of the connection between NL$\sigma$M and Ricci flow  may be traced back to the common roots in the geometry of harmonic maps. Attempts at a deeper understanding of this common structure are, however, very difficult to formalize since  renormalization group techniques for non--linear quantum fields remain a challenge for mathematicians, (and physicists as well). We are dealing with a conceptual framework which connects the  dynamical behavior of a physical theory to the analysis of flows in the space of its coupling parameters, and which must be adapted case by case, often with very different and sophisticated mathematical techniques. Speaking such different a language, one deeply rooted in geometrical analysis, the other more appropriate to the mathematical subtleties of quantum field theory, Ricci flow and renormalization group theory for non--linear $\sigma$ model evolved independently  until  
G. Perelman acknowledged the inspiring role played by the NL$\sigma$M effective action in his groundbreaking paper \cite{18}.  This  has drawn renewed attention  to the fact that the Ricci flow is the 1--loop approximation of  the renormalization group (\emph{RG}) flow for non--linear $\sigma$ models, thus providing a framework for the Hamilton--Perelman theory which is open to generalizations \cite{Bakas, Bakas2, Bakas3}, \cite{mauro1}, \cite{cremaschi}, \cite{Douglas}, \cite{Gegenberg}, \cite{GGI}, \cite{Guenther}, \cite{Oliynyk}, \cite{Tseytlin}.  In spite of  this renewed interest, real progress in obtaining viable extensions of the Ricci flow along this line of approach has been strongly hampered by the  fact that the renormalization group flow for NL$\sigma$M is constructed perturbatively, under rather delicate geometrical and physical assumptions, often difficult to formalize. In particular,   it is not clear how we should interpret the perturbative \emph{embedding} of the Ricci flow in the renormalization group of non--linear $\sigma$ models. As a matter of fact,  the perturbative approach provides a hierarchy of truncated geometric flows, generated by powers of the Riemann tensor, which are very difficult to handle and to analyze when the flows develop singularities, (see however the  recent papers \cite{cremaschi}, \cite{Guenther}). Moreover, the validity of this hierarchical extension of the Ricci flow is biased by the fact that it is 
not obvious in what sense a perturbative expansion approximates a full--fledged renormalization flow, (this is the \emph{linearization stability problem} for the RG flow). The purpose of this paper is to address  some of these problems in a geometrical analysis setting. In particular, we show  that Wasserstein geometry and optimal transport offer an ideal framework for addressing  the relation between Ricci flow and renormalization group from a novel point of view.  We discuss  the geometry of dilatonic non--linear $\sigma$ models in terms of maps between Riemannian surfaces $(\Sigma, \gamma)$ and Riemannian measure spaces 
$(M,g,\,d\omega)$, where the measure $d\omega$ describes the dilaton field.  This formulation naturally induces a  scale--dependent renormalization map of the theory provided by the heat kernel embedding of $(M,g,\,d\omega)$ into
the Wasserstein space of probability measures $(\rm{P}rob(M,g),\,d_g^w)$ over $(M,g)$, where  $d_g^w$ denotes the (quadratic) Wasserstein distance in $\rm{P}rob(M,g)$. This construction provides a mathematical well--defined (toy) model for the RG flow for NL$\sigma$M theory.  As we show explicitly, the geometrical deformation of $(M,g,\,d\omega)$ induced by the heat kernel embedding  mimics a renormalization group flow for the corresponding dilatonic non--linear $\sigma$ model.  Wasserstein geometry and optimal transport theory have  recently drawn
attention in attempts of extending the notion of Ricci curvature and Ricci flow
to general metric spaces \cite{lott1}, \cite{lott2}, \cite{sturm3}. The analysis presented in this paper shows  that Wasserstein metric and optimal transport also play a significant role in the rationale relating Ricci flow and renormalization group theory. In particular we prove that a natural extension of the (Hamilton--Perelman version of the) Ricci flow is defined by the RG scaling induced by the heat kernel embedding. This result extends in a non--trivial way a remarkable paper by Nicola Gigli and 
Carlo Mantegazza \cite{Carlo} who have been able to show that the heat kernel embedding in the Wasserstein space of probability measure has a tangent at the origin which is the Ricci flow.  In our case we deal with a weighted heat kernel (the weight being associated with the dilatonic measure $d\omega$), but we found that the results in \cite{Carlo} can be naturally extended to this more general case.
Even if the (weighted) heat kernel renormalization  mimics many properties of the renormalization group flow for  
NL$\sigma$M, it should be clearly stressed that it is not the physical RG flow. The way this latter is perturbatively characterized suggests indeed that the actual RG flow is a singular perturbation of the flow defined by the  heat kernel embedding. The situation is akin to that we encounter in quantum field theory where the  quantum fluctuations of an interacting field can be perturbatively considered as a (singular) perturbation of the underlying formal Gaussian measure  associated to the free field. To support this picture we discuss in detail the problem of characterizing a generalized Wiener measure (\emph{\'a} la Gross) associated with our weighted heat kernel embedding, drawing a parallel with the formal \emph{background field quantization} of NL$\sigma$M. Regardless of the status of the (weighted) heat kernel embedding as a toy (but full--fledged) renormalization group flow for NL$\sigma$M, a result of this paper is that the heat kernel embedding of the metric measure space $(M,g,\,d\omega)$ in the Wasserstein space $(\rm{P}rob(M,g),\,d_g^w)$ emerges as a natural candidate for a geometric flow generalizing the (Hamilton--Perelman) Ricci flow. In particular we prove monotonicity and gradient flow properties of the heat kernel induced Ricci flow, and show that these are genuine extensions of the analogous properties in Ricci flow theory. The whole construction is based on a detailed analysis of the weighted heat kernel embedding and of its geometry in $(\rm{P}rob(M,g),\,d_g^w)$, and it suggests potential extensions of the theory to general metric measure spaces. Again, this is in line with the results of \cite{Carlo} which explores in detail the potentialities of the standard heat kernel embedding as the avatar of the Ricci flow in the non smooth setting. To what extent this is a viable alternative remains, however, a formidable open problem in geometrical analysis. \\
\\ 
\noindent
As for what concerns the structure of the paper, the table of contents is self explanatory. Note that when appropriate we shall often use the acronyms  \emph{NL$\sigma$M} and \emph{RG} for \emph{Non--Linear $\sigma$ Model} and \emph{Renormalization Group}, respectively. To set basic notation, we let $M$ denote a $C^{\infty }$ compact $n$--dimensional manifold, ($n\geq 2$), without boundary, and let  $\mathcal{D}iff(M)$ and $\mathcal{M}et(M)$ respectively be the group of smooth diffeomorphisms and the open convex cone of all smooth Riemannian metrics over $M$. For any $g\in \mathcal{M}et(M)$,  we denote by $\nabla_{(g)}$, (or  $\nabla$ if there is no danger of confusion), the Levi--Civita connection of $g$, and let $\mathcal{R}m(g)=\mathcal{R}^{i}_{klm}\,\partial _i\otimes dx^k\otimes dx^l\otimes dx^m$, $\mathcal{R}ic(g)=\mathcal{R}_{ab}\,dx^a\otimes dx^b$ and $\mathcal{R}(g)$ be the corresponding Riemann,  Ricci and  scalar curvature operators, respectively.

\section{Preliminaries: the geometrical setting}
\label{prelim}

Non-linear $\sigma $-models are strictly related to harmonic map theory and to the geometry of the  space  of maps
\begin{equation}
\phi:\,(\Sigma,\gamma)\,\longrightarrow \,(M,g)\;
\end{equation}
between a $2$--dimensional smooth orientable  surface without boundary $(\Sigma, \gamma)$, with  Riemannian metric $\gamma$, and a Riemannian manifold $(M,g)$. Any such map can be thought of as
endowed with the minimal regularity allowing for the characterization of a (generalized) harmonic energy functional. This can make difficult to work in local charts on the target manifold $M$, even at a physical level of rigor. A way out is to use the Nash embedding theorem \cite{GuntNash}, \cite{Schwartz},  according to which any compact Riemannian manifold $(M,g)$ can be isometrically embedded into some Euclidean space $\mathbb{E}^m\,:=\,(\mathbb{R}^m, \delta )$ for $m$ sufficiently large, (\emph{e.g.}, $m\geq \frac{1}{2}n(n+1)+n$ for free local isometric embeddings \cite{GuntNash}). If $J: (M,g)\hookrightarrow \mathbb{E}^m$, is any such an embedding  we can define the Sobolev space of maps
\begin{equation}
\mathcal{H}^{1}_{(J)}(\Sigma,M)\, :=\,\{\phi\in \mathcal{H}^{1}(\Sigma,\,\mathbb{R}^m)\left.\right|\,\phi(\Sigma)\subset J(M) \}\;,
\end{equation} 
where $\mathcal{H}^{1}(\Sigma,\mathbb{R}^m)$ is the Hilbert space of square summable  $\varphi:\Sigma \rightarrow \mathbb{R}^m$, with (first) distributional derivatives $\in L^2(\Sigma,\mathbb{R}^m)$, endowed with the norm
\begin{equation}
 \parallel \phi \parallel_{\mathcal{H}^{1}}\,:=\,\int_{\Sigma}\,\left(\phi^a(x)\,\phi^b(x)\,\delta _{ab}\,+\,
\gamma^{\mu\nu}(x)\,\frac{\partial \phi^{a}(x)}{\partial x^{\mu}}
\frac{\partial\phi^{b}(x)}{\partial x^{\nu}}\,\delta _{ab}  \right)\,d\mu_{\gamma}\;,
\end{equation}
where, for $\phi(x)\in\,J(M)\subset  \mathbb{R}^m$,\, $a,b=1,\ldots,m$ label coordinates in $(\mathbb{R}^m,\,\delta)$, and $d\mu_{\gamma}$ denotes the Riemannian measure on $(\Sigma,\gamma)$. This characterization is independent of $J$ as long as $M$ is compact, since in that case for any two isometric embeddings $J_1$ and $J_2$, the corresponding spaces of maps  $\mathcal{H}^{1}_{(J_1)}(\Sigma,M)$ and $\mathcal{H}^{1}_{(J_2)}(\Sigma,M)$ are homeomorphic \cite{helein} and one can simply write $\mathcal{H}^{1}(\Sigma,M)$.
The space of smooth maps  $C^{\infty }(\Sigma,M)$ is dense \cite{schoen} in the Sobolev space $\mathcal{H}^{1}(\Sigma, M)$, however maps of class $\mathcal{H}^{1}(\Sigma,M)$ are not necessarily  continuous, and to carry out explicit computations, we further require that any such $\phi \in \mathcal{H}^{1}(\Sigma,M)$ is localizable, (cf.  \cite{jost}, Sect. 8.4), and of bounded geometry. Explicitly, we assume that  for every $x_0 \in \Sigma$ there exists a metric disks
$D(x_0,\,\delta  ):=\{x\in \Sigma|\,d_{\gamma}(x_0,x)\,\leq \,\delta\} \subset \Sigma$, of radius $\delta>0$  and a metric ball  $B(r,\,p):=\{z\in M|\,d_g(p,z)\leq  r\}\subset (M,g)$  centered at $p\,\in\, M$, of radius $r>0$ such that $\phi(D(x_0,\,\delta  ))\subset B(r,\,p)$, with 
\begin{equation} 
r\,<\, r_0\,:=\,\min\left\{\frac{1}{3}\,inj\,(M),\,\frac{\pi }{6\,\sqrt{\kappa }}  \right\}\;,
\label{errezero1}
\end{equation}
where $inj\,(M)$ and $\kappa $ respectively denote the injectivity radius of $(M,g)$, and the upper bound to the sectional curvature of $(M,g)$, (we are adopting the standard convention of defining $\pi/2\,\sqrt{\kappa }\doteq\infty$ when $\kappa \leq0$). Under such assumptions, one can use local coordinates also for maps in $\mathcal{H}^{1}(\Sigma, M)$. In particular for any $\phi\in \mathcal{H}^{1}(D(x_0,\,\delta  ), M)$ we can introduce local coordinates
$x^{\alpha }$,\, for points in $(D(x_0,\delta),\Sigma)$, and
 $y^k=\phi^{k}(x)$, \, $k=1,\ldots,n$,  for the corresponding image points in $\phi(D(x_0,\delta))\subset M$, and, by using a partition of unity,  work locally in the 
smooth framework provided by  the  space of  smooth maps 
\begin{equation}
Map\;(\Sigma,M):=\left\{\phi: \Sigma\rightarrow  M,\, x^{\alpha}\longmapsto y^k=\phi^{k}(x) \in C^{\infty }(D(x_0,\delta),M)\right\}\;.
\label{mapdefA}
\end{equation}
\noindent
Under these regularity hypotheses, we can introduce the  pull--back  bundle $\phi^{-1}TM$ whose sections $v\equiv \phi^{-1}V:= V\circ \phi$,\,  $V\in C^{\infty }(M, TM)$,\, are the vector fields over $\Sigma$ covering the map $\phi$. If $T^*\Sigma$ denotes the cotangent bundle to $(\Sigma, \gamma)$, then the differential $d\phi=\frac{\partial\phi^i}{\partial x^{\alpha}}dx^{\alpha}\otimes \frac{\partial }{\partial \phi^i}$ can  be interpreted as a section of  $T^*\Sigma\otimes \phi^{-1}TM$, and its
Hilbert--Schmidt norm, in the bundle metric
\begin{equation}
\label{bundmetr}
\langle\cdot ,\cdot \rangle_{T^*\Sigma\otimes\phi^{-1}TM}\,:=\,
\gamma^{-1}(x)\otimes g(\phi(x))(\cdot ,\cdot )\;,
\end{equation}
is provided by, (see \emph{e.g.} \cite{jost}), 
\begin{equation}
\langle d\phi , d\phi \rangle_{T^*\Sigma\otimes\phi^{-1}TM}\,=\,
\gamma^{\mu\nu}(x)\,\frac{\partial \phi^{i}(x)}{\partial x^{\mu}}
\frac{\partial\phi^{j}(x)}{\partial x^{\nu}}\,g_{ij}(\phi(x))=\,tr_{\gamma(x)}\,(\phi^{*}\,g)\;.
\end{equation} 
 The corresponding density 
\begin{equation}
e(\phi)\,\,d\mu_{\gamma}\,:=\,\frac{1}{2}\,\langle d\phi , d\phi \rangle_{T^*\Sigma\otimes\phi^{-1}TM}\, d\mu_{\gamma}\,
=\,\frac{1}{2}\,tr_{\gamma(x)}\,(\phi^{*}\,g)\,d\mu_{\gamma}\;,
\end{equation}
where $d\mu_{\gamma}$ is the volume element of the Riemannian surface $(\Sigma,\gamma)$,
is conformally invariant under  two-dimensional conformal transformations 
\begin{equation}
(\Sigma ,\gamma_{\mu\nu})\mapsto (\Sigma ,e^{-\psi}\,\gamma_{\mu\nu})\;,\;\;\;\;\;\; \;
\psi\in C^{\infty }(\Sigma ,\mathbb{R})\;,
\end{equation} 
and defines the harmonic map energy density associated with  $\phi\in Map(\Sigma,M)$.
In particular, the critical points of the functional 
\begin{equation}
E[\phi,\,g]_{(\Sigma,M)}\,:=\,\int_\Sigma\,e(\phi)\,d\mu_\gamma\;,
\label{Msigmamod0}
\end{equation}
are  the harmonic maps of the Riemann surface $(\Sigma, [\gamma]) $ into $(M,g)$, where $[\gamma]$ denotes the conformal class  of the metric $\gamma$.
\begin{rem}
More generally, for $\phi\in \mathcal{H}^{1}(\Sigma, M)$, the critical points of  (\ref{Msigmamod0}) define weakly harmonic maps $(\Sigma, \gamma) \rightarrow (M, g)$. However, for surfaces, weakly harmonic maps are harmonic \cite{helein0}. 
\end{rem}
\noindent In terms of  the possible geometrical characterization of $(\Sigma, \gamma)$ and $(M,g)$, important examples of harmonic maps   include harmonic functions, geodesics, isometric minimal immersions, holomorphic (and anti--holomorphic) maps of K\"ahler manifolds. It is worthwhile to observe that in such a rich panorama, also the seemingly trivial case of constant maps  plays a basic role for the interplay between Ricci flow and (the perturbative quantization of)  non--linear $\sigma$ models.\\
\\
\noindent
In order to make the paper self--contained, we conclude this preliminary part with a capsule of the physical framework relating non--linear $\sigma$ models to Ricci flow. 

\subsection{NL$\sigma$M and Ricci flow in a nutshell}
\label{selfcont}

 In the above geometrical framework, let us consider on $Map\;(\Sigma,M)$ the action functional 
\begin{equation}
\mathcal{S}[\gamma,\phi;\, a, g, f]=a^{-1}\,\int_{\Sigma }\left[ tr_{\gamma(x)}\,(\phi^{*}\,g)+\,a\,f(\phi)\,\mathcal{K}\right]\,d\mu _{\gamma }\;.
\label{eqnActInt}
\end{equation}
where  $a>0$ is a parameter with the dimension of a length squared, $f:M\to \mathbb{R}$ is a  function on $M$, and $\mathcal{K}$ is the Gaussian curvature of $(\Sigma,\gamma)$. Geometrically the energy scale of the action $\mathcal{S}[\gamma,\phi;\, a, g, f]$ is set by the \emph{dilaton} coupling $\left[\,f(\phi)\,\mathcal{K}\right]$ and by the length scale of the target space 
metric $g_{ab}$, \emph{i.e.} \; $|Rm(g,y)|\,a$,\; where $|Rm(g,y)|:=[R^{iklm}R_{iklm}]^{1/2}$. It follows that the background fields $f\in C^{\infty }(M,R)$ and $g\in \mathcal{M}et(M)$ play the role of  point dependent coupling parameters $\alpha$ on $M$,  
\begin{equation}
\label{agf}
\alpha\,:=\,\left(a,\,g,\,f\right)\;,
\end{equation}
controlling the energetics of the action $\mathcal{S}[\gamma,\phi;\alpha]$. 
In quantum theory this fiducial action, together with its possible deformations, describes a family of $2$--dimensional QFTs known as (dilatonic) non--linear $\sigma$--models. They find applications ranging from  condensed matter physics to string theory.
\begin{rem}
Further coupling terms can be added to the action $\mathcal{S}[\gamma,\phi;\alpha]$, (see \emph{e.g.} \cite{gawedzki}),  in particular $a^{-1}\, \int_\Sigma U(\phi)\;d\mu_\gamma$ and $ a^{-1}\,\int_\Sigma \phi^*\Xi  $
 where $U\in C^{\infty }(M,R)$ and $\Xi \in C^{\infty }(M,\wedge  ^{2}\,TM^{*})$ is a 2-form on $M$. In order to discuss the role that Wasserstein geometry plays in the interplay between Ricci flow and non--linear $\sigma$ models we limit our analysis to the dilatonic field $f$.
\end{rem}
\noindent The (Euclidean) QFT associated with (\ref{eqnActInt})  is  characterized by the measure on $Map(\Sigma,M)$
 formally defined by 
\begin{equation}
D_{\alpha}[\phi]\;  e^{-S[\gamma,\phi; \alpha] }\;,
\label{correlations}
\end{equation}
and by its moment generating function, (correlations). Here $D_{\alpha}[\phi] $ is a (non--existent)  functional measure on $Map(\Sigma,M)$, possibly depending on the couplings $\alpha$, and normalized so that (\ref{correlations}) is  a probability measure. The somewhat fanciful expression (\ref{correlations}) hardly makes  sense, even at a physical level of rigor,  if we do not devise a way of controlling the spectrum of fluctuations of the fields $\phi\in Map(\Sigma,M)$.  Indeed, if we denote by $\mathcal{C}$ the space in which the couplings $\alpha$ are allowed to vary, then the  fundamental problem concerning (\ref{correlations}) is to introduce a filtration\footnote{We use here the term filtration in a rather loose sense. The relevant notion of progressively measurable maps over a non--decreasing family of sigma algebras in the appropriate functional space will be discussed in Section \ref{renGroup}.} in $\{Map(\Sigma,M)\times\mathcal{C}\,, e^{-S[\gamma,\phi; \alpha] }\,D_{\alpha}[\phi] \}$ ,
\begin{eqnarray}
\mathcal{RG}^{\tau}:\,[Map(\Sigma,M)\times\mathcal{C}]&\longrightarrow& [Map(\Sigma,M)\times\mathcal{C}]\nonumber\\
\label{reflow}\\
(\phi,\alpha)\;\;\;\;\;\;\;\;\;\;&\longmapsto&\;\;\; \mathcal{RG}^{\tau}(\phi,\alpha)=({\phi}_{\tau};{\alpha}(\tau))\;,\nonumber
\end{eqnarray}
which, as we vary the scale of distances $\tau$ at which we probe the Riemannian surface $\Sigma$, allows to tame the energetics of the fluctuations of the fields $\phi : \Sigma\to M$ in terms of a renormalization of the couplings $\alpha\mapsto\alpha(\tau)$. Such a filtration characterizes the 
\emph{renormalization group flow} associated with the measure space $\{Map(\Sigma,M)\times\mathcal{C},\,e^{-S[\gamma,\phi; \alpha] }\,D_{\alpha}[\phi] \}$.\\
\\
\noindent In order to describe this procedure in physical terms, select two scales of distances, say $\Lambda^{-1}$ and $\Lambda'^{-1}$, (one can equivalently interpret $\Lambda$ and $\Lambda'$ as the respective scales of momentum in the spectra of field fluctuations), with   $\Lambda'^{-1}>\Lambda^{-1}$.  The general idea, central in K.G. Wilson's analysis of the renormalization group flow, is to assume that, at least for $({\Lambda'}\setminus {\Lambda})$ small enough, we can put the $\mathcal{RG}^{\tau}$ push--forward of the functional measure $D_{\alpha}[\phi]\;  e^{-S[\gamma,\phi; \alpha] }$, \emph{viz.}\,\,   ${\mathcal{RG}}^{\Lambda\Lambda'}_{\sharp }\,(D_{\alpha}[\phi])\;  e^{-{\mathcal{RG}}^{\Lambda\Lambda'}\,S[\gamma,\phi; \alpha]}$, in the same form as the original functional measure, except for a small modification of the couplings $\alpha$. Explicitly, let $\Lambda'=e^{-\tau}\,\Lambda$, with $0<\tau<1$, and assume that for every such $\tau$ there exists a corresponding coupling $\alpha+\delta\,\alpha$ such that the following identity holds
\begin{equation}
\mathcal{RG}^{\tau}_{\sharp }\,(D_{\alpha}[\phi])\;  e^{-{\mathcal{RG}}^{\tau }\,S[\gamma,\phi; \alpha]}= D_{\alpha+\delta\alpha}[\phi]\;  e^{-S[\gamma,\phi; \alpha+\delta\alpha] }\;,
\label{infrenor}
\end{equation}
where we have denoted ${\mathcal{RG}}^{\tau}_\sharp $ the push--forward action of the map ${\mathcal{RG}}^{\Lambda\Lambda'}$ for $\Lambda'=e^{-\tau}\,\Lambda$.
In other words, we assume that an infinitesimal change in the cutoff can be completely \emph{absorbed} in an infinitesimal change of the couplings. If this equation  is valid at least to some order in $\tau$,  we can iteratively use it to see how the couplings $\alpha$ are affected by a finite change of the cutoff. If the theory is, along the lines sketched above, renormalizable by a renormalization of the couplings, many of its properties can be obtained by the analysis of the $\beta$--flow \emph{vector field} defined on the space of couplings $\mathcal{C}$ by 
\begin{equation}
\beta(\alpha(\tau)) := -\frac{\partial}{\partial \tau}\,\alpha(\tau)\;,
\label{betaflow}
\end{equation}
which formally appears as a natural geometrical flow on the measure space  $\left\{Map(\Sigma,M)\times\mathcal{C};\,D_{\alpha(\tau)}[\phi_{\tau}]\right\}$.\\
\\
\noindent
 According to (\ref{agf}),  the space of couplings $\mathcal{C}$ for the dilatonic non-linear $\sigma$-model action (\ref{eqnAct})  can be identified with the product of  $C^{\infty }(M,\mathbb{R})$, (where the dilatonic coupling $f$ varies), with the infinite--dimensional stratified manifold of Riemannian structures on $M$, (parametrizing the $\mathcal{D}iff(M)$-classes of metric couplings $g$), modulo overall length rescalings, (associated with the choice of the length parameter $a^{1/2}$), \emph{i.e.}
\begin{equation}
\mathcal{C}=C^{\infty }(M,\mathbb{R})\times \frac{\mathcal{M}et(M)}{\mathcal{D}iff(M)\times \mathbb{R}^{+}}
\end{equation}
where $\mathbb{R}^{+}$ denotes the group of rescalings defined by $a^{1/2}\mapsto\lambda a^{1/2}$, for $\lambda$ a positive number. As we have recalled, the true dimensionless coupling constants of the theory are the ratio of the length scale of the target space 
metric $g_{ab}(\phi(x))$  to $a$, and the dilatonic coupling $f(\phi(x))$. This remark has two important consequences:\\
\noindent \emph{(i)} It implies that as long as the curvature of target Riemannian manifold $(M,g)$ is small as seen by $(\Sigma, \gamma(x))$, \emph{viz.}\,$|Rm(g,\phi(x))|\,a\,<<1$, $\forall x\in\Sigma$, then  the formal  measure  $D_{(g,f)}[\phi]\,e^{-S[\phi;\,\alpha]}$ is concentrated
around the minima of the fiducial action $S[\phi;\,\alpha_{f=0}]$, \emph{i.e.} the constant maps $x\to\phi(x)=\phi_{0}$. \;\\
\noindent  \emph{(ii)} It also implies that the dilatonic coupling $f$ fluctuates around the constant value $f_0:=f(\phi_{0})$. In such a framework one can
 control the nearly Gaussian fluctuations $\delta\phi$ of $\phi$, and one can  address the (perturbative) renormalization group analysis described above \cite{DanAnnPhys, shore, Tseytlin3, Tseytlin4} so as to obtain  a perturbative $\beta$--flow for the coupling fields $\alpha=(g,\,a f)$. At leading order 
one gets
\\
\begin{equation}
\frac{\partial }{\partial \tau}\,g_{ik}(\tau)=2a\,\left(R_{ik}(\tau) \,+2\nabla _i\nabla _k{f}(\tau)\right)\,+\,\mathcal{O}\left((|Rm|a)^2\right)\;,
\label{2loopsfN}
\end{equation}
\begin{equation}
\frac{\partial }{\partial \tau}\,{f}(\tau)={c_{0}}- 2a\,\left(\frac{1}{2}\,\Delta {f}(\tau)-|\nabla {f}(\tau)|^{2}\right) \,+\,\mathcal{O}\left((|Rm|a)^2\right)\;,
\label{2f}
\end{equation} 
where $R_{ik}(\tau)$ denotes the Ricci tensor of $(M,g(\tau))$, and where  $c_{0}:=(\dim M-26)/6$ is the central charge, playing in QFT the role of dimension. 
If we set $\widehat{f}:=2f-2\,c_0\,\tau$ and  pass to the dimensionful variable $t:=\,-\,a\tau$, then, as $a\searrow 0$, the renormalization group flow (\ref{2loopsfN}) reduces to  Hamilton's Ricci flow deformed \cite{DeTurck} by the action of the $t$--dependent diffeomorphism generated by the vector field $W^i(t):=g^{ik}\nabla_k\,\widehat{f}(t)$, \emph{i.e.}, 
\\
\begin{equation}
\frac{\partial }{\partial t}\,g_{ik}(t)=-2\,\left(R_{ik}(t) \,+\,\nabla _i\nabla _k\widehat{f}(t)\right)\;,
\label{2loopsfRF}
\end{equation}
\begin{equation}
\frac{\partial }{\partial t}\,\widehat{f}(t)=\Delta\,\widehat{f}(t)-|\nabla \widehat{f}(t)|^2 \;.
\label{2fRF}
\end{equation}
\begin{rem}
If along the above RG flow we impose the (somewhat unphysical) dilatonic measure preservation constraint $\tfrac{\partial }{\partial t}\,e^{-\widehat{f}(t)}\,d\mu_{g_t}=0$, we get 
\begin{equation}
\frac{\partial }{\partial t}\,e^{-\widehat{f}(t)}\,d\mu_{g_t}=0=\left(|\nabla \widehat{f}(t)|^2-2\Delta\,\widehat{f}(t)-R(g(t)) \right)e^{-\widehat{f}(t)}\,d\mu_{g_t}\;.
\end{equation}
Inserted back in (\ref{2fRF}), this constraint gives rise to the Perelman version of the Ricci flow 
\begin{equation}
\frac{\partial }{\partial t}\,g_{ik}(t)=-2\,\left(R_{ik}(t) \,+\,\nabla _i\nabla _k\widehat{f}(t)\right)\;,
\label{2loopsfRFPer}
\end{equation}
\begin{equation}
\frac{\partial }{\partial t}\,\widehat{f}(t)=-\Delta\,\widehat{f}(t)-\mathcal{R}(g(t)) \;,
\label{2fRFPer}
\end{equation}
which couples the forward Ricci flow with a backward parabolic evolution of the dilatonic potential.  
\end{rem}

\section{The Wasserstein geometry of the dilaton field}

Let $(M,g,\,d\omega)$ be a  $n$--dimensional compact \emph{Riemannian metric  measure space} \cite{grigoryan}, \cite{gromov}, \emph{i.e.} a smooth orientable manifold, without boundary, endowed with a  Riemannian metric $g$ and a positive Borel measure $d\omega\,<<\,d\mu_g$, absolutely continuous  with respect to  the  Riemannian 
volume element, ${d\mu_g}$. Strictly speaking,  $(M,g,\,d\omega)$ characterizes a \emph{weighted Riemannian manifold}, (or \emph{Riemannian manifold with density}), the corresponding metric measure space being actually defined by $(M, d_g(\cdot ,\cdot ),\,d\omega)$,  where 
$d_g(x,y)$ denotes the Riemannian distance on $(M,g)$.  By a slight abuse of notation, we shall use $(M,g,\,d\omega)$ and $(M, d_g(\cdot ,\cdot ),\,d\omega)$  interchangeably. In such a framework, the two--dimensional dilatonic non--linear $\sigma$ model is defined by a rather natural  extension of the harmonic energy functional $E[\phi,\,g]$ to  $(M,g,\,d\omega)$. We start by recalling that the set ${\rm Meas}(M)$ of all smooth  Riemannian metric measure spaces  can be characterized as
\begin{equation}
{\rm Meas}(M):=\left\{(M,g; d\omega)\left.\right|(M,g)\in\mathcal{M}{\rm
et}(M),\,d\omega\in \mathcal{B}(M,g)  \right\}\;,
\end{equation}
where  $\mathcal{B}(M,g)$ is the set of positive Borel measure on $(M,g)$ with $d\omega<<d\mu_g$. Since in the compact--open $C^{\infty }$ topology $\mathcal{M}{\rm et}(M)$ is contractible, 
the space ${\rm Meas}(M)$ fibers trivially over $\mathcal{M}{\rm et}(M)$. In particular, the fiber $\pi^{-1}(M,g )$ can be identified with the set of all (orientation preserving) measures $d\omega\,<<\,d\mu_g$ over 
the given $(M,g)$, 
\begin{equation}
{\rm Meas}(M,g)\,:=\, \left\{d\omega\,\in {\rm Meas}(M)\,:\, d\omega<< d\mu_g \right\}\;, 
\end{equation}
endowed with the topology of weak convergence. There is a natural action of the group of 
diffeomorphisms  $\mathcal{D}iff(M )$ on  the space ${\rm Meas}(M)$, defined by
\begin{eqnarray}
\label{Dequiv}
\mathcal{D}iff(M )\times {\rm Meas}(M) &\;\longrightarrow\;& \;\;\;\;\;{\rm Meas}(M)\\
(\varphi;\;g,\,d\omega)\;\;\;\;\;\;\;\;\;&\longmapsto \;& (\varphi^*g,\;\varphi^*d\omega)\nonumber\;,
\end{eqnarray}
where $(\varphi^*g,\,\varphi^*d\omega)$ is the pull--back under $\varphi\in \mathcal{D}iff(M )$. The Radon--Nikodym derivative $\wp (g,\,d\omega)\,:=\,\tfrac{d\omega}{d\mu_g}$ is  a local \emph{Riemannian measure space invariant} \cite{gursky} under this action, \emph{i.e.} 
\begin{equation}
\label{equiv}
\wp (\varphi^*g,\,\varphi^*(d\omega))=\varphi^*\,\wp(g,\,d\omega)=\tfrac{d\omega}{d\mu_g}\circ \varphi\;,
\;\;\;\;\forall \varphi\in \mathcal{D}iff(M )\;,
\end{equation}
  and  we can introduce the 
\begin{defn} The geometrical dilaton field associated with the Riemannian metric measure space $(M,g,\,d\omega)$ is defined by the map
\begin{eqnarray}
\label{geodildef}
f\,:\,{\rm Meas}(M)\,&\longrightarrow &\,C^\infty(M,\mathbb{R})\\
(M,g,\,d\omega)\,&\longmapsto &\,f(M,g,\,d\omega)\,:=\,-\ln\left(V_g(M)\,\frac{d\omega}{d\mu_g} \right)\;,\nonumber
\end{eqnarray}
where $V_g(M):=\int_M\,d\mu_g$, and we can parametrize ${\rm Meas}(M,g)$ according to 
\begin{equation}
{\rm Meas}(M,g)\,=\, \left\{d\omega = e^{-f}\,\tfrac{d\mu _{g}}{V_g(M)}: f\in
C^{\infty }(M ,\mathbb{R})\right\} \;. \label{unomeas}
\end{equation}
\end{defn}
\begin{rem}
\label{remDiff}
In Riemann measure spaces, (and Ricci flow theory), sometimes it is necessary to restrict the action of $\mathcal{D}iff(M )$ to the metric $g$ alone and leave the measure $d\omega$ fixed, (cf. section 5 in \cite{gursky}). Absolute continuity of $d\omega$ with respect to $d\mu_g$ implies that in such a case  $f$ is not a scalar
and  under the action of a diffeomorphism $\varphi \in \mathcal{D}iff(M )$ it  transforms as a density according to    
\begin{equation}
\label{expJac}
e^{-\,f(\varphi^*g,\;d\omega)}\,=\,\,\left(Jac_{\,d\mu_g}\varphi^{-1}\right)\,e^{-f}\;,
\end{equation}
where $Jac_{\,d\mu_g}\varphi^{-1}$ denotes the Jacobian of $\varphi^{-1}$ with respect to $d\mu_g$. However,  a subtler interpretation is possible \cite{gursky} which still preserves the invariant nature of the dilaton. To wit, according to (\ref{expJac}) and Moser theorem \cite{moser}, one can consider $\varphi\in \mathcal{D}iff(M )$ as generating an active transformation on ${\rm Meas}(M,g)$ mapping the measure $d\omega$ into another measure $d\omega':= (\varphi^{-1})^*(d\omega)$, (with the same total mass $\int_M\,d\omega$). By composing this mapping on ${\rm Meas}(M,g)$ with the induced pull--back action on the resulting measure  we get from (\ref{equiv}) that the Radon--Nikodym derivative $\wp (g,\,d\omega)$ transforms according to
\begin{eqnarray}
\label{}
&&\wp (g,\,d\omega)\longmapsto \wp (\varphi^*g,\,\varphi^*(d\omega'))\,=\,\varphi^*\,\wp(g,\,d\omega')\\
\nonumber\\
&&=\,\varphi^*\,\wp(g,\,(\varphi^{-1})^*(d\omega))\,=\,\left(Jac_{\,d\mu_g}\varphi^{-1}\right)\frac{d\omega}{d\mu_g}\;,\nonumber
\end{eqnarray}
which is consistent both with (\ref{expJac}) and the local Riemannian measure space invariant nature of the dilaton.
If not otherwise stated we shall consider the natural action on $(M,g,\,d\omega)$ given by (\ref{Dequiv}).
\end{rem}

\noindent With these remarks along the way, we can geometrically characterize the action functional (\ref{eqnActInt}) according to
\begin{defn} (\emph{The dilatonic NL$\sigma$M action}).
If $\phi\in \mathcal{H}^{1}(\Sigma,M)$ is a localizable map, then the associated non--linear $\sigma$ model dilatonic action with coupling parameters $a\in\mathbb{R}_{>0}$ and  $(M,g,\,d\omega)\in {\rm Meas}(M,g)$,  is defined by 
\begin{eqnarray}
\label{eqnAct}
&& (\Sigma,\gamma)\times \mathcal{H}^{1}(\Sigma,M)\times \left[\mathbb{R}_{>0}\times {\rm Meas}(M,g)\right]\,\longrightarrow \,\mathbb{R}\\
\nonumber\\
&&(\gamma,\phi;\,a,\,(M,g,\,d\omega))\,\;\;\;\;\;\;\longmapsto \,\;\;\;\;\;\;\;\;\;\;\;\;{S}[\gamma,\phi;\,\,a,\,g,\,d\omega]\nonumber\\
\nonumber\\
&&:=\, a^{-1}\,\int_{\Sigma}\,\left[ tr_{\gamma(x)}\,(\phi^{*}\,g)-\,a\,\mathcal{K}_{\gamma}\,
\ln\,\phi^*\left(\frac{d\omega}{d\mu_g}\,V_g(M)\right)\right]\,d\mu _{\gamma }\nonumber\\
\nonumber\\
&&:=\,\frac{2}{a}\,E[\phi,\,g]_{(\Sigma,M)}\,+\,
\int_{\Sigma}\,\mathcal{K}_{\gamma}\,f(\phi)\,d\mu _{\gamma } \;,\nonumber
\end{eqnarray}
\\
\noindent
where $\mathcal{K}_{\gamma}$ is the  Gaussian curvature of the Riemannian surface $(\Sigma,\gamma)$.
\end{defn}
\begin{rem} 
As already stressed in the introductory remarks, the parameter $a>0$ sets the (squared) length scale at which the pair  $\left(\phi(\Sigma),\,{S}\right)$ probes the target metric measure space $(M,g,\,d\omega)$. Writing $\ln\phi^*\left(\frac{d\omega}{d\mu_g}\,V_g(M)\right)$ in place of  $-f(\phi)$ may appear pedantic, however it emphasizes the often overlooked fact that the \emph{dilaton} coupling $f$ actually is the ($Diff(M)$--equivariant) assignment of a measure $d\omega$ in $(M,g)$, and that we are dealing with a metric measure space and not just with a Riemannian manifold. In the definition, we also stressed the role of ${\rm Meas}(M,g)$ as the space of \emph{point--dependent coupling parameters} for dilatonic non--linear $\sigma$ models. As we have recalled, this latter aspect features prominently in the perturbative quantization of the model and its connection to Ricci flow. As we shall see, it plays a basic role also in motivating the use of Wasserstein geometry  in the theory. 
\end{rem}
\begin{rem}
It is also important to recall again that, in stark contrast with the harmonic map energy (\ref{Msigmamod0}), the dilatonic term in (\ref{eqnAct}),  
\begin{equation}
-\,\int_{\Sigma}\,\mathcal{K}_{\gamma}\,
\ln\,\phi^*\left(\frac{d\omega}{d\mu_g}\,V_g(M)\right)\,d\mu _{\gamma }= \,\int_{\Sigma}\,\mathcal{K}_{\gamma}\,f(\phi)\,d\mu _{\gamma } \;,
\label{dilaction}
\end{equation}
is not conformally invariant. As is well known, and as first stressed by E. Fradkin and A. Tseytlin \cite{FradTseyt}, the role of this term is to restore the conformal invariance of $E[\phi,g]$ which is broken upon quantization.
\end{rem}
 To discuss the role that Wasserstein geometry plays in non--linear $\sigma$ model theory,  we introduce the following characterization
\begin{defn}
If ${\rm Prob}(M)$ denote the set of all Borel probability measure on the manifold $M$, then the space of probability--normalized dilaton fields over $(M,g)\in\mathcal{M}{\rm et}(M)$, is the dense subspace of ${\rm Prob}(M)$  defined by  
\begin{equation}
{\mathcal DIL}_{(1)}(M,g)\,:=\, \left\{\left. d\omega\in {\rm Prob}(M)\right|\, d\omega:=e^{-f}\,\tfrac{d\mu _{g}}{V_g(M)},\, f\in
C^{\infty }(M ,\mathbb{R})\right\}\;. \label{measuno}
\end{equation}
\end{defn} 
\begin{rem} 
It is worthwhile stressing that the restriction  to ${\mathcal DIL}_{(1)}(M,g)$  is somewhat unphysical from the point of view of non--linear $\sigma$ model theory, since it constrains a priori the dilaton field to be associated to a probability measure. As artificial as it may appear, this restriction plays a basic role in Perelman's analysis of the Ricci flow, and it turns out to be appropriate also  in the geometric analysis of the interaction between non--linear $\sigma$ model, heat kernel embedding, and Ricci flow discussed here.
\end{rem}
Clearly ${\mathcal DIL}_{(1)}(M,g)\,\approx {\rm Prob}_{ac}(M,g)$, the set of absolutely continuous probability measures $d\omega<<d\mu_g$ on $(M,g)$. We use this identification to characterize  
${\mathcal DIL}_{(1)}(M,g)$   as an infinite dimensional manifold locally modelled over the Hilbert space completion of the smooth tangent space
\begin{equation}
T_{\omega}Prob_{ac}(M,g )\,\simeq 
\{h \in C^{\infty }(M,\mathbb{R}),\,\int_{M}h\,d\omega =0\}\;,
\end{equation}
with respect to the Otto inner product on ${\rm Prob}_{ac}(M,g)$  defined, at the given $d\omega=\,V^{-1}_g(M)e^{-f}d\mu_{g}$, by the $L^2(M,d\omega)$ Dirichlet form \cite{otto1}
\begin{equation}
\left\langle \nabla \varphi , \nabla \zeta \right\rangle _{(g,d\omega)}\,\doteq\,
\int_{M }\left(g^{ik}\,\nabla _{k}\varphi \,\nabla _{i}\zeta\right) \,d\omega\;,
\label{inner}
\end{equation}
for any $\varphi$, $ \zeta $ $\in C^{\infty }(M, \mathbb{R})/\mathbb{R}$. We set
\\
\begin{eqnarray}
\label{HilbSPace}
&&T_{f}{\mathcal DIL}_{(1)}(M,g)\\
\nonumber\\
&&\doteq \, \overline{\left\{h \in C^{\infty }(M,\mathbb{R}),\,\int_{M} h\,d\omega =0\right\}}^{L^2(M, d\omega)}\; .\nonumber
\end{eqnarray}
Under this identification, one can represent infinitesimal deformations of the dilaton field $d\omega$, (thought of as vectors in $T_{f}{\mathcal DIL}_{(1)}(M,g)$ $\simeq $  $T_{\omega}Prob_{ac}(M,g )$), in terms of the mapping 
\begin{eqnarray}
T_{f}{\mathcal DIL}_{(1)}(M,g)\times {\mathcal DIL}_{(1)}(M,g)&\longrightarrow & 
C^{\infty }(M ,\mathbb{R})/\mathbb{R}\; ,\\
(h,\,d\omega=\,V^{-1}_g(M)e^{-f}d\mu_{g})\;\;\;\;\;\;\;&\longmapsto &\;\;\;\;\; \psi \nonumber\;,
\end{eqnarray}
where the function $\psi $ associated to the given $(h,d\omega)$ is formally determined on  $(M ,g)$ by the elliptic partial differential equation
\begin{equation}
-\nabla ^{i}\left( e^{-f}\,\nabla _{i}\psi \right) =h\,e^{-f},
\label{otto110}
\end{equation}
under the equivalence relation  identifying any two such 
solutions differing by an additive constant. Recall that if $\mathcal{L}_V\,d\omega$ denotes the Lie derivative of the volume form $d\omega$ along the vector field $V\in\,C^{\infty }(M,TM)$, then the weighted divergence  associated with the Riemannian measure space $(M,g,\,d\omega)$ is defined by
\begin{equation}
\label{weightdiv}
\mathcal{L}_V\,d\omega\,=\,\left({div}_{\omega}\,V\right)\,d\omega\,=\, \left[e^{f}\,\nabla_i\,\left(e^{\,-\,f}\,V^i  \right)\right]\,d\omega\;.
\end{equation}
It follows that the elliptic equation (\ref{otto110}) can be equivalently written as
\begin{equation}
\bigtriangleup _{\omega}\,\psi\, =\,-\,h\;,
\label{otto11}
\end{equation}
where $\bigtriangleup _{\omega}$ denotes the  weighted Laplacian on $(M,g,\,d\omega)$   \cite{chavelfeldman}, \cite{davies}, \cite{grigoryan},   
\begin{equation}
\label{Wlap0}
\bigtriangleup _{\omega}\,:=\,{div}_{\omega}\,\nabla \,=\,\bigtriangleup _{g}\,-\,\nabla f\cdot \nabla\;.
\end{equation}
Let us denote by $G_\omega:\,M\times M\,\longrightarrow\, \mathbb{R}$ the Green function associated to the operator $-\,\bigtriangleup _{\omega}$ acting on functions with vanishing $d\omega$--mean, $\int_{M}\,\psi(y)\,d\omega(y)\,=\,0$, and satisfying $\int_{M}\,G_\omega(x,y)\,d\omega(y)\,=\,0$,
\emph{i.e.} the solution of 
\begin{eqnarray}
\bigtriangleup _{\omega,x}\,G_\omega(x,y)\,=\,-\,\delta_y\,+\,1\;,
\end{eqnarray}
where $\delta_y\,d\omega$ is the Dirac measure at $y\in(M,g,\,d\omega)$. On $M\times M$ minus the diagonal $\{x=y\}$,\, $G_\omega(x,y)$ is smooth and we can write
\begin{equation}
\psi(x)\,=\, \mathcal{G}_\omega\,h(x)\,:=\int_M\,h(y)\,G_\omega(x,y)\,d\omega(y)\;,
\label{GWL}
\end{equation}
whenever the integral makes sense. Conversely,   given $\varphi \in C^{\infty }(M ,\mathbb{R})/\mathbb{R}$,  we can formally define a vector
\begin{eqnarray}
{\rm Prob}_{ac}(M,g)\,&\longrightarrow &\,T_{\omega}{\rm Prob}_{ac}(M,g)\\
d\omega\,&\longmapsto &\,\vee _\varphi\nonumber\;,
\end{eqnarray}
by its action on smooth functionals $\mathcal{J}\in C^{\infty}({\rm Prob}_{ac}(M,g),\,\mathbb{R})$ according to \cite{lott1b}
\begin{equation}
\label{lapvee}
\left(\vee_\varphi\, \mathcal{J}\right)(d\omega)\,=\,\left.\frac{d}{d\epsilon}\right|_{\epsilon=0}
\,\mathcal{J}\left(d\omega-\epsilon\,{\bigtriangleup }_{\omega}\,\varphi\,d\omega\right)\;,
\end{equation}
and we have the isomorphism
\begin{eqnarray}
\label{veciso}
C^{\infty }(M ,\mathbb{R})/\mathbb{R}\,&\longrightarrow &\,T_\omega{\rm Prob}_{ac}(M,g)\simeq T_{f}{\mathcal DIL}_{(1)}(M,g)\\
\psi\,&\Longleftrightarrow &\,\vee _\psi\nonumber\;.
\end{eqnarray}

\subsection{Diffeomorphism group and metric measure spaces}
\label{dgmms}
According to the definition (\ref{HilbSPace}) and the above remarks,  the tangent space $T_{\omega}{\rm Prob}_{ac}(M,g)$ at the given dilatonic measure $d\omega\in {\rm Prob}_{ac}(M,g)$ can be identified with the subspace of gradient vector fields in the set of all $L^2(M,d\omega)$ vector fields over $(M,g,\,d\omega)$. This is deeply connected with the geometry of $\mathcal{D}iff(M )$. The link is well--known in the case of Riemannian manifolds \cite{lukatsky}, \cite{smolentsev1}, \cite{smolentsev2}, where the Remark \ref{remDiff} applies, (a very readable account is provided in \cite{khesin}). If one wants to stress the underlying structure of the Riemannian metric measure space $(M,g,\,d\omega)$ over which $\mathcal{D}iff(M )$ is acting, then there are some peculiarities  which, for our purposes, it is useful to make explicit. For technical reasons \cite{marsdenLibro} let us consider  the (topological) group, $\mathcal{D}iff^{\;s}(M )$, defined by the set of diffeomorphisms which, as maps $M \rightarrow M$ are an open subset of the Sobolev space of maps  $\mathcal{H}^{\,s}(M, M)$, with $s\,>\, \tfrac{n}{2}+1$. As we mentioned above, for a given $(M,g,\,d\omega)$, $M$ compact, and for any  $d\omega'\in {\rm Prob}_{ac}(M,g)$ Moser theorem \cite{moser} implies that there is a diffeomorphism $\eta\in \mathcal{D}iff^{\;s}(M )$,\,$s\,>\, \tfrac{n}{2}+1$,\ such that $d\omega'\,=\,\eta^*(d\omega)$. In more precise terms, there is a smooth map $\chi\,:{\rm Prob}_{ac}(M,g)\longrightarrow \mathcal{D}iff^{\;s}(M )$, canonically defined for the given $(M,g,\,d\omega)$, such that the pull back--action
\begin{eqnarray}
\label{pbJact}
J_{\omega}:\,\mathcal{D}iff^{\;s}(M )\,&\longrightarrow&\, {\rm Prob}_{ac}(M,g)\\
\eta\,&\longmapsto&\, J_{\omega}(\eta)\,:=\,\eta^*(d\omega)\,=\,d\omega'\nonumber
\end{eqnarray}  
satisfies $J_{\omega}\circ \chi\,=\,e $, where $e $ is the identity in $\mathcal{D}iff^{\;s}(M )$.
\begin{rem} This well--known result (cf. \cite{ebinmarsden}) is usually stated in terms of the reference measure provided by the (normalized) Riemannian volume element $V_g(M)^{-1}\,d\mu_g$. On a given Riemannian metric measure space $(M,g,\,d\omega)$ one can use as reference measure $d\omega$ as well.
\end{rem}
\noindent If we denote by $J _{\omega}^{-1}(d\omega')$  the fiber of   $J _{\omega}$ over $d\omega'\in {\rm Prob}_{ac}(M,g)$ then we have $J_{\omega}^{-1}(d\omega')\,=\,\mathcal{D}iff^{\;s+1}_{\omega'}(M )$, where 
\begin{equation}
\mathcal{D}iff^{\;s+1}_{\omega}(M )\,:=\,\left\{\left.\varphi \,\in\,\mathcal{D}iff^{\;s+1}(M )\right|\,\varphi^*(d\omega')=\,d\omega'  \right\}
\end{equation} 
is the group of $d\omega'$--preserving diffeomorphisms. Let us recall that the tangent space to $\mathcal{D}iff^{\;s}(M )$ at the identity, \;$T_e \,\mathcal{D}iff^{\;s}(M )$,\, consists of all  $\mathcal{H}^{s}(M,\, TM)$ vector fields on $M$. More generally, the tangent space $T_\vartheta\,\mathcal{D}iff^{\;s}(M )$ at a diffeomorphism $\vartheta\in \mathcal{D}iff^{\;s}(M)$ is given by
\begin{equation}
T_\vartheta\,\mathcal{D}iff^{\;s}(M )\,:\,\left\{\left.U\in \mathcal{H}^{s}(M,\, TM)\right|\;U \;\;
{\rm covers}\;\;\;\vartheta\in  \mathcal{D}iff^{\;s}(M )  \right\}\;,
\end{equation}  
\emph{i.e.} by the space of $ \mathcal{H}^{s}$  sections of the pull--back bundle
\begin{equation}
\vartheta^{-1} TM:=\{U=u\circ \vartheta,\;u\in C^{\infty}(M,TM)\}\;. 
\end{equation}  
Hence, if $U\in T_\vartheta\,\mathcal{D}iff^{\;s}(M )$ then $U\circ \vartheta^{-1}\in \mathcal{H}^{s}(M,\, TM)$. By extending the standard approach \cite{ebinmarsden} to the Riemannian measure space $(M,g,\,d\omega)$,  we introduce on $\mathcal{D}iff^{\;s}(M )$ the weak  $L^{2}(M, d\omega)$ inner product defined 
at $T_\vartheta\,\mathcal{D}iff^{\;s}(M )$  by
\begin{equation}
\label{inndiff}
\left\langle U,\,V   \right\rangle_{(\omega,\vartheta)}\,:=\,\int_M\,g(u\circ\vartheta(x),\,v\circ\vartheta(x) )_{\vartheta(x)}\,d\omega(\vartheta(x))\;,
\end{equation}
for all $U,\,V\,\in T_\vartheta\,\mathcal{D}iff^{\;s}(M )$. We can equivalently write (\ref{inndiff}) as
\begin{eqnarray}
&&\left\langle U,\,V   \right\rangle_{(\omega,\vartheta)}\,=\,\int_{\vartheta^{-1}(M)}\,g(u(x),\,v(x))_x\,(\vartheta^{-1})^*(d\omega)(x)\\
\nonumber\\
&&\,=\,\int_{\vartheta^{-1}(M)}\,g(u,\,v)\,Jac_{\omega}\,(\vartheta^{-1})\,d\omega=\int_{M}\,g(u,\,v)\,
Jac_{\omega}\,(\vartheta^{-1})\,d\omega\nonumber\;,
\end{eqnarray}
where we have used $\vartheta^{-1}(M)=M$ and denoted by $Jac_{\omega}\,(\vartheta^{-1})$ the Jacobian of $\vartheta^{-1}$ computed with respect to $d\omega$. Let $\eta \in \mathcal{D}iff^{\;s}(M )$ and denote by
\begin{eqnarray}
\label{puforw}
T_\vartheta\,\mathcal{D}iff^{\;s}(M ) &\longrightarrow& T_{\vartheta\circ \eta}\,\mathcal{D}iff^{\;s}(M )\\
\nonumber\\
U_\vartheta\,\;\;\;\;\;&\longmapsto &\,\left(\mathfrak{R}_\eta \right)_*\,U_\vartheta\,:=\,U_\vartheta\circ \eta\nonumber\;,
\end{eqnarray}
the push--forward of $U_\vartheta$ under the (smooth) right action defined on $\mathcal{D}iff^{\;s}(M )$ by $\mathfrak{R}_\eta:\vartheta\longmapsto \vartheta\circ \eta$. According to (\ref{inndiff}) one computes
\begin{eqnarray}
\label{inndiffR}
&&\left\langle \left(\mathfrak{R}_\eta \right)_*U,\,\left(\mathfrak{R}_\eta \right)_*V   \right\rangle_{(\omega,\vartheta\circ \eta)}\\
\nonumber\\
&&\,=\,\int_M\,g(U\circ\eta(x),\,V\circ\eta(x) )_{\vartheta\circ \eta (x)}\,d\omega\;,\nonumber
\nonumber\\
&&\,=\,\int_{\eta^{-1}(M)}\,g(U(x),\,V(x) )_{\vartheta (x)}\,(\eta^{-1})^*(d\omega)\;.\nonumber
\end{eqnarray} 
If $\eta\in \mathcal{D}iff^{\;s}_\omega(M )$ then $(\eta^{-1})^*(d\omega)\,=\,d\omega$ and  
 (\ref{inndiff}) is right--invariant when restricted to the group of $d\omega$-volume preserving diffeomorphisms $\mathcal{D}iff^{\;s}_{\omega}(M )$. In this connection, let us consider the weighted $L^2(M,d\omega)$ Helmholtz decomposition of vector fields which are of Sobolev class $\mathcal{H}^{s}(M,TM)$,\, $s> n/2 +1$,  on $(M,g,\,d\omega)$
\begin{equation}
\label{TdiffHelm}  
\mathcal{H}^{s}(M,TM)\,\simeq \,T_e\mathcal{D}iff^{\;s}(M)\,=\, Im\nabla \oplus Ker \nabla ^{*_\omega}\;,
\end{equation}
where $\nabla ^{*_\omega}=\,-div_\omega$ denotes the $L^2(M,d\omega)$ adjoint of the gradient $\nabla: \mathcal{H}^{s+1}(M,\mathbb{R})\rightarrow \mathcal{H}^{s}(M,TM)$. Hence, according to the Otto characterization 
(\ref{inner}) of $T_{\omega}{\rm Prob}_{ac}(M,g)$ we can write 
\begin{equation}
T_e\mathcal{D}iff^{\;s}(M)\, =\, T_{\omega}{\rm Prob}_{ac}(M,g)\,\oplus\,T_e\mathcal{D}iff^{\;s}_\omega(M)\;, 
\end{equation}
where we have identified $Ker \nabla ^{*_\omega}$ with the tangent space to $\mathcal{D}iff^{\;s}_\omega(M)$ at the identity  map. Explicitly, for a given $v\in \mathcal{H}^{s}(M,TM)$ we have
\begin{equation}
v\,=\,-\,\nabla \varphi\,\oplus v_\dagger\;,
\label{Helmholtz} 
\end{equation}
where $v_\dagger$ is $div_\omega$--free,\, and $\varphi$ is defined by the solution of the elliptic equation
\begin{equation}
\bigtriangleup _\omega \varphi \,=\,-\,div_\omega\,v\;,
\end{equation}
which, according to (\ref{lapvee}), characterizes the correspondence $\vee _v\Leftrightarrow \varphi$. It is useful to rewrite (\ref{Helmholtz})  as
\begin{equation}
\label{proj1}
v\,=\,\Pi _\omega\,v\,\oplus \,(I-\Pi _\omega)\,v\;
\end{equation}
where 
\begin{equation}
\label{proj2}
\Pi _\omega\,v(x)\,:=\,\nabla_x\,\left(\mathcal{G}_\omega\circ div_\omega\right)\,v\,=\,\nabla_x\,
\int_M\,G_\omega(x,y)\,div_{\omega,y} v(y)\,d\omega(y)\;,
\end{equation}
is the $L^2(M,d\omega)$--orthogonal projector onto  $Im\nabla $ and $(I-\Pi _\omega)$ is the projector onto $Ker \nabla ^{*_\omega}\simeq T_e\mathcal{D}iff^{\,s}_\omega(M)$. We can  extend  $(I-\Pi _\omega)$ to the elements $V$ of the tangent space $T_\eta\mathcal{D}iff^{\,s}$,\; $\eta\in \mathcal{D}iff^{\;s}_\omega(M)$, by defining the right--invariant projector
\begin{equation}
(I-\Pi _\omega)_\eta\,V\,:=\,\left((I-\Pi _\omega)(V\circ \eta^{-1})\right)\circ \eta\;.
\end{equation}  
Correspondingly we have the $L^2(M,d\omega)$ weak orthogonal splitting
\begin{equation}
T_\eta\mathcal{D}iff^{\,s}\,\simeq \,\left(\mathfrak{R}_\eta\right)_*\left(T_{\omega}{\rm Prob}_{ac}(M,g)\right)\,\oplus\,T_\eta\mathcal{D}iff^{s}_\omega(M)\;,
\end{equation} 
where $(\mathfrak{R}_\eta)_*\left(T_{\omega}{\rm Prob}_{ac}(M,g)\right):=(\mathfrak{R}_\eta)_*\left(Im\,\nabla \right)$ is the $\eta$--right translated spaces of gradient vector fields $Im\,\nabla \subset T_e\mathcal{D}iff^{\,s}$. 
Note that the weak Riemannian metric (\ref{inndiff}) is right invariant on $\mathcal{D}iff^{s}_\omega(M)$, and this allows to consider the Otto inner product (\ref{inner}) as inducing on the space of smooth probability measure  ${\rm Prob}_{ac}(M,g)\simeq \mathcal{D}iff^{\,s}(M)/\mathcal{D}iff^{s}_\omega(M)$, \,$s>\,n/2\,+1$,\, a weak Riemannian structure such that the 
projection, (see (\ref{pbJact})),
\begin{eqnarray}
\label{Jproject}
\mathfrak{J}_{\omega}:\,\mathcal{D}iff^{\;s}(M )\,&\longrightarrow&\, {\rm Prob}_{ac}(M,g)\\
\eta\,&\longmapsto&\, \mathfrak{J}_{\omega}(\eta)\,:=\,\left(\eta^{\,-1}\right)^*(d\omega)\,=\,d\omega'\nonumber
\end{eqnarray} 
is a Riemannian submersion. According to \cite{ebinmarsden} one can exploit (\ref{puforw}) to introduce on $\mathcal{D}iff^{\;s}(M )$ a torsion--free connection which is compatible with the weak Riemannian structure defined by (\ref{inndiff}),\,\emph{viz.}
\begin{equation}
\label{diffconn0}
T_\eta\mathcal{D}iff^{\,s}\,\ni \breve{\nabla }_{U}\,V\,:=\,\left(\mathfrak{R}_\eta \right)_*\,\left(\nabla^{(g)} _u\,v  \right)\;, 
\end{equation}
or, more explicitly,
\begin{equation}
\label{diffconn}
\left\langle \breve{\nabla }_{U}\,V,\,W\right\rangle_{(\omega,\eta)}
=\,\int_M\,g\left(\nabla ^{(g)}_{u\circ \eta}(v\circ \eta),\,w\circ\eta\right)\,d\omega(\eta(x))\;,  \nonumber
\end{equation}
where the tangent vectors $u,\,v,\,w$,\,$\in \mathcal{H}^{\,s}\simeq T_e\mathcal{D}iff^{\,s+1}(M)$,\; $U=u\circ \eta,\,V=v\circ \eta,\,W=w\circ \eta$,\,$\in T_\eta\mathcal{D}iff^{\,s}$,\, and $\nabla^{(g)}$ denotes the Levi--Civita connection of $(M,g,\,d\omega)$. 
The curvature of such a connection at $\eta\in \mathcal{D}iff^{\,s}$ can be computed  according to \cite{smolentsev1}
\begin{equation}
\breve{R}(U,\,V)\,Z:=\,\left(\mathfrak{R}_\eta \right)_*\,\left(R_{(g)}(u,\,v)\,z\right)\;, 
\end{equation}
where $Z=z\circ \eta$ and $R_{(g)}(u,\,v)\,z$ is the Riemann tensor of $(M,g,\,d\omega)$. Explicitly,
\begin{eqnarray}
&&\left\langle \breve{R}(U,\,V)\,Z,\,W\right\rangle_{(\omega,\eta)}\\
\nonumber\\
&&=\,\int_M\,g\left(R_{(g)}(u\circ \eta,v\circ \eta)z\circ\eta,\,w\circ\eta\right)\,d\omega(\eta(x))\;,  \nonumber
\end{eqnarray}
provides the component of the curvature tensor associated with the weak Riemannian structure (\ref{inndiff}) at the \emph{point} $\eta\in \mathcal{D}iff(M)$. In particular, the components at the identity $e \in \mathcal{D}iff(M)$ are given by the $d\omega$--average
\begin{equation}
\label{curbreve}
\left\langle \breve{R}(U,\,V)\,Z,\,W\right\rangle_{(\omega,e)}\,=\,\int_M\,g\left(R_{(g)}(u,v)z,\,w\right)\,d\omega(x)\;. 
\end{equation}
 In \cite{smolentsev1} this analysis is extended to the computation of the curvature of the space of probability measures ${\rm Prob}_{ac}(M,g)$, a point that we address in the next section.   
\begin{rem}
\label{heuremotto}
The Helmholtz decomposition can be seen as a linearization of 
 Brenier's polar factorization theorem \cite{mccan}, \cite{Villani} and plays a subtle role in the Wasserstein geometry of the smooth $\infty$--dimensional manifold ${\rm Prob}_{ac}(M,g)$. 
 The above analysis, based on the geometry of $\mathcal{D}iff(M )$ is somehow formal since it heavily relies on the smoothness of ${\rm Prob}_{ac}(M,g)$ and of the Riemannian submersion map (\ref{Jproject})  which typically fails for general probability measures. Nonetheless, as presciently shown by F. Otto \cite{otto1}, this formal Riemannian structure has many remarkable properties and provides the backbone for a fully rigorous description of the geometry of $({\rm Prob}(M),\,d_g^W)$, (see \emph{e.g.} \cite{savare}, \cite{ambrosio}, \cite{nicolaG}, \cite{lott2},  \cite{sturm}, \cite{VillaniON}). 
\end{rem}

\subsection{Wasserstein distance and calculus on dilaton space}
Let ${\rm Prob}(M\times M)$ denote the set of Borel probability measures on the product space $M\times M$.
In order to compare any two distinct dilaton fields in  ${\mathcal DIL}_{(1)}(M,g)$, say $(M, g, d\omega_1=e^{-f_1}\,\tfrac{d\mu _{g}}{V_g(M)})$ and $(M, g, d\omega_2=e^{-f_2}\,\tfrac{d\mu _{g}}{V_g(M)})$, let us consider  the set of measures $d\sigma\in {\rm Prob}(M\times M)$  which reduce to $d\omega_1$ when restricted to the first factor and to $d\omega_2$  when restricted to the second factor, \emph{i.e.}
\begin{eqnarray}
\label{probprod}
&&{\rm Prob}_{\,\omega_1,\,\omega_2}\,(M\times M)\\
\nonumber\\
&&\,:=\, \left\{d\sigma \,\in {\rm Prob}(M\times M)\,
\left.  \right|\,\pi^{(1)}_\sharp \,d\sigma=d\omega_1,\, \pi^{(2)}_\sharp \,d\sigma=d\omega_2 \right\}\;,\nonumber
\end{eqnarray}
where $\pi^{(1)}_\sharp $ and $\pi^{(2)}_\sharp $ refer to the push--forward of  $d\sigma$ under  the projection maps $\pi^{(i)}$ onto the factors of $M\times M$. Measures $d\sigma \in {\rm Prob}_{\,\omega_1,\,\omega_2}\,(M\times M)$ are often referred to as \emph{couplings} between $d\omega_1$ and $d\omega_2$. We shall avoid such a terminology since in our setting  the term coupling has quite different a meaning.  Let us recall that given  a (measurable and non--negative) cost function $c\,:\,M\times M\rightarrow \mathbb{R}$, an optimal transport plan \cite{kantoro} $d\sigma_{opt}\in {\rm Prob}_{\,\omega_1,\,\omega_2}(M\times M)$ between the probability measures $d\omega_1$ and $d\omega_2$ in ${\rm Prob}(M)$, (not necessarily in ${\rm Prob}_{ac}(M,g)$),    is defined by the infimum, over all  $d\sigma(x, y)\in {\rm Prob}_{\,\omega_1,\,\omega_2}(M\times M)$, of the total cost functional
\begin{equation}
\int_{M\times M}\,c(x, y )\,d\sigma(x, y)\;.
\end{equation}
 On a Riemannian manifold $(M,g)$, the typical cost function is provided  \cite{lott1}, \cite{lott2}, 
\cite{mccan} by the squared Riemannian distance function $d^{\,2}_{\,g}(\cdot ,\cdot )$, and a major result of the theory \cite{brenier}, \cite{cordero},  \cite{mccan}, \cite{sturm}, is that  for any pair $d\omega_1$ and $d\omega_2$ $\in{\rm Prob}(M)$, there is an optimal transport plan $d\sigma_{opt}$, induced by a map $\Upsilon _{opt}:M\rightarrow M$.  The resulting expression for the total cost of the plan
\begin{equation}
\label{wassdist}
d_{\,g}^{\,W}\,(d\omega_1, d\omega_2)\,:=\,\left(\,\inf_{d\sigma\in {\rm Prob}_{\,\omega_1,\,\omega_2}(M\times M)} 
\;\int_{M\times M}\,d^{\,2}_{\,g}(x, y )\,d\sigma(x, y) \right)^{1/2}\;,
\end{equation}
characterizes the quadratic Wasserstein, (or more appropriately, Kantorovich-Rubinstein) distance between the two probability measures $d\omega_1$ and $d\omega_2$.
\begin{rem}
 Note that there can be distinct optimal plans $d\sigma_{opt}$ connecting general probability measures $d\omega_1$ and $d\omega_2$ $\in \rm{Prob}(M)$, whereas on ${\rm Prob}_{ac}(M,g)$ the optimal transport plan is unique.
\end{rem}
\noindent
 The quadratic Wasserstein distance  $d_{\,g}^{\,W}$ defines a finite metric on ${\rm Prob}(M)$ and it can be shown that $({\rm Prob}(M),\,d_{\,g}^{\,W})$ is a geodesic space, endowed with the weak--* topology, (we refer to \cite{savare}, \cite{Villani}, \cite{VillaniON} for  the relevant properties of Wasserstein geometry and optimal transport we freely use in the following). 
By an obvious dictionary, we identify the distance between the two dilaton fields $f_1$ and $f_2$ with the Wasserstein distance  $d_{\,g}^{\,W}\,(d\omega_1, d\omega_2)$ between the corresponding probability measures. This allows to  characterize $\left({\mathcal DIL}_{(1)}(M,g),\,d_{\,g}^{\,W}\right)$ as the (dense) subset,\,$\left({\rm Prob}_{ac}(M, g),\,d_{\,g}^{\,W}\right)$,   
of the (quadratic) Wasserstein space $({\rm Prob}(M),\,d_{\,g}^{\,W})$. By way of illustration of the interplay between dilaton fields and Wasserstein geometry we have the 
\begin{lem}
\label{McClemma}
Let $f_1$ and $f_2$ two distinct dilaton fields in ${\mathcal DIL}_{(1)}(M,g)$, and let $x\longmapsto \exp_x(-\nabla \psi(x))$ be the optimal map between the corresponding probability measures $d\omega_1=\,V_g^{-1}(M)\,e^{-f_1}d\mu_g$ and $d\omega_2=\,V_g^{-1}(M)\,e^{-f_2}d\mu_g$. Then the unique geodesic in 
$({\mathcal DIL}_{(1)}(M,g),\,d_{\,g}^{\,W})$ connecting $d\omega_1$ and $d\omega_2$ is provided by
the push--forward map,
\begin{eqnarray}
\label{conngeod}
\Gamma\,:\,[0,1]&\longrightarrow &\;\;\;\;\; {\rm Prob}_{ac}(M,g)\\
\lambda\;\;\;\;&\longmapsto &\Gamma(\lambda)\,:=\, \left(\exp_x(-\lambda\nabla \psi(x)) \right)_\sharp \,d\omega_1\;.\nonumber
\end{eqnarray}
 \end{lem}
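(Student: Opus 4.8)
The plan is to recognize this statement as the Riemannian incarnation of McCann's displacement interpolation theorem, specialized to the absolutely continuous setting in which optimal plans and geodesics are unique. First I would invoke the Brenier--McCann structure theorem (\cite{brenier}, \cite{mccan}, \cite{cordero}, \cite{sturm}), already recalled in the discussion preceding the Lemma: since $d\omega_1,d\omega_2\in {\rm Prob}_{ac}(M,g)$, the optimal transport plan for the cost $d^{\,2}_{\,g}(\cdot,\cdot)$ is unique and is induced by a single map $T=\exp_x(-\nabla\psi)$, with $\psi$ a $c$-concave Kantorovich potential. The crucial analytic output of that theorem is that, for $d\omega_1$-a.e.\ $x$, the tangent vector $-\nabla\psi(x)$ avoids the cut locus, so that $d_g(x,T(x))=|\nabla\psi(x)|$ and the whole segment $\lambda\mapsto\exp_x(-\lambda\nabla\psi(x))$, $\lambda\in[0,1]$, is a minimizing geodesic of $(M,g)$ from $x$ to $T(x)$, traversed at constant speed $|\nabla\psi(x)|$.

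With this in hand I would set $T_\lambda(x):=\exp_x(-\lambda\nabla\psi(x))$ and $\Gamma(\lambda):=(T_\lambda)_\sharp d\omega_1$. Since $T_0=\mathrm{id}_M$ we get $\Gamma(0)=d\omega_1$, and since $T_1=T$ is the optimal map, $\Gamma(1)=T_\sharp d\omega_1=d\omega_2$. Standard change-of-variables arguments for displacement interpolants (McCann) then show that $\Gamma(\lambda)$ remains absolutely continuous for every $\lambda\in[0,1]$, so the curve indeed lies in ${\rm Prob}_{ac}(M,g)\simeq{\mathcal DIL}_{(1)}(M,g)$.

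The heart of the proof is the geodesic property. I would test $\Gamma$ against the explicit admissible coupling $(T_{\lambda_1},T_{\lambda_2})_\sharp d\omega_1$ between $\Gamma(\lambda_1)$ and $\Gamma(\lambda_2)$, whose total cost is
\begin{equation*}
\int_M d^{\,2}_{\,g}\bigl(T_{\lambda_1}(x),T_{\lambda_2}(x)\bigr)\,d\omega_1(x)=|\lambda_2-\lambda_1|^{2}\int_M|\nabla\psi(x)|^{2}\,d\omega_1(x)=|\lambda_2-\lambda_1|^{2}\,\bigl(d_{\,g}^{\,W}(d\omega_1,d\omega_2)\bigr)^{2},
\end{equation*}
where the middle equality uses that the two points lie on the constant-speed minimizing geodesic $\lambda\mapsto T_\lambda(x)$, and the last equality uses that $T$ realizes the optimal cost in \eqref{wassdist}. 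This yields $d_{\,g}^{\,W}(\Gamma(\lambda_1),\Gamma(\lambda_2))\le|\lambda_2-\lambda_1|\,d_{\,g}^{\,W}(d\omega_1,d\omega_2)$. Applying the triangle inequality to $\Gamma(0),\Gamma(\lambda),\Gamma(1)$ then forces equality throughout, so $\Gamma$ is a constant-speed minimizing geodesic. Uniqueness follows from the uniqueness of the optimal plan on ${\rm Prob}_{ac}(M,g)$ (the Remark above) together with the fact that every Wasserstein geodesic between absolutely continuous measures is a displacement interpolation along the optimal map; hence $\Gamma$ is the \emph{only} geodesic joining $d\omega_1$ and $d\omega_2$.

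The \emph{main obstacle} is not the soft metric argument of the third paragraph but the analytic input of the first: guaranteeing that $-\nabla\psi(x)$ avoids the cut locus for a.e.\ $x$, so that $\lambda\mapsto T_\lambda(x)$ is minimizing on the entire interval $[0,1]$ and the interpolants stay absolutely continuous. This is precisely where the Brenier--McCann regularity theory on Riemannian manifolds is indispensable, and I would lean on \cite{mccan}, \cite{cordero}, \cite{sturm} for it rather than attempt to reprove it here.
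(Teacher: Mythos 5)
Your proof is correct and takes essentially the same route as the paper: the paper's own argument for Lemma \ref{McClemma} is precisely an appeal to McCann's displacement interpolation (\cite{mccan}, \cite{cordero}, \cite{sturm}), defining $\aleph_\lambda(x)=\exp^{(g)}_x(-\lambda\nabla\psi(x))$ and pushing forward $d\omega_1$, with absolute continuity of the interpolants and the identity $\left(d^W_g(d\omega_1,d\omega_2)\right)^2=\int_M|\nabla\psi|_g^2\,d\omega_1$ noted just as in your write-up. The only difference is one of detail: you explicitly carry out the standard coupling--plus--triangle-inequality verification that $\Gamma$ is a constant-speed minimizing geodesic, and you flag the cut-locus issue behind the minimality of $\lambda\mapsto\aleph_\lambda(x)$, whereas the paper delegates both points to the cited optimal-transport literature.
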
 
\begin{proof} This is the transcription in our setting of a well known basic result in optimal transport theory (cf.  \cite{figalliV} for a very readable short introduction). For the convenience of the reader and for later use we give a brief outline of the proof.  Let $\exp^{(g)}_x$  denote the exponential map on $(M,g)$ based at $x\in M$, then, as originally proven by R. McCann \cite{mccan}, there exists  a function $\psi :\,M\rightarrow \mathbb{R}$, the Kantorovich potential, such that, for $d\omega_1$--almost all points $x\in M$, we can define the map \cite{cordero},  \cite{mccan}, \cite{sturm},
\begin{eqnarray}
\aleph _\lambda\,:\,[0,1]\times M\,&\longrightarrow& \, M \\
(\lambda, x)\,\;\;\;\; &\longmapsto &\, \aleph_\lambda(x)\,:=\, \exp^{(g)}_x\left(\,-\,\lambda\,\nabla\,\psi(x)  \right)\;,\nonumber
\end{eqnarray}
 with $\left. \aleph_\lambda \right|_{\lambda=1}\,=\,\aleph _{opt}$ providing the optimal transport map. The associated plan $d\sigma_{opt}\,\in\, {\rm Prob}_{\,\omega_1,\,\omega_2}(M\times M)$  is given by 
\begin{equation}
d\sigma_{opt}\,=\,\left(Id_M,\,\aleph _1 \right)_\sharp\,d\omega_1\;,
\end{equation}
where  $\left(Id_M,\,\aleph _1 \right):\,M\times M\rightarrow M\times M$ is defined by $(y,x)\mapsto (y, \aleph _1(x) )$.
The push--forward of $d\omega_1$ under $\aleph_t$,
\begin{eqnarray}
\label{Pgeodesic}
\Gamma\,:\,[0,1]&\longrightarrow &\;\;\;\;\; {\rm Prob}_{ac}(M,g)\\
\lambda\;\;\;\;&\longmapsto &\Gamma(\lambda)\,:=\, \left(\aleph_\lambda \right)_\sharp \,d\omega_1\;,\nonumber
\end{eqnarray}
generates the (unique) geodesic in $\left({\rm Prob}(M),\; d_{\,g}^{\,W}\, \right)$  connecting $d\omega_1=\Gamma(0)$ and $d\omega_2=\Gamma(1)$. Moreover, under the stated hypotheses, $\Gamma(\lambda)\,<<\,d\mu_g$,\, $\lambda\in\,[0,1]$, thus the geodesic lies in $({\mathcal DIL}_{(1)}(M,g),\,d_{\,g}^{\,W})$. Note that we have  the identity
\begin{equation}
\label{innprod}
\left(d^W_g(d\omega_1,\,d\omega_2)\right)^2\,=\,\int_M\,\left|\nabla \psi   \right|_g^2\,d\omega_1\;,
\end{equation} 
where $\left|\nabla \psi   \right|_g^2\,:=\,g^{ab}\,\nabla _a\psi\,\nabla _b\psi$. \end{proof}
In other words, geodesics in $({\mathcal DIL}_{(1)}(M,g),\,d_{\,g}^{\,W})$  are naturally associated with the transport of the corresponding dilatonic measures along the geodesics of the underlying Riemannian manifold $(M,g)$. We remark that the optimal transport map $\aleph _{opt}$ associated with the optimal transport plan $d\sigma_{opt}$, is not generally smooth, (smoothness in optimal transport theory is related to rather sophisticated curvature assumptions \cite{figalliV}).\\ 
\\
\noindent
According to (\ref{innprod}), the Wasserstein metric structure on ${\mathcal DIL}_{(1)}(M,g)\approx {\rm Prob}_{ac}(M,g)$ is induced by the Otto inner product $\langle\cdot ,\cdot \rangle_{(g,\,d\omega)}$ defined by (\ref{inner}). This basic remark allows the introduction of a (weak) Riemannian calculus on the \emph{smooth} Wasserstein space $({\rm Prob}_{ac}(M,g),\,d_{\,g}^{\,W})$,  which is related to the weak Riemannian geometry of the diffeomorphism group $\mathcal{D}iff(M)$ discussed in section \ref{dgmms}. In particular one can compute \cite{lott1b}  the Riemannian curvature of $({\rm Prob}_{ac}(M,g),\,d_{\,g}^{\,W})$. We stress once more, (cf. Remark \ref{heuremotto}), that the Riemannian structure and the attendant geometric analysis is quite more delicate in the general Wasserstein space $({\rm Prob}(M),\,d_{\,g}^{\,W})$, where ${\rm Prob}(M)$ has no smooth structure and hence one lacks any obvious notion of smoothness for vector fields.   The nature of the difficulties and the development of a suitable weak Riemannian calculus on $(Prob(M), d_{\,g}^{\,W})$ are presented, from different point of views, in \cite{savare}, \cite{ambrosio},  \cite{nicolaG}, \cite{lott2}, \cite{sturm}, (\cite{nicolaG} provides a thorough analysis with examples and useful interpretative remarks). Since we confine to the \emph{smooth} Wasserstein space $({\rm Prob}_{ac}(M,g),\,d_{\,g}^{\,W})$,  the Riemannian interpretation can be implemented rigorously, and in what follows we shall make use of the following result proved by J. Lott, which we rephrase for the dilaton space ${\mathcal DIL}_{(1)}(M,g)$, (for the last time we recall that we use ${\mathcal DIL}_{(1)}(M,g)$ and $\rm{Prob}_{ac}(M,g)$ interchangeably). 
\begin{thm} (J. Lott \cite{lott1b}, Lemma 3)
\label{lotth1}
 Let $\vee _{\psi_{(i)}}:\,f\,\longmapsto  T_{f}{\mathcal DIL}_{(1)}(M,g)$ be vector fields on ${\mathcal DIL}_{(1)}(M,g)$ associated to potentials $\psi_{(i)}\in C^{\infty }(M ,\mathbb{R})/\mathbb{R}$ under the isomorphism (\ref{veciso}). The Riemannian connection $\overline{\bigtriangledown }$ of the Otto Riemannian metric $\langle\cdot ,\cdot \rangle_{(g,\,d\omega)}$ is  given by
\begin{equation}
\label{conn}
\langle \overline{\bigtriangledown }_{\vee _{\psi_{(i)}}}\,\vee _{\psi_{(j)}},\,\vee _{\psi_{(k)}} \rangle_{(g,\,d\omega)}\,=\,\int_M\,\nabla_a\psi_{(i)}\,\nabla^a \nabla^b\psi_{(j)}\,\nabla _b\psi_{(k)}\,d\omega\;.  
\end{equation}
 \end{thm}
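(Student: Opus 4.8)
The plan is to read off (\ref{conn}) from the Koszul formula for the Levi--Civita connection $\overline{\bigtriangledown }$ of the weak Riemannian metric (\ref{inner}), reducing the problem to two ingredients: the derivatives of the metric along the coordinate fields $\vee_{\psi_{(i)}}$ and their mutual Lie brackets. Everything becomes explicit in the \emph{density chart}: writing $d\omega=\Phi\,\tfrac{d\mu_g}{V_g(M)}$ with $\Phi=e^{-f}$, the weighted divergence (\ref{weightdiv}) reads ${\rm div}_\omega V=\Phi^{-1}{\rm div}_g(\Phi V)$, whence the crucial linearization
\[
\Phi\,\bigtriangleup_\omega\psi\;=\;{\rm div}_g(\Phi\,\nabla\psi)\;.
\]
By (\ref{lapvee}) the field $\vee_{\psi}$ moves the base point along $d\omega_\epsilon=d\omega-\epsilon\,\bigtriangleup_\omega\psi\,d\omega$, so in this chart it is represented by the velocity $\delta\Phi_{\vee_\psi}=-\,{\rm div}_g(\Phi\,\nabla\psi)$, which is \emph{linear} in $\Phi$. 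Writing $\vee_a:=\vee_{\psi_{(a)}}$ and abbreviating
\[
T(a|b|c):=\int_M\nabla_p\psi_{(a)}\,\nabla^p\nabla^q\psi_{(b)}\,\nabla_q\psi_{(c)}\,d\omega\;,
\]
the symmetry of the Hessian gives $T(a|b|c)=T(c|b|a)$, and the right--hand side of (\ref{conn}) is exactly $T(i|j|k)$.

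First I would compute the metric derivative. In $\langle\vee_j,\vee_k\rangle_{(g,d\omega)}=\int_M\langle\nabla\psi_{(j)},\nabla\psi_{(k)}\rangle\,d\omega$ only $d\omega$ depends on the base point, so differentiating along $\vee_i$ replaces $d\omega$ by its velocity $-\bigtriangleup_\omega\psi_{(i)}\,d\omega$; integrating by parts with ${\rm div}_\omega$ on the closed manifold $M$ yields
\[
D_{i;jk}:=\vee_i\langle\vee_j,\vee_k\rangle_{(g,d\omega)}=\int_M\big\langle\nabla\big(\nabla\psi_{(j)}\!\cdot\!\nabla\psi_{(k)}\big),\nabla\psi_{(i)}\big\rangle\,d\omega\;.
\]
Expanding the gradient by the product rule and using symmetry of the Hessians, $D_{i;jk}=T(i|j|k)+T(i|k|j)$.

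The heart of the proof, and where I expect the main difficulty, is the Lie bracket: the correspondence (\ref{veciso}) between a fixed potential $\psi$ and the genuine tangent vector is base--point dependent through $\bigtriangleup_\omega$, so a priori the bracket is awkward to control. The linearization above disposes of this, since the bracket of two fields that are linear in $\Phi$ is computed directly in the chart,
\[
\delta\Phi_{[\vee_i,\vee_j]}={\rm div}_g\!\big({\rm div}_g(\Phi\,\nabla\psi_{(i)})\,\nabla\psi_{(j)}\big)-{\rm div}_g\!\big({\rm div}_g(\Phi\,\nabla\psi_{(j)})\,\nabla\psi_{(i)}\big)\;.
\]
Pairing with $\vee_k$ via the metric, which in the chart reads $\langle\vee_k,V\rangle=\int_M\psi_{(k)}\,\delta\Phi_V\,\tfrac{d\mu_g}{V_g(M)}$, and integrating by parts twice, one obtains the clean identity
\[
B_{ij;k}:=\langle[\vee_i,\vee_j],\vee_k\rangle_{(g,d\omega)}=D_{i;jk}-D_{j;ik}\;.
\]

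Finally I would assemble the Koszul formula
\[
2\langle\overline{\bigtriangledown }_{\vee_i}\vee_j,\vee_k\rangle_{(g,d\omega)}=D_{i;jk}+D_{j;ik}-D_{k;ij}+B_{ij;k}+B_{ki;j}-B_{jk;i}\;.
\]
Substituting $B_{ab;c}=D_{a;bc}-D_{b;ac}$ collapses the right--hand side to $D_{i;jk}-D_{j;ik}+D_{k;ij}$; inserting $D_{a;bc}=T(a|b|c)+T(a|c|b)$ and repeatedly applying $T(a|b|c)=T(c|b|a)$ cancels all terms but $2\,T(i|j|k)$. Dividing by two gives (\ref{conn}). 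The only genuinely delicate points, all harmless in the smooth setting ${\rm Prob}_{ac}(M,g)$ to which we restrict, are the verification that $\vee_\psi$ is a bona fide differentiable vector field and the legitimacy of the (boundaryless, by compactness of $M$) integrations by parts; I note in passing that the same result can alternatively be derived as the horizontal part of the connection $\breve{\nabla }$ via the Riemannian submersion $\mathfrak{J}_{\omega}$ of section \ref{dgmms}.
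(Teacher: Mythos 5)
Your proof is correct, and it is essentially the derivation in the source the paper cites: Theorem \ref{lotth1} is stated in the paper without proof, being imported verbatim from J. Lott's \emph{Some geometric calculations on Wasserstein space} (Lemma 3 of \cite{lott1b}), and Lott's own argument is exactly the Koszul formula fed with the first variation of the Otto metric and the commutator of potential vector fields. I checked your computations: the identity $\Phi\,\bigtriangleup_\omega\psi={\rm div}_g(\Phi\,\nabla\psi)$ is (\ref{weightdiv}) in disguise, $D_{i;jk}=T(i|j|k)+T(i|k|j)$ is right, your bracket formula $B_{ij;k}=D_{i;jk}-D_{j;ik}$ reduces (via $T(a|b|c)=T(c|b|a)$) to $T(i|j|k)-T(j|i|k)$, which agrees with the commutator identity the paper itself records immediately after the theorem, and the Koszul sum does collapse to $D_{i;jk}-D_{j;ik}+D_{k;ij}=2\,T(i|j|k)$. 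The one presentational novelty on your side is the density-chart linearization, which makes the bracket of the coordinate fields a one-line computation rather than an explicit differentiation of the flow; Lott obtains the same bracket by direct differentiation. The only point worth making explicit, since the Koszul formula in a weak Riemannian setting only defines a functional, is that this functional is represented in $T_\omega{\rm Prob}_{ac}(M,g)$ by the gradient part $\Pi_\omega\bigl(\nabla_a\psi_{(i)}\,\nabla^a\nabla^b\psi_{(j)}\,\partial_b\bigr)$ — precisely the content of (\ref{conn2}) — so that $\overline{\bigtriangledown}_{\vee_{\psi_{(i)}}}\vee_{\psi_{(j)}}$ genuinely exists as a tangent vector; you gesture at this via the submersion $\mathfrak{J}_\omega$, which is indeed the alternative route suggested in Remark \ref{remnonsm}.
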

 \noindent In particular, we can define the Lie derivative of the inner product $\langle\cdot,\,\cdot    \rangle_{(g,\,d\omega)}$ along the vector field  $f\longmapsto \vee _{\psi}\in T_{f}{\mathcal DIL}_{(1)}(M,g)$ according to
 \begin{eqnarray}
 \label{Lieder}
 &&\overline{\mathcal{L}}_{\vee _{\psi}}\,\langle \vee _{\psi_{(i)}},\,\vee _{\psi_{(k)}} \rangle_{(g,\,d\omega)}\\
\nonumber\\
&&:=\, \langle \overline{\bigtriangledown }_{\vee _{\psi_{(i)}}}\,\vee _{\psi},\,\vee _{\psi_{(k)}} \rangle_{(g,\,d\omega)}\,+\,\langle \,\vee _{\psi_{(i)}},\,\overline{\bigtriangledown }_{\vee _{\psi_{(k)}}}\vee _{\psi} \rangle_{(g,\,d\omega)}\nonumber\\
\nonumber\\
&&=\,
2\,\int_M\,\nabla^a\psi_{(i)}\,Hess_{ab}\,\psi\,\nabla ^b\psi_{(k)}\,d\omega\;,
\nonumber
 \end{eqnarray}
 where $Hess\,\psi$ denotes the Hessian on $(M,g)$.
 Note that according to (\ref{conn}) and (\ref{Helmholtz} ) the connection $\overline{\bigtriangledown }$ at $d\omega$ is characterized by
the gradient part of the $(M,g,d\omega)$ Helmholtz decomposition of the vector field $\nabla_a\psi_{(i)}\,\nabla^a \nabla^b\psi_{(j)}\,\partial _b$, \emph{i.e.} we can write 
\begin{equation}
\label{conn2}
\langle \overline{\bigtriangledown }_{\vee _{\psi_{(i)}}}\,\vee _{\psi_{(j)}},\,\vee _{\psi_{(k)}} \rangle_{(g,\,d\omega)}\,=\,\int_M\,\Pi _{\omega}\left(\nabla_a\psi_{(i)}\,\nabla^a \nabla^b\psi_{(j)}\right)\,\nabla _b\psi_{(k)}\,d\omega\;,  
\end{equation}
where the $L^2(M, d\omega)$ projection operator $\Pi_{\omega}$ is defined by (\ref{proj2}). Since the vector field $\nabla\psi_{(i)}\,\cdot Hess\,\psi_{(j)}:=\,\nabla_a\psi_{(i)}\,\nabla^a \nabla^b\psi_{(j)}\,\partial _b$ does not, in general, belong to $T_{\omega}\rm{Prob}_{ac}(M,g)\simeq T_{f}{\mathcal DIL}_{(1)}(M,g)$, we can interpret it as an element of $T_e\mathcal{D}iff(M)$, and for $\eta\in \mathcal{D}iff^{\,s}(M)$,\,$s>n/2\,+1$,  consider
\begin{equation}
\nabla\psi_{(i)}\,\cdot Hess\,\psi_{(j)}\circ \eta\,\in T_\eta{D}iff^{\,s}(M)\,
\end{equation}
as a section of the pull-back bundle $\eta^{-1}TM$.\; A similar remark applies also to the commutator  of two vector fields $\vee _{\psi_{(i)}}$ and $\vee _{\psi_{(k)}}$ $\in T_{\omega}\rm{Prob}_{ac}(M,g)$, defined for any $\varphi\in C_0^{\infty}(M,\mathbb{R})$ by \cite{lott1b}
\begin{eqnarray}
&&\langle \left[\vee _{\psi_{(i)}},\,\vee _{\psi_{(k)}} \right],\,\vee _{\varphi} \rangle_{(g,\,d\omega)}\,=\,
\langle \overline{\bigtriangledown }_{\vee _{\psi_{(i)}}}\vee _{\psi_{(k)}}-
\overline{\bigtriangledown }_{\vee _{\psi_{(k)}}}\vee _{\psi_{(i)}},\,\vee _{\varphi} \rangle_{(g,\,d\omega)}\nonumber\\
\nonumber\\
&&:=\,\int_M\,\left(\nabla_a\psi_{(i)}\,\nabla^a \nabla^b\psi_{(k)}-\nabla_a\psi_{(k)}\,\nabla^a \nabla^b\psi_{(i)}\right)\,\nabla _b\varphi\,d\omega\;,
\end{eqnarray}
and we have $[\vee _{\psi_{(i)}},\,\vee _{\psi_{(k)}}]\in T_e{D}iff(M)$. As already recalled we are in the typical situation of the Riemannian submersion described in section \ref{dgmms}.  To put this at work, one may proceed as in the classical O'Neill's analysis of Riemannian submersion \cite{oneill}, \cite{solovev}, (see also Chap. 9 of \cite{ABesse}), and introduce the vector field, (actually a $(2,1)$ tensor field) 
\begin{eqnarray}
\label{Tvector}
&&d\omega\longmapsto T_{\psi_{(i)}\psi_{(j)}}\,:=\,\frac{1}{2}(I-\Pi_{\omega})\left[\vee _{\psi_{(i)}},\,\vee _{\psi_{(j)}} \right]\,\in T_e\mathcal{D}iff_\omega(M)\nonumber\\
\nonumber\\
&&=\,(I-\Pi_{\omega})\left(\nabla_a\psi_{(i)}\,\nabla^a \nabla^b\psi_{(j)}\,\partial _b  \right)\,
\in\,Ker\,{\nabla^*}^\omega 
\;,
\end{eqnarray} 
defined by the $div_{\omega}$--free part of $\nabla_a\psi_{(i)}\,\nabla^a \nabla^b\psi_{(j)}\,\partial _b$. Note that in the last line of we have exploited the antisymmetry of $(I-\Pi_{\omega})(\nabla_a\psi_{(i)}\,\nabla^a \nabla^b\psi_{(j)}\,\partial _b )$, (cf. Lemma 6 of \cite{lott1b}).
The vector field (\ref{Tvector}) can be thought of as describing to what extent the distribution of gradient vector fields $\vee _{\psi_{(i)}}$ fails to be integrable in $\mathcal{D}iff(M)$.  With these remarks along the way one can    
characterize the Riemannian curvature of the Wasserstein space ${\mathcal DIL}_{(1)}(M,g)(M,g)$ according to the
\begin{thm} (J. Lott \cite{lott1b}, Th.1)
\label{LottCurv}
 Let $\vee _{\psi_{(i)}}\in T_{f}{\mathcal DIL}_{(1)}(M,g)$ be given vector fields on ${\mathcal DIL}_{(1)}(M,g)$ associated to corresponding potentials $\psi_{(i)}\in C^{\infty }(M ,\mathbb{R})/\mathbb{R}$. The Riemannian curvature of the connection $\overline{\bigtriangledown }$ is  given, at $d\omega$, by\\
\begin{eqnarray}
\label{curvProb}
&&\left\langle \overline{R}\left(\vee _{\psi_{(i)}},\,\vee _{\psi_{(j)}}\right)\,\vee _{\psi_{(k)}},\,
\vee _{\psi_{(h)}} \right\rangle_{(g,\,d\omega)}\\
\nonumber\\
&&\,=\,\int_M\,
\nabla^b\psi_{(i)}\,\nabla^c\psi_{(j)}\,\nabla^d\psi_{(k)}\,R^a_{bcd}\,\nabla _a\psi_{(h)}d\omega\nonumber\\
\nonumber\\
&&-\int_M g_{ab}\left(2T_{\psi_{(i)}\psi_{(j)}}^aT_{\psi_{(k)}\psi_{(h)}}^b-
T_{\psi_{(j)}\psi_{(k)}}^aT_{\psi_{(i)}\psi_{(h)}}^b+
T_{\psi_{(i)}\psi_{(k)}}^aT_{\psi_{(j)}\psi_{(h)}}^b
\right)d\omega.  \nonumber
\end{eqnarray}
\end{thm}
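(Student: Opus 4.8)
The plan is to read (\ref{curvProb}) as the O'Neill curvature formula for the Riemannian submersion $\mathfrak{J}_\omega:\mathcal{D}iff^{\,s}(M)\to{\rm Prob}_{ac}(M,g)$ of Section \ref{dgmms}, whose total space carries the weak metric (\ref{inndiff}) together with its torsion--free compatible connection $\breve{\nabla}$ of (\ref{diffconn0}). The horizontal distribution is $Im\,\nabla$, identified at $e$ with $T_\omega{\rm Prob}_{ac}(M,g)$, and the vertical one is $Ker\,\nabla^{*_\omega}=T_e\mathcal{D}iff_\omega^{\,s}(M)$, the splitting being the $L^2(M,d\omega)$ Helmholtz decomposition $I=\Pi_\omega\oplus(I-\Pi_\omega)$ of (\ref{proj1}). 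First I would note that, because $\mathfrak{J}_\omega$ is a Riemannian submersion for the Otto metric, the horizontal lift of $\vee_{\psi_{(i)}}$ at $e$ is the gradient field $\nabla\psi_{(i)}$, so that the O'Neill integrability ($A$--)tensor evaluated on two horizontal fields is the vertical part of their $\breve{\nabla}$--derivative. Since $\breve{\nabla}$ is torsion--free and $(I-\Pi_\omega)$ is the vertical projector, this is $A_{\vee_{\psi_{(i)}}}\vee_{\psi_{(j)}}=\tfrac12(I-\Pi_\omega)[\vee_{\psi_{(i)}},\vee_{\psi_{(j)}}]=T_{\psi_{(i)}\psi_{(j)}}$, the tensor (\ref{Tvector}), whose antisymmetry in its lower slots is the one already recorded there.

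With this dictionary the two summands of (\ref{curvProb}) are the two pieces of O'Neill's formula: the base curvature equals the total--space curvature along horizontal lifts plus a term quadratic in $A$. The horizontal total--space curvature is given by (\ref{curbreve}) with $u=\nabla\psi_{(i)}$, $v=\nabla\psi_{(j)}$, $z=\nabla\psi_{(k)}$, $w=\nabla\psi_{(h)}$, namely
\[
\int_M g\!\left(R_{(g)}(\nabla\psi_{(i)},\nabla\psi_{(j)})\nabla\psi_{(k)},\nabla\psi_{(h)}\right)d\omega
=\int_M R^a_{bcd}\,\nabla^b\psi_{(i)}\,\nabla^c\psi_{(j)}\,\nabla^d\psi_{(k)}\,\nabla_a\psi_{(h)}\,d\omega,
\]
which is the first integral. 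The quadratic term is the standard O'Neill combination, which in the lifts $X=\vee_{\psi_{(i)}},Y=\vee_{\psi_{(j)}},Z=\vee_{\psi_{(k)}},W=\vee_{\psi_{(h)}}$ reads $2\langle A_XY,A_ZW\rangle-\langle A_YZ,A_XW\rangle+\langle A_XZ,A_YW\rangle$, with $L^2(M,d\omega)$ pairing $\langle A_XY,A_ZW\rangle=\int_M g_{ab}\,T^a_{XY}\,T^b_{ZW}\,d\omega$; substituting $A=T$ gives the second integral in (\ref{curvProb}), the overall sign being dictated by the curvature convention implicit in (\ref{curbreve}). As a consistency check, on the sectional slice $k=j,\,h=i$ the correction collapses to $+3\,\|T_{\psi_{(i)}\psi_{(j)}}\|^2_{L^2(M,d\omega)}$, the expected increase of sectional curvature under a Riemannian submersion.

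To turn this into a proof I would either (a) invoke the O'Neill submersion calculus \cite{oneill}, \cite{ABesse} once its hypotheses are verified for $\mathfrak{J}_\omega$ --- horizontality of the splitting, metricity and vanishing torsion of $\breve{\nabla}$, and the projection property $\overline{\bigtriangledown}=\Pi_\omega\circ\breve{\nabla}$ of the base connection, which is exactly the content of (\ref{conn2}) --- or (b) bypass O'Neill and compute directly from (\ref{conn}) of Theorem \ref{lotth1}, writing $\overline{R}(\vee_{\psi_{(i)}},\vee_{\psi_{(j)}})\vee_{\psi_{(k)}}=\overline{\bigtriangledown}_{\vee_{\psi_{(i)}}}\overline{\bigtriangledown}_{\vee_{\psi_{(j)}}}\vee_{\psi_{(k)}}-(i\leftrightarrow j)-\overline{\bigtriangledown}_{[\vee_{\psi_{(i)}},\vee_{\psi_{(j)}}]}\vee_{\psi_{(k)}}$. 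In route (b) the decisive observation is that $[\vee_{\psi_{(i)}},\vee_{\psi_{(j)}}]$ lives in $T_e\mathcal{D}iff(M)$ and generically leaves the horizontal distribution: its gradient part reproduces the Wasserstein bracket, while its $div_\omega$--free part $T_{\psi_{(i)}\psi_{(j)}}$, fed into the $\overline{\bigtriangledown}_{[\cdot,\cdot]}$ term, generates precisely the $A$--tensor correction. Either route is, after the identifications above, a routine (if lengthy) contraction of (\ref{conn}) and (\ref{curbreve}).

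The main obstacle is not the algebra but the legitimacy of this finite--dimensional reasoning in the present infinite--dimensional, \emph{weak}--Riemannian setting: the metric (\ref{inndiff}) does not induce the manifold topology, and $\mathfrak{J}_\omega$ fails to be smooth at measures violating $d\omega'\ll d\mu_g$. The hard part will be to confine everything to the smooth stratum ${\rm Prob}_{ac}(M,g)$, where the Helmholtz projectors $\Pi_\omega,(I-\Pi_\omega)$ are bounded on $\mathcal{H}^{s}(M,TM)$, the horizontal lift and the covariant derivatives entering $A$ and $\breve{R}$ are well defined, and the submersion is genuine, and then to check that each step of the O'Neill derivation survives these weakenings. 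Granting this --- established for the unweighted case in \cite{smolentsev1}, \cite{lott1b} and here requiring only the systematic replacement of $d\mu_g$ by the weighted measure $d\omega$ throughout the Helmholtz decomposition --- the identity (\ref{curvProb}) follows.
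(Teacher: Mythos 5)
Your proposal is correct and follows essentially the route the paper itself indicates: the theorem is quoted from Lott (\cite{lott1b}, Th.\,1) without a written proof, but the construction of the tensor (\ref{Tvector}) and Remark~\ref{remnonsm} frame it exactly as you do, namely as O'Neill's horizontal-curvature formula for the Riemannian submersion $\mathfrak{J}_{\omega}:\mathcal{D}iff(M)\to{\rm Prob}_{ac}(M,g)$, with $T_{\psi_{(i)}\psi_{(j)}}$ as the integrability ($A$--)tensor, (\ref{curbreve}) supplying the total--space curvature, and (\ref{conn2}) giving the projection property $\overline{\bigtriangledown}=\Pi_\omega\circ\breve{\nabla}$. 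Your sign bookkeeping, though stated loosely, is pinned down correctly by the sectional check yielding $+3\,\|T_{\psi_{(i)}\psi_{(j)}}\|^2_{L^2(M,d\omega)}$, and your route (b) is precisely Lott's original direct computation from (\ref{conn}).
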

\begin{rem}
\label{remnonsm}
According to (\ref{curbreve}) we can interpret the $d\omega$-average of the Riemann tensor $R^a_{bcd}$ 
in  (\ref{curvProb}) as the curvature tensor  of $\left(\mathcal{D}iff(M),\,\langle\cdot,\,\cdot\rangle_{\omega,\eta}\right)$,
\begin{equation}
\left\langle \breve{R}\left(\nabla\psi_{(i)},\,\nabla\psi_{(j)}\right)\,\nabla\psi_{(k)},\,\nabla\psi_{(h)}\right\rangle_{(\omega,e)}\;,
\end{equation}
evaluated at the identity map. Indeed, quite independently from optimal transport techniques, one can derive (\ref{curvProb}) as the formula computing the  \emph{horizontal} components of the Riemannian curvature associated with the Riemannian  submersion 
\begin{equation}
\mathfrak{J}_{\omega}:\,\mathcal{D}iff(M )\,\longrightarrow\, {\rm Prob}_{ac}(M,g)
\simeq \mathcal{D}iff(M)/\mathcal{D}iff_\omega(M)
\end{equation}
defined by  (\ref{Jproject}), (cf. \cite{smolentsev1}). We emphasize once more that the characterization of curvature for the Wasserstein space $(Prob(M), d_{\,g}^{\,W})$, associated to general probability measures on $M$, is quite more delicate. Details can be found in  \cite{nicolaG}. 
\end{rem}

\section{Constant maps localization and warping}
\label{lacmadw}

Important insight into the structure of the Wasserstein geometry of the dilatonic non--linear $\sigma$ model  can be gained by the following analysis, showing how the theory behaves under a localization around constant maps.\\
\\
\noindent To begin with,  
let $\kappa $ denote the  upper bound to the sectional curvature of $(M,g,\,d\omega)$, and let us consider a metric ball $B(r,\,p):=\{z\in M|\,d_g(p,z)\leq r\}$, centered at  $p\,\in\, M$, with radius $r\,<\,r_0$, where
$r_0$, defined by (\ref{errezero1}) sets the length scale of the target $(M,g)$.
For $q\,\in\,\mathbb{N}$,  let $\{\phi_{(k)}\}_{k=1\,\ldots,\,q}\,\in\, Map\,(\Sigma, M)$ denote  a  collection of \emph{reference} constant maps, (hence harmonic), taking values in the interior of $B(r,\,p)$
\begin{eqnarray}
\phi_{(k)}\,:\,\Sigma &\longrightarrow& \;\;\;\;B(r,\,p)\setminus \partial B(r,\,p) \subset M\\
x\,&\longmapsto& \,\phi_{(k)}(x)\,=\, y_{(k)},\;\;\;\;\;\;\;\;\forall x\,\in\,\Sigma,
\;\;\;\;\;\;k=1,\,\ldots,\,q\;. \nonumber
\end{eqnarray} 
We explicitly assume that $r<\frac{\pi }{6\,\sqrt{\kappa }}$,\, $inj\;(y)>3r$ for all $y\in B(r,p)$,   and consider  the  \emph{center of mass} \cite{buser}   of the maps  $\{\phi_{(k)}\}$,
\begin{equation}
\phi_{cm}\doteq cm\, \left\{\phi_{(1)},\ldots,\phi_{(q)} \right\}\;, 
\label{cm}
\end{equation}
 characterized as the minimizer of the function 
\begin{equation} 
\label{Fcm}
F(y;\,q)\doteq \frac{1}{2}\,\sum_{k=1}^{q}\,d_{g}^{2}(y,y_{(k)})\;,
\end{equation}
where $d_{g}^{2}(\cdot \, ,\cdot \, )$ denotes the distance in $(M,g)$.
\begin{lem}
\label{CMlemma}
Under the stated hypotheses the minimizer exists, is unique and $cm\, \left\{\phi_{(1)},\ldots,\phi_{(q)} \right\}\,\in B(2r,p)$.
\end{lem}
\begin{proof}
This is a standard property of the center of mass \cite{buser}. See for instance \cite{Glickenstein} (Chap.4, p.175).
\end{proof}
We denote by $\{d_{g}(\phi_{cm},\phi_{(k)})\}$ the distances between the maps $\{\phi_{(k)}\}$  and their center of mass $\phi_{cm}$. The overall strategy for introducing the constant maps $\{\phi_{(k)}\}$  is to use the distances $\{d_{g}(\phi_{cm},\phi_{(k)})\}$ and the dilatonic measure $d\omega$  to set the scale at which $(\Sigma,\gamma)$ probes the geometry of $(M,g)$. To this end,  we localize   $\phi\in \rm{Map}(\Sigma, M)$, around the center of mass of $\{\phi_{(k)}\}_{k=1}^q$, by choosing  $d\omega$ according to
\begin{equation}
\label{kprobmeas}
d\omega(z;q)\,:=\,C_{\,r}^{-1}(q)\,e^{\,-\,\tfrac{F(z;\,q)}{2\,r^2}}\,
\frac{d\mu_g(z)}{V_g(M)}\;,\;\;\;\;\;\;\;\;\;\;\;\;\; z\in M\;,
\end{equation}
where  $F(z;\,q)$ is the center of mass function (\ref{Fcm}),\, $V_g(M)$ is the Riemannian volume of $(M,g)$, and $C_{\,r}(q)$ denotes a  normalization constant  such that $\int_M\,d\omega(z;q)\,=\,1$. Since $F(z;\,q)$ attains its minimum at $\phi_{cm}$, the measure $d\omega$ is concentrated around the center of mass of the $\{\phi_{(k)}\}$'s, and, as $r\searrow 0^+$, \, weakly converges to the Dirac measure $\delta _{cm}$ supported at $\phi_{cm}$. Note that according to (\ref{Fcm}) we can factorize the density $d\omega/d\mu_g(z)$ as
\begin{equation}
\label{measFact}
\frac{d\omega(z;q)}{d\mu_g(z)}\,=\,V_g(M)^{-1}\,\prod_{k=1}^q\,\,e^{\,-\,\tfrac{d^2_g(z,\,\phi{(k)})}{4\,r^2}\,-\,\frac{\ln C_{\,r}(q)}{q}}\;.
\end{equation}
This latter remark suggests to interpret the distances $d_{g}(\phi_{cm},\phi_{(k)})$,\,$k=1,\ldots\,q$ as coordinates $\{\xi_{(k)}\}$ in a  $q$--dimensional flat torus $\mathbb{T}^q_{cm}$ of unit volume, and 
consider the product manifold
\begin{equation} 
N^{n+q}\,:=\, M\times_{(\omega)} \mathbb{T}^q\;,
\end{equation}
endowed with the warped product measure
\begin{equation}
\label{measProd}
d\mu_{N}(z,\xi )\,:=\,d\mu_g(z)\,\prod_{k=1}^q\,\,e^{\,-\,\tfrac{d^2_g(z,\,\phi{(k)})}{4\,r^2}\,-\,\frac{\ln C_{\,r}(q)}{q}}\,d\xi_{(k)}\;,\;\;\;\;\;\; \{\xi_{(i)}\}\in\,\mathbb{T}^q
\;,
\end{equation}
with total volume
\begin{eqnarray}
\\
&&\int_{N^{n+q}}\,d\mu_{N}(z,\xi )\,=\,\int_{N^{n+q}}\,d\mu_g(z)\,\prod_{k=1}^q\,\,e^{\,-\,\tfrac{d^2_g(z,\,\phi{(k)})}{4\,r^2}\,-\,\frac{\ln C_{\,r}(q)}{q}}\,d\xi_{(k)}\nonumber\\
\nonumber\\
&=&\,C^{-1}_{\,r}(q)\,\int_M\,e^{\,-\,\tfrac{F(z;\,q)}{2\,r^2}}\,d\mu_g(z)\,\int_{\mathbb{T}^q}\,\prod_{k=1}^q\,d\xi_{(k)}\nonumber\\
\nonumber\\
&=&\,
V_g(M)\,\int_{M}\,d\omega(z;q)\,=\,V_g(M)\;.\nonumber
\end{eqnarray}
The  measure (\ref{measProd}) is clearly Riemannian, being induced on $N^{n+q}\,:=\, M\times \mathbb{T}^q$ by the
warped metric
\\ 
\begin{equation}
h^{(q)}(y,\,\xi )\,:=\, g(y)\,+\,\left(\frac{d\omega(y;q)}{d\mu_g(y)}\,V_g(M)\right)^{\tfrac{2}{q}}\,\sum_{i=1}^q\,d\xi_{(i)}^2,\;\;\;\;\;\; \xi_{(i)}\in\,[0,1]\;,
\;,
\label{new012}
\end{equation}
according to
\begin{eqnarray}
\label{tqmeas}
\\
d\mu_{h^{(q)}}(y,\xi )\,&=&\,d\mu_g(y)\,\left(\frac{d\omega(y;q)}{d\mu_g(y)}\,V_g(M)\right)\,\prod_{i=1}^q\,d\xi_{(i)}\nonumber\\
\nonumber\\
&=&\,d\mu_g(y)\,\prod_{k=1}^q\,\,e^{\,-\,\tfrac{d^2_g(y,\,\phi{(k)})}{4\,r^2}\,
-\,\frac{\ln C_{\,r}(q)}{q}}\,d\xi_{(k)}\,=\,d\mu_{N}(y,\xi )\;.\nonumber
\end{eqnarray}
\begin{rem}
Trading the Riemannian metric measure space $(M,g,\,d\omega)$ with the warped Riemannian manifold 
\begin{equation}
\left(N^{n+q}\,:=\, M\times_{(\omega)}  \mathbb{T}^q,\,h^{(q)}  \right)\;,
\end{equation}
is a  standard procedure in the Riemannian measure space setting, (cf.  \cite{gursky}, and \cite{lott3} for the application to Perelman's reduced volume). This construction can be naturally extended to the injection of 
$(M,g,\,d\omega)$ in the $\infty $--dimensional fibering $ \left(M\times \mathbb{T}^{\infty },\,h^{(\infty )}\right)$. 
\end{rem}
\noindent  
As a function of the distances from the constant maps $\{\phi_{(k)}\}$, the probability measure $d\omega\in \rm{Prob}_{ac}(M,g)\approx {\mathcal DIL}_{(1)}(M,g)$ defined by (\ref{kprobmeas}) is Lipschitz on $(M,g)$ and smooth on $M\,\backslash \,\cup_{k=1}^q\left\{\phi_{(k)},\,\rm{Cut}(\phi_{(k)}) \right\}$, where $\rm{Cut}(\phi_{(k)})$ denotes the cut locus of each $\phi_{(k)}$. In terms of the associated dilaton field we have
\begin{lem}
\label{finW}
The dilaton field associated to the measure $d\omega$,
\begin{eqnarray}
\label{fdil}
f(z;q)\,&:=&\,-\,\ln\,\left(\frac{d\omega(z;q)}{d\mu_g(z)}\,V_g(M)\right)\\
\nonumber\\
&=&\,\frac{1}{4\,r^2}\,\sum_{k=1}^q\,d_g^2\left(z,\,\phi_{(k)}\right)\,+\,\ln C_{\,r}(q)\;,\nonumber
\end{eqnarray}
has a gradient $\nabla f$ which exists a.e. on $(M,g)$ and $f\in \mathcal{H}^1(M,\mathbb{R})$.  
\end{lem}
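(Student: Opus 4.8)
The plan is to split the argument into a purely algebraic identification of $f$ followed by a regularity analysis governed by the squared Riemannian distance. First I would substitute the defining expression (\ref{kprobmeas}) for $d\omega(z;q)$ into the definition (\ref{geodildef}) of the geometrical dilaton. Since $\tfrac{d\omega(z;q)}{d\mu_g(z)}\,V_g(M)=C_{\,r}^{-1}(q)\,e^{-F(z;q)/2r^2}$, applying $-\ln(\cdot)$ gives $\ln C_{\,r}(q)+F(z;q)/2r^2$, and inserting the center--of--mass function (\ref{Fcm}), $F(z;q)=\tfrac12\sum_{k}d_g^2(z,\phi_{(k)})$, reproduces the closed form (\ref{fdil}). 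This step is computational and carries no analytic content; the substance of the lemma lies in controlling the regularity of the summands $d_g^2(z,\phi_{(k)})$, where $\phi_{(k)}$ denotes the constant value $y_{(k)}\in M$.

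For the regularity I would invoke two classical facts about the squared distance on a compact Riemannian manifold. First, for each $k$ the function $z\mapsto d_g(z,\phi_{(k)})$ is $1$--Lipschitz on $(M,g)$ by the triangle inequality, so $z\mapsto d_g^2(z,\phi_{(k)})$ is Lipschitz on the compact $M$; being a finite linear combination of such functions plus a constant, $f(\cdot\,;q)$ is itself Lipschitz on $M$. Second, $z\mapsto d_g^2(z,\phi_{(k)})$ is smooth on $M\setminus\big(\{\phi_{(k)}\}\cup\mathrm{Cut}(\phi_{(k)})\big)$, and the exceptional set $\bigcup_{k}\big(\{\phi_{(k)}\}\cup\mathrm{Cut}(\phi_{(k)})\big)$ is closed and of vanishing $d\mu_g$--measure, since each cut locus is a closed null set. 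Hence $\nabla f$ exists and is smooth off this measure--zero set, which yields the almost--everywhere existence of $\nabla f$ asserted in the statement; equivalently, Rademacher's theorem applied to the Lipschitz function $f$ gives the same conclusion directly.

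To obtain $f\in\mathcal{H}^1(M,\mathbb{R})$, i.e.\ $f\in W^{1,2}(M)$, I would combine these two observations. Lipschitz continuity on the compact $M$ makes $f$ bounded, so $f\in L^\infty(M)\subset L^2(M)$; moreover the a.e.--defined gradient obeys $|\nabla f|_g\le \mathrm{Lip}(f)$ almost everywhere, whence $\nabla f\in L^\infty(M)\subset L^2(M)$. Since a Lipschitz function is absolutely continuous along almost every coordinate line, its classical a.e.\ gradient coincides with its distributional gradient, and the latter is thus square integrable. Therefore $f$ and its weak gradient both belong to $L^2(M)$, that is $f\in\mathcal{H}^1(M,\mathbb{R})$.

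The step I expect to be the main obstacle is the control of $d_g^2(\cdot\,,\phi_{(k)})$ near the cut locus: one must know that $\mathrm{Cut}(\phi_{(k)})$ is $d\mu_g$--negligible and that the squared distance remains globally Lipschitz across it. These are standard but non--trivial results of Riemannian geometry. The geometric hypotheses recorded just before the lemma---$r<\pi/6\sqrt{\kappa}$ and $\mathrm{inj}(y)>3r$ on $B(r,p)$---additionally ensure that each summand is genuinely smooth throughout a neighbourhood of the support of $d\omega$, so that away from that neighbourhood only the measure--zero cut loci can obstruct smoothness, which is immaterial for both the almost--everywhere and the $W^{1,2}$ statements.
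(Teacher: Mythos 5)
Your proposal is correct and follows essentially the same route as the paper's proof: both establish that $f$, as a finite sum of squared distance functions, is Lipschitz on the compact $(M,g)$ and smooth off the cut loci, invoke Rademacher's theorem for the a.e.\ gradient, and then pass from $f\in W^{1,\infty}(M,\mathbb{R})$ to $f\in\mathcal{H}^{1}(M,\mathbb{R})$ via compactness of $M$ (the paper cites Grigor'yan, Chap.~11, Cor.~11.4, where you argue the $L^\infty\subset L^2$ inclusions directly, an inessential difference). Your explicit verification of the closed form (\ref{fdil}) and the identification of the a.e.\ classical gradient with the distributional gradient are details the paper leaves implicit, but they do not change the argument.
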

\begin{proof}
As a sum of squared distances from the constant maps $\{\phi_{(k)}\}$, $f$ is smooth 
on $M\,\backslash \,\cup_{k=1}^q\left\{\phi_{(k)},\,\rm{Cut}(\phi_{(k)}) \right\}$, and Lipschitz on $(M,g)$. Thus 
the first part of the lemma is a direct consequence of Rademacher's theorem. In particular the distributional gradient of $f$ with respect to the Riemannian metric,  $\nabla f$ is an $L^\infty$ vector field on $(M,g)$, that is $f\in\,W^{1, \infty}(M,\mathbb{R})$ where the Sobolev norm $\parallel f\parallel _{W^{1,\infty }}$ is defined by
$\parallel f\parallel _{W^{1,\infty }}:=\, ess\,\sup_M\,(|f|+|\nabla f|)$, (cf. Chap.4, Th. 5 of \cite{gariepy}).  Since $M$ is compact, it follows \cite{grigoryan2}, (Chap. 11, Corollary 11.4), that $f\in L^2(M,\mathbb{R})$ and $\nabla f \in L^2 (M,TM)$, \emph{i.e.}, $f\in \mathcal{H}^{1}(M,\mathbb{R})$. Let us observe that $\nabla f$ can  be integrated by parts over $(M,g)$ against locally Lipschitz vector field $v$, in the sense that $\int_M\,f\,\nabla _i\,v^i\,d\mu_g\,=\,\int_M\,v^i\nabla _i f d\mu_g$, (cf. Chap.9, lemma. 7.113 of \cite{Glickenstein} for a proof of this well--known property of Lipschitz functions).  Better global regularity cannot be expected since, if we compute the Laplacian of $f$, we get
\begin{equation}
\Delta _g f(z;q)\,=\,\frac{1}{4\,r^2}\,\sum_{k=1}^q\,\Delta _g\,d_g^2\left(z,\,\phi_{(k)}\right)\;,
\label{}
\end{equation}
which, from the standard properties of the distance function, implies that $\Delta _g\, f(z;q)$ is a distribution with a singular part supported on the cut locus $\cup_{k=1}^q\,\rm{Cut}(\phi_{(k)})$ 
of $(M,g,\,\{\phi_{(k)}\})$.       
\end{proof}
If we localize $f$ to the geodesic ball $B(2r,p)$ the situation is simpler, and we have 
\begin{lem}
For  $z\,\in\,B(2r,p)$,\;$r<\frac{\pi }{6\,\sqrt{\kappa }}$, \; the dilaton field $f(z;q)$ is a strictly convex function. In particular, if $[0,1]\,\ni \,s\mapsto \lambda^{(k)} (s)$,\;$k=1,\ldots,q$, are minimal geodesics connecting $\lambda^{(k)}(0)\,=\,z$ to $\lambda^{(k)}(1)\,=\,\phi_{(k)}$, and  $\exp_{z}:T_z\,M\longrightarrow M$ denotes the exponential mapping based at $z$, then we can write
\begin{eqnarray}
f(z;q)\,&=&\,\frac{1}{4\,r^2}\,\sum_{k=1}^q\,\int_0^1\,g(\dot{\lambda}^{(k)},\dot{\lambda}^{(k)})\,ds\,
-\,{\ln C_{\,r}(q)}\\
\nonumber\\
&=&\,\frac{1}{4\,r^2}\,\sum_{k=1}^q\,\left|\exp_{{z}}^{-1}\,\phi_{(k)}\right|^2\,-\,{\ln C_{\,r}(q)}\;,\nonumber
\end{eqnarray}
where $\exp_{{z}}^{-1}\,\phi_{(k)}\,\in\,T_zM$, \; $\sum_{k=1}^q\,\exp_{\phi_{cm}}^{-1}\,\phi_{(k)}\,=\,\vec {0}$, and
\\
\begin{equation}
 \left|\exp_{{z}}^{-1}\,\phi_{(k)}\right|\,:=\,\left[\left(\exp_{{z}}^{-1}\,\phi_{(k)}\right)^a\,
\left(\exp_{{z}}^{-1}\,\phi_{(k)}\right)^b\,\delta _{ab} \right]^{1/2}\,=\,d_g\left(\phi_{(k)},\,z\right)\;.
\end{equation}
In particular,  away from the cut locus $\rm{Cut}(\phi_{(k)})$ of each $\phi_{(k)}$, the dilaton field $f$ is differentiable on $M\,\backslash \,\cup_{k=1}^q\left\{\phi_{(k)},\,\rm{Cut}(\phi_{(k)}) \right\}$, and one computes
\\
\begin{equation}
\frac{\partial }{\partial z^h}\,f(z;q)\,=\,-\,\frac{1}{2\,r^2}\,\sum_{k=1}^q\,\left.\dot{\lambda}_h^{(k)}(s)\right|_{s=1}\,=
\,-\,\frac{1}{2\,r^2}\,g_{ih}(z)\,\sum_{k=1}^q\,\left(\exp_{z}^{-1}\,\phi_{(k)}\right)^i\;.
\end{equation}
\end{lem}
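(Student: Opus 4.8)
The plan is to reduce the entire statement to standard comparison-geometric properties of the squared distance function, since by Lemma~\ref{finW} (formula~(\ref{fdil})) the dilaton $f(z;q)$ is, up to an additive constant, the positive multiple $\tfrac{1}{4r^2}$ of the sum $\sum_{k=1}^{q}d_g^2(z,\phi_{(k)})$. First I would verify that on $B(2r,p)$ the function $f$ is genuinely smooth, so that honest Hessians and gradients are available. For $z\in B(2r,p)$ and $\phi_{(k)}\in B(r,p)$ the triangle inequality gives $d_g(z,\phi_{(k)})\le d_g(z,p)+d_g(p,\phi_{(k)})<3r$; since $r<\tfrac{\pi}{6\sqrt{\kappa}}$ this yields $d_g(z,\phi_{(k)})<\tfrac{\pi}{2\sqrt{\kappa}}$, while the hypothesis $inj(y)>3r$ for $y\in B(r,p)$ places $z$ inside the injectivity-radius ball of each $\phi_{(k)}$, hence off its cut locus. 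Therefore the minimizing geodesic $\lambda^{(k)}$ from $z$ to $\phi_{(k)}$ is unique, $\exp_z^{-1}\phi_{(k)}$ is well defined and depends smoothly on $z$, and $|\exp_z^{-1}\phi_{(k)}|=d_g(z,\phi_{(k)})$ by the Gauss lemma. Parametrizing $\lambda^{(k)}$ on $[0,1]$ with constant speed gives $\int_0^1 g(\dot\lambda^{(k)},\dot\lambda^{(k)})\,ds=|\dot\lambda^{(k)}(0)|^2=|\exp_z^{-1}\phi_{(k)}|^2$, which establishes both the integral and the exponential-map representations of $f$.

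Next the gradient formula follows from the first variation of the squared distance. The classical identity $\nabla_z\bigl(\tfrac{1}{2}d_g^2(z,\phi_{(k)})\bigr)=-\exp_z^{-1}\phi_{(k)}$, i.e.\ minus the initial velocity $\dot\lambda^{(k)}(0)=\exp_z^{-1}\phi_{(k)}$ of the minimizing geodesic issuing from $z$, yields, upon summation and insertion of the prefactor $\tfrac{1}{4r^2}$, the stated expression $\partial_{z^h}f=-\tfrac{1}{2r^2}\,g_{ih}(z)\sum_{k}(\exp_z^{-1}\phi_{(k)})^i$. Evaluating this gradient at the center of mass $\phi_{cm}$ and imposing its vanishing---which is precisely the Euler--Lagrange condition characterizing $\phi_{cm}$ as the minimizer of $F(y;q)$ in Lemma~\ref{CMlemma}---produces the balance relation $\sum_{k}\exp_{\phi_{cm}}^{-1}\phi_{(k)}=\vec 0$.

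Strict convexity is the heart of the matter and the step I expect to be the main obstacle, since it is exactly where the curvature bound $\kappa$ and the radius restriction $r<\tfrac{\pi}{6\sqrt{\kappa}}$ are consumed. I would compute $Hess\bigl(\tfrac12 d_g^2(\cdot,\phi_{(k)})\bigr)$ via the Hessian comparison theorem: writing $\rho=d_g(\cdot,\phi_{(k)})$ one has $Hess(\tfrac12\rho^2)=\rho\,Hess(\rho)+d\rho\otimes d\rho$, where the radial part contributes $1$ and, under the bound $\mathrm{sec}\le\kappa$, the Jacobi-field (index-form) estimate bounds the transverse part below by $\rho\,\sqrt{\kappa}\cot(\sqrt{\kappa}\,\rho)\,(g-d\rho\otimes d\rho)$. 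This lower bound is strictly positive precisely when $\sqrt{\kappa}\,\rho<\tfrac{\pi}{2}$, which holds throughout $B(2r,p)$ by the estimate $d_g(z,\phi_{(k)})<3r<\tfrac{\pi}{2\sqrt{\kappa}}$. Hence each $\tfrac12 d_g^2(\cdot,\phi_{(k)})$ is strictly convex on $B(2r,p)$, and since a positive linear combination of strictly convex functions is strictly convex, so is $f(z;q)$.

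The delicate point is to keep the transverse eigenvalue bound uniformly positive over the entire ball and to carry out the curvature comparison honestly---tracking the Jacobi fields rather than invoking the comparison as a black box---but once the radius budget $3r<\tfrac{\pi}{2\sqrt{\kappa}}$ is secured the strict positivity of the Hessian, and therefore the convexity, is forced. I note in passing that this convexity also re-derives the uniqueness of the center of mass asserted in Lemma~\ref{CMlemma}, so the various parts of the statement are mutually consistent.
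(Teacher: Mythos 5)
Your proposal is correct and takes essentially the same approach as the paper, whose proof simply appeals to ``the standard properties of the distance function over a convex metric ball'' $B(2r,p)$: the triangle-inequality budget $d_g(z,\phi_{(k)})<3r<\tfrac{\pi}{2\sqrt{\kappa}}$ together with $inj>3r$ keeps $z$ off each cut locus, the Gauss lemma and first variation of $d_g^2$ give the two representations and the gradient (hence the balance relation $\sum_k\exp_{\phi_{cm}}^{-1}\phi_{(k)}=\vec{0}$ at the minimizer), and the Hessian comparison under $\mathrm{sec}\le\kappa$ gives strict convexity. You have merely unfolded in full detail exactly the standard facts the paper cites, so there is nothing further to add.
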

\begin{proof} These results directly follow from the standard properties of the distance function over a convex metric ball, (in our case $B(2r,p)$).
\end{proof}
\begin{rem} 
In terms of $f$ the
warped metric (\ref{new012})  takes the form
\begin{equation}
h^{(q)}(y,\,\xi )\,:=\, g(y)\,+\,e^{-\,\tfrac{2f(y;q)}{q}}\,\sum_{i=1}^q\,d\xi_{(i)}^2,\;\;\;\;\;\; \xi_{(i)}\in\,[0,1]
\;,
\label{newmetr0}
\end{equation}
with
\begin{equation}
d\mu_{h^{(q)}}(y)\,=\,e^{-\,f(y;q)}\,d\mu_g(y)\;.
\label{newdmug}
\end{equation}
\end{rem}
The metric warping (\ref{newmetr0}) on the torus fiber $\mathbb{T}^q_{y}$ over $y\in M$ can be compensated by the point--dependent ($y$) rescaling  $\xi_{(k)}\,\mapsto\,\,\xi_{(k)}\, \exp[{\tfrac{f(y;q)}{q}}]$ of the fiber itself. This suggests a natural extension of $\phi\in \rm{Map}(\Sigma, M)$ associated to the warping $M\times_{(f)} \mathbb{T}^q$ generated by the dilatonic measure.
\begin{lem}
Let $\phi\in {\mathcal{H}^{1}}(\Sigma, M)$ denote a  localizable map and let $f\circ \phi$ be the induced dilaton field over $\phi(\Sigma)$. Then, the function $f\circ \phi$ is of class ${\mathcal{H}^{1}}(\Sigma, \mathbb{R})$ and 
\begin{eqnarray}
\label{newMap0}
\\
&&\Phi_{(q)}\,:\,(\Sigma,\gamma)\,\longrightarrow \;\;\;\;\;\;\;\;\;\;\;(N^{n+q}:=M^n\times_{(f)} \mathbb{T}^q,\,h^{(q)})\nonumber\\
\nonumber\\
&& x\;\; \longmapsto  \Phi_{(q)}(x):=\left(\phi^i(x),\,\tfrac{1}{2}\,e^{\tfrac{f(\phi(x);q)}{q}}\,
d_{g}(\phi_{cm},\phi_{(1)})\,,\ldots,\,\right.\nonumber\\
\nonumber\\
&&\left.\ldots,\tfrac{1}{2}\,e^{\tfrac{f(\phi(x);q)}{q}}\,d_{g}(\phi_{cm},\phi_{(q)}) \right)\;  \nonumber
\end{eqnarray}
is  a localizable map $\in\,{\mathcal{H}^{1}}(\Sigma, M\times_{(f)} \mathbb{T}^q)$ describing the (fiber--wise uniform) dilatation of the torus fiber $\mathbb{T}^q_{\phi(x)}$ over $\phi(x)\in M$.
\end{lem}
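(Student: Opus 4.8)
The plan is to establish the two assertions in turn: first the regularity $f\circ\phi\in\mathcal{H}^1(\Sigma,\mathbb{R})$, then the membership $\Phi_{(q)}\in\mathcal{H}^1(\Sigma,N^{n+q})$, the latter resting on a direct computation of the pulled--back warped metric in which the fibre dilatation exactly compensates the warping factor of (\ref{newmetr0}).

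\emph{Step 1 (Sobolev regularity of $f\circ\phi$).} By Lemma \ref{finW} the dilaton $f$ belongs to $W^{1,\infty}(M,\mathbb{R})$; in particular it is globally Lipschitz on the compact manifold $(M,g)$, hence bounded, with constant $L:=\parallel f\parallel_{W^{1,\infty}}$. Using the Nash embedding $J:(M,g)\hookrightarrow\mathbb{E}^m$ and the McShane extension theorem I would extend $f$ to a Lipschitz function $\widetilde{f}:\mathbb{R}^m\to\mathbb{R}$ of the same constant, so that $f\circ\phi=\widetilde{f}\circ(J\circ\phi)$ with $J\circ\phi\in\mathcal{H}^1(\Sigma,\mathbb{R}^m)$. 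The classical theorem on the composition of a Lipschitz map with a Sobolev map (see e.g. \cite{gariepy}, Chap.~4) then yields $f\circ\phi\in\mathcal{H}^1(\Sigma,\mathbb{R})$, together with the a.e.\ gradient bound $|\nabla(f\circ\phi)|_\gamma\leq L\,|\nabla\phi|_\gamma$. Integrating against $d\mu_\gamma$ and recalling the energy functional (\ref{Msigmamod0}) gives the quantitative estimate
\begin{equation}
\int_\Sigma\,|\nabla(f\circ\phi)|_\gamma^2\,d\mu_\gamma\,\leq\,L^2\,\int_\Sigma\,tr_{\gamma(x)}(\phi^*g)\,d\mu_\gamma\,=\,2\,L^2\,E[\phi,g]_{(\Sigma,M)}\,<\,\infty\;,
\end{equation}
the finiteness following from $\phi\in\mathcal{H}^1(\Sigma,M)$; together with the boundedness of $f$ (whence $f\circ\phi\in L^2(\Sigma)$, $\Sigma$ being compact) this proves the first claim.

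\emph{Step 2 (the warping cancellation and finite energy of $\Phi_{(q)}$).} Writing $u:=f\circ\phi$ and abbreviating the constants $c_{(k)}:=\tfrac{1}{2}\,d_g(\phi_{cm},\phi_{(k)})$, the fibre components of $\Phi_{(q)}$ read $\xi_{(k)}\circ\Phi_{(q)}=c_{(k)}\,e^{u/q}$, so that the only $x$--dependence sits in $e^{u/q}$. I would then pull back the warped metric (\ref{newmetr0}) directly: the $M$--part pulls back to $\phi^*g$, while the fibre part $e^{-2f/q}\sum_k d\xi_{(k)}^2$ evaluated along $\xi_{(k)}=c_{(k)}e^{u/q}$ produces from $(d\xi_{(k)})^2=c_{(k)}^2\,q^{-2}\,e^{2u/q}(du)^2$ a factor $e^{+2u/q}$ which cancels exactly against the warping factor $e^{-2u/q}$, yielding the clean identity
\begin{equation}
\Phi_{(q)}^*\,h^{(q)}\,=\,\phi^*g\,+\,\frac{A}{q^2}\,du\otimes du\;,\qquad A:=\sum_{k=1}^q c_{(k)}^2=\tfrac{1}{4}\sum_{k=1}^q d_g^2(\phi_{cm},\phi_{(k)})\;,
\end{equation}
in which the point--dependent dilatation $\xi_{(k)}\mapsto\xi_{(k)}e^{f/q}$ has absorbed the warping, exactly as announced in the Remark preceding the statement. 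Taking the $\gamma$--trace and integrating, the energy splits as $E[\Phi_{(q)},h^{(q)}]=E[\phi,g]_{(\Sigma,M)}+\frac{A}{2q^2}\int_\Sigma|\nabla u|_\gamma^2\,d\mu_\gamma$, both summands finite by Step 1. Each component of $\Phi_{(q)}$ lies in $\mathcal{H}^1$ (the first $n$ via $J\circ\phi$, the fibre ones because $t\mapsto e^{t/q}$ is Lipschitz on the bounded range of $u$, so $e^{u/q}\in\mathcal{H}^1$ again by the composition theorem), hence $\Phi_{(q)}\in\mathcal{H}^1(\Sigma,N^{n+q})$. Localizability is inherited: on any disk with $\phi(D(x_0,\delta))\subset B(r,p)$ the boundedness of $u$ confines the coordinates $c_{(k)}e^{u/q}$ to a fixed box in $\mathbb{T}^q$, so $\Phi_{(q)}(D(x_0,\delta))$ sits inside a metric ball of $(N^{n+q},h^{(q)})$, the latter carrying bounded geometry since $e^{-2f/q}$ is bounded above and below on compact $M$.

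\emph{Main obstacle.} The delicate point is Step 1, namely justifying the composition $f\circ\phi$ when $f$ is only Lipschitz, not $C^1$, and $\phi$ is only $\mathcal{H}^1$. By Lemma \ref{finW} the Laplacian $\Delta_g f$ carries a singular part on the cut loci $\cup_k \mathrm{Cut}(\phi_{(k)})$, where $f$ is not differentiable; although this set is $d\mu_g$--null in $M$, one must ensure its $\phi$--preimage does not obstruct the a.e.\ differentiation. This is precisely what the Lipschitz--composition theorem settles, the bound $|\nabla(f\circ\phi)|\leq L\,|\nabla\phi|$ holding a.e.\ independently of the fine behaviour of $\nabla f$ on that null set; it is for this robustness that the $W^{1,\infty}$ regularity of Lemma \ref{finW}, rather than mere smoothness away from the cut locus, is the right hypothesis to invoke.
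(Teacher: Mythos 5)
Your proof is correct, but it takes a genuinely different route from the paper's at the one delicate point, the regularity of $f\circ\phi$. The paper deliberately avoids the general Lipschitz--composition argument: it flags that a weak chain rule for $f\circ\phi$ may be unavailable (since $\nabla\phi$ need not vanish a.e.\ on the $\phi$--preimage of the cut loci), and instead exploits the specific structure of $f$ established in Lemma~\ref{finW} --- a finite sum of squared distances $d^2_g(\cdot\,,\phi_{(k)})$ plus a constant --- together with the Eells--Fuglede theorem (\cite{eells}, Th.~9.1), which asserts that $x\mapsto d_g(\phi(x),y)$ lies in $\mathcal{H}^1(\Sigma,\mathbb{R})$ for any finite--energy map $\phi$; squaring on the bounded range then gives $f\circ\phi\in\mathcal{H}^1(\Sigma,\mathbb{R})$, and localizability is handled by passing to local charts. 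You instead prove membership directly via Nash embedding, McShane extension, and the composition theorem for a Lipschitz function with a vector--valued Sobolev map; this is legitimate, because membership in $\mathcal{H}^1$ with the a.e.\ bound $|\nabla(f\circ\phi)|\le L\,|\nabla\phi|$ follows from mollifying the Lipschitz extension and passing to weak limits (Mazur's lemma preserves the pointwise bound), and it is only the exact chain--rule \emph{identity} that can fail on the preimage of the cut loci --- your closing paragraph diagnoses this correctly. What your route buys is generality (any Lipschitz dilaton, not just distance--type ones) and the quantitative estimate $\int_\Sigma|\nabla(f\circ\phi)|^2_\gamma\,d\mu_\gamma\le 2L^2\,E[\phi,g]_{(\Sigma,M)}$; what the paper's route buys is coherence with the metric--space harmonic--map machinery it develops in Section~\ref{digression}. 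Two small caveats: the McShane extension is Lipschitz with respect to the Euclidean chordal distance on $J(M)$, which on compact $M$ is only bi--Lipschitz equivalent to $d_g$, so ``the same constant'' should read ``a comparable constant''; and the composition theorem as stated in \cite{gariepy} concerns scalar Sobolev functions, the vector--valued case being the standard Marcus--Mizel--type result proved by the mollification argument just sketched. Your Step~2 is the paper's warping--cancellation computation moved forward in the text: the identity $\Phi_{(q)}^*h^{(q)}=\phi^*g+\tfrac{A}{q^2}\,du\otimes du$ with $A=\tfrac{1}{2}F(\phi_{cm};q)$ reproduces, upon taking the $\gamma$--trace, exactly the energy factorization \eqref{qharmwarp} of the subsequent lemma, so there is no discrepancy there.
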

\begin{proof}
According to Lemma \ref{finW}, the function $f$ is Lipschitz on $(M,g)$. In general, the composition between a map $\phi\in {\mathcal{H}^{1}}(\Sigma, M)$  and a  Lipschitz function $f:M\rightarrow \mathbb{R}$ does not map naturally in ${\mathcal{H}^{1}}(\Sigma, \mathbb{R})$ since the distributional gradient $\nabla \phi (x)$ does not necessarily vanish almost everywhere for $\phi(x)\in \cup_{k=1}^q\{\rm{Cut}(\phi_{(k)})\}$, and a chain rule for differentiating weakly $f\circ \phi$ may be not available. However, if the map $\phi\in {\mathcal{H}^{1}}(\Sigma, M)$ is of finite harmonic energy $E[\phi,\,g]_{(\Sigma, M)}$ then the function $d_g(\phi(x),y)$ is, for any point $y\in M$ of class ${\mathcal{H}^{1}}(\Sigma, \mathbb{R})$. This is a particular case of a more general result concerning harmonic maps $L^2(X,Y)$ between a Riemannian polyhedra, with simplexwise smooth Riemannian metric $(X,\gamma)$, and a metric space $(Y,d_Y)$, (cf. Eells--Fuglede \cite{eells} th. 9.1 pp. 153-154; for a general metric space set--up in our case see also Section \ref{digression} below).  Hence we have $f\circ \phi\in \mathcal{H}^{1}(\Sigma,\mathbb{R})$. Moreover, if $\phi\in {\mathcal{H}^{1}}(\Sigma, M)$ is localizable, we can introduce local coordinates
$\{D_{(\alpha)},\,x^{\alpha }\}$ in $(\Sigma,\gamma)$,\, and
 $y^k=\phi^{k}(x)$, \, $k=1,\ldots,n$,  for the corresponding image points in $\phi(D_{(\alpha)})\subset M$, and  work locally in the 
smooth framework provided by  the  space of  smooth maps   $Map\,(\Sigma, M\times_{(f)} \mathbb{T}^q)$. 
\end{proof}
As an elementary consequence of the duality between the metric warping (\ref{newmetr0}) and the map warping (\ref{newMap0}) we get 
\begin{lem}
The harmonic energy functional associated with the map $\Phi_{(q)}$ is provided by 
\begin{eqnarray}
\label{qharmwarp}
E[\Phi_{(q)},\,h^{(q)}]_{(\Sigma, N^{n+q})}\,&=&\,E[\phi,\,g]_{(\Sigma, M)}\\
\nonumber\\
&+&\,\frac{F(\phi_{cm};q)}{2}\,
\mathfrak{D}[q^{-1}\,f(\phi;q)]_{(\Sigma,\,\mathbb{R})}\;,\nonumber
\end{eqnarray}
where $\mathfrak{D}[q^{-1}\,f(\phi;q)]_{(\Sigma,\,\mathbb{R})}$ is the  Dirichlet energy 
\begin{equation}
\label{harmdiren}
\mathfrak{D}[q^{-1}\,f(\phi;q)]_{(\Sigma,\,\mathbb{R})}\,:=\, 
\,\frac{1}{2}\,
\int_{\Sigma }\,\left|\tfrac{d f(\phi(x);q)}{q}\right|^2_{\gamma}\,d\mu_{\gamma}(x)
\end{equation}
 associated to the 
map $q^{-1}\,f\circ \phi\,:\, (\Sigma,\gamma)\longrightarrow \mathbb{R}^1$, and $F(\phi_{cm};q)$ is the minimum of the center of mass function (\ref{Fcm}).
\end{lem}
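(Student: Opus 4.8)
The plan is to reduce the statement to a pointwise evaluation of the pulled--back metric $\Phi_{(q)}^{*}h^{(q)}$, followed by a $\gamma$--trace and an integration. Adopt local coordinates $(y^{k},\xi_{(i)})$, $k=1,\ldots,n$, $i=1,\ldots,q$, on $N^{n+q}$. The warped metric (\ref{newmetr0}) contains no cross terms mixing the base directions $dy^{k}$ with the fiber directions $d\xi_{(i)}$; moreover the $M$--component of $\Phi_{(q)}$ is exactly $\phi$, and $g$ involves only the $dy^{k}$, so the horizontal part of the pull--back is simply $\phi^{*}g$. Its $\gamma$--trace equals $2\,e(\phi)$, whence, with the factor $\tfrac{1}{2}$ carried by the energy density, the horizontal contribution integrates to exactly the harmonic energy $E[\phi,g]_{(\Sigma,M)}$. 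It thus remains to handle the vertical contribution generated by the torus fibers.

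For the vertical part I would differentiate the fiber components $\Xi_{(i)}(x):=\tfrac{1}{2}\,e^{f(\phi(x);q)/q}\,d_{g}(\phi_{cm},\phi_{(i)})$. Since each distance $d_{g}(\phi_{cm},\phi_{(i)})$ is a constant, one obtains $d\Xi_{(i)}=\tfrac{1}{2}\,d_{g}(\phi_{cm},\phi_{(i)})\,e^{f(\phi;q)/q}\,d\!\left(\tfrac{f(\phi;q)}{q}\right)$, whence $(d\Xi_{(i)})^{2}=\tfrac{1}{4}\,d_{g}^{2}(\phi_{cm},\phi_{(i)})\,e^{2f(\phi;q)/q}\,\big(d(q^{-1}f(\phi;q))\big)^{2}$. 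The crucial point is that the warping factor $e^{-2f(\phi;q)/q}$ carried by $\Phi_{(q)}^{*}h^{(q)}$ cancels exactly the factor $e^{2f(\phi;q)/q}$, so the $y$--dependent dilatation of the fibers drops out and the vertical term reduces to $\tfrac{1}{4}\,d_{g}^{2}(\phi_{cm},\phi_{(i)})\,\big(d(q^{-1}f(\phi;q))\big)^{2}$. Summing over $i$ and using the identity $\sum_{k=1}^{q}d_{g}^{2}(\phi_{cm},\phi_{(k)})=2\,F(\phi_{cm};q)$, which is immediate from the definition (\ref{Fcm}), gives the vertical contribution to $\Phi_{(q)}^{*}h^{(q)}$ as $\tfrac{F(\phi_{cm};q)}{2}\,\big(d(q^{-1}f(\phi;q))\big)^{2}$.

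Taking the $\gamma$--trace and integrating, the vertical term yields $\tfrac{F(\phi_{cm};q)}{2}\cdot\tfrac{1}{2}\int_{\Sigma}\left|\tfrac{df(\phi;q)}{q}\right|_{\gamma}^{2}\,d\mu_{\gamma}$, which by (\ref{harmdiren}) is precisely $\tfrac{F(\phi_{cm};q)}{2}\,\mathfrak{D}[q^{-1}f(\phi;q)]_{(\Sigma,\mathbb{R})}$; adding the horizontal contribution $E[\phi,g]_{(\Sigma,M)}$ reproduces (\ref{qharmwarp}). I expect the genuine difficulty to lie in regularity rather than in the algebra. Because $f$ is only Lipschitz (Lemma \ref{finW}) and $\phi$ is only of class $\mathcal{H}^{1}$, the differential $d(f\circ\phi)$ and the chain rule used in differentiating $e^{f(\phi;q)/q}$ must be interpreted weakly. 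To legitimize this I would invoke the preceding lemma, which ensures $f\circ\phi\in\mathcal{H}^{1}(\Sigma,\mathbb{R})$ through the Eells--Fuglede result \cite{eells}; since this composition is moreover bounded, $e^{q^{-1}f(\phi;q)}$ belongs to $\mathcal{H}^{1}\cap L^{\infty}$ and the chain rule $d\,e^{u}=e^{u}\,du$ holds almost everywhere, so that all the pointwise identities above are valid for a.e.\ $x\in\Sigma$ and the integration against $d\mu_{\gamma}$ is justified.
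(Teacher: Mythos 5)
Your proposal is correct and follows essentially the same route as the paper: a direct local--coordinate evaluation of $\Phi_{(q)}^{*}h^{(q)}$ exploiting the block structure of the warped metric, the exact cancellation of the warping factor $e^{-2f(\phi;q)/q}$ against the $e^{2f(\phi;q)/q}$ produced by differentiating the fiber components, and the identity $\sum_{k=1}^{q}d_{g}^{2}(\phi_{cm},\phi_{(k)})=2F(\phi_{cm};q)$. Your closing remark on the weak chain rule merely makes explicit the regularity the paper secures via localizability and the preceding lemma giving $f\circ\phi\in\mathcal{H}^{1}(\Sigma,\mathbb{R})$, so there is nothing further to add.
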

\begin{proof} Since the map $\Phi_{(q)}$ is localizable and $f\circ \phi\in \mathcal{H}^{1}(\Sigma,\mathbb{R})$,  a direct computation in local charts  provides
\begin{eqnarray}
\label{EPhiq}
\\
&&E[\Phi_{(q)},\,h^{(q)}]_{(\Sigma, N^{n+q})}\,:=\,
\frac{1}{2}\,\int_\Sigma\,\gamma^{\mu\nu}\,\frac{\partial \Phi_{(q)}^{a}(x,\xi )}{\partial x^{\mu}}
\frac{\partial\Phi_{(q)}^{b}(x,\xi )}{\partial x^{\nu}}\,h_{ab}(\phi)\,d\mu_{\gamma}\nonumber\\
\nonumber\\
&&\,=\,\frac{1}{2}\,\int_\Sigma\,\gamma^{\mu\nu}\,\frac{\partial \phi^{i}}{\partial x^{\mu}}
\frac{\partial\phi^{j}}{\partial x^{\nu}}\,g_{ij}(\phi)\,d\mu_{\gamma}\,\nonumber\\
\nonumber\\
&&\,+\,\frac{1}{8}\,\sum_{k=1}^q\,d_{g}^2(\phi_{cm},\phi_{(k)})\,
\int_{\Sigma}\,\left|\tfrac{d f(\phi(x);q)}{q}\right|^2_{\gamma}\,d\mu_{\gamma}\nonumber\\
\nonumber\\
&&\,=\,E[\phi,\,g]_{(\Sigma, M)}\,+\,\,\frac{F(\phi_{cm};q)}{2}\,
\mathfrak{D}[q^{-1}\,f(\phi;q)]_{(\Sigma,\,\mathbb{R})}\;,\nonumber
\end{eqnarray}
\\
\noindent
where $a,b\,=\,1,\ldots,n+q$,\, $\left|d f(\phi;q)\right|^2_{\gamma}\,:=\,\gamma^{\mu\nu}\,\tfrac{\partial f(\phi;q)}{\partial x^{\mu}}
\tfrac{\partial f(\phi;q)}{\partial x^{\nu}}$, and where we  exploited the relation
\\
\begin{equation}
\gamma^{\mu\nu}\,\frac{\partial\,\Phi_{(q)}^{l}(x,\xi ) }{\partial x^{\mu}}
\frac{\partial\, \Phi_{(q)}^{m}(x,\xi )}{\partial x^{\nu}}\,\delta _{lm}\,=\,\frac{1}{4}\,
\left|\tfrac{d f(\phi;q)}{q}\right|^2_{\gamma}\,e^{\,\tfrac{2f(\phi;q)}{q}}\,\sum_{k=1}^q\,d_{g}^2(\phi_{cm},\phi_{(k)})\;,
\end{equation}
for $l,\,m\,=\,n+1,\,\ldots,\,q$.  
\end{proof}
This extended harmonic map set--up is interesting in many respects. In particular,
if we choose the given surface $(\Sigma, \gamma)$ to be topologically the 
$2$--torus $\mathbb{T}^2$ endowed with a conformally flat metric associated to the dilaton field, then the functional $E\,[\Phi_{(q)},\,h^{(q)}]$ can be directly connected  to the non--linear $\sigma$ model dilatonic action (\ref{eqnAct}). The underlying formal procedure is well--known, but in our setting the center of mass localization makes its role rather subtle and far reaching. 
\begin{prop} 
\label{propconf}
Let $(\Sigma\simeq \mathbb{T}^2,\,\delta)$ be a flat 2--torus. For $\phi\in {\mathcal{H}^{1}}(\mathbb{T}^2, M)$ a localizable map taking values in $M\,\backslash \,\cup_{k=1}^q\,\rm{Cut}(\phi_{(k)})$, we denote by $f\circ \phi$ the induced dilaton field over $\phi(\mathbb{T}^2)$. If we endow $\mathbb{T}^2$ with the conformally flat metric
\begin{equation}
\gamma_{\mu\nu}=e^{\tfrac{f(\phi;q)}{q}}\, \delta_{\mu\nu}\;, 
\label{psiconf00}
\end{equation}
then in the resulting conformal gauge $(\mathbb{T}^2,\,\gamma)$ we can write
\begin{equation}
\label{eqnAct100}
{S}\left[\gamma,\phi;\,\,F(\phi_{cm};q),\,g,\,d\omega\right]
\;\underset{(f)}{=}\; 
\frac{2\,q}{F(\phi_{cm};q)}\;
E[\Phi_{(q)},\,h^{(q)}]_{(\Sigma, N^{n+q})}\;,
\end{equation}
\end{prop}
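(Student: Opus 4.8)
The plan is to start from the explicit decomposition of the action supplied by Definition (\ref{eqnAct}), namely $S[\gamma,\phi;a,g,d\omega]=\tfrac{2}{a}\,E[\phi,g]_{(\Sigma,M)}+\int_\Sigma\mathcal{K}_\gamma\,f(\phi)\,d\mu_\gamma$ with $a=F(\phi_{cm};q)$, and to identify each of the two summands with the corresponding building block of $E[\Phi_{(q)},h^{(q)}]_{(\Sigma,N^{n+q})}$ as exhibited in Lemma (\ref{qharmwarp}). First I would dispose of the harmonic summand: since the energy density $\mathrm{tr}_{\gamma}(\phi^*g)\,d\mu_\gamma$ is pointwise conformally invariant in two dimensions, the value of $E[\phi,g]_{(\Sigma,M)}$ computed in the conformal gauge (\ref{psiconf00}) coincides with the one computed for the flat metric $\delta$, which is exactly the harmonic term appearing in (\ref{qharmwarp}); thus this piece transfers without any computation.

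The substance lies in the dilatonic summand $\int_\Sigma\mathcal{K}_\gamma\,f(\phi)\,d\mu_\gamma$ which, as recalled in (\ref{dilaction}), is \emph{not} conformally invariant and so genuinely feels the chosen gauge. Writing the conformal factor (\ref{psiconf00}) as $\gamma_{\mu\nu}=e^{2\sigma}\delta_{\mu\nu}$ with $\sigma=\tfrac{1}{2q}f(\phi;q)$, I would insert the standard two--dimensional transformation law $\mathcal{K}_\gamma=-e^{-2\sigma}\Delta_\delta\sigma$ together with $d\mu_\gamma=e^{2\sigma}\,d^2x$. The conformal factors then cancel, leaving $-\tfrac{1}{2q}\int_{\mathbb{T}^2}f(\phi)\,\Delta_\delta f(\phi)\,d^2x$; integrating by parts on the closed torus $\mathbb{T}^2$ (no boundary) converts this into $\tfrac{1}{2q}\int_{\mathbb{T}^2}|\nabla_\delta f(\phi)|^2\,d^2x$. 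Recognising, again by the conformal invariance of the scalar Dirichlet integral in dimension two, that $\tfrac{1}{2q^2}\int_{\mathbb{T}^2}|\nabla_\delta f(\phi)|^2\,d^2x=\mathfrak{D}[q^{-1}f(\phi;q)]_{(\Sigma,\mathbb{R})}$ in the sense of (\ref{harmdiren}), this exhibits the dilatonic summand as a fixed multiple of $\mathfrak{D}$, which is precisely the term produced by the fibre contribution $\tfrac{F(\phi_{cm};q)}{2}\,\mathfrak{D}$ in (\ref{qharmwarp}) once multiplied by the overall normalization $\tfrac{2q}{F(\phi_{cm};q)}$. Collecting the two summands and using (\ref{qharmwarp}) to repackage $E[\phi,g]_{(\Sigma,M)}+\tfrac{F(\phi_{cm};q)}{2}\,\mathfrak{D}$ as $E[\Phi_{(q)},h^{(q)}]$ then yields the asserted identity $\underset{(f)}{=}$ in the gauge (\ref{psiconf00}).

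The main obstacle is analytic rather than algebraic: the curvature transformation law and the integration by parts both presuppose that $f\circ\phi$ is regular enough to carry a genuine, not merely distributional, Laplacian. By Lemma (\ref{finW}) the dilaton is only Lipschitz on $(M,g)$, and $\Delta_g f$ acquires a singular part supported on the cut loci $\cup_{k}\mathrm{Cut}(\phi_{(k)})$; this is exactly why the hypothesis confines $\phi$ to $M\setminus\cup_{k}\mathrm{Cut}(\phi_{(k)})$, where $f$ is smooth and $f\circ\phi\in\mathcal{H}^1(\Sigma,\mathbb{R})$ so that the chain rule and the integration by parts of (\ref{harmdiren}) are legitimately available. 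I would therefore perform the computation on the full--measure open set $\phi^{-1}(M\setminus\cup_{k}\mathrm{Cut}(\phi_{(k)}))$ and verify that no cut--locus contributions enter the integration by parts, after which the remaining work is purely the bookkeeping of conformal weights of $\mathcal{K}_\gamma$, $d\mu_\gamma$ and $|\nabla f|^2_\gamma$ needed to make the powers of $e^{\sigma}$ cancel as claimed.
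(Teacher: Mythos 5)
Your route is essentially the paper's own: the paper also computes the Gaussian curvature of the dilaton gauge via the two--dimensional conformal transformation law, writing $\mathcal{K}_{f}=-\tfrac{1}{2q}\,\Delta_{\gamma}f(\phi;q)$ (which in your flat--gauge notation is exactly $\mathcal{K}_{\gamma}=-e^{-2\sigma}\Delta_{\delta}\sigma$ with $\sigma=\tfrac{1}{2q}f$), integrates by parts on the closed torus to get $\int_{\Sigma}f\,\mathcal{K}_{f}\,d\mu_{\gamma}=q\,\mathfrak{D}[q^{-1}f(\phi;q)]$, and then repackages the two summands through the decomposition (\ref{qharmwarp}); your regularity discussion (restricting to $M\setminus\cup_{k}\mathrm{Cut}(\phi_{(k)})$ so that $f$ is smooth there and $f\circ\phi\in\mathcal{H}^{1}$, hence the Laplacian carries no cut--locus singular part) is likewise the same as in the paper's proof and the surrounding remark. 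So the analytic content of your argument is sound and matches the source.

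There is, however, one concrete error: the identification of the coupling as $a=F(\phi_{cm};q)$. Track the balance in your own computation: you obtain $S[\gamma,\phi;a,g,d\omega]=\tfrac{2}{a}E[\phi,g]_{(\Sigma,M)}+q\,\mathfrak{D}[q^{-1}f]$, while the right--hand side of (\ref{eqnAct100}) expands, via (\ref{qharmwarp}), to $\tfrac{2q}{F(\phi_{cm};q)}E[\phi,g]_{(\Sigma,M)}+q\,\mathfrak{D}[q^{-1}f]$. The dilatonic pieces match, but equality of the harmonic pieces forces $\tfrac{2}{a}=\tfrac{2q}{F(\phi_{cm};q)}$, i.e.\ $a\equiv q^{-1}F(\phi_{cm};q)$ --- precisely the paper's identification (\ref{Facoupl}), where $a$ is the \emph{average} squared distance of the constant maps $\{\phi_{(k)}\}$ from their center of mass, not the total center--of--mass value $F$. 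With your choice $a=F(\phi_{cm};q)$ the harmonic terms on the two sides of (\ref{eqnAct100}) disagree by a factor of $q$ and the asserted identity fails; the notation $S[\gamma,\phi;\,F(\phi_{cm};q),g,d\omega]$ in the statement is admittedly loose, but the prefactor $\tfrac{2q}{F(\phi_{cm};q)}$ on the right pins down $a=F/q$ unambiguously. This is a one--line fix (substitute (\ref{Facoupl}) for your opening identification), after which your proof closes and coincides with the paper's.
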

\begin{proof}
Under the stated hypotheses, 
$\phi(\mathbb{T}^2)\,\subseteq \,M\,\backslash \,\cup_{k=1}^q\,\rm{Cut}(\phi_{(k)})$ so that $f$, restricted to $\phi(\mathbb{T}^2)$, is smooth. Since $\phi$ is  localizable,  we can use  local charts and compute (in the weak sense) the Gaussian curvature of $(\mathbb{T}^2, \gamma)$ according to $\mathcal{K}_{f}\,=\,-\,\frac{1}{2q}\,\Delta _{\gamma}\,f(\phi;q)$.  By integrating $\mathcal{K}_{f}\,f(\phi;q)$ over the surface $\Sigma\simeq \mathbb{T}^2$ we get 
\begin{eqnarray}
\label{Kdecrease00}
&&\int_{\Sigma}\,f(\phi;q)\,\mathcal{K}_{f}\,d\mu _{\gamma}\,\underset{(f)}{=}\,
-\,\frac{1}{2\,q}\,\int_{\Sigma}\,f(\phi;q)\,\Delta _{\gamma}\,f(\phi;q)\,d\mu _{\gamma}\\
\nonumber\\
&&= \,\frac{1}{2\,q}\,\int_{\Sigma}\,
 \left|d\,f(\phi;q)\right|_{\gamma}^2\,d\mu _{\gamma}\,\underset{(f)}{=}\,
q\,\mathfrak{D}[q^{-1}\,f(\phi;q)]_{(\Sigma,\, \mathbb{R})} \;,\nonumber
\end{eqnarray}
where we have integrated by parts, and where the underset ${(f)}$ stresses the fact that the relation holds in the dilaton induced conformal gauge (\ref{psiconf00}). Thus, (\ref{qharmwarp}) can be rewritten as  
\begin{eqnarray}
\label{Erelation00}
E[\Phi_{(q)},\,h^{(q)}]_{(\Sigma, N^{n+q})}\, &\underset{(f)}{=}&\,
E[\phi,\,g]_{(\Sigma,\, M)}\\
\nonumber\\
&+&
\,\frac{F(\phi_{cm};q)}{2\,q}\, \int_{\Sigma}\,f(\phi;q)\,\mathcal{K}_{f}\,
d\mu _{\gamma}\;,\nonumber
\end{eqnarray}
and if we identify the non--linear $\sigma$ model coupling constant $a$ with
\begin{equation}
\label{Facoupl}
a\,\equiv \,\frac{F(\phi_{cm};q)}{q}\;,
\end{equation}
then (\ref{eqnAct100}) follows from the definition (\ref{eqnAct}) of the dilatonic action.
\end{proof}
\begin{rem}
If in the statement of Prop. \ref{propconf}  we do not restrict $\phi(\mathbb{T}^2)$ to $M\,\backslash \,\cup_{k=1}^q\,\rm{Cut}(\phi_{(k)})$, then the distribution $\Delta _{\gamma}\,f(\phi;q)$ would acquire a singular part supported on the inverse image of the cut locus $\cup_{k=1}^q\,\rm{Cut}(\phi_{(k)})$ of $(M,g,\,\{\phi_{(k)}\})$. This singular part contributes to $\mathcal{K}_{f}$ with conical--metric singularities, as in the case of polyhedral surfaces. It is not difficult to extend the above results to this more general setting, by considering from the very outset polyhedral surfaces $(\Sigma\simeq \mathbb{T}^2)$ supporting a piecewise flat metric with conical singularities. We do not belabor on this point, since it does not add much to the analysis that follows.   
\end{rem}

\begin{rem}
Note that $q^{-1}\,{F(\phi_{cm};q)}$ is the average squared distance between the constant maps $\{\phi_{(k)}\}$  and their center of mass $\phi_{cm}$.  According to the assumptions leading to lemma \ref{CMlemma}, the center of mass is contained in a metric ball $B(p, 2r)$  of radius $2r$ with $r<\frac{\pi }{6\,\sqrt{\kappa }}$, where $\kappa$ is the upper bound to the sectional curvature of $(M,g)$. This imples that the average squared distance between the $\{\phi_{(k)}\}$'s  and their center of mass $\phi_{cm}$ is bounded by the (squared) diameter of $B(p, 2r)$. In particular,  under the identification (\ref{Facoupl}), the 
non--linear $\sigma$ model coupling constant $a$ satisfies 
\begin{equation}
a\,|\kappa|\,\leq\,\frac{4}{9}\pi^2\;,
\label{AK}
\end{equation}
which is weaker than  the bound $a\,|\kappa|<<1$ characterizing the point--like limit in which the renormalization group for non--linear $\sigma$ model yields the Ricci flow.  
\end{rem}

\subsection{Harmonic energy as a center of mass functional }
\label{digression}
The presence of the center of mass $q^{-1}\,{F(\phi_{cm};q)}$ as a dilatonic coupling  in the above analysis  may appear incidental to the particular set up we have concocted.  Actually, it is a manifestation  of the fact that the center of mass plays a basic role in characterizing harmonic map functionals in a general metric space setting \cite{eells}, \cite{jost}, \cite{korevaar}. It is important to make this role explicitly available also for non--linear $\sigma$ models, since it provides the rationale for the use of the heat kernel embedding in  Wasserstein space. \\
\\
\noindent  We start by characterizing  the metric space structure associated with the warped manifold
$\left(N^{n+q}:=M\times_{(f)}\mathbb{T}^q,\,h^{(q)}\right)$. If $\lambda:\,[0,1]\rightarrow N^{n+q}$,\, $s\longmapsto \lambda(s)=(y(s),\,\xi(s))$ is a curve in  $(N^{n+q},\,h^{(q)})$, and $\{s_j\}\,:=\,0=s_0<\ldots<s_m=1$ is a partition of $[0,1]$,  then the length of $\lambda$ in $(N^{n+q},\,h^{(q)})$ is defined by, (see \emph{e.g.} \cite{chien}),  
\begin{equation}
\label{distl}
L_N(\lambda)\,:=\,\lim_{\{s_j\}}\,\sum_{i=1}^m\,\left[d^2_g(y(s_{i-1}),\,y(s_{i}))\,+\,e^{-\,\tfrac{2\,f(y(s_{i-1}))}{q}}
\,d^2_{\mathbb{T}^q}(\xi(s_{i-1}),\,\xi(s_{i}))  \right]^{\tfrac{1}{2}}\;,
\end{equation}
where the limit is with respect to the refinement of ordering of the partitions $\{s_j\}$ of $[0,1]$, and
\begin{equation}
d^2_{\mathbb{T}^q}(\xi(s_{i-1}),\,\xi(s_{i}))\,:=\,\inf_{m_k\in\mathbb{Z}^q}\sum_{k=1}^{q}\,\left| \overline{\xi}_{(k)}(s_{i-1})\,-\,\overline{\xi}_{(k)}(s_{i})\,+\,m_k \right|^2
\end{equation}
is the squared distance in $\mathbb{T}^q$,\;  $\overline{\xi}_{(k)}\in\mathbb{R}^q$ being the representative components of ${\xi}\in \mathbb{T}^q$. If we denote by $\Gamma_{(Y,Z)}$ the set of all (piecewise) smooth curves connecting two points $Y=(y,\xi)$ and $Z=(z, \zeta )$ in $N^{n+q}$, then their distance
\begin{equation}
\label{disth}
d_h(Y,Z)\,:=\,\inf_{\lambda\in\Gamma_{(Y,Z)} }\,L_N(\lambda)\;,
\end{equation}
characterizes the metric space $(N^{n+q},\, d_h)$ associated with the warped Riemannian manifold $(N^{n+q},\,h^{(q)})$.\\
\\
\noindent With these preliminary remarks along the way, fix $x\in \Sigma$, and for $0<\epsilon <1$ small enough, let $D(x,\epsilon)$ be the  disk of radius $\epsilon$ in the tangent space $T_x\Sigma$, and denote by $\exp_x^{(\gamma)}\,:\,T_x\Sigma\longrightarrow (\Sigma, \gamma)$ the exponential mapping based at $x$. We assume that 
$\exp_x^{(\gamma)}(D(x,\epsilon))$ is convex and that $D(x,\epsilon)$ is endowed with the pull--back measure $d\widehat{\mu}_\gamma\,:=\,\left(\exp_x^{(\gamma)}\right)^{*}\,d\mu_\gamma $. Let us consider the map
\begin{eqnarray}
\label{xphi}
&&\widetilde{\Phi}_{(q)}\,:\,D(x,\epsilon)\subset T_x\Sigma\,\longrightarrow\, N^{n+q}\\
&&\;\;\;\;\;\;\upsilon \,\;\longmapsto \,\;\;\;\;\;\widetilde{\Phi}_{(q)}(\upsilon)\,:=\,\Phi_{(q)}\left(\exp_x^{(\gamma)}(\upsilon )\right)\;,\nonumber
\end{eqnarray}
 and define, for almost all  $x\in\Sigma$, the center of mass function  
\begin{eqnarray}
\label{HACM}
&&\Sigma\,\longrightarrow \,\mathbb{R}_{>0}\\
&&x\,\longmapsto \,G\left[\Phi_{(q)}(x),\,D(x,\epsilon)\right]\nonumber\\
\nonumber\\
&&=\,\,\frac{1}{\left|D_\delta(\epsilon)\right|}\,\int_{D(x,\epsilon)}\,d^2_h\left(\Phi_{(q)}(x),\,\Phi_{(q)}\left(\exp_x^{(\gamma)}(\upsilon )\right)\right)\,d\widehat{\mu}_\gamma(\upsilon )\;,\nonumber
\end{eqnarray}
 where $\left|D_\delta(\epsilon)\right|=4\pi\epsilon^2$ is the Euclidean area of the disk $D(x,\epsilon)\subset T_x\Sigma$,\; and $d_h$ is the distance (\ref{disth}). 
 \begin{rem}
 Note that $\Phi_{(q)}(x)$ is an actual center of mass of the subset 
$\widetilde{\Phi}_{(q)}(D(x,\epsilon))$, with respect to the push--forward measure $\left(\widetilde{\Phi}_{(q)} \right)_{\sharp }d\widehat{\mu}_\gamma$, if the point $\Phi_{(q)}(x)\in N^{n+q}$ minimizes the function
\begin{eqnarray}
\label{GCM}
&&Y\longmapsto G\left[Y,\,D(x,\epsilon)\right]\\
\nonumber\\
&&=\,\,\frac{1}{\left|D_\delta(\epsilon)\right|}\,\int_{D(x,\epsilon)}\,d^2_h\left(Y,\,\Phi_{(q)}\left(\exp_x^{(\gamma)}(\upsilon )\right)\right)\,d\widehat{\mu}_\gamma(\upsilon )\;.\nonumber
\end{eqnarray}
In particular, we shall say that the map $\Phi_{(q)}\in \mathcal{H}^{1}\left(\Sigma,\,N^{n+q}\right)$ is an $\epsilon$--center of mass in the above sense if $\Phi_{(q)}(x)$ minimizes (\ref{GCM}) for almost every $x\in\Sigma$.
\end{rem} 
\noindent From the very structure of (\ref{HACM}) it follows that, by integrating the center of mass function $G\left[D(x,\epsilon)\right]$ over $(\Sigma, \gamma)$, we can characterize an approximate energy functional for maps of low regularity according to
\begin{defn}
Let $\Phi_{(q)}\in L^2\left(\Sigma,\,N^{n+q}\right)$ be a square summable map $\Phi_{(q)}:(\Sigma,\gamma)\rightarrow (N^{n+q},\,d_h)$, and let
\begin{equation}
\label{approxE}
E_{\epsilon }\,[\Phi_{(q)},\,d_h]\,:=\,\frac{1}{2}\,\int_{\Sigma}\,\frac{G\left[\Phi_{(q)}(x),\,D(x,\epsilon)\right]}{\epsilon ^2}
\,d\mu_{\gamma}(x)\;,
\end{equation}
denote the $\epsilon$--approximate energy associated with the center of mass 
function $x\longmapsto G\left[D(x,\epsilon)\right]$, then the $L^2$--energy of the map $\Phi_{(q)}$ is defined by
\begin{equation}
\label{L2def}
\mathcal{E}\,[\Phi_{(q)},\,d_h]\,:=\,\lim_{\epsilon \rightarrow 0}\,E_{\epsilon }\,[\Phi_{(q)},\,d_h]\,
\in\,\mathbb{R}\,\cup \,\{+\infty \}\;,
\end{equation}
where the limit is in the sense of weak convergence of measures.
\end{defn}
\begin{rem}
It is elementary to prove, (as in \cite{jost}, Lemma 8.4.1), that the $\epsilon$--approximate energy functional $E_{\epsilon }\,[\Phi_{(q)},\,d_h]$ is minimized iff  $\Phi_{(q)}(x)$ is a center of mass for almost all $x\in \Sigma$.  
\end{rem}
\noindent The basic observation in this general setting is that the center of mass functional $\mathcal{E}\,[\Phi_{(q)},\,d_h]$ reduces to the standard harmonic map energy ${E}\,[\Phi_{(q)},\,h^{(q)}]$ whenever the map $\Phi_{(q)}$ is regular enough. In particular we have
\begin{prop}
If $\Phi_{(q)}$ is a localizable map $\in \mathcal{H}^{1}\left(\Sigma,\,N^{n+q}\right)$, then
\begin{equation}
\label{harmoneq}
\lim_{\epsilon \rightarrow 0}\,E_{\epsilon }\,[\Phi_{(q)},\,d_h]\,=\,{E}\,[\Phi_{(q)},\,h^{(q)}]\;.
\end{equation}
\end{prop}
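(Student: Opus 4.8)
The plan is to reduce the statement to a pointwise second-order expansion of the warped-product distance $d_h$ and then to upgrade this to convergence of the approximate energy measures, following the Korevaar--Schoen / Jost scheme (cf. \cite{jost}, Lemma 8.4.1, and \cite{eells}, Th.~9.1). First I would exploit the localizability of $\Phi_{(q)}$ together with a partition of unity subordinate to the charts $\{D_{(\alpha)},x^{\alpha}\}$ on $(\Sigma,\gamma)$ and the corresponding image charts on $N^{n+q}$, so that it suffices to establish the convergence of the integrand $G\left[\Phi_{(q)}(x),D(x,\epsilon)\right]/\epsilon^2$ on each chart and then sum. On the region where $\Phi_{(q)}$ avoids the cut loci $\cup_{k=1}^q\,{\rm Cut}(\phi_{(k)})$ the dilaton $f\circ\phi$ is smooth, hence the warping factor $e^{-2f/q}$ and the metric $h^{(q)}$ are smooth there; the complementary set is $d\mu_{\gamma}$-negligible and, by the finite-energy hypothesis, a shrinking tubular neighborhood of it contributes negligibly to $E_{\epsilon}$ uniformly in $\epsilon$.

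Second, on the smooth region I would expand the squared distance of the warped Riemannian manifold $(N^{n+q},h^{(q)})$ about the diagonal. For the nearby points $P=\Phi_{(q)}(x)$ and $P'=\Phi_{(q)}\left(\exp_x^{(\gamma)}(\upsilon)\right)$, working in $h^{(q)}$-normal coordinates based at $P$, one has
\[
d_h^2(P,P') = h_{ab}(P)\,(P'-P)^a(P'-P)^b + O\bigl(|P'-P|^3\bigr),
\]
while the $\mathcal{H}^{1}$-regularity (combined with the density of $Map(\Sigma,N^{n+q})$ and approximate differentiability of $\Phi_{(q)}$ at almost every $x$) yields $(P'-P)^a = \tfrac{\partial\Phi_{(q)}^a}{\partial x^\mu}(x)\,\upsilon^\mu + o(|\upsilon|)$ as $\upsilon\to0$. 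Substituting and averaging over $D(x,\epsilon)$ against $d\widehat{\mu}_{\gamma}=(\exp_x^{(\gamma)})^*d\mu_{\gamma}$, the only surviving second moment is the isotropic Gaussian-type integral
\[
\frac{1}{|D_\delta(\epsilon)|}\int_{D(x,\epsilon)}\upsilon^\mu\upsilon^\nu\,d\widehat{\mu}_{\gamma}(\upsilon)=\epsilon^2\,\gamma^{\mu\nu}(x)+o(\epsilon^2),
\]
the normalization $|D_\delta(\epsilon)|$ in (\ref{HACM}) being fixed exactly so that the second moment of the two--dimensional disk reproduces $\gamma^{\mu\nu}$ with the correct coefficient. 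This gives, for almost every $x$,
\[
\lim_{\epsilon\to0}\frac{G\left[\Phi_{(q)}(x),D(x,\epsilon)\right]}{\epsilon^2}=\gamma^{\mu\nu}(x)\,\frac{\partial\Phi_{(q)}^a}{\partial x^\mu}\,\frac{\partial\Phi_{(q)}^b}{\partial x^\nu}\,h_{ab}(\Phi_{(q)}(x)),
\]
which is precisely twice the harmonic-energy density appearing in (\ref{EPhiq}).

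Finally I would pass from this pointwise-a.e. convergence to convergence of the integrals defining (\ref{L2def}): multiplying by $\tfrac12$ and integrating against $d\mu_{\gamma}$ formally reproduces $E\,[\Phi_{(q)},h^{(q)}]$. \textbf{The main obstacle} is exactly this last interchange of $\lim_{\epsilon\to0}$ with $\int_{\Sigma}$. Since $\Phi_{(q)}$ is only of class $\mathcal{H}^{1}$, the $o(|\upsilon|)$ remainder in the expansion of $\Phi_{(q)}$ is pointwise but not uniform, so naive dominated convergence is unavailable; one must instead establish weak-$*$ convergence of the approximate energy-density measures $e_\epsilon\,d\mu_{\gamma}$, using the equi-integrability and monotonicity in $\epsilon$ furnished by the finite-energy bound. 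The fact that the warping factor is merely Lipschitz across $\cup_{k=1}^q\,{\rm Cut}(\phi_{(k)})$ compounds this difficulty, and is handled by excising a neighborhood of the cut loci whose energy contribution tends to zero by the hypothesis $E\,[\Phi_{(q)},h^{(q)}]<\infty$, thereby yielding (\ref{harmoneq}).
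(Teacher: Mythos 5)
Your proposal is correct and follows essentially the same route as the paper: the paper's own proof consists of the single observation that the statement is a special case of Theorem~8.4.1 of \cite{jost} (in the Korevaar--Schoen/Eells--Fuglede framework, adapted to localizable maps), and your outline reconstructs exactly the internals of that theorem --- chartwise localization, the second-order expansion of $d_h^2$ together with the disk second-moment computation absorbed into the normalization $\left|D_\delta(\epsilon)\right|$, and the passage from a.e.\ convergence of $G\left[\Phi_{(q)}(x),D(x,\epsilon)\right]/\epsilon^2$ to convergence of the energies via weak-$*$ convergence of the approximate energy measures, which is precisely the mode of convergence stipulated in the paper's definition of $\mathcal{E}$. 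The paper disposes of all of these steps, including the merely Lipschitz warping factor across $\cup_{k=1}^q\,{\rm Cut}(\phi_{(k)})$, by citation, so your explicit equi-integrability and excision discussion supplies detail the paper leaves implicit rather than a different argument.
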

\begin{proof}
This is a particular case of a more general result, (for instance one can adapt, with obvious modifications, Theor. 8.4.1 of \cite{jost}), stating that the equality (\ref{harmoneq}) holds as soon as $\Phi_{(q)}\in L^{2}\left(\Sigma,\,N^{n+q}\right)$ is localizable and  the functional ${E}\,[\Phi_{(q)},\,h^{(q)}]$ can be defined, (in the local chart associated to the localization adopted).
\end{proof}
\noindent It is worthwhile to note that we can express the center of mass functional $E_{\epsilon }\,[\Phi_{(q)},\,d_h]$ in terms of the metric geometry of the factor manifolds $(M,d_g)$ and $\mathbb{T}^q$ defining $N^{n+q}$. Explicitly, according to the definition of the map $\Phi_{(q)}$, and of the characterization (\ref{distl}) and (\ref{disth})  of the distance function $d_h$,  we can write
\begin{eqnarray}
&& d_h^2\,\left(\Phi_{(q)}(x),\Phi^{(q,x)}(\upsilon )\right)\,=\,d_g^2\,\left(\phi(x),\phi(\exp_x^{(\gamma)}(\upsilon ))\right)\\
\nonumber\\
&&+\,\frac{q^2\,e^{-\,\tfrac{2f(\phi(x))}{q}}}{4}\,\sum_{k=1}^q\,d_{g}^2(\phi_{cm},\phi_{(k)})\,
\biggl|e^{\,\tfrac{f(\phi(x))}{q}}\,-\,e^{\,\tfrac{f(\phi(\exp_x^{(\gamma)}(\upsilon )))}{q}}\,  \biggr|^2\nonumber\;,
\end{eqnarray} 
which, inserted in (\ref{approxE}), provides
\begin{eqnarray}
\label{ripart}
&& \;\;\;\;\;\;\;\;E_{\epsilon }\,[\Phi_{(q)},\,d_h]\\
\nonumber\\
&&=\,\frac{1}{2\,D_\delta(\epsilon)}\,\int_{\Sigma}\,
\Biggl(\int_{D(x,\epsilon)}\,\frac{d_g^2\,\left(\phi(x),\phi(\exp_x^{(\gamma)}(\upsilon ))\right)}{\epsilon^2}\,d\widehat{\mu}_{\gamma}(\upsilon )   \Biggr)\,d\mu_{\gamma}(x)\nonumber\\
\nonumber\\
\nonumber\\
&&+\,\frac{q^2}{\epsilon^2\,|D_\delta(\epsilon)|}\,\frac{\sum_{k=1}^q\,d_{g}^2(\phi_{cm},\phi_{(k)})}{8}\,\times\nonumber\\
\nonumber\\
&&\times\int_{\Sigma}
\Biggl(\int_{D(x,\epsilon)}{e^{-\tfrac{2f(\phi(x))}{q}}}\left|e^{\tfrac{f(\phi(x))}{q}}-e^{\tfrac{f(\phi(\exp_x^{(\gamma)}(\upsilon )))}{q}}\right|^2d\widehat{\mu}_{\gamma}(\upsilon )   \Biggr)d\mu_{\gamma}(x)\nonumber\;.
\end{eqnarray}  
\\
\\
\noindent In analogy with (\ref{L2def}), we define the harmonic map energy functional for maps $\phi$ in $L^2\left(\Sigma, M \right)$ according to
\begin{eqnarray}
&&\;\;\;\;\;\;\;\;\;\mathcal{E}\,[\phi,\,d_g]\\
\nonumber\\
&&:=\,\lim_{\epsilon \rightarrow 0}\,
\,\frac{1}{D_\delta(\epsilon)}\,\int_{\Sigma}\,
\Biggl(\int_{D(x,\epsilon)}\,\frac{d_g^2\,\left(\phi(x),\phi(\exp_x^{(\gamma)}(\upsilon ))\right)}{\epsilon^2}\,d\widehat{\mu}_{\gamma}(\upsilon )   \Biggr)\,d\mu_{\gamma}(x)\;,\nonumber
\end{eqnarray}  
and  the Dirichlet energy for functions $f\circ \phi$ in $L^2\left(\Sigma, \mathbb{R} \right)$
\begin{eqnarray}
&& \mathcal{D}[f(\phi;q)]\,:=\, \lim_{\epsilon \rightarrow 0}\,\int_{\Sigma}\,\Biggl(
\int_{D(x,\epsilon)}\,q^2\,\frac{e^{-\,\tfrac{2f(\phi(x))}{q}}}{2\,\epsilon^2\,|D_\delta(\epsilon)|}\,\times\Biggr.\\
\nonumber\\
\nonumber\\
&&
\times\,\Biggl.\left|e^{\,\tfrac{f(\phi(x))}{q}}\,-\,e^{\,\tfrac{f(\phi(\exp_x^{(\gamma)}(\upsilon )))}{q}}\right|^2\,d\widehat{\mu}_{\gamma}(\upsilon )   \Biggr)\,d\mu_{\gamma}(x)\;.\nonumber
\end{eqnarray}
\\
We have   
\begin{equation}
\label{qharmwarpd}
\mathcal{E}\,[\Phi_{(q)},\,d_h]\,:=\,\mathcal{E}\,[\phi,\,d_g]\,+\,\frac{F(\phi_{cm};\,q)}{2}\,\mathcal{D}[f(\phi;q)]\;,
\end{equation}
with the obvious proviso that for square summable maps  each one of the (lower semicontinuous) functionals $\mathcal{E}\,[\Phi_{(q)},\,h^{(q)}]$, $\mathcal{E}\,[\phi,\,d_g]$, and $\mathcal{D}[f(\phi;q)]$  defined above  can be infinite. Again, for localizable Sobolev maps $\Phi_{(q)}$ one can prove that  $\mathcal{E}\,[\phi,\,d_g]$, and $\mathcal{D}[f(\phi;q)]$ can be identified with their corresponding functional   $E[\phi,\,g]$, and 
$\mathfrak{D}[f(\phi;q)]$, appearing in the factorization (\ref{qharmwarp}).
\begin{rem}  The approximation scheme underlying (\ref{approxE}) and (\ref{ripart}) naturally appears,  somewhat in disguise, when considering the cut--off action functional for the regularized lattice versions of quantum non--linear $\sigma$ models, (see \emph{e.g.} \cite{gawedzki}).
\end{rem}

\subsection{Heat Kernel embedding in Wasserstein space}
\label{wass}  

To put things in context,  let us consider the isometric embedding of $(M,d_g)$ into $(\rm{Prob}(M),\,d_g^W)$ defined by  
\begin{eqnarray}
\label{isoemM}
(M,d_g)\,&\hookrightarrow&\, \left(\rm{Prob}(M),\; d_{\,g}^{\,W}\, \right)\\
y\,\;\;\;\;\;&\longmapsto&\, \;\;\;\;\;\delta _y\;,\nonumber
\end{eqnarray} 
where $\delta_y$ is the Dirac measure supported at the generic $y\in (M,g)$. It is easy to prove  that this is indeed an isometry since one directly computes $d_g^W(\delta_y, \delta_z)\,=\,d_g(y,z)$, by  
using the obvious optimal plan $d\sigma(u,v)=\delta_y(u)\otimes\delta_z(v)$ in (\ref{wassdist}).  
This isometry allows to represent the center of mass coupling as
\begin{equation}
\label{CMWASS}
a\,:=\,q^{-1}\,F(\phi_{cm};q)\,=\,\frac{1}{2q}\,\sum_{k=1}^q\,\left[d_{\,g}^{\,W}\,(\delta_{(k)},\,\delta_{cm}) \right]^2\;,
\end{equation}
where $\delta_{(k)}$ and $\delta_{cm}$ are the Dirac measures supported at $\phi_{(k)}$ and $\phi_{cm}$, respectively.  The advantage of the, otherwise rather formal, Wasserstein representation (\ref{CMWASS}) is that we can view $\delta_{(k)}$ and $\delta_{cm}$ as heat sources.  This suggests a natural heat kernel technique for the  geometrical analysis of the scaling 
properties of  $q^{-1}\,F(\phi_{cm};q)$ and, as we shall see, of the functionals $E[\phi,\,g]$, and $S[\gamma,\phi;\,a,g, d\omega]$. \\
\\
\noindent
We start exploiting the Wasserstein representation (\ref{CMWASS}) and discuss the deformation of the center of mass coupling $q^{-1}\,F(\phi_{cm};q)$ by extending (\ref{isoemM}) to a heat kernel embedding of the constant maps $\{\phi_{(k)}\}$.  To this end, let $t$ be a running (squared) length scale and denote 
by $p^{M}_t\,:\,M\times M\times\mathbb{R}_{>0}\longrightarrow\mathbb{R}_{>0}$ the heat kernel on $(M,g)$, \emph{i.e.} the (minimal positive) solution of the heat equation
\begin{eqnarray}
\label{heatfloKM}
&&\left(\frac{\partial }{\partial  t}\,-\,\bigtriangleup^M _{(g,y)}\right)\,p^M _t(y, z)=0\;,\\
\nonumber\\
&&\lim_{t\searrow 0^+}\;p^M _t(y, z)\,d\mu_g(z)\,=\, \delta_{z}\;,\nonumber
\end{eqnarray}
where $\delta_{z} (y)$ and $\bigtriangleup^M _{(g,y)}$ denote the Dirac measure at $z\in M$ and the Laplace--Beltrami operator on $(M,g)$, respectively. Also let 
\begin{equation}
p^M_t(y,\phi_{(k)})\,=\,\int_{M_z}\,p^M_{t-s}(y,z)\,
p^M_s(\phi_{(k)},z )\,\,d\mu_g(z),\;\;\;\;\;\;t>s>0\;,
\end{equation}
and 
\begin{equation}
p^M_t(y,\phi_{cm})\,=\,\int_{M_z}\,p^M_{t-s}(y,z)\,
p^M_s(\phi_{cm},z )\,\,d\mu_g(z),\;\;\;\;\;\;t>s>0\;,
\end{equation}
be the heat kernels with sources the Dirac measures $\delta _{(k)}$ and $\delta _{cm}$, respectively. Note that,
away of the cut locus $\rm{Cut}\,(z)$, \; $p^M_t(y,z)$ can be differentiated any number of times  and its spatial derivatives commute with the $t\searrow 0^+$ limit, (this is an elementary consequence of a theorem by Malliavin and Strook \cite{malliavin}--see also \cite{berline}). In particular, it is useful to recall the following asymptotics related to the geometry of heat diffusion on Riemannian manifolds.
\begin{thm}\cite{pleijel}, (see also \cite{chavel}, \cite{neel}).
There are  smooth functions $\Xi  _k(y, z)$ defined on $(M\times M)/\rm{Cut}\,(z)$, with $\Upsilon _0(y, z)\,=\,1$, such that the asymptotic expansion
\begin{equation}
p^M_t(y,z)\;d\mu_g(y)\,- \,\left(\frac{1}{4\pi t} \right)^{n/2}\,e^{-\,\tfrac{d_g^2(y, z)}{4\,t}}\;\sum_{k=0}^{N }\,\Xi _k(y, z)\,t^k\;d\mu_g(y)\,=\,\mathcal{O}\left(t^{N+1-\tfrac{n}{2}}\right)
\label{asympt5}
\end{equation}
holds uniformly as $t\searrow 0^+$ on compact subsets of $(M\times M)/\rm{Cut}\,(z)$.
\end{thm}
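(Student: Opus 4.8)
The statement is the classical Minakshisundaram--Pleijel expansion, and the plan is to construct a \emph{parametrix} for the heat kernel near the diagonal and then certify that it captures the genuine short--time asymptotics of $p^M_t$ via Duhamel's principle. First I would fix $z$, work on the region $(M\times M)/\mathrm{Cut}(z)$ where $\exp_z$ is a diffeomorphism, set $r:=d_g(y,z)$, and introduce the Hadamard ansatz
\[
H_N(t,y,z):=\left(\frac{1}{4\pi t}\right)^{n/2}e^{-\,r^2/4t}\,\sum_{k=0}^{N}\Xi_k(y,z)\,t^k ,
\]
with smooth coefficients $\Xi_k$ to be determined. Substituting this into $(\partial_t-\Delta_y)H_N$, using the eikonal identity $|\nabla r|^2=1$ together with the normal--coordinate formula $\Delta_y(r^2)=2n+2r\,\partial_r\log\theta$, where $\theta(y,z)$ is the Jacobian density of $\exp_z$, and collecting equal powers of $t$, one is led to the recursive transport equations
\[
\Big(r\,\partial_r+\tfrac{r}{2}\,\partial_r\log\theta\Big)\Xi_0=0,\qquad
\Big(k+r\,\partial_r+\tfrac{r}{2}\,\partial_r\log\theta\Big)\Xi_k=\Delta_y\Xi_{k-1}\quad(k\ge1).
\]

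Second, I would solve this system by integration along the radial geodesics issuing from $z$. The leading equation gives $\Xi_0=\theta^{-1/2}$, smooth and positive off the cut locus, with $\Xi_0(z,z)=\theta(z,z)^{-1/2}=1$, which matches the normalization $\Upsilon_0=1$ in the statement; the higher coefficients are then obtained explicitly by variation of parameters as
\[
\Xi_k(y,z)=\theta^{-1/2}(y,z)\,r^{-k}\int_0^r \sigma^{k-1}\,\theta^{1/2}(\gamma(\sigma),z)\,(\Delta_y\Xi_{k-1})(\gamma(\sigma),z)\,d\sigma ,
\]
with $\gamma$ the unit--speed geodesic from $z$ to $y$, so that each $\Xi_k$ is smooth on $(M\times M)/\mathrm{Cut}(z)$ by induction on $k$. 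By construction the recursion annihilates every intermediate power of $t$, leaving the exact residual $(\partial_t-\Delta_y)H_N=-(4\pi t)^{-n/2}e^{-r^2/4t}\,t^N\,\Delta_y\Xi_N$, which is of size $\mathcal{O}(t^{N-n/2})$ carrying a Gaussian weight.

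Third comes the comparison with the true kernel. I would cut off $H_N$ by a smooth radial function $\chi(r)$ supported inside the injectivity radius, producing a globally defined approximate kernel $\chi H_N$; the discarded region $r\gtrsim\mathrm{inj}(M)$ contributes only $\mathcal{O}(e^{-c/t})$ terms because of the Gaussian factor, which are absorbed into the remainder. I would then invoke Duhamel's formula
\[
p^M_t-\chi\,H_N=-\int_0^t\!\!\int_M p^M_{t-s}(y,w)\,(\partial_s-\Delta_w)(\chi\,H_N)(s,w,z)\,d\mu_g(w)\,ds ,
\]
and combine it with the standard Gaussian upper bound for $p^M_t$ to estimate the convolution tail, concluding that $p^M_t\,d\mu_g(y)-(4\pi t)^{-n/2}e^{-r^2/4t}\sum_{k=0}^N\Xi_k t^k\,d\mu_g(y)=\mathcal{O}(t^{N+1-n/2})$. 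Uniformity on compact subsets of $(M\times M)/\mathrm{Cut}(z)$ follows because all ingredients---the bounds on $\theta^{\pm1/2}$, on the $\Xi_k$ and their Laplacians, and the heat--kernel estimates---are uniform there.

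The main obstacle is precisely this last step. The transport--equation construction of the $\Xi_k$ is essentially algebraic and routine once the geometric identities for $\Delta_y r^2$ and $\theta$ are in hand; the substantive analytic work is turning the formal parametrix into a genuine asymptotic statement, which requires the Gaussian heat--kernel bounds and a careful Duhamel/iteration control of the remainder. It is exactly here that the restriction to compact subsets away from $\mathrm{Cut}(z)$ is genuinely used, since both the Gaussian $e^{-r^2/4t}$ and the density $\theta$ lose their good behavior once $\exp_z$ ceases to be a diffeomorphism.
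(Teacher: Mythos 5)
Your proposal is correct and coincides with the classical parametrix proof of the Minakshisundaram--Pleijel expansion: the paper itself supplies no proof, deferring entirely to \cite{pleijel}, \cite{chavel}, \cite{neel}, and the argument given there is exactly your Hadamard ansatz, the transport equations with $\Xi_0=\theta^{-1/2}$, the cutoff near the diagonal (with the far region, at distance bounded below on a compact subset off the cut locus, handled trivially since both kernels are then $\mathcal{O}(e^{-c/t})$), and the Duhamel estimate yielding the $\mathcal{O}(t^{N+1-n/2})$ remainder. Note only that your normalization $\Xi_0(z,z)=\theta(z,z)^{-1/2}=1$ is the correct one and silently repairs the statement's notational slip ``$\Upsilon_0(y,z)=1$'' (the coefficient equals $1$ on the diagonal, not identically).
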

\begin{thm} (T. H. Parker \cite{parker}). \label{parkth}
Let $\rm{Cut}_L\subset (M\times M)$ be the set of points $(y,z)$ such that $z$ is conjugate to $y$ along 
some geodesics $\gamma$ with length at most $L$. If $D_\gamma$ denotes the (absolute value of the) Jacobian of the exponential map along $\gamma$, then
\begin{equation}
\label{parkeras}
p^M_t(y,z)\,=\,\left(\frac{1}{4\pi t} \right)^{n/2}\,\sum_{\gamma}\,e^{-\,\tfrac{1}{4}\,\int_0^t\left|\dot{\gamma(s)} \right|^2\,ds}\,D_\gamma^{-\,\tfrac{1}{2}}\;\left(1+\mathcal{O}(t)\right)\;, 
\end{equation}
where the convergence is uniform as $t\searrow 0^+$ on compact subsets of $(M\times M)/\rm{Cut}_L\,(y,z)$, and where the summation  is over all locally minimal geodesics $\gamma$ in $(M,g)$ connecting $y$ to $z$.
\end{thm}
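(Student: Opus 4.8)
The plan is to prove the formula by the classical \emph{parametrix} (WKB) construction for the heat kernel, organized geodesic by geodesic, and then to close it with a Duhamel estimate of the error. The starting observation is that, away from $\mathrm{Cut}_L$, the point $z$ is joined to $y$ by only finitely many locally minimal geodesics $\gamma$ of length at most $L$, and along each such $\gamma$ there are no conjugate points; hence $\exp_y$ is a local diffeomorphism near the corresponding $\exp_y^{-1}(z)$ and, in particular, $D_\gamma\neq 0$ there. This non-degeneracy is exactly what makes the single-geodesic expansion (\ref{asympt5}) available along each branch.

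First I would attach to each such geodesic $\gamma$ a formal parametrix of Minakshisundaram--Pleijel type,
\begin{equation}
H_t^{\gamma}(y,z)\,:=\,\left(\frac{1}{4\pi t}\right)^{n/2}\,e^{-\,\tfrac{\ell(\gamma)^2}{4t}}\,\sum_{k=0}^{N}\,u_k^{\gamma}(y,z)\,t^k\;,
\end{equation}
where $\ell(\gamma)$ is the length of $\gamma$, so that for the geodesic parametrized with constant speed on $[0,t]$ one has $\tfrac{1}{4}\int_0^t|\dot\gamma|^2\,ds=\tfrac{\ell(\gamma)^2}{4t}$, matching the exponent in the statement. Substituting this ansatz into the heat operator $\partial_t-\bigtriangleup^M_{(g,y)}$ and matching powers of $t$ yields, at leading order, the eikonal equation satisfied by the geodesic action, and at the next order the \emph{first transport equation} for $u_0^{\gamma}$; solving it along $\gamma$ gives precisely $u_0^{\gamma}=D_\gamma^{-1/2}$, the inverse square root of the Van Vleck--Morette determinant, i.e. of the Jacobian of $\exp_y$ along $\gamma$. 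The higher coefficients $u_k^{\gamma}$ are then fixed recursively by the subsequent transport equations, this being legitimate exactly because $D_\gamma\neq 0$ off $\mathrm{Cut}_L$.

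Next I would glue the single-branch parametrices with a partition of unity into $H_t:=\sum_{\gamma}H_t^{\gamma}$, summed over the finitely many locally minimal geodesics of length $\le L$, and control the error. By construction $(\partial_t-\bigtriangleup^M_{(g,y)})H_t=\mathcal{R}_t$ with $\mathcal{R}_t=\mathcal{O}(t^{N-\frac{n}{2}})$ smooth, so Duhamel's principle gives $p^M_t=H_t-\int_0^t p^M_{t-s}*\mathcal{R}_s\,ds$, and the standard iteration of the spatial convolution estimate shows $p^M_t-H_t=\mathcal{O}(t^{N+1-\frac{n}{2}})$ uniformly on compact subsets of $(M\times M)/\mathrm{Cut}_L$. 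The contribution of geodesics of length $>L$ is bounded by $C\,e^{-L^2/4t}$ and is therefore exponentially negligible in the short-time limit. Truncating to $N=0$ then produces the asserted leading-order formula with the uniform $(1+\mathcal{O}(t))$ remainder.

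The hard part will be the uniformity of the error bound as one approaches the conjugate/cut locus $\mathrm{Cut}_L$: there $D_\gamma\to 0$ and the transport coefficients $u_k^{\gamma}$ blow up, so the parametrix degenerates and the Duhamel estimate has to be stated on compact subsets kept away from $\mathrm{Cut}_L$. The accompanying bookkeeping obstacle is the gluing of the several single-geodesic parametrices through the partition of unity while correctly enumerating which geodesics are locally minimal; once this is organized, the remaining steps are the routine transport-equation and convolution computations already underlying the single-branch expansion (\ref{asympt5}).
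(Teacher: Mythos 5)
A preliminary point of context: the paper does not prove this statement at all — it is quoted, with attribution, from Parker \cite{parker} — so there is no in-paper argument to compare your proposal against; it has to stand on its own. Measured that way, your outline follows the right (and essentially Parker's) strategy — finitely many nondegenerate locally minimal geodesics off $\mathrm{Cut}_L$, a Minakshisundaram--Pleijel parametrix per branch with $u_0^\gamma = D_\gamma^{-1/2}$ from the first transport equation, gluing, Duhamel — but it has a genuine gap at the closing step, precisely where the content of the theorem lies.

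The gap is the conversion of your additive Duhamel bound into the asserted multiplicative remainder. You close with $p^M_t - H_t = \mathcal{O}(t^{N+1-\frac{n}{2}})$ uniformly on compacta away from $\mathrm{Cut}_L$; but off the diagonal the main term is of size $(4\pi t)^{-n/2}\,e^{-d_g^2(y,z)/4t}$, i.e. exponentially small as $t\searrow 0$, so an additive polynomial bound — which is exactly the form of the Minakshisundaram--Pleijel statement (\ref{asympt5}) quoted just before in the paper — says nothing about the relative error $(1+\mathcal{O}(t))$. Parker's theorem is sharper than (\ref{asympt5}) for precisely this reason. To repair the argument you must propagate the Gaussian weight through the whole error analysis: show $\mathcal{R}_t = \mathcal{O}\bigl(t^{N-\frac{n}{2}}\,e^{-d_g^2(y,z)/4t}\bigr)$ on the relevant compact sets, arranging the partition-of-unity cutoffs so that the cutoff error terms are supported where the distance strictly exceeds the lengths $\ell(\gamma)$ being resolved; then run Duhamel using a Gaussian \emph{upper} bound on $p^M_{t-s}$ together with the convolution inequality for Gaussians, so the iterated remainder retains the factor $e^{-d_g^2(y,z)/4t}$; and finally use that the leading sum is bounded \emph{below} by $c\,(4\pi t)^{-n/2} e^{-d_g^2(y,z)/4t}$ on compacta off $\mathrm{Cut}_L$ (true, since the minimizing branch has $D_\gamma^{-1/2}$ bounded below there) in order to divide out and obtain $(1+\mathcal{O}(t))$. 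Separately, your last sentence "truncating to $N=0$" is not legitimate even additively: with $N=0$ the remainder is $\mathcal{O}(t^{1-\frac{n}{2}})$, unbounded for $n\ge 3$; one must take $N$ large (say $N>\tfrac{n}{2}+1$) and then absorb the tail $\sum_{k\ge 1}u_k^\gamma\,t^k$ into the $(1+\mathcal{O}(t))$ factor, which is permissible because $u_1^\gamma$ is bounded on compacta off $\mathrm{Cut}_L$. With these Gaussian-weighted estimates supplied, the remaining steps of your outline are sound.
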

\begin{rem}
The former is the well--known heat kernel asymptotics, the latter is a small--time asymptotics recently obtained by T. H. Parker \cite{parker}, showing quite explicitly that  \emph{heat tends to diffuse instantaneously along geodesics}. In particular, (\ref{parkeras}) can be heuristically identified with the semiclassical approximation to the (Euclidean) path integral representation of $p^M_t(y,z)$
provided by
\begin{equation}
p^M_t(y,z)\,=\, \int_{\mathcal{P}_{yz}}\, e^{-\,\tfrac{1}{4}\,\int_0^t\left|\dot{\gamma(s)} \right|^2\,ds}\,\mathcal{D[\gamma]}\;,
\end{equation}
where $\mathcal{D[\gamma]}$ is a functional measure on the space $\mathcal{P}_{yz}$ of all paths $\gamma$ connecting, in time $t$, the points $y$ and $z$. 
\end{rem} 
Since $\tfrac{1}{4}\,\int_0^t\left|\dot{\gamma(s)} \right|^2\,ds\,=\,\frac{d_g^2(y, z)}{4\,t}$ along a minimal geodesic $\gamma$, both these asymptotic expansions provide the heuristics of the celebrated  Varadhan's large deviation formula \cite{varadhan},
\begin{equation}
\label{largedev0}
-\,\lim_{t\searrow 0^+}\,{t}\,\ln\,\left[\,p^M_{t}(y, z)\right]\,=\,
\frac{d_g^2(y, z)}{4}\;,
\end{equation}
which, together with the expression (\ref{Fcm}) for the center of mass function, (evaluated at $\phi_{cm}$), allows us to provide yet another formal representation for the coupling,
\begin{equation}
\label{hrep}
F(\phi_{cm};q)\,=\,-\,\lim_{t\searrow 0^+}\,{2\,t}\,\ln\,\prod_{k=1}^q\,p_t^M(\phi_{(k)},\,\phi_{cm})\;.
\end{equation}
\\
\noindent
The convergence in (\ref{largedev0}), and hence in (\ref{hrep}) is uniform, and
together with (\ref{CMWASS}), the expression (\ref{hrep}) suggests the possibility of deforming  $q^{-1}\,F(\phi_{cm};q)$ by using the heat kernel 
flow  $(\delta_{(k)},\,t)\mapsto p^M_t$, $t\in (0,\infty )$ in $(\rm{Prob}(M),\,d_g^W)$. This is tantamount  
to extending the isometry (\ref{isoemM}) to the injective embedding defined, for any fixed length scale $t\in [0, \infty)$, by the map 
\begin{eqnarray}
\label{heatinjemb}
(M,d_g)\times \mathbb{R}_{\geq0}\,&\hookrightarrow&\, \left(\rm{Prob}(M),\; d_{\,g}^{\,W}\, \right)\\
(y,\,t)\,\;\;\;\;\;&\longmapsto&\, \;\;\;\;\;p_t^M(\cdot,y)d\mu_g(\cdot )\;,\nonumber
\end{eqnarray}
which associates to each point $y\in (M, g)$ the corresponding heat kernel measure with source at $y$. 
\begin{rem} 
The injectivity of the heat kernel embedding in quadratic Wasserstein space is discussed at length in \cite{Carlo}. Below we shall analyze (\ref{heatinjemb}) in detail in the case of the weighted heat kernel associated with $(M,g,\,d\omega)$.
\end{rem}
 Under the action of (\ref{heatinjemb}),  the constant maps $\{\phi_{(k)}\}$ and their center of 
mass $\phi_{cm}$  are injected in $\left(\rm{Prob}(M),\; d_{\,g}^{\,W}\, \right)$ according to
\begin{eqnarray}
(\phi_{(k)},\,t)\,&\longmapsto&\,(\delta_{(k)},\,t)\,\longmapsto\,p_t^M(\cdot, \phi_{(k)})d\mu_g(\cdot )\\
(\phi_{cm},\,t)\,&\longmapsto&\,(\delta_{cm},\,t)\,\longmapsto\,p_t^M(\cdot, \phi_{cm})d\mu_g(\cdot )\,.\nonumber
\end{eqnarray}
The coupling $a:=q^{-1}\,F(\phi_{cm};q)$ is correspondingly  deformed into
\begin{eqnarray}
\label{defCMWASS}
&&(a,\,t)\,\longmapsto \,a_t\\
\nonumber\\
&&:=\,\frac{1}{2\,q}\,\sum_{k=1}^q\,
\left[d_{\,g}^{\,W}\,\left(p_t^M(\cdot, \phi_{(k)})d\mu_g(\cdot ),\,p_t^M(\cdot, \phi_{cm})d\mu_g(\cdot )\right) \right]^2\;,\nonumber
\end{eqnarray}  
with
\begin{equation}
\lim_{t\searrow 0}a_t\,=\,\frac{1}{2\,q}\,\sum_{k=1}^q\,\left[d_{\,g}^{\,W}\,(\delta_{(k)},\,\delta_{cm}) \right]^2\,=\,\frac{F(\phi_{cm};q)}{q}=a\;.
\end{equation}
 The induced scaling in passing from $a$ to $a_t$  is geometrically controlled 
by the 
\begin{prop} 
If $K$ denote the lower bound of the Ricci curvature of $(M,g)$, (\emph{i.e.} $Ric_g\,(v,v)\,\geq K\,|v|^2$, \,$\forall \,v\,\in\, TM$), then the  coupling  $a$ scales under heat kernel deformation of the constant maps $\{\phi_{(k)}\}$ according to 
\begin{equation}
\label{FWD}
a_t\,\leq\,e^{-\,2\,K\,t}\,a\;.
\end{equation}
\end{prop}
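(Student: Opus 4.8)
The plan is to reduce the stated inequality to the \emph{Wasserstein contractivity} of the heat semigroup under a lower Ricci bound, which is the one substantive ingredient; everything else is bookkeeping. First I would observe that the heat kernel embedding (\ref{heatinjemb}) is precisely the action of the heat semigroup $\mathcal{H}_t$ generated by $\bigtriangleup^M_{g}$, \emph{viz.} $p_t^M(\cdot,\phi_{(k)})\,d\mu_g(\cdot)=\mathcal{H}_t\,\delta_{(k)}$ and $p_t^M(\cdot,\phi_{cm})\,d\mu_g(\cdot)=\mathcal{H}_t\,\delta_{cm}$, both elements of $\rm{Prob}_{ac}(M,g)$. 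With this identification the deformed coupling (\ref{defCMWASS}) reads $a_t=\tfrac{1}{2q}\sum_{k=1}^q\left[d_g^W(\mathcal{H}_t\delta_{(k)},\,\mathcal{H}_t\delta_{cm})\right]^2$, so that it suffices to control each Wasserstein distance between heat--evolved Dirac measures.

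The key step, and the place where the actual content lies, is the contraction estimate: if $Ric_g\geq K$, then for every pair $\mu,\nu\in\rm{Prob}(M)$ one has $d_g^W(\mathcal{H}_t\mu,\mathcal{H}_t\nu)\leq e^{-Kt}\,d_g^W(\mu,\nu)$, with the semigroup normalized by the generator $\bigtriangleup^M_{g}$ exactly as in (\ref{heatfloKM}). This is the forward implication of the well--known equivalence between lower Ricci bounds and $L^2$--Wasserstein contractivity of the heat flow (von Renesse--Sturm), for which I would cite \cite{sturm}, \cite{VillaniON}. Concretely, the mechanism is a parallel (Kendall--Cranston) coupling of two $\bigtriangleup^M_{g}$--diffusions $B_t^{\,y},B_t^{\,z}$ issued from $y,z$: Itô's formula combined with the Jacobi field / Ricci second--variation estimate yields $\mathbb{E}\,d_g^2(B_t^{\,y},B_t^{\,z})\leq e^{-2Kt}\,d_g^2(y,z)$, and since the joint law is an admissible transport plan between $\mathcal{H}_t\delta_y$ and $\mathcal{H}_t\delta_z$ this bounds $\left[d_g^W(\mathcal{H}_t\delta_y,\mathcal{H}_t\delta_z)\right]^2$ from above; no smoothness of the Brenier map is required, only the existence of a coupling.

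Applying this with $y=\phi_{(k)}$, $z=\phi_{cm}$, and using the isometry (\ref{isoemM}) in the form $d_g^W(\delta_{(k)},\delta_{cm})=d_g(\phi_{(k)},\phi_{cm})$, I would obtain $\left[d_g^W(\mathcal{H}_t\delta_{(k)},\mathcal{H}_t\delta_{cm})\right]^2\leq e^{-2Kt}\,d_g^2(\phi_{(k)},\phi_{cm})$; summing over $k=1,\dots,q$, dividing by $2q$, and recalling the representation (\ref{CMWASS}) of $a$ then gives $a_t\leq e^{-2Kt}\,a$, as claimed. The main obstacle is therefore not the final algebra but the correct invocation of the contraction theorem, and the subtlety to watch is the normalization: because the heat equation (\ref{heatfloKM}) is driven by $\bigtriangleup^M_{g}$ rather than $\tfrac12\bigtriangleup^M_{g}$, the first--order contraction rate is $e^{-Kt}$, and it is precisely the \emph{squaring} of this estimate — as dictated by the quadratic nature of the coupling $a$ — that produces the factor $2$ in the exponent, uniformly in $t\geq0$ and in $q$.
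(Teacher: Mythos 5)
Your proof is correct and follows essentially the same route as the paper: the paper likewise reduces (\ref{FWD}) to the von Renesse--Sturm Wasserstein contraction estimate $d_g^W\bigl(p_t^M(\cdot,y)d\mu_g,\,p_t^M(\cdot,z)d\mu_g\bigr)\leq e^{-Kt}\,d_g(y,z)$, cited from \cite{sturm}, \cite{sturm2}, \cite{sturm3}, then squares, sums over $k$, and uses the representation (\ref{CMWASS}) of $a$. The only difference is that you additionally sketch the coupling mechanism behind the contraction theorem, which the paper simply quotes; your normalization remark (generator $\bigtriangleup_g$, rate $e^{-Kt}$, squared to give the exponent $-2Kt$) is also consistent with the paper's conventions.
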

\begin{proof}
(\ref{FWD}) directly follows from the basic inequality  \cite{sturm}, \cite{sturm2}, \cite{sturm3},
\begin{eqnarray}
&& d_g^W\left(p^M_t(\cdot \, ,y)d\mu_g(\cdot )\,,\,p^M_t(\cdot \, ,z)d\mu_g(\cdot )\right)\\
\nonumber\\
&\leq&\,e^{-K\,t}\,
d_g^W\left(\delta _{y},\,\delta _{z}\right)\,=\,e^{-K\,t}\,
d_g\left({y},\,{z}\right)\;,\;\;\;\;\;\forall\, t>0,\;,\nonumber
\end{eqnarray}
governing the Wasserstein geometry of heat diffusion on a Riemannian manifold.
\end{proof}
\noindent The general role of the center of mass, discussed in Section \ref{digression}, indicates that we can analyze the scaling behavior of the harmonic energy functional $E[\Phi_{(q)},\,h^{(q)}]$, and hence of the dilatonic action $S[\gamma,\phi;\,a,g, d\omega]$,  along the same lines described above. To this end, we need to characterize the heat kernel injection of $(\Phi_{(q)}(\Sigma),\,h^{(q)})$ in the Wasserstein space $\left(\rm{Prob}(N),\; d_{\,h}^{\,W}\, \right)$ of Borel probability measures on $N^{n+q}$. 

\section{A warped Gigli--Mantegazza heat kernel embedding}
\label{awhke}
The structure of $\left(M\times_{\omega}\mathbb{T}^q,\,h^{(q)}\right)$ allows us to model a significant part of our analysis on the results discussed in a terse paper by N. Gigli and C. Mantegazza \cite{Carlo}. They analyze in detail the  geometry of the heat kernel embedding associated to the Laplace--Beltrami heat semigroup, connecting it to the Ricci flow.  In our case, the relevant heat semigroup is the one generated by a weighted Laplacian on $(M,g,\,d\omega)$, and we provide an extension of their results to the warped Riemannian manifold  $\left(M\times_{\omega}\mathbb{T}^q,\,h^{(q)}\right)$. This allows us discuss the scaling behavior of $E[\Phi_{(q)},\,h^{(q)}]$ and connect it to the Hamilton--Perelman version of the Ricci flow.\\
\\
\noindent 
As a preliminary step, we need to define a heat kernel adapted to the warped product structure of $\left(M\times_{\omega}\mathbb{T}^q,\,h^{(q)}\right)$. To this end, let $Y^a:=(y^i,\,\upsilon ^{\alpha})$, with $a=1,\ldots,n+q$,\, 
$i=1,\ldots,n$,\, and $\alpha=1,\ldots,q$ denote coordinates adapted to  $M\times_{\omega}\mathbb{T}^q$. Under such a splitting, a standard computation for the Laplacian $\bigtriangleup _{h}$ on $\left(M\times_{\omega}\mathbb{T}^q,\,h^{(q)}\right)$  provides the relation
\begin{eqnarray}
\label{Lapfactor}
\\
\bigtriangleup _{h}\,&:=&\,\frac{1}{\sqrt{\det h}}\,\frac{\partial }{\partial Y^a}\,\left(\sqrt{\det h}\,h^{ab}\, \frac{\partial }{\partial Y^b} \right)\nonumber\\
\nonumber\\
&=&\,\frac{e^{\,f}}{\sqrt{\det g}}\,\frac{\partial }{\partial y^i}\,\left(\sqrt{\det g}\,e^{-\,f}\,g^{ik}\, \frac{\partial }{\partial y^k} \right)\,+\,e^{\tfrac{2f}{q}}\,\frac{\partial }{\partial \upsilon^\alpha}\,\left(\delta^{\alpha\beta}\, \frac{\partial }{\partial \upsilon^\beta} \right)\nonumber\\
\nonumber\\
&=&\,\bigtriangleup _{g}\,+\,e^{\tfrac{2f}{q}}\,\bigtriangleup _{\mathbb{T}}\,-\,\nabla f\cdot \nabla \nonumber\;,
\end{eqnarray}
where  $\bigtriangleup _{g}$, and $\bigtriangleup _{\mathbb{T}}$ respectively denote the Laplace--Beltrami operator on  $(M,g)$, \,and on $(\mathbb{T}^q,\,\delta)$, and where $\nabla f\cdot \nabla\,=\,g^{ik}\,\nabla_i f\,\nabla_k$ is the distributional directional derivative along the gradient of the Lipschitz function $f$. According to (\ref{weightdiv}) the operator $\bigtriangleup _{g}\,-\,\nabla f\cdot \nabla$ appearing in (\ref{Lapfactor}) is the  weighted Laplacian on $(M,g,\,d\omega)$, (cf. (\ref{weightdiv})),   
\begin{equation}
\label{Wlap}
\bigtriangleup _{\omega}\,:=\,{div}_{\omega}\,\nabla \,=\,\bigtriangleup _{g}\,-\,\nabla f\cdot \nabla\;,
\end{equation}
and we can write
\begin{equation}
\label{Lapfactor2}
\bigtriangleup _{h}\,=\,\bigtriangleup _{\omega}\,+\,e^{\tfrac{2f}{q}}\,\bigtriangleup _{\mathbb{T}}\;.
\end{equation}
The properties of the heat kernel associated to $\bigtriangleup _{\omega}$ are provided by \cite{grigoryan}
\begin{thm}
The weighted Laplacian  $\bigtriangleup _{\omega}$ is symmetric  with respect to the defining measure $d\omega$, and can be extended to a self--adjoint operator in $L^2(M,\,d\omega)$ generating  the heat semigroup $e^{t\,\bigtriangleup _{\omega}}$, $t\in \mathbb{R}_{>0}$. The associated heat kernel $p^{(\omega)}_t(\cdot \, ,z)$ is defined as the minimal positive solution of
\begin{eqnarray}
\label{heatfloweight}
&&\left(\frac{\partial }{\partial  t}\,-\,\bigtriangleup _{\omega}\right)\,p^{(\omega)} _t(y, z)=0\;,\\
\nonumber\\
&&\lim_{t\searrow 0^+}\;p^{(\omega)} _t(y, z)\,d\omega(z)=\, \delta_{z}\;,\nonumber
\end{eqnarray}  
with $\delta_{z}$ the Dirac measure at $z\in (M,\,d\omega)$.\; The heat kernel $p^{(\omega)} _t(y, z)$ is $C^\infty$ 
on $\mathbb{R}_{>0}\times M\times M$, is symmetric $p^{(\omega)} _t(y, z)\,=\,p^{(\omega)} _t(z, y)$, satisfies the semigroup 
identity  $p^{(\omega)} _{t+s}(y, z)=\int_M\,p^{(\omega)} _t(y, x)p^{(\omega)} _s(x, z)\,d\omega(x)$, and $\int_M\,p^{(\omega)} _t(y, z)\,d\omega(z)=1$. Moreover, Varadhan's large deviation formula holds
\begin{equation}
\label{largedevweigh}
-\,\lim_{t\searrow 0^+}\,{t}\,\ln\,\left[\,p^{(\omega)}_{t}(y, z)\right]\,=\,
\frac{d_g^2(y, z)}{4}\;,
\end{equation}
where the convergence is uniform over all $(M,g,\,d\omega)$.
\end{thm}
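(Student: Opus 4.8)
The plan is to organize the proof around the Dirichlet-form/spectral picture for the structural statements and to isolate Varadhan's formula as the one genuinely analytic assertion. The starting point is the integration-by-parts identity underlying (\ref{Wlap}): by the very definition (\ref{weightdiv}) of $\mathrm{div}_\omega$ as the $L^2(M,d\omega)$-adjoint of the gradient, one has $\int_M (\bigtriangleup_\omega u)\,v\,d\omega = -\int_M g(\nabla u,\nabla v)\,d\omega = \int_M u\,(\bigtriangleup_\omega v)\,d\omega$ for $u,v\in C^\infty(M)$, where compactness of $M$ and the absence of boundary eliminate boundary terms. This exhibits $\bigtriangleup_\omega$ as symmetric and nonpositive and identifies its quadratic form with the Otto--Dirichlet form $\mathcal{E}(u,v)=\int_M g(\nabla u,\nabla v)\,d\omega$ of (\ref{inner}). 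First I would note that $\mathcal{E}$ is a closable, strongly local, regular Dirichlet form on $L^2(M,d\omega)$ whose generator is the unique self-adjoint (Friedrichs) extension of $\bigtriangleup_\omega$; since $M$ is compact without boundary and the operator is uniformly elliptic with smooth coefficients, this extension is in fact the closure of $\bigtriangleup_\omega|_{C^\infty(M)}$, so the operator is essentially self-adjoint. The spectral theorem then yields a strongly continuous contraction semigroup $e^{t\bigtriangleup_\omega}$, which is positivity preserving and Markovian by the Beurling--Deny criteria applied to the Markovian form $\mathcal{E}$.

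The remaining structural claims follow from elliptic/parabolic regularity and the spectral expansion. Uniform ellipticity of $\bigtriangleup_\omega$ and hypoellipticity of $\partial_t-\bigtriangleup_\omega$ give a discrete spectrum $0=\lambda_0<\lambda_1\le\lambda_2\le\cdots\to\infty$ with a $d\omega$-orthonormal basis of smooth eigenfunctions $\{\varphi_k\}$, and the kernel $p^{(\omega)}_t(y,z)=\sum_{k\ge0}e^{-\lambda_k t}\varphi_k(y)\varphi_k(z)$ converges in $C^\infty$ for every $t>0$; its symmetry $p^{(\omega)}_t(y,z)=p^{(\omega)}_t(z,y)$ and the Chapman--Kolmogorov identity are immediate from $d\omega$-orthonormality and $e^{(s+t)\bigtriangleup_\omega}=e^{s\bigtriangleup_\omega}e^{t\bigtriangleup_\omega}$. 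Positivity and the characterization as the minimal positive fundamental solution follow from the parabolic maximum principle. Conservativeness is the observation that $d\omega$ is a probability measure and that the lowest eigenfunction is the constant $\varphi_0\equiv1$ with $\lambda_0=0$ (constants lie in $\ker\bigtriangleup_\omega$), whence $\int_M p^{(\omega)}_t(y,z)\,d\omega(z)=(e^{t\bigtriangleup_\omega}1)(y)=1$. These are exactly the facts assembled in \cite{grigoryan} for weighted Laplacians, and for the $C^\infty$ regularity one takes $f\in C^\infty(M,\mathbb{R})$, i.e. $d\omega\in\mathcal{DIL}_{(1)}(M,g)$; the Dirichlet-form arguments survive for merely Lipschitz $f$ at the cost of weaker space regularity.

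The crux is the large-deviation formula (\ref{largedevweigh}), and the essential point that I would emphasize is that the distance appearing is the \emph{unweighted} Riemannian distance $d_g$, not a weight-dependent one. The conceptual reason is that the square field of $\mathcal{E}$ is $\Gamma(u)=|\nabla u|_g^2$, which is independent of the density $e^{-f}$; consequently the intrinsic metric $d_{\mathcal{E}}(y,z)=\sup\{u(y)-u(z):u\in C^\infty(M),\ |\nabla u|_g\le1\}$ coincides identically with $d_g$, the weight cancelling in the gradient constraint. Varadhan-type short-time asymptotics for strongly local Dirichlet forms then give $-\,t\ln p^{(\omega)}_t(y,z)\to d_{\mathcal{E}}^2(y,z)/4=d_g^2(y,z)/4$, uniformly on the compact $M\times M$ (here I would invoke the Sturm--Varadhan theory underlying \cite{sturm},\cite{sturm2},\cite{sturm3},\cite{varadhan}). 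For smooth $f$ there is a shorter, self-contained route that I expect to be the cleanest to write: the ground-state transform $u\mapsto V_g(M)^{-1/2}e^{-f/2}u$ is unitary from $L^2(M,d\omega)$ onto $L^2(M,d\mu_g)$ and conjugates $\bigtriangleup_\omega$ into the Schr\"odinger operator $\bigtriangleup_g-V_f$ with bounded potential $V_f=\tfrac14|\nabla f|^2-\tfrac12\bigtriangleup_g f$, yielding the pointwise relation $p^{(\omega)}_t(y,z)=V_g(M)\,e^{(f(y)+f(z))/2}K_t(y,z)$ with $K_t$ the heat kernel of $\bigtriangleup_g-V_f$. Since the prefactor is bounded away from $0$ and $\infty$ on $M$, and a bounded potential changes $K_t$ from the pure heat kernel of $\bigtriangleup_g$ only by factors $e^{\pm\|V_f\|_\infty t}$ (Duhamel or Feynman--Kac), multiplying by $t$ and letting $t\searrow0$ annihilates both corrections and reduces (\ref{largedevweigh}) to the classical formula (\ref{largedev0}) already recorded above. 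The main obstacle is thus not any single estimate but the need to certify that the first-order drift $-\nabla f\cdot\nabla$, lower order relative to the principal symbol of $\bigtriangleup_g$, leaves the leading short-time exponential rate untouched; both the intrinsic-metric and the ground-state-transform arguments are designed precisely to make this transparent.
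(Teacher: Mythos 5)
Your proposal is correct, and it takes a genuinely different route from the paper, whose own ``proof'' of this theorem is a two--line deferral: the author simply declares the statement ``a direct consequence of the properties of the weighted Laplacian'' and cites Grigor'yan \cite{grigoryan} (Th.~7.13 for the structural package, and \S 7.5, Th.~7.20 for the $C^\infty$ regularity of the weighted kernel). You instead reconstruct that package from scratch: symmetry and nonpositivity from the defining adjointness of ${div}_\omega$ on the closed manifold $M$, essential self--adjointness and the Friedrichs closure, Beurling--Deny for Markovianity, the eigenfunction expansion for symmetry, the semigroup identity and smoothness, and $\varphi_0\equiv 1$, $\lambda_0=0$ for the conservativeness $\int_M p^{(\omega)}_t(y,z)\,d\omega(z)=1$ --- all standard and correct on a compact manifold without boundary. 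More importantly, you supply an actual argument for (\ref{largedevweigh}), which the paper's citation does not really cover (Grigor'yan's cited theorems give existence and regularity of the kernel, not the short--time logarithmic asymptotics), and your ground--state transform is the cleanest available: the unitary $u\mapsto V_g(M)^{-1/2}e^{-f/2}u$ does conjugate $\bigtriangleup_\omega$ into $\bigtriangleup_g-V_f$ with $V_f=\tfrac14|\nabla f|_g^2-\tfrac12\bigtriangleup_g f$ bounded for smooth $f$ on compact $M$, yielding $p^{(\omega)}_t(y,z)=V_g(M)\,e^{(f(y)+f(z))/2}K_t(y,z)$, and since both the prefactor and the Feynman--Kac corrections $e^{\pm\|V_f\|_\infty t}$ are killed after multiplication by $t$, the weighted formula reduces uniformly to the classical unweighted one (\ref{largedev0}); this also makes transparent \emph{why} the unweighted $d_g$ appears, which the paper leaves implicit. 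Two small cautions. First, your attribution of the Dirichlet--form route to the paper's \cite{sturm}, \cite{sturm2}, \cite{sturm3} is loose: those references concern transport inequalities and curvature bounds, not intrinsic--metric Varadhan asymptotics, so if you keep that route you should re--anchor it in Sturm's local--Dirichlet--space papers or their successors (Norris, Hino--Ram\'{\i}rez); the ground--state route needs no such machinery and is preferable here. Second, your caveat about regularity is well taken and in fact sharper than the paper: the $C^\infty$ claim of the theorem, like Grigor'yan's Th.~7.20, genuinely requires $f\in C^{\infty}(M,\mathbb{R})$, i.e.\ $d\omega\in{\mathcal DIL}_{(1)}(M,g)$, whereas the dilaton measure (\ref{kprobmeas}) that the paper later feeds into this theorem is only Lipschitz; flagging that the Dirichlet--form part survives with weaker kernel regularity is exactly the right qualification.
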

\begin{proof}
This characterization of the heat kernel $p^{(\omega)} _t(\cdot , \cdot )$ is a direct consequence of the properties of the weighted Laplacian  $\bigtriangleup _{\omega}$. For a more detailed discussion see Grigoryan \cite{grigoryan} Th.7.13, and $\S $ 7.5, Th. 7.20, for the smoothness property of the weighted heat kernel $p^{(\omega)} _t(\cdot , \cdot )$.
\end{proof}
\begin{rem}
From the point of view  of heat theory, the operator $\bigtriangleup _{h}$ generates a diffusion  in the base manifold $(M,g\,,d\omega)$ driven by $\triangle _\omega$, weakly coupled to an independent heat propagation in the torus fibers $\mathbb{T}_y^q$ via a thermal diffusivity given by $\exp\{{2f(y)}/{q}\}$. Since we do not need to locally vary the geometry of the torus fibers, (up to a fiberwise rescaling), we can freeze the heat diffusion in each $\mathbb{T}_y^q$, and consider only the heat kernel embedding induced by $p^{(\omega)}_{t}(y, z)$.
\end{rem}
\begin{lem}
Let  $\left(\rm{Prob}(N),\,d_h^W\right)$,  be the Wasserstein spaces of probability measures over the warped manifold $\left(N\simeq M\times_{\omega}\mathbb{T}^q,\,h^{(q)}\right)$. If $p^{(\omega)} _t(\cdot \, , z)\,d\omega(\cdot )\otimes \,\delta_\zeta^{\;\mathbb{T}^q}$ denotes the tensor product between the weighted heat kernel on $(M,g,\,d\omega)$ and the Dirac measure supported at $\zeta\in \mathbb{T}^q$, then the map 
\begin{eqnarray}
\label{warpembed}
\\
\Upsilon  _{t}\,:\,(M\times_{(\omega)}\mathbb{T}^q,\,h^{(q)})\,&\hookrightarrow &\,\;\;\left(\rm{Prob}(N),\,d_h^W\right)\nonumber\\
\nonumber\\
(z,\zeta)\,\;\;\;\;\;\;&\longmapsto &\,\Upsilon  _{t}(z,\zeta)\,:=\,p^{(\omega)} _t(\cdot \, , z)\,d\omega(\cdot )\otimes \,\delta_\zeta^{\;\mathbb{T}^q}
\,,\nonumber
\end{eqnarray} 
is, for any $t\geq 0$, injective.
\end{lem}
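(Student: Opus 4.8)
The plan is to exploit the product structure of the embedding and reduce injectivity to the separate injectivity of the two tensor factors. Suppose $\Upsilon_t(z_1,\zeta_1)=\Upsilon_t(z_2,\zeta_2)$ as Borel probability measures on $N\simeq M\times_\omega\mathbb{T}^q$. Since each image is a product measure $p^{(\omega)}_t(\cdot\,,z)\,d\omega(\cdot)\otimes\delta_\zeta^{\,\mathbb{T}^q}$, and since both factors are probability measures (recall $\int_M p^{(\omega)}_t(y,z)\,d\omega(y)=1$), I would first pass to marginals. Projecting onto the $\mathbb{T}^q$ factor integrates out $M$ and yields $\delta_{\zeta_1}=\delta_{\zeta_2}$, hence $\zeta_1=\zeta_2$; projecting onto the $M$ factor integrates out the unit-mass fiber and yields the equality of measures $p^{(\omega)}_t(\cdot\,,z_1)\,d\omega=p^{(\omega)}_t(\cdot\,,z_2)\,d\omega$ on $M$. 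Because $d\omega$ is a fixed positive measure and $p^{(\omega)}_t(\cdot\,,\cdot)$ is $C^\infty$ on $\mathbb{R}_{>0}\times M\times M$, this last equality is equivalent to the pointwise identity $p^{(\omega)}_t(\cdot\,,z_1)=p^{(\omega)}_t(\cdot\,,z_2)$ on $M$. The problem therefore reduces to the injectivity of the single map $z\mapsto p^{(\omega)}_t(\cdot\,,z)$ for each fixed $t\ge 0$.

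The case $t=0$ is immediate: by the defining property of the weighted heat kernel, $\lim_{t\searrow 0}p^{(\omega)}_t(\cdot\,,z)\,d\omega=\delta_z$, so $\Upsilon_0(z,\zeta)=\delta_{(z,\zeta)}$ is nothing but the Dirac embedding of $N$ into $\rm{Prob}(N)$, which is manifestly injective since distinct points carry distinct Dirac masses. For $t>0$ I would use the spectral resolution of the weighted Laplacian: since $M$ is compact, $-\bigtriangleup_\omega$ is self--adjoint in $L^2(M,d\omega)$ with discrete spectrum $0=\lambda_0<\lambda_1\le\lambda_2\le\cdots$ and an orthonormal basis $\{\varphi_k\}$ of smooth eigenfunctions, and the kernel admits the expansion $p^{(\omega)}_t(y,z)=\sum_{k\ge 0}e^{-\lambda_k t}\varphi_k(y)\varphi_k(z)$, convergent and smoothing for $t>0$. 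The hypothesis $p^{(\omega)}_t(\cdot\,,z_1)=p^{(\omega)}_t(\cdot\,,z_2)$ then reads $\sum_k e^{-\lambda_k t}[\varphi_k(z_1)-\varphi_k(z_2)]\,\varphi_k(\cdot)=0$ in $L^2(M,d\omega)$, and by orthonormality of the $\varphi_k$ together with positivity of $e^{-\lambda_k t}$ this forces $\varphi_k(z_1)=\varphi_k(z_2)$ for every $k$.

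The final step, and the only genuinely nontrivial point, is to conclude $z_1=z_2$ from $\varphi_k(z_1)=\varphi_k(z_2)$ for all $k$. Here the completeness of $\{\varphi_k\}$ enters: for any $g\in C^\infty(M)$ the series $g=\sum_k\langle g,\varphi_k\rangle_{d\omega}\,\varphi_k$ converges pointwise, whence $g(z_1)=g(z_2)$ for every smooth $g$; since smooth functions separate the points of $M$, we obtain $z_1=z_2$. Combined with $\zeta_1=\zeta_2$ from the marginal argument, this gives injectivity of $\Upsilon_t$ for all $t\ge 0$. I expect the main obstacle to be bookkeeping rather than any deep estimate — namely justifying the termwise comparison in the eigenfunction expansion and the pointwise convergence used to transfer the separation property. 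Alternatively, one may bypass the spectral computation entirely and invoke the injectivity result of Gigli--Mantegazza \cite{Carlo} for $z\mapsto p^{(\omega)}_t(\cdot\,,z)\,d\omega$, which adapts verbatim to the weighted kernel since their argument relies only on the smoothness, symmetry, and semigroup properties of $p^{(\omega)}_t$ established above.
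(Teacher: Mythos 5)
Your proof is correct, and its overall architecture matches the paper's: both reduce the question via the product structure to the two tensor factors, dispose of the torus factor by the Dirac-mass argument, and then rest everything on the injectivity of $z\mapsto p^{(\omega)}_t(\cdot\,,z)\,d\omega(\cdot)$ on $M$. The genuine difference is in that last step: the paper simply cites Gigli--Mantegazza (Th.\ 2.3 and Prop.\ 5.16 of \cite{Carlo}) and asserts that their argument adapts to the weighted semigroup, whereas you supply a self-contained spectral proof. On compact $M$ your route is clean and arguably preferable as it is explicit: equality of the kernels at a \emph{single} time $t$ forces $\varphi_k(z_1)=\varphi_k(z_2)$ for every eigenfunction, after which one can conclude either as you do (uniform convergence of eigenexpansions of smooth functions, which separate points) or even more quickly by noting that the eigenvalue identity propagates the equality $p^{(\omega)}_s(\cdot\,,z_1)=p^{(\omega)}_s(\cdot\,,z_2)$ to \emph{all} $s>0$ and then letting $s\searrow 0$, so that $\delta_{z_1}=\delta_{z_2}$; this second variant exposes the backward-uniqueness mechanism that underlies the Gigli--Mantegazza argument, so in the compact setting the two proofs are the same phenomenon seen from different ends. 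What the citation route buys instead is generality: it does not need a discrete spectrum or a Mercer-type expansion, and so survives outside the compact setting the paper otherwise works in. Two points you flagged as bookkeeping are indeed the only ones to make explicit: (i) passing from equality of the measures $p^{(\omega)}_t(\cdot\,,z_i)\,d\omega$ to pointwise equality of the kernels uses that $d\omega=e^{-f}\,d\mu_g/V_g(M)$ has everywhere-positive density, hence full support; (ii) the pointwise (in fact uniform) convergence of $\sum_k\langle g,\varphi_k\rangle_{d\omega}\,\varphi_k$ for $g\in C^\infty(M)$ follows from convergence in every Sobolev norm together with Sobolev embedding --- both routine on compact $M$. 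With these spelled out, your argument is complete.
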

\begin{proof}  We have the obvious inclusion
\begin{eqnarray}
&&\left(\rm{Prob}\,(M,g),\,d_g^W\right)\times \left(\rm{Prob}\,(\mathbb{T}^q),\,d_{\mathbb{T}}^W\right)\,
\hookrightarrow \,\left(\rm{Prob}(N),\,d_h^W\right)\\
\nonumber\\
&&\;\;\;\;\;\;\;\left(p^{(\omega)} _t(\cdot \, , z)\,d\omega(\cdot ),\, \delta_\zeta^{\;\mathbb{T}^q}\right)\,\longmapsto \,
p^{(\omega)} _t(\cdot \, , z)\,d\omega(\cdot )\otimes \,\delta_\zeta^{\;\mathbb{T}^q}\;.\nonumber
\end{eqnarray} 
The map $\Upsilon  _{t}$ restricted to the torus fibers, \emph{i.e.}\; $\zeta\longmapsto \delta_\zeta^{\;\mathbb{T}^q}$, 
 is an isometric embedding of $\mathbb{T}^q$ into $(\rm{Prob}\,(\mathbb{T}^q),\,d_{\mathbb{T}}^W)$. By adaptating to the heat semigroup generated by $\triangle _\omega$ the analysis of the injectivity of the Laplace--Beltrami heat flow, (cf. Th. 2.3 and Proposition 5.16 of \cite{Carlo}), it follows that the restriction, $z\mapsto p^{(\omega)} _t(\cdot \, , z)\,d\omega(\cdot )$, of $\Upsilon  _{t}$ to $(M,g,\, d\omega)$ is an injective embedding of $(M,g,\,d\omega)$ into $(\rm{Prob}\,(M),\,d_g^W)$. Hence $\Upsilon  _{t}$ injects in 
$\left(\rm{Prob}(N),\,d_h^W\right)$. 
\end{proof}
To discuss the properties of the  map $\Upsilon  _{t}$ defined by (\ref{warpembed}), let us use coordinates $Z^a\,:=\,(z^i,\,\zeta ^\alpha)$,\,with $a=1,\ldots,n+q$,\; $i=1,\ldots,n$,\, and\;$\alpha=1,\ldots,q$,\;adapted to the product structure of the manifold $N:=\,M\times_{\omega}\mathbb{T}^q$. In the corresponding coordinate bases 
$\{\partial_i\}$,\;$\{\partial_\alpha\}$, let us consider vector fields $U_\perp ^i\partial _i\,\in C^{\infty }(M,TM)$ and $U_\parallel ^\alpha\partial _\alpha\,\in C^{\infty }(\mathbb{T}^q,T\,\mathbb{T}^q)$, and the associated vector field $U\in C^{\infty }(N,TN)$
 \begin{equation}
 \label{adapted}
 U\,=\,U^a(z,\zeta)\partial _a\,:=\,U_\perp  ^i(z)\partial _i\,+\,U_\parallel  ^\alpha(\zeta)\partial _\alpha\;.
 \end{equation}
 For any $t>0$, we can naturally associate to the vector field $U_\perp$ a corresponding element in $T_{p_t(d\omega)}\,\rm{Prob}(M, g)$, the tangent space to $\rm{Prob}(M, g)$ at $p_t(d\omega):=p^{(\omega)} _t(\cdot \,  , z)\,d\omega(\cdot )$.
 \begin{lem}
\label{tangVECT} 
Let $t\longmapsto p^{(\omega)} _t(\cdot \,  , z)\,d\omega(\cdot )$,\,$t\in(0,\infty)$, be the flow of probability measure in $\rm{Prob}_{ac}(M,g)$ defined by the weighted heat kernel $p^{(\omega)} _t(\cdot \, , z)$. Then, the map
\begin{eqnarray}
\label{mapU}
TM\times (0,\infty )&\,\longrightarrow&\, C^{\infty }(M,\mathbb{R})\\
\nonumber\\
(z, U_\perp(z);\,t)\,&\longmapsto& \,U_\perp ^i(z)\,\nabla_i^{(z)}\,\ln\,p^{(\omega)} _t(\cdot \,  , z)\;,\nonumber
\end{eqnarray}
defines, for each $t\in(0,\infty)$, a tangent vector in $T_{p_t(d\omega)}\,\rm{Prob}(M, g)$. 
\end{lem}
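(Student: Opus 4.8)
The plan is to realize the function $\psi:=U_\perp^i\,\nabla_i^{(z)}\ln p^{(\omega)}_t(\cdot\,,z)$ as the Radon--Nikodym density, with respect to the base point $p_t(d\omega)=p^{(\omega)}_t(\cdot\,,z)\,d\omega$, of the velocity of the curve obtained by displacing the heat source $z$ in the direction $U_\perp$, and then to verify the two defining properties of a tangent vector in the sense of (\ref{HilbSPace}): smoothness and vanishing $p_t(d\omega)$--mean. First I would record that, by the preceding theorem on $\bigtriangleup_\omega$ (Grigoryan), for every $t>0$ the weighted heat kernel $p^{(\omega)}_t$ is $C^\infty$ on $M\times M$ and, being the minimal positive solution of (\ref{heatfloweight}), is strictly positive there. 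Hence $\ln p^{(\omega)}_t(\cdot\,,z)$ is a well--defined smooth function on $M$, its source derivative $\nabla_i^{(z)}\ln p^{(\omega)}_t(\cdot\,,z)=p^{(\omega)}_t(\cdot\,,z)^{-1}\,\partial_{z^i}p^{(\omega)}_t(\cdot\,,z)$ is smooth jointly in both arguments, and contracting with the smooth components $U_\perp^i(z)$ yields $\psi\in C^\infty(M,\mathbb{R})$.

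Next I would identify $\psi$ dynamically, following the step of \cite{Carlo} adapted to the semigroup generated by $\bigtriangleup_\omega$. Choosing any curve $s\mapsto z(s)$ in $M$ with $z(0)=z$ and $\dot z(0)=U_\perp$, the composition $s\mapsto\mu_{z(s)}:=p^{(\omega)}_t(\cdot\,,z(s))\,d\omega$ is a curve in $\mathrm{Prob}_{ac}(M,g)$ through $p_t(d\omega)$ at $s=0$; differentiating produces the signed measure $\dot\mu=U_\perp^i\,\partial_{z^i}p^{(\omega)}_t(\cdot\,,z)\,d\omega$, and since the reference measure $d\omega$ does not depend on $z$ its density relative to the base point $\mu_z=p^{(\omega)}_t(\cdot\,,z)\,d\omega$ is exactly $d\dot\mu/d\mu_z=U_\perp^i\,\partial_{z^i}p^{(\omega)}_t/p^{(\omega)}_t=\psi$. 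Thus $\psi$ is precisely the function representing this velocity in the density picture of (\ref{HilbSPace}).

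It then remains to check the mean--zero condition that characterizes $T_{p_t(d\omega)}\mathrm{Prob}_{ac}(M,g)$ in (\ref{HilbSPace}). Starting from the normalization $\int_M p^{(\omega)}_t(y,z)\,d\omega(y)=1$, which holds for every $z$ by the symmetry and conservativity of $p^{(\omega)}_t$ stated above, and differentiating in the direction $U_\perp$ — interchanging derivative and integral, legitimate since $M$ is compact and the integrand is smooth — I obtain
\begin{equation}
\int_M\psi\,d\mu_z=\int_M U_\perp^i\,\partial_{z^i}\ln p^{(\omega)}_t(y,z)\;p^{(\omega)}_t(y,z)\,d\omega(y)=U_\perp^i\,\partial_{z^i}\!\int_M p^{(\omega)}_t(y,z)\,d\omega(y)=0.
\end{equation}
Because $M$ is compact and $\psi$ is smooth, both $\psi$ and $\nabla\psi$ lie in $L^2(M,\mu_z)$, so $\psi$ is a genuine element of the tangent space (\ref{HilbSPace}) at $p_t(d\omega)$.

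The verification is essentially one line once the setup is in place; the only points genuinely requiring care are the strict positivity and joint smoothness of the weighted heat kernel, needed to make sense of $\ln p^{(\omega)}_t$ and to differentiate it, both of which are guaranteed by the cited properties of $\bigtriangleup_\omega$, together with the interchange of differentiation and integration on the compact $M$. Neither of these constitutes a real obstacle, so the content of the lemma is the clean identification of the velocity of the weighted heat--kernel embedding with the score function $U_\perp^i\,\nabla_i^{(z)}\ln p^{(\omega)}_t(\cdot\,,z)$ rather than any analytic difficulty.
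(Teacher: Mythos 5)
Your proof is correct and follows essentially the same route as the paper's: smoothness of $y\mapsto U_\perp^i(z)\,\nabla_i^{(z)}\ln p^{(\omega)}_t(y,z)$ for $t>0$, together with the vanishing of its mean with respect to the base-point measure $p^{(\omega)}_t(\cdot\,,z)\,d\omega$, obtained by differentiating the normalization $\int_M p^{(\omega)}_t(y,z)\,d\omega(y)=1$ under the integral sign and invoking the symmetry of the weighted heat kernel, which is exactly what is needed for membership in the tangent space (\ref{HilbSPace}). Your additional identification of the function as the Radon--Nikodym density of the velocity of the curve of measures obtained by displacing the heat source is not required for the proof but correctly anticipates the interpretive remark the paper places immediately after the lemma.
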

\noindent Note that the superscript ${}^{(z)}$ in (\ref{mapU})  signifies that the differentiation is applied to the indicated variable.
\begin{rem}
\label{notatremark}
In the notation for the tangent space $T_{p_t(d\omega)}\,\rm{Prob}(M, g)$   we  used the shorthand $p_t(d\omega)$ to denote the probability measure on $(M,g)$ associated with the heat kernel distribution $p^{(\omega)} _t(\cdot \,  , z)$ evaluated at time $t$ and with a fixed source $\delta_z$. This notation may become ambiguous when considering on $T_{p_t(d\omega)}\,\rm{Prob}(M, g)$ the associated $L^2$ inner product spaces, since these will functionally depend on the heat source. For instance, $\psi \in T_{p_t(d\omega)}\,\rm{Prob}(M, g)$, has a  natural point--dependent $L^2$ norm evaluated, at fixed heat source $\delta_z$, according to 
\begin{equation}
\parallel \psi\parallel ^2_{L^2(p_t(d\omega,z))}\,:=\,\int_M\,|\psi(y)|^2\,p^{(\omega)} _t(y \,  , z)\,d\omega(y)\;.
\end{equation}
We use the notation $L^2(p_t(d\omega,z))$ to emphasize, whenever necessary, the location of the fixed heat source, and the simpler  notation  $L^2(p_t(d\omega))$ if there is no danger of confusion.  
\end{rem}
\begin{proof} To prove lemma \ref{tangVECT}, let us observe that for $t>0$ the function $y\mapsto U_\perp ^i(z)\,\nabla_i^{(z)}\,\ln\,p^{(\omega)} _t(y, z)$ is smooth. By integrating over $(M,\,p^{(\omega)} _t\,d\omega)$ we get
\begin{eqnarray}
\label{fredzero}
&& \int_M\,\left(U_\perp ^i(z)\,\nabla_i^{(z)}\,\ln\,p^{(\omega)} _t(y , z)\right)\,p^{(\omega)} _t(y , z)\,d\omega(y)\\
\nonumber\\
&&=\,\int_M\,U_\perp ^i(z)\,\nabla_i^{(z)}\,p^{(\omega)} _t(y , z)\,d\omega(y)\nonumber\\
\nonumber\\
&&=\,U_\perp ^i(z)\,\nabla_i^{(z)}\,\int_M\,p^{(\omega)} _t(z , y)\,d\omega(y)\,=\,0\;,\nonumber
\end{eqnarray}
where, in the last passage, we have exploited the symmetry of the heat kernel, $p^{(\omega)} _t(z , y)=p^{(\omega)} _t(y , z)$, and $\int_M\,p^{(\omega)} _t(z , y)\,d\omega(y)\,=\,1$. It follows that, for each $t\in(0,\infty)$,\;(\ref{mapU}) defines, according to (\ref{HilbSPace}), a tangent vector in $T_{p_t(d\omega)}\,\rm{Prob}(M, g)$.
\end{proof}
\begin{rem}
By proceeding similarly, and using the map 
\begin{equation}
(z, U_\perp(z);\,t)\,\longmapsto \,U_\perp ^i(z)\,\nabla_i^{(z)}\,p^{(\omega)} _t(\cdot \,  , z)\,
\end{equation}
we can interpret $U_\perp ^i(z)\,\nabla_i^{(z)}\,p^{(\omega)} _t(\cdot \,  , z)$ as an element of the tangent space $T_{d\omega}\,\rm{Prob}(M, g)$.
\end{rem}
\begin{rem}
Roughly speaking, the vector field $U_\perp ^i(z)\,\nabla_i^{(z)}\,\ln\,p^{(\omega)} _t(\cdot \,  , z)$,  (or, equivalently, $U_\perp ^i(z)\,\nabla_i^{(z)}\,p^{(\omega)} _t(\cdot \,  , z)$), can be thought of as describing the perturbation in the probability measure $p^{(\omega)} _t(\cdot \,  , z)\,d\omega(\cdot )$  as we vary the heat source $\delta_z$ in the direction $U_\perp(z)$. 
\end{rem}
These remarks imply that we  can exploit Otto's parametrization 
(cf. (\ref{otto11})), and represent $U_\perp ^i(z)\,\nabla_i^{(z)}\,p^{(\omega)} _t(\cdot \,  , z)$, or $U_\perp ^i(z)\,\nabla_i^{(z)}\,\ln\,p^{(\omega)} _t(\cdot \,  , z)$, as (the gradient of) a scalar potential, $\widehat{\psi}_{(t,z,U_\perp )}\in C^{\infty }(M,\mathbb{R})$, according to the
\begin{prop}
\label{CNprop}
For each fixed $t>0$, and for any $U_\perp \in C^{\infty}(M, TM)$, the elliptic partial differential equation,  
\begin{equation}
{div}_{\omega}^{(y)}\,\left(p^{(\omega)} _t(y , z)\,\nabla^{(y)}\,\widehat{\psi}_{(t,z,U_\perp )}(y) \right)\, 
=\,-\,U_\perp(z)\cdot \,\nabla^{(z)}\,p^{(\omega)} _t(y , z)\;,
\label{carlo12H}
\end{equation} 
admits a unique solution $\widehat{\psi}_{(t,z,U_\perp )}$\,$\in\, C^{\infty }(M,\mathbb{R})$, with $\int_M\,\widehat{\psi}_{(t,z,U_\perp )}\,d\omega\,=0\,$,  smoothly depending on the data $t,z,\,U_\perp$,  and such that $\nabla^{(y)}\,\widehat{\psi}_{(t,z,U_\perp )}(y)\,\not\equiv \,0$\, if\, $U_\perp  \,\not= \,0$.
\end{prop}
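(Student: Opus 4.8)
The plan is to read (\ref{carlo12H}) as a linear, uniformly elliptic equation in divergence form for the unknown $\widehat{\psi}$, in which the weighted heat kernel plays the role of a smooth strictly positive weight, and to solve it by the Fredholm alternative once the compatibility condition has been checked.

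First I would fix $t>0$ and $z\in M$ and record that, by the theorem on $p^{(\omega)}_t$, the function $y\mapsto p^{(\omega)}_t(y,z)$ is $C^{\infty}$ and strictly positive on the compact manifold $M$, hence bounded between two positive constants $0<c\le p^{(\omega)}_t(\cdot,z)\le C$. Writing $L\widehat{\psi}:={\rm div}_{\omega}(p^{(\omega)}_t\,\nabla\widehat{\psi})=e^{f}\nabla_i(e^{-f}p^{(\omega)}_t\nabla^i\widehat{\psi})$, the operator $L$ is symmetric on $L^2(M,d\omega)$ with Dirichlet form $B(u,v)=\int_M p^{(\omega)}_t\,\langle\nabla u,\nabla v\rangle_g\,d\omega$; equivalently, $L$ is the weighted Laplacian attached to the measure $d\omega_t:=p^{(\omega)}_t(\cdot,z)\,d\omega$. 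On the closed subspace $\dot H^1:=\{u\in H^1(M):\int_M u\,d\omega=0\}$, Poincar\'e's inequality together with $p^{(\omega)}_t\ge c$ makes $B$ coercive and bounded, so Lax--Milgram yields a unique $\widehat{\psi}\in\dot H^1$ with $B(\widehat{\psi},\phi)=\int_M h\,\phi\,d\omega$ for all $\phi\in\dot H^1$, where $h(y):=-\,U_\perp^i(z)\,\nabla^{(z)}_i p^{(\omega)}_t(y,z)$ is the right--hand side. Since $M$ is compact and connected, $\ker L=\mathbb{R}$, so the weak solution solves $L\widehat{\psi}=h$ precisely when the compatibility condition $\int_M h\,d\omega=0$ holds.

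The key point is that this condition is automatic: differentiating the conservation law $\int_M p^{(\omega)}_t(y,z)\,d\omega(y)=1$ (valid by symmetry of the kernel) in the source variable, and commuting $\nabla^{(z)}$ with the integral---legitimate since $p^{(\omega)}_t$ is jointly smooth and $M$ is compact---gives $\int_M \nabla^{(z)}_i p^{(\omega)}_t(y,z)\,d\omega(y)=0$, whence $\int_M h\,d\omega=-\,U^i_\perp(z)\,\nabla^{(z)}_i(1)=0$. Standard elliptic regularity on the closed manifold (the coefficients and $h$ are $C^{\infty}$) then upgrades $\widehat{\psi}$ to $C^{\infty}(M)$, and the normalization $\int_M\widehat{\psi}\,d\omega=0$ together with $\ker L=\mathbb{R}$ pins it uniquely. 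Smooth dependence on the data $(t,z,U_\perp)$ I would obtain routinely: the coefficient $p^{(\omega)}_t(\cdot,z)$ and the source $h(\cdot;t,z,U_\perp)$ depend smoothly (respectively linearly) on these parameters as elements of a H\"older space $C^{k,\alpha}(M)$, so the implicit function theorem applied to $F(\widehat{\psi};t,z,U_\perp):=L\widehat{\psi}-h$---an isomorphism between the zero--mean H\"older spaces by the coercivity above---gives smooth parameter dependence, after which a bootstrap restores joint $C^{\infty}$ regularity.

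The remaining assertion, non--vanishing of $\nabla\widehat{\psi}$ when $U_\perp(z)\neq0$, is the part I expect to be the genuine obstacle, since it is exactly where the immersion (not merely injectivity) property of the heat kernel embedding enters. Note that $\nabla\widehat{\psi}\equiv0$ forces $\widehat{\psi}$ constant, hence $h=L\widehat{\psi}\equiv0$, i.e. $U^i_\perp(z)\nabla^{(z)}_i p^{(\omega)}_t(y,z)=0$ for every $y$. To rule this out I would expand the kernel spectrally, $p^{(\omega)}_t(y,z)=\sum_k e^{-\lambda_k t}\phi_k(y)\phi_k(z)$, in an $L^2(M,d\omega)$--orthonormal basis of $-\bigtriangleup_{\omega}$--eigenfunctions (the series converging in $C^{\infty}$ for $t>0$); pairing the vanishing relation against $\phi_m$ and integrating $d\omega$ yields $U^i_\perp(z)\nabla_i\phi_m(z)=0$ for all $m$. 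Choosing then any $f\in C^{\infty}(M)$ with $U^i_\perp(z)\nabla_i f(z)\neq0$ and expanding it in the same basis, with $C^1$--convergent gradients, one reaches $U^i_\perp(z)\nabla_i f(z)=\sum_m\langle f,\phi_m\rangle_\omega\,U^i_\perp(z)\nabla_i\phi_m(z)=0$, a contradiction. Hence $h\not\equiv0$ and $\nabla\widehat{\psi}\not\equiv0$. Alternatively, one may invoke directly the immersivity established in the injectivity analysis of \cite{Carlo}, adapted here to the semigroup generated by $\bigtriangleup_{\omega}$.
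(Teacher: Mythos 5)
Your proposal is correct, and for the existence, uniqueness, regularity, and compatibility parts it follows essentially the same route as the paper: there too, (\ref{carlo12H}) is recast in weak form (as (\ref{Cdistrib}), with respect to the measure $p^{(\omega)}_t\,d\omega$ rather than your equivalent divergence form weighted against $d\omega$), solved via the Fredholm alternative on the complement of the constants, and the orthogonality of the source to constants is verified by exactly your computation — differentiating $\int_M p^{(\omega)}_t(z,y)\,d\omega(y)=1$ in $z$ using the symmetry of the kernel, which is precisely (\ref{fredzero}); your Lax--Milgram/Poincar\'e packaging and the implicit-function-theorem argument in H\"older spaces merely spell out what the paper dispatches with ``standard elliptic regularity then implies\ldots smooth dependence on the data.'' Where you genuinely diverge is the non-vanishing claim. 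The paper derives the auxiliary heat equation (\ref{zlimit}) satisfied by ${div}_{\omega}^{(y)}\bigl(p^{(\omega)}_t\,\nabla\widehat{\psi}\bigr)$, whose weak initial datum is determined by $U_\perp$ at $z$, and concludes by invoking backward uniqueness of the heat flow; you instead expand $p^{(\omega)}_t$ spectrally in $L^2(M,d\omega)$-eigenfunctions of $\bigtriangleup_{\omega}$, test the identity $U^i_\perp(z)\,\nabla^{(z)}_i p^{(\omega)}_t(y,z)=0$ against each $\phi_m$ to obtain $U^i_\perp(z)\,\nabla_i\phi_m(z)=0$ for all $m$, and reach a contradiction through the $C^1$-convergence of eigenfunction expansions of smooth functions (standard on compact $M$, by rapid decay of the coefficients against polynomially bounded eigenfunction norms). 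Both arguments are valid on a compact manifold; your spectral route is more elementary and self-contained — it avoids backward uniqueness, which the paper uses without proof — and it yields the sharp pointwise conclusion $U_\perp(z)=0$. Indeed, since only the value $U_\perp(z)$ enters the equation, the hypothesis ``$U_\perp\neq 0$'' must be read as $U_\perp(z)\neq 0$: your argument proves exactly this, whereas the paper's phrasing (concluding ${div}_{\omega}^{(z)}U_\perp(z)\equiv 0$ for all fields, hence $U_\perp\equiv 0$) is looser on this point. The paper's backward-uniqueness argument, in turn, is the one that would transfer most directly to settings where a discrete spectral resolution is not available.
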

\begin{proof} This is a rather obvious adaptation of a similar statement in \cite{Carlo}, Prop. 3.1. For its relevance in what follows, and for the convenience of the reader, we outline the proof in our case. According to the definition (\ref{weightdiv}) of the weighted divergence  and the positivity of the heat kernel we have   
\begin{eqnarray}
&&\triangle _g^{(y)}\,\widehat{\psi}_{(t,z,U_\perp )}(y)\,-\,
\nabla^{(y)} \left(f(y)-\ln\,p^{(\omega)} _t(y , z)\,\right)\cdot \nabla^{(y)}\,\widehat{\psi}_{(t,z,U_\perp )}(y)\nonumber\\
\nonumber\\
&&=\,-\,U_\perp(z)\cdot \,\nabla^{(z)}\,\ln\,p^{(\omega)} _t(y , z)\;.
\label{pderewr}
\end{eqnarray}
It follows that (\ref{carlo12H})  can be equivalently rewritten as  
\begin{equation}
\label{carlo12Hnew}
\bigtriangleup _{p _t\,(d\omega)}^{(y)}\,\widehat{\psi}_{(t,z,U_\perp )}(y)\,=
\,-\,U_\perp(z)\,\cdot \nabla^{(z)}\,\ln\,p^{(\omega)} _t(y , z)\;,
\end{equation}
where 
\begin{equation}
\label{heatweight}
\bigtriangleup _{p _t\,(d\omega)}^{(y)}\,:=\,
\bigtriangleup^{(y)}\,\,-\,\nabla ^{(y)}\,\left(f(y)-\ln\,p^{(\omega)} _t(y , z)\,\right)\,
\cdot \,\nabla ^{(y)}\,
\end{equation}
is the weighted Laplacian associated with the measure $p^{(\omega)} _t(y , z)d\omega(y)$. Solutions $\widehat{\psi}_{(t,z,U_\perp )}$ of (\ref{carlo12Hnew}) are naturally defined modulo an additive constant.  To remove this redundancy, we normalize $\widehat{\psi}_{(t,z,U_\perp )}$ by 
requiring
\begin{equation}
\int_M\,\widehat{\psi}_{(t,z,U_\perp )}(y)\,d\omega(y)=\,0\;.
\end{equation}
For any 
given $t>0$, denote by $\mathcal{H}^{1}(M,\mathbb{R};\,p_t(d\omega))$ the Sobolev space of  functions which together their gradients are square summable with respect to the heat kernel 
measure $p_t(d\omega)$, \,(see remark \ref{notatremark}).  Assume 
that  $\widehat{\psi}_{(t,z,U_\perp )}\in \mathcal{H}^{1}(M,\mathbb{R};\,p_t(d\omega))$,  and  write (\ref{carlo12Hnew}) distributionally as
\begin{eqnarray}
\label{Cdistrib}
&&\int_{M_y}\, \left(\nabla ^{(y)}_i\,\widehat{\psi}_{(t,z,U_\perp )}(y)\,\nabla^i\,\chi (y)\right)\;p^{(\omega)} _t(y , z)d\omega(y)\\
\nonumber\\
&&=\,\int_{M_y}\,\left( U_\perp(z)\,\cdot \nabla^{(z)}\,\ln\,p^{(\omega)} _t(y , z)\,\chi (y)\right)\;p^{(\omega)} _t(y , z)d\omega(y)\nonumber\;
\end{eqnarray}
for any test function $\chi\,\in\,C^\infty_0(M,\mathbb{R})\subset W_0^{1,2}(M,\mathbb{R})$, (by density). According to (\ref{fredzero}),\, $U_\perp(z)\,\cdot \nabla^{(z)}\,\ln\,p^{(\omega)} _t(y , z)$ is $L^2(M,\mathbb{R};\,p_t(d\omega))$--orthogonal to the constant functions and by the Fredholm alternative it follows that  (\ref{Cdistrib}), and hence (\ref{carlo12Hnew}), has a unique solution $\widehat{\psi}_{(t,z,U_\perp )}$ in $\mathcal{H}^{1}(M,\mathbb{R};\,p_t(d\omega))$. Standard elliptic regularity then implies that $\widehat{\psi}_{(t,z,U_\perp )}$\,$\in\, C^{\infty }(M,\mathbb{R})$, with a smooth dependence on the data $t,z,\,U_\perp$. In order to prove that if $U_\perp  \,\not= \,0$ then  $\nabla _{i}^{(y)}\,\widehat{\psi}_{(t,z,U_\perp )}(y)\,\not\equiv \,0$\, we exploit an induced heat equation associated to the elliptic problem  (\ref{carlo12H}). From 
\begin{eqnarray}
\frac{\partial}{\partial t}\,\left(U_\perp(z)\cdot \,\nabla^{(z)}\,p^{(\omega)} _t(y , z) \right)\,&=&\,U_\perp(z)\cdot \,\nabla^{(z)}\,\bigtriangleup _{\omega}^{(y)}\,p^{(\omega)} _t(y , z)\\
&=&\,\bigtriangleup _{\omega}^{(y)}\,\left(U_\perp(z)\cdot \,\nabla^{(z)}\,p^{(\omega)} _t(y , z) \right)\nonumber\;,
\end{eqnarray} 
and  (\ref{carlo12H}) we get that ${div}_{\omega}^{(y)}\,\left(p^{(\omega)} _t(y , z)\nabla^{(y)}\,\widehat{\psi}_{(t,z,U_\perp )}(y)\right)$ satisfies the heat equation
\\
\begin{equation}
\label{zlimit}
\left(\frac{\partial}{\partial t}\,-\,\bigtriangleup _{\omega}^{(y)} \right)\,\left[ 
{div}_{\omega}^{(y)}\,\left(p^{(\omega)} _t(y , z)\nabla^{(y)}\,\widehat{\psi}_{(t,z,U_\perp )}(y)\right)\right]\,=0\;,
\end{equation}
with
\begin{equation}
\lim_{t\searrow 0}\,{div}_{\omega}^{(y)}\,\left(p^{(\omega)} _t(y , z)\nabla^{(y)}\,\widehat{\psi}_{(t,z,U_\perp )}(y)\right)\,=\,{div}_{\omega}^{(z)}\,U_\perp(z)\;,\nonumber
\end{equation}
in the weak sense. 
Hence, if $\nabla^{(y)}\,\widehat{\psi}_{(t,z,U_\perp )}\,\equiv \,0$ then, by the (backward) uniqueness of the heat flow,  it follows that we must have ${div}_{\omega}^{(z)}\,U_\perp(z)\,\equiv \,0$,\;$\forall \,U_\perp \in C^{\infty}(M,TM)$. This necessarily implies $U_\perp\,\equiv \,0$.  
\end{proof}
If we denote by   
\begin{equation}
\label{HilbertTang}
\mathcal{H}_{t,\,z}(TM)\,:=\, \overline{\left\{\nabla \widehat{\psi}\,\in\,C^{\infty}(M,TM) \;:\;\widehat{\psi} \in C^{\infty}(M,\,\mathbb{R} )\right\}}^{\;L^{2}(p_t(d\omega,z))}\; ,
\end{equation}
the Hilbert space of gradient vector fields obtained by completion  with respect to the Otto  $L^{2}(p_t(d\omega, z))$ norm 
\begin{eqnarray}
\label{TGnorm}
\parallel\nabla \widehat{\psi}\parallel_{\mathcal{H}_{t,\,z}} ^2\,:=\,
\int_{M}\,\left|\nabla^{(y)}\,\widehat{\psi}\right|_{g(y)}^2
\,p^{(\omega)} _t(y ,\,z)\,d\omega(y)\;,
\end{eqnarray}
then we have
\begin{lem}
The map
\begin{eqnarray}
\label{tangmappsi}
T_z\,M\;&\longrightarrow&\,\;\;\; T_{p_t(d\omega)}\,\rm{Prob}(M, g)\,\overset{\simeq }{\longrightarrow }\,\;\;\;\;\;
\mathcal{H}_{t,\,z}(TM)\\
\nonumber\\
U_\perp(z)\,&\longmapsto&\;U_\perp(z)\,\cdot \nabla^{(z)}\,\ln\,p^{(\omega)} _t(y , z)\,\longmapsto \, \nabla \widehat{\psi}_{(t,z,U_\perp )}\;,\nonumber
\end{eqnarray}
is, for any $t\in (0,\infty)$, an injection. 
\end{lem}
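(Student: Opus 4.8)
The plan is to use that the displayed map is a composition of two \emph{linear} maps, so that its injectivity is equivalent to the vanishing of its kernel, and then to identify that vanishing with the non--degeneracy assertion already contained in Proposition \ref{CNprop}. First I would observe that, for a fixed scale $t>0$ and a fixed source $z$, the first arrow $U_\perp(z)\longmapsto U_\perp(z)\cdot\nabla^{(z)}\ln p^{(\omega)}_t(\cdot\,,z)$ is manifestly linear in the tangent vector $U_\perp(z)\in T_zM$ and, by Lemma \ref{tangVECT}, lands in $T_{p_t(d\omega)}\,{\rm Prob}(M,g)$. The second arrow is the Otto identification \eqref{veciso} written for the measure $p^{(\omega)}_t(\cdot\,,z)\,d\omega(\cdot)$: to that tangent vector it assigns the gradient $\nabla\widehat\psi_{(t,z,U_\perp)}$ of the unique zero--mean solution of the elliptic equation \eqref{carlo12H}. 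Because \eqref{carlo12H} is linear in $\widehat\psi$ with right--hand side linear in $U_\perp(z)$, Proposition \ref{CNprop} makes this a well--defined linear map $T_zM\to\mathcal{H}_{t,z}(TM)$ into the Hilbert space \eqref{HilbertTang}, bounded for the norm \eqref{TGnorm}. The composite being linear, it suffices to show that $\nabla\widehat\psi_{(t,z,U_\perp)}\equiv 0$ forces $U_\perp(z)=0$.

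This last step is precisely the final assertion of Proposition \ref{CNprop}, and I would run the argument recorded there, specialised to the single tangent vector $V:=U_\perp(z)\in T_zM$. Consider the function $F(y,s):=-\,V\cdot\nabla^{(z)}p^{(\omega)}_s(y,z)$, which by \eqref{carlo12H} coincides with ${\rm div}_{\omega}^{(y)}\bigl(p^{(\omega)}_s(y,z)\,\nabla^{(y)}\widehat\psi_{(s,z,V)}(y)\bigr)$ and, as in the proof of Proposition \ref{CNprop}, solves the heat equation \eqref{zlimit} in $(y,s)$. The hypothesis $\nabla^{(y)}\widehat\psi_{(t,z,V)}\equiv 0$ at the fixed scale $t$ makes $F(\cdot,t)\equiv 0$. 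Since $F$ solves the heat equation on the compact space $(M,g,d\omega)$, backward uniqueness — elementary here, because a solution vanishing at $s=t$ has spectral coefficients carrying the strictly positive factors $e^{-\lambda_k t}$ and therefore vanishes — propagates this to all $0<s\le t$, so that $F(\cdot,s)\to 0$ weakly as $s\searrow 0$.

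Finally I would extract $V=0$ from this limit. Testing $F$ against an arbitrary $\chi\in C^\infty(M,\mathbb{R})$ and using the symmetry of the weighted heat kernel together with $\int_M p^{(\omega)}_s(y,z)\,\chi(y)\,d\omega(y)=\bigl(e^{s\bigtriangleup_\omega}\chi\bigr)(z)$ yields
\begin{equation}
\int_M F(y,s)\,\chi(y)\,d\omega(y)=-\,V\cdot\nabla^{(z)}\bigl(e^{s\bigtriangleup_\omega}\chi\bigr)(z)\,,
\end{equation}
which tends to $-\,V\cdot\nabla\chi(z)$ as $s\searrow 0$ since $e^{s\bigtriangleup_\omega}\chi\to\chi$ in $C^1$ near $z$. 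Hence $V\cdot\nabla\chi(z)=0$ for every $\chi$, and choosing functions $\chi$ whose gradient at $z$ realises an arbitrary element of $T^*_zM$ forces $V=0$. This gives triviality of the kernel, hence injectivity for every $t\in(0,\infty)$. I expect the only genuinely non--formal point to be the backward--uniqueness propagation from the fixed scale $t$ down to the $s\searrow 0$ limit, where the spanning of $T_zM$ by the differentials $\nabla\chi(z)$ is transparent; everything else is linearity together with the existence, uniqueness and regularity already established in Proposition \ref{CNprop}.
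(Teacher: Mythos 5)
Your proposal is correct and takes essentially the same route as the paper: the paper disposes of this lemma in one line as an immediate consequence of Proposition \ref{CNprop}, whose final non--degeneracy clause ($\nabla^{(y)}\widehat{\psi}_{(t,z,U_\perp)}\not\equiv 0$ whenever $U_\perp\neq 0$) is proved there, just as you do, via the induced heat equation (\ref{zlimit}) satisfied by ${div}_{\omega}^{(y)}\bigl(p^{(\omega)}_t(y,z)\,\nabla^{(y)}\widehat{\psi}_{(t,z,U_\perp)}(y)\bigr)$ together with backward uniqueness of the heat flow, linearity of (\ref{carlo12H}) in $U_\perp$ supplying the reduction to a trivial kernel. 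Your spectral justification of backward uniqueness and the dual pairing against $e^{s\bigtriangleup_\omega}\chi$ merely make explicit what the paper leaves telegraphic --- indeed your extraction of $U_\perp(z)=0$ by letting $\nabla\chi(z)$ range over $T^*_zM$ is tighter than the paper's concluding step.
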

\begin{proof}
This is an immediate consequence of Proposition \ref{CNprop}.
\end{proof}
The heat kernel  parametrization (\ref{tangmappsi}) can be applied to any adapted vector field $W=W_\perp +W_\parallel$, and we shall set 
\begin{equation}
\label{vectheinjmap}
W_t^a(y,\upsilon)\,:=\,\left(\nabla ^{i}_{(y)}\,\widehat{\psi}_{(t,z,W_\perp )}(y)\,,\; W^\alpha_\parallel(\upsilon) \right)\;,
\end{equation}
with an obvious notation.

\subsection{The induced G-M metric rescaling $(M,g)\mapsto (M,\,g_t^{(\omega)})$}
 According to (\ref{tangmappsi}), and in the spirit of Otto's formal Riemmanian calculus \cite{otto3},  we can interpret $\nabla \widehat{\psi}_{(t,z,U_\perp )}$ as the push--forward of $U_\perp \in T_z\,M$ to the tangent space $T_{p_t(d\omega)}\,\rm{Prob}(M,\,g)$, under the heat kernel embedding map (\ref{warpembed}). This remark 
motivated a basic observation by N. Gigli and C. Mantegazza which we extend to the weighted heat kernel embedding according to
\begin{prop}(cf. Def. 3.2 and Prop. 3.4 of \cite{Carlo}).
\label{CarlNicProp}
For any $t>0$,\, $z\in M$, and $U_\perp $,\,$W_\perp $\,$\in\,T_{z}\,M$, let us denote by $\nabla _{(y)}\,\widehat{\psi}_{(t,z,U_\perp )}$ and $\nabla_{(y)}\,\widehat{\psi}_{(t,z,W_\perp )}$ the corresponding vector fields defined by the weighted heat kernel injection map (\ref{tangmappsi}). Then, the symmetric bilinear form
\\
\begin{eqnarray}
\label{gt}
&&\;\;\;\;\;\;\;\;\;z\,\longmapsto \,g_t^{(\omega)}\,\left(U_\perp (z),W_\perp (z)\right)\\
\nonumber\\
&&\,:=\,\int_{M}\,g_{ik}(y)\,\nabla ^{i}_{(y)}\,\widehat{\psi}_{(t,z,U_\perp )}\,\nabla ^{k}_{(y)}\,\widehat{\psi}_{(t,z,W_\perp )}\,p^{(\omega)} _t(y , z)\,d\omega(y)\;,\nonumber
\end{eqnarray}
\\
\noindent
defines a scale--dependent metric tensor  on $M$, varying smoothly in $0\,<\,t\,<\, \infty$.
\end{prop}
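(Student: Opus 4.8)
The plan is to read off each defining property of a Riemannian metric directly from the existence, uniqueness, linearity and non-degeneracy statements already secured in Proposition \ref{CNprop}, and to obtain the smooth dependence on the scale parameter by differentiating under the integral sign. Since every ingredient is furnished by the elliptic theory developed for (\ref{carlo12H}), the argument is essentially a matter of assembling those ingredients in the right order.

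First I would establish bilinearity and tensoriality. Because the right-hand side of the elliptic equation (\ref{carlo12H}) depends \emph{linearly} on the vector $U_\perp(z)\in T_zM$ and the operator ${div}_{\omega}^{(y)}\!\left(p^{(\omega)}_t\,\nabla^{(y)}\,\cdot\,\right)$ is linear, the normalization $\int_M\widehat{\psi}\,d\omega=0$ is preserved under sums and scalings, so uniqueness of the solution forces the map $U_\perp(z)\mapsto\nabla^{(y)}\widehat{\psi}_{(t,z,U_\perp)}$ to be a linear map $T_zM\to\mathcal{H}_{t,z}(TM)$. Hence the integral (\ref{gt}) is bilinear in the pair $(U_\perp,W_\perp)$, and symmetry is immediate from the symmetry of the integrand under exchange of the two factors. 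Moreover the data entering (\ref{carlo12H}) involve only the pointwise value $U_\perp(z)$, so $g_t^{(\omega)}(U_\perp,W_\perp)$ depends on the vector fields solely through their values at $z$; it is therefore a genuine $(0,2)$-tensor field on $M$ and not merely a bilinear pairing of sections.

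Next I would verify positive definiteness, which is the only non-formal point. Setting $W_\perp=U_\perp$ in (\ref{gt}) reproduces exactly the squared Otto norm (\ref{TGnorm}),
\[
g_t^{(\omega)}\!\left(U_\perp(z),U_\perp(z)\right)=\parallel\nabla\widehat{\psi}_{(t,z,U_\perp)}\parallel_{\mathcal{H}_{t,z}}^2\;\geq\;0,
\]
with equality if and only if $\nabla^{(y)}\widehat{\psi}_{(t,z,U_\perp)}\equiv0$, since the weight $p^{(\omega)}_t(y,z)$ is strictly positive for $t>0$. The non-degeneracy clause of Proposition \ref{CNprop}—proved there via backward uniqueness for the induced heat flow (\ref{zlimit})—guarantees that $\nabla\widehat{\psi}_{(t,z,U_\perp)}\not\equiv0$ whenever $U_\perp\neq0$. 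Thus the quadratic form is strictly positive on nonzero vectors, and $g_t^{(\omega)}$ is a bona fide Riemannian metric on $M$ for each fixed $t\in(0,\infty)$.

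Finally, for the smooth variation I would invoke differentiation under the integral sign. The weighted heat kernel $p^{(\omega)}_t(y,z)$ is $C^\infty$ on $\mathbb{R}_{>0}\times M\times M$, while Proposition \ref{CNprop} provides $\widehat{\psi}_{(t,z,U_\perp)}$ depending smoothly on $(t,z,U_\perp)$; the integrand in (\ref{gt}) is therefore jointly smooth in $(t,z,y)$. Compactness of $M$ bounds all $t$- and $z$-derivatives of the integrand uniformly in $y$ on compact parameter sets, which legitimizes interchanging $\partial_t^k$ (and derivatives in $z$) with the integral and yields that $t\mapsto g_t^{(\omega)}$ is $C^\infty$ with values in smooth metric tensors on $M$. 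The main obstacle is precisely the strict positivity: it rests entirely on the non-degeneracy assertion of Proposition \ref{CNprop}, and one must keep in mind that the whole construction is valid only for $t>0$, the limit $t\searrow0^+$ being singular (as Varadhan's formula (\ref{largedevweigh}) already signals) and demanding a separate asymptotic analysis.
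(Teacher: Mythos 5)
Your proof is correct and follows essentially the same route as the paper's: bilinearity from the linearity and uniqueness of the defining elliptic equation (\ref{carlo12H}), positive definiteness from the strict positivity of $p^{(\omega)}_t\,d\omega$ combined with the non-degeneracy clause of Proposition \ref{CNprop}, and smoothness in $(t,z)$ from the smooth dependence of the heat kernel and of the potentials $\widehat{\psi}_{(t,z,U_\perp)}$ on the data. Your added remarks on tensoriality and on justifying differentiation under the integral sign via compactness of $M$ are welcome refinements of the paper's brief outline, not a different argument.
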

\begin{proof}   We briefly outline the proof, an obvious adaptation of Prop. 3.4 of \cite{Carlo}. To begin with, let us observe that as a consequence of the uniqueness property of the defining pde (\ref{carlo12H}), the functions $\widehat{\psi}_{(t,z,U_\perp )}$ and $\widehat{\psi}_{(t,z,W_\perp )}$ depend linearly from the vectors $U_\perp$ and $W_\perp$. This implies that the expression (\ref{gt}) is bilinear, besides being manifestly symmetric and non--negative. Moreover, the smoothness of the weighted heat kernel $p^{(\omega)} _t\,d\omega$, and the smooth dependence of $\widehat{\psi}_{(t,z,U_\perp )}$ and $\widehat{\psi}_{(t,z,W_\perp )}$   from the data $(t,z,U_\perp, W_\perp)$, imply that  $g_t^{(\omega)}\,\left(U_\perp (z),W_\perp (z)\right)$ is smooth in its arguments. Since $p^{(\omega)} _t\,d\omega>0$, if we assume that $g_t^{(\omega)}\,\left(U_\perp (z),U_\perp (z)\right)\,=\,0$ then we necessarily get $\nabla_{(y)}\,\widehat{\psi}_{(t,z,U_\perp )}\,\equiv \,0$ and hence, according to proposition  \ref{CNprop},
$U_\perp (z)\,\equiv \,0$. It follows that the bilinear form (\ref{gt}) is a smooth metric on $M$. 
\end{proof}
 To characterize geometrically the metric (\ref{gt})  let us
 consider a one--parameter family of locally Lipschitz diffeomorphisms $\Theta_\lambda$ in $M$,
 \begin{eqnarray}
 \Theta _\lambda\,:\,[0,1)\times\,M\,\longrightarrow \,M\\
 (\lambda,\,x)\,\longmapsto\, c_\lambda(x)\;, \nonumber
 \end{eqnarray}
where $c_\lambda(x)$ is the point in $M$ reached at time $\lambda$ along the $\Theta_\lambda$--trajectory issued from $x$, and where $\Theta_0=id_M$. Let ${c\,'}_\lambda\,( x)$ be  the $\lambda$--dependent velocity field associated, for almost every $\lambda$, with the trajectories $(\lambda, x) \mapsto c_\lambda(x)$  of $\Theta_\lambda$. Locally Lipschitz diffeomorphisms map null sets to null sets, and the behavior of the heat kernel embedding along $\Theta _\lambda$ is described by the 

\begin{lem} (cf. Th.2.5 in \cite{Carlo}).
\label{movHeatSource}
  Let $\delta _{c_\lambda(z)}$ denote the heat source at the point $c_\lambda(z)\in M$ reached at time $\lambda$ along the $\Theta_\lambda$--trajectory issued from $z$. Let  $c_\lambda(z)\longmapsto p^{(\omega)} _t(\cdot \, , c_\lambda(z))d\omega(\cdot )$ be the corresponding heat embedding map.  Then, for almost every $\lambda\in [0,1]$, there exists, along the path $p^{(\omega)} _t(\cdot \, , c_\lambda(z))d\omega(\cdot )$,\,$\lambda\in [0,1]$, \,in $\rm{Prob}_{ac}(M,g)$, a tangent velocity field  $\mathfrak{v} _\lambda \in \mathcal{H}_{t,c_\lambda(z)}(TM)$ such that the relation   
\begin{equation}
\label{cpde}
c'_\lambda(z)\,\cdot \nabla _{c_\lambda(z)}\,p^{(\omega)} _t(y , c_\lambda(z))\,+\,div_\omega^{(y)}\,
\left( \mathfrak{v} _\lambda(y)\,p^{(\omega)} _t(y , c_\lambda(z))\right)\,=\,0\;,
\end{equation}
holds in the sense of distributions.
\end{lem}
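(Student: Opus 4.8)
The plan is to exhibit the moving heat source as an absolutely continuous curve in $\left(\rm{Prob}_{ac}(M,g),\,d_g^W\right)$ and to read off its velocity field from the elliptic problem of Proposition \ref{CNprop}. Since $\lambda\mapsto c_\lambda(z)$ is locally Lipschitz, Rademacher's theorem ensures that the velocity $c'_\lambda(z)\in T_{c_\lambda(z)}M$ exists for almost every $\lambda\in[0,1]$; fix such a $\lambda$. For each fixed $t>0$ the weighted heat kernel $p^{(\omega)}_t(\cdot\,,\cdot)$ is $C^\infty$ on $M\times M$, with no cut--locus obstruction at positive time, so $\lambda\mapsto p^{(\omega)}_t(y,c_\lambda(z))$ is Lipschitz in $\lambda$, uniformly in $y\in M$, and the chain rule gives, for a.e.\ $\lambda$,
\begin{equation}
\frac{\partial}{\partial\lambda}\,p^{(\omega)}_t(y,c_\lambda(z))\,=\,c'_\lambda(z)\cdot\nabla^{(c_\lambda(z))}\,p^{(\omega)}_t(y,c_\lambda(z))\;.
\label{plan:chain}
\end{equation}

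I would then apply Proposition \ref{CNprop} with source point $c_\lambda(z)$ and direction $U_\perp=c'_\lambda(z)$. This yields a unique normalized potential $\widehat{\psi}_{(t,c_\lambda(z),\,c'_\lambda(z))}\in C^\infty(M,\mathbb{R})$ whose gradient
\begin{equation*}
\mathfrak{v}_\lambda\,:=\,\nabla^{(y)}\,\widehat{\psi}_{(t,c_\lambda(z),\,c'_\lambda(z))}\,\in\,\mathcal{H}_{t,c_\lambda(z)}(TM)
\end{equation*}
is, by the defining equation (\ref{carlo12H}), characterised by ${div}_\omega^{(y)}\!\left(p^{(\omega)}_t(y,c_\lambda(z))\,\mathfrak{v}_\lambda(y)\right)=-\,c'_\lambda(z)\cdot\nabla^{(c_\lambda(z))}\,p^{(\omega)}_t(y,c_\lambda(z))$. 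Combined with (\ref{plan:chain}) this is exactly the pointwise identity (\ref{cpde}), valid in $y$ for almost every $\lambda$, and by construction $\mathfrak{v}_\lambda$ lies in the gradient Hilbert space $\mathcal{H}_{t,c_\lambda(z)}(TM)$, hence is an admissible tangent velocity for $\rm{Prob}_{ac}(M,g)$ along the path $\lambda\mapsto p^{(\omega)}_t(\cdot\,,c_\lambda(z))\,d\omega(\cdot)$.

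It remains to interpret (\ref{cpde}) as the distributional continuity equation for the curve $\mu_\lambda:=p^{(\omega)}_t(\cdot\,,c_\lambda(z))\,d\omega(\cdot)$. I would test against an arbitrary $\chi\in C^\infty(M,\mathbb{R})$, integrate the weighted divergence by parts against $d\omega$ (using that $-{div}_\omega$ is the $L^2(M,d\omega)$--adjoint of $\nabla$, cf.\ (\ref{weightdiv})), and commute $\partial/\partial\lambda$ with $\int_M(\cdot)\,d\omega$, which is justified by the uniform Lipschitz bound from (\ref{plan:chain}) and dominated convergence. This produces
\begin{equation}
\frac{d}{d\lambda}\int_M\chi\,d\mu_\lambda\,=\,\int_M\,g\!\left(\nabla\chi,\,\mathfrak{v}_\lambda\right)\,d\mu_\lambda\qquad\text{for a.e. }\lambda\,,
\label{plan:weak}
\end{equation}
which is precisely the weak form of (\ref{cpde}).

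The main obstacle is not the existence of $\mathfrak{v}_\lambda$ — that is delivered by Proposition \ref{CNprop} — but the measurability and integrability bookkeeping required to pass from the almost-everywhere pointwise identity to the distributional statement (\ref{plan:weak}). Concretely, one must bound $\mathfrak{v}_\lambda$ in the Hilbert space $\mathcal{H}_{t,c_\lambda(z)}(TM)$ uniformly enough in $\lambda$ to justify interchanging $\lambda$--differentiation with the spatial integration and the integration by parts. Here the positivity and smoothness of $p^{(\omega)}_t$ for $t>0$, together with the Lipschitz regularity of $\lambda\mapsto c_\lambda(z)$ entering through (\ref{plan:chain}), give a uniform-in-$\lambda$ control of the right-hand source in the defining elliptic equation and hence, via the elliptic estimate underlying Proposition \ref{CNprop}, of $\|\mathfrak{v}_\lambda\|_{\mathcal{H}_{t,c_\lambda(z)}}$.
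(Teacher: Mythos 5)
Your proof is correct, but it runs the paper's logic in the opposite direction, and the difference is instructive. The paper first invokes the abstract Ambrosio--Gigli--Savar\'e result (Th.~8.3.1 of \cite{savare}, Th.~13.8 of \cite{VillaniON}): since $\Theta_\lambda$ is locally Lipschitz, the curve $\lambda\mapsto p^{(\omega)}_t(\cdot\,,c_\lambda(z))\,d\omega(\cdot)$ is absolutely continuous in $d_g^W$, so for a.e.\ $\lambda$ there exists \emph{some} tangent field $\mathfrak{v}_\lambda\in\mathcal{H}_{t,c_\lambda(z)}(TM)$ satisfying the distributional continuity equation (\ref{Liecont2}); the pointwise form (\ref{cpde}) then drops out by the same chain rule you use, and the explicit identification $\mathfrak{v}_\lambda=\nabla\widehat{\psi}_{(t,c_\lambda(z),c'_\lambda)}$ is deliberately deferred to the next lemma (Lemma \ref{velident}), where it follows from the uniqueness in Proposition \ref{CNprop}. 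You instead proceed constructively: you define $\mathfrak{v}_\lambda$ as the gradient of the potential delivered by Proposition \ref{CNprop} with source $c_\lambda(z)$ and direction $c'_\lambda(z)$ (which is legitimate, since only the value $U_\perp(z)$ of the vector field at the source enters the right-hand side of (\ref{carlo12H})), observe that the defining elliptic equation \emph{is} literally (\ref{cpde}), and then verify the weak formulation by testing against smooth functions, integrating by parts against $d\omega$ via (\ref{weightdiv}), and dominated convergence --- all sound here because $p^{(\omega)}_t$ is smooth and positive for $t>0$ and $M$ is compact. This buys a self-contained argument that collapses Lemmas \ref{movHeatSource} and \ref{velident} into a single step, and it suffices for the literal statement, since ``tangent velocity field'' in the lemma means membership in $\mathcal{H}_{t,c_\lambda(z)}(TM)$ as defined in (\ref{HilbertTang}), which your gradient field has by construction. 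What the paper's abstract route buys is robustness and forward compatibility: the AGS theorem produces a velocity field for \emph{any} absolutely continuous curve, independently of the smoothness of the kernel, and --- more to the point --- if one wants $\mathfrak{v}_\lambda$ to compute the Wasserstein metric speed of the curve, as is needed in Proposition \ref{wassdinter}, your construction alone does not yet give this: one must additionally invoke the converse part of the AGS characterization, namely that a continuity-equation solution lying in the closure of gradients realizes the metric derivative. You leave that identification implicit when you call $\mathfrak{v}_\lambda$ ``an admissible tangent velocity''; it is a standard fact, but worth stating, since it is precisely the content your route skips by not passing through the metric theory.
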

\begin{proof}  
Since $\Theta _\lambda$ is locally Lipschitz, the curve of measures, (at fixed $t$), $\lambda\longmapsto p^{(\omega)} _t(\cdot \, , c_\lambda(z))d\omega(\cdot )$ is absolutely continuous with respect to the Wasserstein  distance. By a result of Ambrosio, Gigli and Savar\'e (cf. Th. 8.3.1 in \cite{savare} and Th. 13.8 in \cite{VillaniON}),  for almost every $\lambda$  the path $p^{(\omega)} _t(\cdot \, , c_\lambda(z))d\omega(\cdot )$,\,$\lambda\in [0,1]$, \,in $\rm{Prob}_{ac}(M,g)$ admits a tangent velocity field $y\longmapsto \mathfrak{v} _\lambda (y) \in \mathcal{H}_{t,c_\lambda(z)}(TM)$, where 
$\mathcal{H}_{t,c_\lambda(z)}(TM)$ is the Hilbert space (\ref{HilbertTang}) associated with  $c_\lambda(z)$. Moreover the continuity equation
\begin{equation}
\label{diffcont}
\mathcal{L}_{(\tfrac{\partial }{\partial \lambda}+\mathfrak{v} _\lambda)}\,p^{(\omega)} _t(y , c_\lambda(z))\,d\omega(y)\,d\lambda\,=\,0
\end{equation}
holds in the distributional sense on  $M\times[0,1]$,\emph{i.e.}
\begin{equation}
\label{Liecont2}
\int_M\,\int_{[0,1]}\,\left(\frac{\partial }{\partial \lambda}\,\varphi (\lambda,y)\,+\,
\mathfrak{v} _\lambda(y)\cdot \nabla^{(y)} \varphi (\lambda,y)  \right)\,p^{(\omega)} _t(y , c_\lambda(z))\,d\omega(y)\,d\lambda\,=\,0\;,
\end{equation}
for all $\varphi\in C^\infty_0({M}\times[0,1])$. Here we have denoted by $\mathcal{L}_{(\tfrac{\partial }{\partial \lambda}+\mathfrak{v} _\lambda)}$ the (weakly defined) Lie derivative in the direction of the $M\times[0,1]$--vector field $\tfrac{\partial }{\partial \lambda}+\mathfrak{v} _\lambda$. Hence, for almost every $\lambda$, we can write  
\begin{eqnarray}
\label{Liecont}
&&\mathcal{L}_{(\tfrac{\partial }{\partial \lambda}+\mathfrak{v} _\lambda)}\,p^{(\omega)} _t(y , c_\lambda(z))\,d\omega(y)\,d\lambda\\
\nonumber\\
&&=\,\frac{\partial }{\partial \lambda}\,p^{(\omega)} _t(y , c_\lambda(z))\,d\omega(y)\,d\lambda\,+
\,\mathcal{L}_{\mathfrak{v} _\lambda}\left(p^{(\omega)} _t(y , c_\lambda(z))\,d\omega(y)\,d\lambda\,  \right)\nonumber\\
\nonumber\\
&&=\,\left[\frac{\partial }{\partial \lambda}\,p^{(\omega)} _t(y , c_\lambda(z))\,+\,div_\omega^{(y)}\,
\left( \mathfrak{v} _\lambda\,p^{(\omega)} _t(y , c_\lambda(z))\right)\right]\,d\omega(y)\,d\lambda\,=\,0\;.\nonumber
\end{eqnarray}
  Along the curve $\lambda\mapsto c_\lambda(z)$,\, we have
\begin{equation}
\frac{\partial }{\partial \lambda}\,p^{(\omega)} _t(y , c_\lambda(z))=c'_\lambda(z)\,\cdot \nabla _{c_\lambda(z)}\,p^{(\omega)} _t(y , c_\lambda(z))\;,
\end{equation}
which inserted in (\ref{Liecont}), implies that the velocity vector  $\mathfrak{v} _\lambda$ satisfies (\ref{cpde}), as stated.
\end{proof}  
As a direct consequence of Lemma \ref{movHeatSource}  we have the following result (cf. Prop. 3.5 of \cite{Carlo}), which can be interpreted as a form of equivariance of the heat kernel embedding under (Lipschitzian) diffeomorphisms, 
\begin{lem}
\label{velident} 
For almost every $\lambda$ the velocity field $y\longmapsto \mathfrak{v} _\lambda(y)$ can be represented  as the $L^2([0,1]\times{M},\,\nu_\lambda\otimes {d\lambda})$ vector field 
$(\lambda,\,z)\,\mapsto \,\nabla^{(y)}\, \widehat{\psi}_{(t,c_\lambda(z), c'_\lambda)}(y)$  covering the 
curve $\lambda\mapsto \nu_\lambda:= p^{(\omega)} _t(\cdot \, , c_\lambda(z))d\omega(\cdot )$ in $\rm{Prob}_{ac}(M,g)$.
\end{lem}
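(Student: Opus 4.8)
The plan is to identify the two vector fields by showing they solve the same elliptic problem and then invoking the uniqueness already established in Proposition \ref{CNprop}. The starting point is the continuity equation (\ref{cpde}) furnished by Lemma \ref{movHeatSource}, which characterizes the tangent velocity field $\mathfrak{v}_\lambda$ along the curve $\lambda\mapsto\nu_\lambda$ through the distributional identity $c'_\lambda(z)\cdot\nabla_{c_\lambda(z)}\,p^{(\omega)}_t(y,c_\lambda(z))+div_\omega^{(y)}\bigl(\mathfrak{v}_\lambda(y)\,p^{(\omega)}_t(y,c_\lambda(z))\bigr)=0$. I would then specialize the defining equation (\ref{carlo12H}) of Proposition \ref{CNprop} to the source point $z\to c_\lambda(z)$ and the direction $U_\perp\to c'_\lambda(z)$, producing the potential $\widehat{\psi}_{(t,c_\lambda(z),c'_\lambda)}$ whose gradient satisfies $div_\omega^{(y)}\bigl(p^{(\omega)}_t(y,c_\lambda(z))\,\nabla^{(y)}\widehat{\psi}_{(t,c_\lambda(z),c'_\lambda)}(y)\bigr)=-c'_\lambda(z)\cdot\nabla_{c_\lambda(z)}\,p^{(\omega)}_t(y,c_\lambda(z))$. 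The crucial observation is that the right-hand sides of these two equations coincide exactly, so that $\mathfrak{v}_\lambda\,p^{(\omega)}_t$ and $\nabla\widehat{\psi}_{(t,c_\lambda(z),c'_\lambda)}\,p^{(\omega)}_t$ share the same weighted divergence.

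To pass from equality of weighted divergences to equality of the fields themselves, I would use that both are gradient fields lying in the same Hilbert space $\mathcal{H}_{t,c_\lambda(z)}(TM)$ defined in (\ref{HilbertTang}): the membership of $\mathfrak{v}_\lambda$ in this space is precisely the regularity assertion of Lemma \ref{movHeatSource}, while $\nabla\widehat{\psi}$ is a gradient by construction. Their difference is therefore a gradient field whose product with $p^{(\omega)}_t$ is $div_\omega$-free, so that the difference of the underlying potentials is annihilated by the weighted Laplacian $\bigtriangleup_{p_t(d\omega)}$ of (\ref{heatweight}). Since $M$ is compact and this operator is elliptic, the only such functions are constants, whence the two gradients coincide for a.e.\ $\lambda$. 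This is exactly the uniqueness mechanism already exploited in the proof of Proposition \ref{CNprop} through the Fredholm alternative, so no new analytic input is required.

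Finally I would assemble the pointwise-in-$\lambda$ identifications into the claimed $L^2([0,1]\times M,\,\nu_\lambda\otimes d\lambda)$ representation. Here the smooth dependence of $\widehat{\psi}_{(t,z,U_\perp)}$ on the data $(t,z,U_\perp)$, guaranteed by Proposition \ref{CNprop}, lets me treat $(\lambda,y)\mapsto\nabla^{(y)}\widehat{\psi}_{(t,c_\lambda(z),c'_\lambda)}(y)$ as a genuine measurable vector field covering the curve, while the Wasserstein absolute continuity of $\lambda\mapsto\nu_\lambda$ in $\rm{Prob}_{ac}(M,g)$ supplies the square integrability against $\nu_\lambda\otimes d\lambda$.

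I expect the main obstacle to be the almost-everywhere bookkeeping in $\lambda$ rather than any hard estimate: because $\Theta_\lambda$ is only locally Lipschitz, the velocity $c'_\lambda$ exists only for almost every $\lambda$, and one must verify that the null set on which (\ref{cpde}) fails, the null set on which $c'_\lambda$ is undefined, and the null set outside which $\mathfrak{v}_\lambda\in\mathcal{H}_{t,c_\lambda(z)}(TM)$ can be merged so that the identification holds simultaneously a.e., and that the field thus glued together is jointly measurable in $(\lambda,y)$ and square-integrable.
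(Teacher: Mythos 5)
Your proposal is correct and follows essentially the same route as the paper: compare the continuity equation (\ref{cpde}) with the defining elliptic PDE (\ref{carlo12H}) specialized to $(z,U_\perp)=(c_\lambda(z),c'_\lambda(z))$, note that the right--hand sides coincide, and invoke the uniqueness of the solution of (\ref{carlo12H}) to conclude $\mathfrak{v}_\lambda=\nabla^{(y)}\widehat{\psi}_{(t,c_\lambda(z),c'_\lambda)}$ for almost every $\lambda$. Your expanded uniqueness step (the difference of potentials being annihilated by $\bigtriangleup_{p_t(d\omega)}$, hence constant on the compact $M$) and your care with the $\lambda$--a.e.\ bookkeeping and joint measurability merely make explicit what the paper compresses into a single appeal to Proposition \ref{CNprop}.
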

\begin{proof}
If we compare (\ref{cpde})  with the elliptic PDE (\ref{carlo12H}), characterizing the scalar potential $\widehat{\psi}_{(t,c_\lambda(z),c'_\lambda )}$ associated with the vector $U_\perp(z)\,\equiv \,c'_\lambda(z)$, then, for almost every $\lambda$, we get
\begin{eqnarray}
&&{div}_{\omega}^{(y)}\,\left(p^{(\omega)} _t(y , c_\lambda(z))\,\nabla^{(y)}\,\widehat{\psi}_{(t,c_\lambda(z), c'_\lambda)}(y) \right)\\
\nonumber\\
&& =\,-\,c'_\lambda(z)\cdot \,\nabla^{(c_\lambda(z))}\,p^{(\omega)} _t(y , c_\lambda(z))\nonumber\\
\nonumber\\
&&=\,div_\omega^{(y)}\,
\left(\mathfrak{v} _\lambda(y)\,p^{(\omega)} _t(y , c_\lambda(z))\right)\;.\nonumber
\end{eqnarray}
Since the solution of (\ref{carlo12H}) is  unique, we have that for almost every $\lambda$  we can write $\widehat{\psi}_{(t,c_\lambda(z),c'_\lambda )}(y)\,=\,\mathfrak{v} _\lambda(y)$, as stated.
\end{proof}
The above results imply the following property that, in line with Lemma \ref{McClemma},  extends to $(M,g,\,d\omega)$ a basic observation of Gigli--Mantegazza.

\begin{prop} (cf. Prop. 3.5 of \cite{Carlo}).
\label{wassdinter}
The heat kernel induced  
metric (\ref{gt}) 
 can be identified with the (squared) norm of the Wasserstein metric speed of the absolutely continuous curves of measures $\lambda\,\longmapsto \,\nu_\lambda:= p^{(\omega)} _t(\cdot \, , c_\lambda(z))d\omega(\cdot )$\, $\in\left(\Upsilon_t(M),\,d_g^W  \right)$,\\
\begin{eqnarray}  
&& g_t^{(\omega)}\,\left(c'_\lambda (z),\,c'_\lambda (z)\right):=\,
\left\langle  \nabla\,\widehat{\psi}_{(t,c_\lambda(z),U_\perp )},\,\nabla\,\widehat{\psi}_{(t,c_\lambda(z),U_\perp )}  \right\rangle _{(g,p_t(d\omega))}\nonumber\\
\nonumber\\
&&=\left| \frac{d{\nu}_\lambda(z, t)}{d\lambda}  \right|^2\,:=\,\left[\lim_{\epsilon \rightarrow 0}\,\frac{d^W_g\left(\nu_{\lambda+\epsilon }(z, t),\,\nu_\lambda(z, t)  \right)}{\epsilon } \right]^2\;,
\label{meaning}
\end{eqnarray} 
where $\Upsilon_t(M)$ denotes the image of $(M,g)$ in $\left(\rm{Prob}_{ac}(M,g),\,d_g^W  \right)$ under the heat kernel embedding map (\ref{warpembed}).
\end{prop}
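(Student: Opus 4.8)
The plan is to read off the two equalities in (\ref{meaning}) from the machinery already assembled, the first being essentially a matter of unwinding definitions and the second being the genuine content. For the first equality I would observe that the bilinear form (\ref{gt}), evaluated on the pair $(c'_\lambda(z),c'_\lambda(z))$, is by construction the $L^2(p_t(d\omega,c_\lambda(z)))$ square norm (\ref{TGnorm}) of the gradient field $\nabla\widehat{\psi}_{(t,c_\lambda(z),c'_\lambda)}$, that is, the Otto inner product $\langle\nabla\widehat{\psi}_{(t,c_\lambda(z),c'_\lambda)},\nabla\widehat{\psi}_{(t,c_\lambda(z),c'_\lambda)}\rangle_{(g,p_t(d\omega))}$. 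Here one invokes Lemma \ref{velident}, which identifies this gradient potential with the tangent velocity field $\mathfrak{v}_\lambda$ produced by Lemma \ref{movHeatSource}, so that the left-hand definition of $g_t^{(\omega)}$ and the middle expression in (\ref{meaning}) coincide tautologically.

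The substance of the proposition is the second equality, which I would obtain from the metric-speed theorem of Ambrosio--Gigli--Savar\'e. First I would recall from Lemma \ref{movHeatSource} that, because $\Theta_\lambda$ is locally Lipschitz, the curve $\lambda\mapsto\nu_\lambda= p^{(\omega)}_t(\cdot\,,c_\lambda(z))\,d\omega(\cdot)$ is absolutely continuous in the Wasserstein space $(\rm{Prob}_{ac}(M,g),d_g^W)$. The cited theorem (Th.~8.3.1 in \cite{savare}, Th.~13.8 in \cite{VillaniON}) then asserts that for a.e.\ $\lambda$ the metric derivative $|\dot{\nu}_\lambda|$ exists and equals the $L^2(\nu_\lambda)$ norm of the unique tangent velocity field associated with the curve. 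Lemma \ref{movHeatSource} supplies exactly such a field $\mathfrak{v}_\lambda$ and, crucially, places it in the Hilbert space $\mathcal{H}_{t,c_\lambda(z)}(TM)$ of gradient vector fields defined in (\ref{HilbertTang}); its $L^2(\nu_\lambda)$ norm is then literally the integral in (\ref{gt}), and combining this with the metric-speed identity yields $g_t^{(\omega)}(c'_\lambda(z),c'_\lambda(z))=|\dot{\nu}_\lambda|^2=[\lim_{\epsilon\to0}\epsilon^{-1}d_g^W(\nu_{\lambda+\epsilon},\nu_\lambda)]^2$.

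The one step requiring care, and the main conceptual obstacle, is the identification of $\mathfrak{v}_\lambda$ as the \emph{minimal-norm} velocity field, since the Ambrosio--Gigli--Savar\'e formula equates the metric speed with the norm of the tangent velocity, not with that of an arbitrary field solving the continuity equation (\ref{cpde}). The resolution is that, by the Otto--Benamou--Brenier characterization, the tangent space to Wasserstein space at an absolutely continuous measure is precisely the $L^2$-closure of gradient fields; since Lemma \ref{movHeatSource} already delivers $\mathfrak{v}_\lambda\in\mathcal{H}_{t,c_\lambda(z)}(TM)=\overline{\{\nabla\widehat{\psi}\}}^{\,L^2(p_t(d\omega))}$, the field is automatically the minimal one and no further projection is needed. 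I would also remark that the uniqueness statement of Proposition \ref{CNprop} guarantees that the potential $\widehat{\psi}_{(t,c_\lambda(z),c'_\lambda)}$, and hence $\mathfrak{v}_\lambda$, is well defined for a.e.\ $\lambda$, so that the pointwise identity lifts to an equality of the two nonnegative densities in $L^2([0,1],d\lambda)$, which is all that is required.
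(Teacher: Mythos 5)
Your proof is correct and follows essentially the same route as the paper: absolute continuity of $\lambda\mapsto\nu_\lambda$, the Ambrosio--Gigli--Savar\'e metric-derivative theorem, and the identification of $\nabla\widehat{\psi}_{(t,c_\lambda(z),c'_\lambda)}$ as the tangent (minimal-norm) velocity via its membership in the gradient-closure $\mathcal{H}_{t,c_\lambda(z)}(TM)$, which the paper handles by citing Th.~2.5 and Prop.~3.5 of \cite{Carlo}. The only cosmetic difference is that the paper re-derives absolute continuity explicitly from the Bakry--Gentil--Ledoux contraction estimate (\ref{gentil}) under the Bakry--Emery lower bound (\ref{BKE}), whereas you import it directly from Lemma \ref{movHeatSource}, which is legitimate since that lemma already asserts it.
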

\begin{proof}
Let $K_f\in \mathbb{R}$ denote the lower bound of the Bakry--Emery Ricci curvature of $(M,g,\,d\omega)$,
\begin{equation}
\label{BKE}
Ric_g(v,v)\,+\,Hess\,f(v,v)\,\geq\,K_f\,g(v,v)\;,\;\;\;\;\forall \,v\in\,TM\;.
\end{equation}
According to a result  of D. Bakry, I. Gentil, and M. Ledoux, (\cite{bakry2} Corollary 4.2), we have, for any given $t>0$, 
\begin{eqnarray}
\label{gentil}
&&d_g^W\left(p^{(\omega)} _t(\cdot \, , c_{\lambda_1}(z)),\,p^{(\omega)} _t(\cdot \, , c_{\lambda_2}(z)) \right)\\
\nonumber\\
&&\leq\,e^{\,-\,K_f\,t}\,d_g^W(\delta_{c_{\lambda_1}(z)},\,\delta_{c_{\lambda_2}(z)})
\,=\,e^{\,-\,K_f\,t}\,d_g({c_{\lambda_1}(z)},\,{c_{\lambda_2}(z)})\;.\nonumber
\end{eqnarray}
Since the trajectories of $\Theta_\lambda$ are absolutely continuous, (the family of diffeomorphisms $\Theta_\lambda$ is locally Lipschitz),\, (\ref{gentil}) implies that the curves in $\rm{Prob}_{ac}(M,g)$ defined by
\begin{equation}
\lambda\,\longmapsto \,\nu_{\lambda}(z, t)\,:=\, p^{(\omega)} _t(\cdot \, , c_\lambda(z))d\omega(\cdot )
\end{equation}
are locally absolutely continuous in the metric topology induced by the Wasserstein distance $d^W_g$, and we can define, for almost every $\lambda\,\in\,[0,1]$, the metric derivative \cite{savare} of $\nu_{\lambda}$,
\begin{equation}
\label{metricvel}
\left| \frac{d{\nu}_\lambda(z, t)}{d\lambda}  \right|\,:=\,\lim_{\epsilon \rightarrow 0}\,\frac{d^W_g\left(\nu_{\lambda+\epsilon }(z, t),\,\nu_\lambda (z, t)  \right)}{\epsilon }\;.
\end{equation}
Absolute continuity of the $\Theta^{\,\star }_\lambda$ trajectories, also implies\footnote{By an elementary transposition of Th. 2.5 and Proposition 3.5 in \cite{Carlo}.} that the vector field $\nabla\, \widehat{\psi}_{(t,c(z), c'_\lambda)}$ is uniquely determined  by (\ref{Liecont2}), and its norm  in the formal  Riemannian structure defined on $T_{\nu_\lambda}\,\rm{Prob}_{ac}(M,g)$  by the $L^2(M, \nu_\lambda)$  Otto inner product is provided by  
\begin{eqnarray}
\label{velsquare}
&&\left| \frac{d{\nu}_\lambda(z, t)}{d\lambda}  \right|^2\,=\,\int_M\,\left|\nabla_{y}\, \widehat{\psi}_{(t,\,c(z),\, c'_\lambda)}(y)  \right|^2_g\,p^{(\omega)} _t(y,\, c_\lambda(z))\,d\omega(y)\\
\nonumber\\
&&=\,g_t^{(\omega)}\,\left(c'_\lambda (z),\,c'_\lambda (z)\right)\;,
\end{eqnarray}
as stated.
\end{proof} 
Let us consider the curves of  measures
\begin{equation}
\lambda\,\longmapsto\,\nu_\lambda(c,t)\,:=\, p^{(\omega)} _t(\cdot \, , c_\lambda)d\omega(\cdot )\;,
\end{equation}
associated to  the set of all (absolutely continuous) 
curves, $\{c_{(z,x)}\}:=\,\{[0,1]\ni \lambda\mapsto c_\lambda\in M\}$, \; $c_0=z$, \, $c_1=x$, connecting the points $z$ and $x$ in $(M, g)$. If we let 
\begin{equation}
\label{riemdist}
 d_{g_t^{(\omega)}}(x,z)\,:=\,\underset{\{c_{(z,x)}\}}{inf}\;\int_0^1\,\sqrt{g_t^{(\omega)}\,\left(c'_\lambda,\,c'_\lambda\right)}\,d\lambda 
\end{equation}
denote the Riemannian distance in $(M,g_t^{(\omega)})$, then (\ref{meaning}) 
 implies
\begin{equation}
\label{distanT}
d_{g_t^{(\omega)}}(x,z)\,:=\,\underset{\{c_{(z,x)}\}}{inf}\;\int_0^1\,\left| \frac{d{\nu}_\lambda(c, t)}{d\lambda}  \right|\,d\lambda\;,
\end{equation}
which characterizes metrically  the manifold $(M,\,g_t^{(\omega)})$.\\
\\
\noindent 
To prove that $(t, g)\longmapsto{g_t^{(\omega)}}$ is a variation (and possibly a geometrical deformation) of the original Riemannian structure we need to show that $\lim_{\,t\searrow 0}\,g_t^{(\omega)}\,=\,g$. This is a singular limit and requires  some care. For the standard heat kernel embedding the argument is tersely presented in \cite{Carlo}, and can be easily adapted to our particular case according to
\begin{lem} (cf. Prop. 3.7 of \cite{Carlo}).
\label{metricatzero} 
\begin{equation}
\label{limitmetric}
\lim_{t\searrow  0}\,g_t^{(\omega)}\,\left(c'_\lambda,\,c'_\lambda\right)\,=\,
g\,\left(c'_\lambda,\,c'_\lambda\right)\;.
\end{equation}
\end{lem}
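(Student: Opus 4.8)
The plan is to prove the pointwise limit $\lim_{t\searrow 0}g_t^{(\omega)}(U_\perp,U_\perp)=g(U_\perp,U_\perp)$ for an \emph{arbitrary} base point and tangent vector $U_\perp\in T_zM$; the statement (\ref{limitmetric}) is then the special case in which the base point is $c_\lambda(z)$ and $U_\perp=c'_\lambda$. I would argue by a two–sided estimate, bounding $\limsup$ and $\liminf$ separately. The key reduction is Proposition \ref{wassdinter}: fixing a curve $\lambda\mapsto c_\lambda(z)$ with $c_0(z)=z$ and $c'_0(z)=U_\perp$, and writing $\nu_\lambda:=p^{(\omega)}_t(\cdot\, ,c_\lambda(z))\,d\omega(\cdot)$, the quantity $g_t^{(\omega)}(U_\perp,U_\perp)$ equals the squared Wasserstein metric speed $\left|d\nu_\lambda/d\lambda\right|^2$ at $\lambda=0$. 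Thus the assertion becomes a statement about how fast the heat–kernel image of a moving source separates in $(\mathrm{Prob}(M),d_g^W)$, precisely in the singular regime $t\searrow 0$ in which $\nu_\lambda\rightharpoonup\delta_{c_\lambda(z)}$.

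For the upper bound I would invoke the Bakry--Emery contraction estimate (\ref{gentil}) directly: it gives $d_g^W(\nu_\epsilon,\nu_0)\le e^{-K_f t}\,d_g(c_\epsilon(z),c_0(z))$, so dividing by $\epsilon$ and letting $\epsilon\searrow 0$ yields $\left|d\nu_\lambda/d\lambda\right|_{\lambda=0}\le e^{-K_f t}\,|U_\perp|_g$. Squaring and letting $t\searrow 0$ gives $\limsup_{t\searrow 0}g_t^{(\omega)}(U_\perp,U_\perp)\le g(U_\perp,U_\perp)$ with no further work, using only that on the compact $M$ the Bakry--Emery bound $K_f$ of (\ref{BKE}) is finite, so $e^{-2K_f t}\to 1$.

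For the lower bound I would pass to Kantorovich--Rubinstein duality. Since the $\nu_\lambda$ are probability measures, $d_g^W\ge d_g^{W_1}$ by Cauchy--Schwarz, and for any $1$--Lipschitz $\varphi$ one has $d_g^{W_1}(\nu_\epsilon,\nu_0)\ge\int\varphi\,d\nu_\epsilon-\int\varphi\,d\nu_0$, whence $\left|d\nu_\lambda/d\lambda\right|_{\lambda=0}\ge\left.\frac{d}{d\lambda}\right|_{\lambda=0}\int\varphi\,d\nu_\lambda$. I would choose $\varphi=d_g(x_0,\cdot)$ with $x_0$ close to $z$ on the geodesic issued from $z$ in the direction $-U_\perp/|U_\perp|_g$, so that $\varphi$ is smooth near $z$ with $\nabla\varphi(z)=U_\perp/|U_\perp|_g$. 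Writing $\int\varphi\,d\nu_\lambda=(P^\omega_t\varphi)(c_\lambda(z))$ for the weighted heat semigroup $P^\omega_t=e^{t\triangle_\omega}$, the $\lambda$--derivative at $0$ equals $U_\perp\cdot\nabla(P^\omega_t\varphi)(z)$. As $t\searrow 0$, $P^\omega_t\varphi\to\varphi$ in $C^1$ on a neighborhood of $z$ (smoothing of the heat flow, the far field being exponentially suppressed by Varadhan's formula (\ref{largedevweigh})), so this derivative tends to $U_\perp\cdot\nabla\varphi(z)=|U_\perp|_g$. Combining gives $\liminf_{t\searrow 0}g_t^{(\omega)}(U_\perp,U_\perp)\ge g(U_\perp,U_\perp)$, which together with the upper bound proves (\ref{limitmetric}).

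The hard part is the lower bound's limit interchange, i.e.\ the localized $C^1$ convergence $\nabla(P^\omega_t\varphi)(z)\to\nabla\varphi(z)$ at the single point $z$, uniform enough to pass to the limit in the $\lambda$--derivative. This is exactly the delicate step where one may alternatively follow Gigli--Mantegazza: rescale $y=\exp^{(g)}_z(\sqrt t\,w)$, show that the elliptic equation (\ref{carlo12H}) for $\widehat\psi_{(t,z,U_\perp)}$ converges to the constant–coefficient Gaussian model on $T_zM$ whose solution is the linear map $w\mapsto\langle U_\perp,w\rangle_g$, so that $\nabla\widehat\psi_{(t,z,U_\perp)}\to U_\perp$, and then evaluate $g_t^{(\omega)}(U_\perp,U_\perp)=\int_M|\nabla\widehat\psi_{(t,z,U_\perp)}|^2_g\,p^{(\omega)}_t\,d\omega$ (equivalently, after integrating (\ref{carlo12H}) by parts in the weighted sense) against the Gaussian second moment $\int_M(\exp^{-1}_z y)^a(\exp^{-1}_z y)^b\,p^{(\omega)}_t\,d\omega\simeq 2t\,g^{ab}(z)$, giving $g_t^{(\omega)}(U_\perp,U_\perp)\simeq\tfrac{1}{2t}\cdot 2t\,|U_\perp|_g^2$. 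The weight $e^{-f}$ enters only through subleading corrections in $t$, so the leading order reproduces the unweighted computation of \cite{Carlo}, Prop.~3.7; controlling those corrections uniformly away from the cut locus by means of (\ref{asympt5}) is the technical core of the argument.
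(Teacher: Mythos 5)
Your proof is correct, and while your upper bound is exactly the paper's (the Bakry--Emery contraction (\ref{gentil}) divided by $\epsilon$, then $t\searrow 0$), your lower bound takes a genuinely different route. The paper does not use Kantorovich--Rubinstein $W_1$ duality; it instead quotes the variational characterization of the quadratic form,
\begin{equation*}
g_t^{(\omega)}(c'_\lambda,c'_\lambda)\,=\,\sup_{\varphi\in C_0^\infty}\left\{2\int_M\nabla\varphi\cdot\nabla\widehat{\psi}_{(t)}\,p^{(\omega)}_t\,d\omega-\int_M|\nabla\varphi|_g^2\,p^{(\omega)}_t\,d\omega\right\},
\end{equation*}
and then uses the defining elliptic equation (\ref{carlo12H}) to convert the cross term into $c'_\lambda\cdot\nabla\varphi_t(c_\lambda)$ with $\varphi_t=e^{t\triangle_\omega}\varphi$; choosing a \emph{smooth, compactly supported} $\varphi$ with $\nabla\varphi=c'_\lambda$ at the point and letting $t\searrow 0$ gives $\liminf g_t^{(\omega)}\geq 2|c'_\lambda|^2-|c'_\lambda|^2$. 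Your $W_1\leq W_2$ argument reaches the same conclusion testing only the linear term, at the price of the $1$--Lipschitz normalization and of your choice $\varphi=d_g(x_0,\cdot)$, which is what forces the delicate localized $C^1$ convergence $\nabla(P_t^{\omega}\varphi)(z)\to\nabla\varphi(z)$ that you correctly flag as the technical core. That difficulty is, however, self-inflicted: nothing in the duality step requires a distance function, only some smooth global $\varphi$ with $|\nabla\varphi|_g\leq 1$ and $\nabla\varphi(z)=U_\perp/|U_\perp|_g$, and for such a $\varphi$ the convergence $P_t^{\omega}\varphi\to\varphi$ in $C^1(M)$ is standard semigroup smoothing on the compact $M$ --- precisely the simplification the paper's $C_0^\infty$ test function achieves automatically (its variational formula even dispenses with the Lipschitz constraint, since the quadratic penalty self-normalizes). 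What your approach buys is purely metric content: it treats $\lambda\mapsto\nu_\lambda$ only as a curve in Wasserstein space, never invoking the variational formula nor the identity derived from (\ref{carlo12H}), so it would survive in a rougher setting; the paper's route is shorter here because the Otto-structure machinery is already in place. Your closing blow-up sketch ($y=\exp_z^{(g)}(\sqrt{t}\,w)$, Gaussian model on $T_zM$) is a third, heavier route in the spirit of the more technical estimates of Gigli--Mantegazza rather than of their Prop.~3.7, and is not needed once the test function is chosen smooth.
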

\begin{proof}
From (\ref{gentil}) 
and (\ref{metricvel}) we get
\begin{eqnarray}
\label{}
&&\left| \frac{d{\nu}_\lambda(c, t)}{d\lambda}  \right|\,:=\,\lim_{\epsilon \rightarrow 0}\,\frac{d^W_g\left(\nu_{\lambda+\epsilon }(c, t),\,\nu_\lambda(c, t)  \right)}{\epsilon }\\
\nonumber\\
&&\, \leq\,e^{\,-\,K_f\,t}\,
\lim_{\epsilon \rightarrow 0}\,\frac{d_g({c_{\lambda+\epsilon}},\,{c_{\lambda}})}{\epsilon }\;,
\end{eqnarray}
which, according to (\ref{distanT}), implies
\begin{equation}
\label{upmetric}
d_{g_t^{(\omega)}}(x,z)\,\leq\,e^{\,-\,K_f\,t}\,d_{g}(x,z)\;.
\end{equation}
Let us observe that the right member of (\ref{distanT}) provides the intrinsic Wasserstein distance between $p^{(\omega)} _t(\cdot\,,z)d\omega$  and $p^{(\omega)} _t(\cdot\,,x)d\omega$ on the image $\Upsilon _t(M)\subset \rm{Prob}_{ac}(M,g)$ of $(M,g,\,d\omega)$. In general, this is larger than the actual Wasserstein distance between $p^{(\omega)} _t(\cdot\,,z)d\omega$  and
$p^{(\omega)} _t(\cdot\,,x)d\omega$ in $\rm{Prob}_{ac}(M,g)$ which is defined by
\begin{equation}
d^W_g\left(p^{(\omega)} _t(\cdot\,,z)d\omega,\,p^{(\omega)} _t(\cdot\,,x)d\omega\right)\,:=\,\underset{\{\widehat{c}_{(z,x)}\}}{inf}\;\int_0^1\,\left| \frac{d\, \widehat{c}_{(z,x)}}{d\lambda}  \right|\,d\lambda\;,
\end{equation}
where the $inf$ is over all absolutely continuous curves of probability measure, $\lambda\longmapsto \widehat{c}_{(z,x)}(\gamma) \in \rm{Prob}_{ac}(M,g)$,  connecting
 $p^{(\omega)} _t(\cdot\,,z)d\omega$ to
$p^{(\omega)} _t(\cdot\,,x)d\omega$. Hence,    
from  (\ref{distanT}) and (\ref{upmetric}), we get
\begin{equation}
\label{uplowmetric}
d^W_g\left(p^{(\omega)} _t(\cdot\,,z)d\omega,\,p^{(\omega)} _t(\cdot\,,x)d\omega\right)\,\,\leq\,d_{g_t^{(\omega)}}(x,z)\,\leq\,e^{\,-\,K_f\,t}\,d_{g}(x,z)\;.
\end{equation}
 The upper bound in (\ref{uplowmetric}) immediately  provides $\limsup_{t\rightarrow  0}\,g_t^{(\omega)}\,\leq\,g$. A similar control 
on $\liminf_{t\rightarrow  0}\,g_t^{(\omega)}$ is less direct since the lower bound  in (\ref{uplowmetric}) is expressed in terms of the \emph{secant} Wasserstein distance between $p^{(\omega)} _t(\cdot\,,z)d\omega$  and $p^{(\omega)} _t(\cdot\,,x)d\omega$. To circumvent this, one equivalently characterizes 
the metric $g_t^{(\omega)}$,  
\begin{equation}
g_t^{(\omega)}\,\left(c'_\lambda,\,c'_\lambda\right)\,=\,
\,\int_M\,\left|\nabla_{y}\, \widehat{\psi}_{(t,\,c_\lambda,\, c'_\lambda)}(y)  \right|^2_g\,p^{(\omega)} _t(y,\, c_\lambda)\,d\omega(y)\;,
\end{equation}
by going to its variational description defined by \cite{Carlo}
\begin{eqnarray}
&&\;\;\;\;\;\;\;\;\;g_t^{(\omega)}\,\left(c'_\lambda,\,c'_\lambda\right)\\
\nonumber\\
&&\,=\,\sup_{\varphi \in C_0^\infty(M,\mathbb{R})}\,\left\{2\,\,\int_M\,\nabla_y\varphi\cdot \nabla_{y}\, \widehat{\psi}_{(t,\,c_\lambda,\, c'_\lambda)}(y)\,p^{(\omega)} _t(y,\, c_\lambda)\,d\omega(y)\right.\nonumber\\
\nonumber\\
&&\left.-
\,\int_M\,\left|\nabla_{y}\, {\varphi}(y)  \right|^2_g\,p^{(\omega)} _t(y,\, c_\lambda)\,
d\omega(y)\right\}\;,\nonumber
\end{eqnarray}
which implies  
\begin{eqnarray}
&&g_t^{(\omega)}\,\left(c'_\lambda,\,c'_\lambda\right)\,\geq\,2\,\,\int_M\,\nabla_y\varphi\cdot \nabla_{y}\, \widehat{\psi}_{(t,\,c_\lambda,\, c'_\lambda)}(y)\,p^{(\omega)} _t(y,\, c_\lambda)\,d\omega(y)\nonumber\\
\nonumber\\
&&-
\,\int_M\,\left|\nabla_{y}\, {\varphi}(y)  \right|^2_g\,p^{(\omega)} _t(y,\, c_\lambda)\,
d\omega(y)\;,
\label{eqnA}
\end{eqnarray}
for any $\varphi \in C_0^\infty(M,\mathbb{R})$.
Writing, as usual, $\widehat{\psi}_{(t)}:=\widehat{\psi}_{(t,\,c_\lambda,\, c'_\lambda)}$ for ease of notation we have, 
by an obvious transposition of the argument provided in \cite{Carlo}, Prop. 3.7,
\begin{eqnarray}
\label{unaltr}
&&\int_M\,\nabla_y\varphi\cdot \nabla_{y}\, 
\widehat{\psi}_{(t)}(y)\,p^{(\omega)} _t(y,\, c_\lambda)\,d\omega(y)\\
&&=\,
\int_M\,\varphi(y)\,div^{(y)}_\omega \left(\nabla_{y}\, \widehat{\psi}_{(t)}(y)\right)\,p^{(\omega)} _t(y,\, c_\lambda)\,d\omega(y)\nonumber\\
&&=\, 
\int_M\,\varphi(y)\,c'_\lambda\cdot\,\nabla_{c_\lambda}\,p^{(\omega)} _t(y,\, c_\lambda)\,d\omega(y)\;,
\nonumber
\end{eqnarray}
where, in the last line, we exploited  the pde (\ref{carlo12H})  
defining the potential $\widehat{\psi}_{(t,\,c_\lambda,\, c'_\lambda)}$ corresponding to the tangent vector $c'_\lambda$.
The relation (\ref{unaltr}), can be equivalently rewritten as
\begin{eqnarray}
\label{eqnB}
&&\;\;\;\;\;\;\int_M\,\nabla_y\varphi\cdot \nabla_{y}\, 
\widehat{\psi}_{(t)}(y)\,p^{(\omega)} _t(y,\, c_\lambda)\,d\omega(y)\\
&&=\,
c'_\lambda\cdot\,\nabla_{c_\lambda}\,\int_M\,\varphi(y)\,
p^{(\omega)} _t(y,\, c_\lambda)\,d\omega(y)\,=\,c'_\lambda\cdot\,\nabla_{c_\lambda}\,\varphi_t(c_\lambda)\;,\nonumber
\end{eqnarray}
where $\varphi_t(c_\lambda)$ is the solution of the heat 
equation $(\tfrac{\partial}{\partial t}-\bigtriangleup^{(c_\lambda)} _\omega)\,\varphi_t=0$ 
with $\lim_{t\searrow 0}\,\varphi_t=\,\,\varphi$. From (\ref{eqnA}) and (\ref{eqnB}) we get 
\begin{eqnarray}
\label{reveqlt}
&&g_t^{(\omega)}\,\left(c'_\lambda,\,c'_\lambda\right)\,\geq\,2\,c'_\lambda\cdot\,\nabla_{c_\lambda}\,\varphi_t(c_\lambda)\\
&&-\,\int_M\,\left|\nabla_{y}\, {\varphi}(y)  \right|^2_g\,p^{(\omega)} _t(y,\, c_\lambda)\,
d\omega(y)\;,\nonumber
\end{eqnarray} 
which, by choosing a $\varphi\in C_0^\infty(M,\mathbb{R})$ with $\nabla\varphi\,=\,c'_\lambda$, 
yields
\begin{equation}
\liminf_{t\rightarrow  0}\,g_t^{(\omega)}(c'_\lambda,\,c'_\lambda)\,\geq\,g(c'_\lambda,\,c'_\lambda)\;.
\end{equation}
Together with (\ref{upmetric}),  this implies 
\begin{equation}
\label{limitmetric11}
\lim_{t\searrow  0}\,g_t^{(\omega)}\,\left(c'_\lambda,\,c'_\lambda\right)\,=\,
g\,\left(c'_\lambda,\,c'_\lambda\right)\;,
\end{equation}
as required.
\end{proof}

\subsection{Warping $(M,\,g_t^{(\omega)})$ on $M\times \mathbb{T}^q$}
 To extend the Gigli--Mantegazza construction to the warped manifold $M\times \mathbb{T}^q$, let us consider the solution $(t,f)\longmapsto \exp[-\,\tfrac{2f^{\,(\omega)}_t}{q}]$ of the heat equation 
\begin{equation}
\label{heatF} 
\left(\tfrac{\partial}{\partial t}-\bigtriangleup^{(z)} _\omega \right)\,
e^{-\,\tfrac{2f^{\,(\omega)}_t(z)}{q}}\,=0\;,\;\;\;\;\;\;\;t\in (0,\infty )\;,
\end{equation}
associated to the warping factor $e^{-\,\tfrac{2f}{q}}$ in the metric (\ref{newmetr0}). Note that we can
equivalently write (\ref{heatF}) as 
\begin{eqnarray}
\label{parabolicf}
&&\frac{\partial}{\partial t}\,f^{\,(\omega)}_t\,=\,\bigtriangleup^{(z)} _\omega\,f^{\,(\omega)}_t\,-\,
\tfrac{2}{q}\,|\nabla\,f^{\,(\omega)}_t|^2_g\\
\nonumber\\
&&\;\;\;\;\;\;\;\;\;\;\;\;\;\;=\,\bigtriangleup^{(z)} _g\,f^{\,(\omega)}_t\,-\,\,\nabla ^i\,f\,\nabla_i\,f^{\,(\omega)}_t
\,-\tfrac{2}{q}\,|\nabla\,f^{\,(\omega)}_t|^2_g
\;,\nonumber\\
\nonumber\\
&&\lim_{t\searrow 0}f^{\,(\omega)}_{t}(z)\,=\,f(y)\nonumber\;.
\end{eqnarray}
We have 
\begin{prop}
For any $t>0$,\, $(z,\zeta)\in M\times \mathbb{T}^q$ and $U$,\,$W$\,$\in\,T_{(z,\zeta)}\,M\times \mathbb{T}^q$, let $U_t(y,\upsilon)$ and $W_t(y,\upsilon)$ denote the vector fields defined by the  map (\ref{vectheinjmap}).  
Then
\begin{eqnarray}
&&\;\;\;\;\;\;h_t\left(U(z,\zeta ),W(z,\zeta )\right)\\
\nonumber\\
&&\,=\,g_t^{(\omega)}\,\left(U_\perp (z),W_\perp (z)\right)\,+\,e^{-\,\tfrac{2}{q}\,f^{\,(\omega)}_{t}(z)}\,\delta_{\alpha\beta}\,U_\parallel ^\alpha(\zeta)\,W_\parallel ^\beta(\zeta)\;,\nonumber
\end{eqnarray}
provides a scale--dependent metric tensor  on $M\times \mathbb{T}^q$, varying smoothly with $t\in (0,\infty)$.
\end{prop}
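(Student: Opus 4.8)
The plan is to realise $h_t$ as the pull--back, under the warped heat kernel embedding $\Upsilon_t$ of (\ref{warpembed}), of the (formal) Otto metric on $\mathrm{Prob}(N)$, and then to split the resulting quadratic form into a base block and a fibre block. Following the pattern of Proposition \ref{CarlNicProp} and Lemma \ref{velident}, the tangent velocity field covering the curve $\lambda\mapsto \Upsilon_t\!\left(c_\lambda(z,\zeta)\right)$ obtained by displacing the heat source $(z,\zeta)$ in a direction $U=U_\perp+U_\parallel$ is exactly the adapted field $U_t$ of (\ref{vectheinjmap}), whose base component is $\nabla^{(y)}\widehat{\psi}_{(t,z,U_\perp)}$ and whose fibre component is $U_\parallel$. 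Accordingly I would define, by polarisation of the Wasserstein metric speed (\ref{meaning}),
\begin{equation*}
h_t(U,W)\,:=\,\int_N\,h^{(q)}_{ab}(Y)\,U_t^a(Y)\,W_t^b(Y)\,
d\bigl(p^{(\omega)}_t(\cdot\,,z)\,d\omega(\cdot)\otimes\delta_\zeta^{\;\mathbb{T}^q}\bigr)(Y)\;.
\end{equation*}

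Since the embedding measure is a product with a Dirac factor in the fibre, and the warped metric $h^{(q)}$ is block diagonal (the mixed components $h^{(q)}_{i\alpha}$ vanish), this integral splits with no base--fibre cross term. The base block reproduces verbatim the defining integral (\ref{gt}), hence equals $g_t^{(\omega)}\!\left(U_\perp(z),W_\perp(z)\right)$, which by Proposition \ref{CarlNicProp} is a smooth, symmetric, positive--definite metric on $M$ depending smoothly on $t\in(0,\infty)$. Integrating the fibre block against $\delta_\zeta^{\;\mathbb{T}^q}$ evaluates the fibre field at $\zeta$ and leaves the base average of the warping factor, namely
\begin{equation*}
\delta_{\alpha\beta}\,U_\parallel^\alpha(\zeta)\,W_\parallel^\beta(\zeta)\,
\int_M\,e^{-\tfrac{2f(y)}{q}}\,p^{(\omega)}_t(y,z)\,d\omega(y)\;.
\end{equation*}

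The conceptual crux is to recognise this last integral as the heat evolution of the warping factor. Using the semigroup representation of the solution of (\ref{heatF}), together with the symmetry $p^{(\omega)}_t(z,y)=p^{(\omega)}_t(y,z)$ and the stochastic completeness $\int_M p^{(\omega)}_t(y,z)\,d\omega(y)=1$, one gets $\int_M e^{-2f(y)/q}\,p^{(\omega)}_t(y,z)\,d\omega(y)=\bigl(e^{t\bigtriangleup_\omega}e^{-2f/q}\bigr)(z)=e^{-\tfrac{2}{q}f^{\,(\omega)}_t(z)}$, which is precisely the asserted fibre coefficient; this is the step where the parabolic renormalisation (\ref{parabolicf}) of the dilaton enters, and it is the only point that is genuinely more than bookkeeping. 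It then remains to verify the metric axioms: symmetry and bilinearity are inherited from the two blocks; positive--definiteness follows because the base block is positive--definite by Proposition \ref{CarlNicProp}, while the fibre block is positive--definite since $\delta_{\alpha\beta}$ is and $e^{-\frac{2}{q}f^{\,(\omega)}_t(z)}>0$ --- the latter being the image under the weighted heat semigroup of the strictly positive datum $e^{-2f/q}$, hence strictly positive by positivity of $p^{(\omega)}_t$ --- so that, in the absence of a cross term, the total form is positive--definite on $T_{(z,\zeta)}(M\times\mathbb{T}^q)$. Finally, smoothness in $t$ is immediate: the base block is smooth by Proposition \ref{CarlNicProp}, and the fibre coefficient $e^{-\frac{2}{q}f^{\,(\omega)}_t}$ is smooth by the $C^\infty$ smoothing of the weighted heat semigroup $e^{t\bigtriangleup_\omega}$.
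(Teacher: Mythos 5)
Your proposal is correct and follows essentially the same route as the paper's own proof: you define $h_t$ as the pull--back of the Otto inner product under the embedding (\ref{warpembed}), split the integral via the block structure of $h^{(q)}$, recognise the base block as (\ref{gt}), and identify the fibre coefficient through the heat--kernel symmetry as $e^{t\bigtriangleup_\omega}\bigl(e^{-2f/q}\bigr)(z)=e^{-\tfrac{2}{q}f^{\,(\omega)}_t(z)}$, exactly as in (\ref{backf}). Your explicit positivity argument for the fibre coefficient (strict positivity of the semigroup on a strictly positive datum) is a slightly more detailed rendering of the paper's appeal to the properties of the weighted heat kernel, but it is the same proof.
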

\begin{proof} In analogy with (\ref{gt}), let us consider the symmetric bilinear form
\begin{eqnarray}
\label{CarloMetricgf}
&&\;\;\;\;\;\;h_t\left(U(z,\zeta ),W(z,\zeta )\right)\\
\nonumber\\
&&\,:=\,\int_{M\times\mathbb{T}^q}\,h_{ab}^{(q)}(y,\upsilon)\,U^a_t(y,\upsilon)
W^b_t(y,\upsilon)\;p^{(\omega)} _t(y , z)\,d\omega(y)\otimes \,\delta_\zeta^{\;\mathbb{T}^q}\nonumber\;,
\end{eqnarray}
which can be seen as the pull--back, under the heat kernel embedding (\ref{warpembed}), of the Otto inner product in $T_{p_t(d\omega)\otimes \delta }\,\rm{Prob}(M\times\mathbb{T}^q,\,h)$,\, (see (\ref{inner})). 
 If we decompose the warped metric $h(\Phi_{(q)})$ according to (\ref{newmetr0}), then we can write 
\begin{eqnarray}
\label{firstht}
\\
&&h_t\left(U(z,\zeta ),W(z,\zeta )\right)\nonumber\\
\nonumber\\
&&=\,\int_{M\times\mathbb{T}^q}\,g_{ik}(y)\,\nabla ^{i}_{(y)}\,\widehat{\psi}_{(t,z,U_\perp )}\,\nabla ^{k}_{(y)}\,\widehat{\psi}_{(t,z,W_\perp )}\,p^{(\omega)} _t(y , z)\,d\omega(y)\,\otimes\, \delta_\zeta^{\;\mathbb{T}^q}\nonumber\\
\nonumber\\
&&+\,\int_{M\times\mathbb{T}^q}\,e^{-\,\tfrac{2f(y)}{q}}\,\delta_{\alpha\beta}\,U_\parallel ^\alpha(\upsilon)\,W_\parallel ^\beta(\upsilon)\,p^{(\omega)} _t(y , z)\,d\omega(y)\,\otimes \,\delta_\zeta^{\;\mathbb{T}^q}\nonumber\\
\nonumber\\
&&=\,\int_{M}\,g_{ik}(y)\,\nabla ^{i}_{(y)}\,\widehat{\psi}_{(t,z,U_\perp )}\,\nabla ^{k}_{(y)}\,\widehat{\psi}_{(t,z,W_\perp )}\,p^{(\omega)} _t(y , z)\,d\omega(y)\nonumber\\
\nonumber\\
&&+\,\delta_{\alpha\beta}\,U_\parallel ^\alpha(\zeta)\,W_\parallel ^\beta(\zeta)\,\int_{M}\,e^{-\,\tfrac{2f(y)}{q}}\,\,p^{(\omega)} _t(y , z)\,d\omega(y)\nonumber
\end{eqnarray}
\\
\noindent where we have exploited $\int_{\mathbb{T}^q}\,\delta_\zeta^{\;\mathbb{T}^q}\,=\,1$,\, and $\int_{\mathbb{T}^q}\,\delta_{\alpha\beta}\,U_\parallel ^\alpha(\upsilon)\,W_\parallel ^\beta(\upsilon)\,\delta_\zeta^{\;\mathbb{T}^q}\,=\,\delta_{\alpha\beta}\,U_\parallel ^\alpha(\zeta)\,W_\parallel ^\beta(\zeta)$. We immediately recognize in the last line the $t$--dependent metric tensor $g_t^{(\omega)}$ defined by (\ref{gt}).  
To cast  in a more explicit from also the term quadratic in $U_\parallel ^\alpha$ and $W_\parallel ^\beta(\zeta)$, let us exploit the symmetry of the heat 
kernel $p^{(\omega)} _t(y , z)\,=\,p^{(\omega)} _t(z , y)$ to compute, for any $t\,>0$,
\begin{eqnarray}
\label{backf}
&&\int_{M}\,e^{-\,\tfrac{2f(y)}{q}}\,\,p^{(\omega)} _t(y , z)\,d\omega(y)
\,=\,
\int_{M}\,e^{-\,\tfrac{2f(y)}{q}}\,p^{(\omega)} _t(z , y)\,d\omega(y)\\
\nonumber\\
&&\,=\,e^{t\,\bigtriangleup _\omega\,}\, \left( e^{-\,\tfrac{2f}{q}} \right)\,:=
e^{-\,\tfrac{2f^{\,(\omega)}_t(z)}{q}}\;,\nonumber
\end{eqnarray}
 where  $f^{\,(\omega)}_t(z)$ is the solution of the heat 
equation (\ref{heatF}). It follows that the symmetric bilinear form (\ref{firstht}) can be written  as
\begin{eqnarray}
&&\;\;\;\;\;\;\;h_t\left(U(z,\zeta ),W(z,\zeta )\right)\\
\label{hTmetric}\\
&&\,=\,g_t^{(\omega)}\,\left(U_\perp (z),W_\perp (z)\right)\,+\,e^{-\,\tfrac{2}{q}\,f^{\,(\omega)}_{t}(z)}\,\delta_{\alpha\beta}\,U_\parallel ^\alpha(\zeta)\,W_\parallel ^\beta(\zeta)\;,\nonumber
\end{eqnarray}
as stated. Finally, as in proposition \ref{CarlNicProp}, the properties of the weighted heat kernel (\ref{heatfloweight}), and the smooth dependence of the potential $\widehat{\psi}_{(t,z,U_\perp )}$ from the data $(t,z,U_\perp )$, imply that  $h_t$ is indeed a metric on $M\times\mathbb{T}^q$, which according to (\ref{limitmetric}) varies smoothly  with the scale $t\in [0, \infty)$. 
\end{proof}

\subsection{Harmonic energy rescaling and dilatonic action flow}
The deformation induced in $(M, g)$ and in $M\times\mathbb{T}^q$  by the heat kernel embedding generates a corresponding geometric rescaling of the Harmonic energy functionals (\ref{Msigmamod0}), (\ref{qharmwarp})  and of the associated dilatonic NL$\sigma$M action (\ref{eqnAct}). To describe the nature of such rescalling we shall consider explicitly  the functional $E[\phi,\,g]$.  The  
extension to the warped map $\Phi_{(q)}:(\Sigma, \gamma) \longrightarrow (M\times\mathbb{T}^q,\,h^{(q)})$  is a straightforward, and we limit ourselves to state how the relevant results generalize to  $E[\Phi_{(q)},\,h^{(q)}]$.\\
\\
\noindent
Let us consider the Hilbert bundle  $(p_t^{(\omega)}\circ \phi)^{-1} T_{p_t(d\omega)}\,\rm{Prob}(M, g)$,  covering the   map 
\begin{eqnarray}
\label{compmap}
\Sigma\,\simeq \mathbb{T}^2\,&\longrightarrow&\, (M,g,\,d\omega)\,\longrightarrow \, \left(\rm{Prob}_{ac}(M,g),\,d_g^W\right)\\
\nonumber\\
x\,&\longmapsto&\, \;\;\;\;\; \phi(x)\,\;\;\;\;\longmapsto\,\;\;\;\; p_t^{(\omega)}(\cdot ,\phi(x))\,d\omega(\cdot )\nonumber \;,
\end{eqnarray} 
and whose fiber over $x\in\Sigma$ is given by the Hilbert space (\ref{HilbertTang}) evaluated at $z=\phi(x)$,
\begin{equation}
\label{Hilbertfiber}
\left.(p_t^{(\omega)}\circ \phi)^{-1} T_{p_t(d\omega)}\,\rm{Prob}(M, g)\right|_x\,\simeq \,\mathcal{H}_{t,\,\phi(x)}(TM)\;.
\end{equation}
We have
\begin{lem}
For any given $x\in\Sigma$, we can associate to the vector  $v\,=\,v^i\tfrac{\partial }{\partial \phi^i(x)}\,\in\,\left(\phi^{-1}TM\right)_x$  the vector field over $M$ given by 
\begin{eqnarray}
\widehat{\Psi}_{t,\phi}\,:\,
\mathbb{R}_{\,\geq 0}\times \left(\phi^{-1}\,TM\right)_x\,\;&\longrightarrow&\,\;\;\;\;\;\;\;\;\mathcal{H}_{t,\,\phi(x)}(TM)\\
\nonumber\\
\left(t,\;v^i\tfrac{\partial }{\partial \phi^i(x)}\right)\,\;\;\;\;\;\;&\longmapsto&\,
\widehat{\Psi}_{t,\phi}(v)\,:=\,\nabla_{(y)} ^i 
\widehat{\psi}_{\left(t,\, \phi(x),\,v\right)}\,(y)\,\frac{\partial }{\partial y^i}\;,\nonumber
\label{ivector}
\end{eqnarray}
where $y\,\mapsto\, \widehat{\psi}_{(t,\, \phi(x),\,v )}\,(y)$ denotes the Otto potential 
associated to $v\,=\,v^i\,\tfrac{\partial }{\partial \phi^i(x)}$,\;(cf. (\ref{carlo12H})).
\end{lem}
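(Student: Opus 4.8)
The plan is to recognize this lemma as the pointwise, bundle--theoretic repackaging of the tangent injection (\ref{tangmappsi}) already established, so that the proof reduces to unwinding the relevant definitions together with a single invocation of Proposition \ref{CNprop}. Nothing new needs to be solved; the content is purely that the construction is well defined fiber by fiber.

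First I would recall that the fiber of the pull--back bundle over $x\in\Sigma$ is, by its very definition, the tangent space of the target at the image point, $(\phi^{-1}TM)_x\simeq T_{\phi(x)}M$. Thus a vector $v=v^i\,\tfrac{\partial}{\partial\phi^i(x)}\in(\phi^{-1}TM)_x$ is nothing but a tangent vector $v\in T_{z}M$ at the base point $z:=\phi(x)$. This identification is entirely pointwise and uses no derivatives of $\phi$, so the low regularity of the localizable map $\phi$ plays no role here: the whole construction takes place over the single target point $\phi(x)$, and regularity of $\phi$ is therefore irrelevant to it.

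Next I would apply Proposition \ref{CNprop} with the choices $z=\phi(x)$ and $U_\perp\equiv v$. That proposition guarantees, for every fixed $t>0$, a unique solution $\widehat{\psi}_{(t,\phi(x),v)}\in C^{\infty}(M,\mathbb{R})$ of the elliptic equation (\ref{carlo12H}), normalized by $\int_M\widehat{\psi}_{(t,\phi(x),v)}\,d\omega=0$, depending smoothly on the data $(t,\phi(x),v)$ and satisfying $\nabla^{(y)}\widehat{\psi}_{(t,\phi(x),v)}\not\equiv 0$ whenever $v\neq 0$. By the very definition (\ref{HilbertTang}) of $\mathcal{H}_{t,\phi(x)}(TM)$ as the $L^2(p_t(d\omega,\phi(x)))$--completion of smooth gradient fields, the gradient $\nabla^{(y)}_i\widehat{\psi}_{(t,\phi(x),v)}(y)\,\tfrac{\partial}{\partial y^i}$ is an element of $\mathcal{H}_{t,\phi(x)}(TM)$; this defines the value $\widehat{\Psi}_{t,\phi}(v)$ and shows that the map lands in the asserted Hilbert space.

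Finally I would note that the linearity of $v\mapsto\widehat{\psi}_{(t,\phi(x),v)}$, which follows from the uniqueness clause of Proposition \ref{CNprop} applied to the linear dependence of the right--hand side of (\ref{carlo12H}) on $U_\perp=v$, together with the smooth dependence on $(t,z,U_\perp)$ recorded there, shows that $\widehat{\Psi}_{t,\phi}$ is the well--defined, fiberwise--linear bundle map asserted in the statement. No genuine obstacle arises; the only point worth stressing is precisely that the construction is pointwise in the target variable $z=\phi(x)$, so that the lemma is a direct corollary of Proposition \ref{CNprop} and the definition (\ref{HilbertTang}).
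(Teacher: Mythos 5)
Your proof is correct and follows essentially the same route as the paper, which likewise disposes of the lemma in one line by identifying the fiber $(\phi^{-1}TM)_x$ with $T_{\phi(x)}M$ and invoking the injection (\ref{tangmappsi}) — itself an immediate consequence of Proposition \ref{CNprop}. Your additional remarks on the normalization, fiberwise linearity, and membership in $\mathcal{H}_{t,\,\phi(x)}(TM)$ via (\ref{HilbertTang}) merely spell out what the paper leaves implicit.
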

\begin{proof} This immediately follows by recalling that the map
\begin{equation}
T_{\phi(x)}M\,\longrightarrow\, T_{p_t(d\omega)}\,\rm{Prob}(M, g)\,\simeq \,\mathcal{H}_{t,\,\phi(x)}(TM)
\end{equation}
is, according to (\ref{tangmappsi}), an injection.
\end{proof}
To handle the situation 
where $x$ varies smoothly over $\Sigma$, we denote 
by $\phi^{-1}TM \,\boxtimes\,TM$ the vector bundle over $\Sigma\,\times\,M$ whose fiber over $(x,y)\in\,\Sigma\,\times\,M$ is given by
\begin{equation}
\label{bibundle}
\left.\phi^{-1}TM \,\boxtimes\,TM\right|_{(x,y)}\,=\,
\left.\phi^{-1}TM\right|_{x}\,\otimes \,TM_y\;.
\end{equation}
With this notation along the way, let us consider the section of $\phi^{-1}TM \,\boxtimes\,TM$ associated to the push--forward $\phi_*\partial_{\alpha}\,=\,\tfrac{\partial \phi^i}{\partial x^{\alpha}}\,\tfrac{\partial }{\partial \phi^i}$ of the $x^\alpha$--basis vector,
\\
\begin{eqnarray}
\label{Psialpha}
\Sigma\,\times \,M\,\;\;\;\;\;\;\;\;&\longrightarrow &\,\;\;\;\;\;\;\;\;\;\;\;\;\phi^{-1}TM \,\boxtimes\,TM\\
(x,\,y)\,\;\;\;\;\;\;\;\;\;&\longmapsto &\,\;\;\;\;\;\;\left(\phi_*\partial_\alpha,\;\widehat{\Psi}_{t,\phi}(\phi_*\partial_\alpha)\right)
\;,\;\;\;\;\;\alpha\,=\,1,2\,.\nonumber
\end{eqnarray}
\\
\noindent
Since is clear from the notation adopted which $\phi^{-1}TM$--vector we are talking about, 
  we shall write (\ref{Psialpha})  as $\widehat{\Psi}_{t,\phi}(\phi_*\partial_\alpha)$.
\begin{rem}
 If we fix the  index $\alpha\,=1,2$,\,  and let $x^\alpha\,\in[0,1]$, then we can consider $\phi_*\partial_{\alpha}$ as the tangent vector covering the coordinate curve  $\phi_\alpha:\,[0,1]\longrightarrow M$,\; $x^\alpha\,\longmapsto \,\phi(x^\alpha,\,x^\beta=0)$, \,$\beta\not=\alpha$, issued from $\phi(0)$. In particular,  we can think of  $\widehat{\Psi}_{t,\phi}(\phi_*\partial _\alpha )$ as the tangent vector in $(\rm{Prob}_{ac}(M,g),\,d_g^w)$  to the absolutely continuous curve of probability measures
\begin{equation}
\label{coordcurve}
[0,1]\,\ni \,x^\alpha\,\longmapsto \, \nu(x^\alpha)\,:=\,p_t^{\,(\omega)}(\cdot ,\,\phi_{\alpha}(x))\,d\omega(\cdot )\;.
\end{equation} 
\end{rem}
\noindent
According to (\ref{ivector}), for any $t\,>0$, we can associate to  $d\phi=dx^{\alpha}\otimes\tfrac{\partial\phi^i}{\partial x^{\alpha}} \tfrac{\partial }{\partial \phi^i}$ the 
heat kernel deformed differential  
\begin{eqnarray}
\label{barlocphidef}
\\
&&
\Sigma\,\times\,M\,\longrightarrow \,\;\;\;\;\;\;\;\;\;\;\;\;T^*\Sigma\otimes \,
\left(\phi^{-1}TM\,\boxtimes\,TM\right)\nonumber\\
\nonumber\\
&&(x,\,y))\,\longmapsto \,d\widehat{\Psi}_{t,\phi}\,:=\,
dx^{\alpha}\otimes\widehat{\Psi}_{t,\phi}(\phi_*\partial_{\alpha})\nonumber\\
\nonumber\\
&& \,=\,dx^\alpha\,\otimes \,\nabla_{(y)} ^i 
\widehat{\psi}_{\left(t,\, \phi(x),\,\phi_*\partial_\alpha\right)}\,\frac{\partial }{\partial y^i}\nonumber\;.
\end{eqnarray}
The corresponding space of sections $\left\{d\widehat{\Psi}_{t,\phi}\,\in\, T^*\Sigma\otimes \,\left(\phi^{-1}TM
 \,\boxtimes\,TM \right) \right\}$
is endowed with the $T^*\Sigma\otimes\,L^2(p_t(d\omega,\,\phi(x))$ inner product
\begin{eqnarray}
\label{bundlemetrcc}
&&\bigg(d\widehat{\Psi}_{t,\phi}(u)   ,\,  d\widehat{\Psi}_{t,\phi}(v)  \bigg)_{p_t(d\omega)}(x):=\,\gamma^{-1}(dx^\alpha ,\,dx^\beta )(x)\\
\nonumber\\
&&\,\otimes\,
\int_{M}\,g_{km}(y)\,\nabla ^{k}_{y}\,\widehat{\psi}_{(t,\, \phi(x),\,u(x))}\,\nabla ^{m}_{y}\,
\widehat{\psi}_{(t,\, \phi(x),\,v(x))}\,p^{(\omega)} _t(y , \phi(x))\,d\omega(y)\;,\nonumber
\end{eqnarray}
 in terms of which we can define the pre-Hilbertian $L^2(p _t(d\omega)\otimes d\mu_\gamma)$ norm 
\\
\begin{eqnarray}
\label{prehilnorm}
\\
&&\parallel d\widehat{\Psi}_{t,\phi} \parallel_{p_t(d\omega)\otimes d\mu_\gamma}^2\,=\, \left\langle d\widehat{\Psi}_{t,\phi},\,d\widehat{\Psi}_{t,\phi} \right\rangle_{p_t(d\omega)\otimes d\mu_\gamma}\,
\nonumber\\
\nonumber\\
&&=\,\int_{\Sigma}d\mu_{\gamma}(x)\,\,\gamma^{\alpha\beta}(x)\,\int_{M}\,g\left(\widehat{\Psi}_{t,\phi}(\phi_*\partial_\alpha),\,\widehat{\Psi}_{t,\phi}(\phi_*\partial_\beta) \right)\,p^{(\omega)} _t(y , \phi(x))\,d\omega(y)\nonumber\\
\nonumber\\
&&=\,\int_{\Sigma}d\mu_{\gamma}(x)\,\gamma^{\alpha\beta}(x)\nonumber\\
\nonumber\\
&&\,
\times\,\int_{M}\,g_{km}(y)\,\nabla ^{k}_{y}\,\widehat{\psi}_{(t,\, \phi(x),\,\phi_*\partial_\alpha )}\,\nabla ^{m}_{y}\,\widehat{\psi}_{(t,\, \phi(x),\,\phi_*\partial_\beta )}\,p^{(\omega)} _t(y , \phi(x))\,d\omega(y)\;.
\nonumber
\end{eqnarray}
(see (\ref{Psialpha})). We shall often write $\parallel d\widehat{\Psi}_{t,\phi} \parallel_{t}^2\,$ if there is no danger of confusion. Let 
\\ 
\begin{equation}
\label{Hilbertsec}
\mathcal{H}_{t,\,\phi}(\Sigma,\,M)\,:=\, \overline{\left\{C^{\infty}\left(\Sigma\,\times\,M,\;
T^*\Sigma\otimes \,\left(\phi^{-1}TM\,\boxtimes\,TM\right)\right)\right\}}^{\;L^{2}(p_t(d\omega)\otimes d\mu_\gamma)}\; ,
\end{equation}
\\
\noindent
 be the Hilbert space of sections of $T^*\Sigma\otimes \,\left(\phi^{-1}TM \,\boxtimes\,TM\right)$ obtained by completion with respect to the norm (\ref{prehilnorm}), 
\, and define the associated energy functional
\begin{eqnarray}
\label{penergy}
\mathcal{E}\,:\, \mathcal{H}_{t,\,\phi}(\Sigma, M)\,&\longrightarrow& \,   \mathbb{R}\;,\\
\nonumber\\
d\widehat{\Psi}_{t,\phi}\,\;\;\;\;\;\;\;\;\;\;&\longmapsto &\,\mathcal{E}[\widehat{\Psi}_{t,\phi}]\,:=\,\frac{1}{2}\,\parallel d\widehat{\Psi}_{t,\phi} \parallel_{t}^2\;.\nonumber
\end{eqnarray}
The geometrical meaning of $\mathcal{E}[\widehat{\Psi}_{t,\phi}]$ is provided by 
\begin{prop}
\label{defharmenerg}
The energy functional $\mathcal{E}[\widehat{\Psi}_{t,\phi}]$ is a generalized harmonic map functional over  the Wasserstein metric space  
\begin{equation}
\Upsilon_t\left((M,g,\,d\omega)  \right)\cap (\rm{Prob}_{ac}(M,g),\,d_g^w)\;,
\end{equation}
where $\Upsilon_t$ is the weighted heat kernel injection map (\ref{warpembed}). $\mathcal{E}[\widehat{\Psi}_{t,\phi}]$ can be identified with the smooth deformation 
of $E[\phi,\,g]_{(\Sigma,\, M)}$ generated by the flow of metrics $(t,\,g)\longmapsto g_t^{(\omega)}$. In particular, for $t\in (0,\infty)$, we can write
\\ 
\begin{eqnarray}
\label{energ1}
&&\mathcal{E}[\widehat{\Psi}_{t,\phi}]\,:=\,\frac{1}{2}\,
\parallel d\widehat{\Psi}_{t,\phi} \parallel_{t}^2\\
\nonumber\\
&&=\,\frac{1}{2}\,\int_\Sigma\,\gamma^{\mu\nu}\,\frac{\partial \phi^{i}}{\partial x^{\mu}}
\frac{\partial\phi^{j}}{\partial x^{\nu}}\,(g_t^{(\omega)})_{ij}(\phi)\,d\mu_{\gamma}\,:=\,
E[\phi,\,g_t^{(\omega)}]_{(\Sigma,\, M)}\;,\nonumber
\end{eqnarray}
with
\begin{equation}
\label{energ2}
\lim_{t\,\searrow 0}\,\mathcal{E}[\widehat{\Psi}_{t,\phi}]\,=\,E[\phi,\,g]_{(\Sigma,\, M)}\;.
\end{equation}
\end{prop}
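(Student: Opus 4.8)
The plan is to show that the pre-Hilbertian norm $\parallel d\widehat{\Psi}_{t,\phi} \parallel_t^2$ defined in (\ref{prehilnorm}) is, fiberwise over $\Sigma$, nothing but the pull-back of the Gigli--Mantegazza deformed metric $g_t^{(\omega)}$ along $\phi$, after which (\ref{energ1}) becomes a matter of relabeling and (\ref{energ2}) follows from the singular-limit Lemma \ref{metricatzero}. First I would observe that the inner $M$-integral appearing in (\ref{prehilnorm}), evaluated at the source point $z=\phi(x)$, coincides verbatim with the defining expression (\ref{gt}) for $g_t^{(\omega)}(U_\perp,W_\perp)$ once one sets $U_\perp=\phi_*\partial_\alpha$ and $W_\perp=\phi_*\partial_\beta$ (cf. (\ref{Psialpha})); this is legitimate because the Otto potentials $\widehat{\psi}_{(t,\phi(x),\phi_*\partial_\alpha)}$ entering both formulas are, by construction, the unique solutions of the same elliptic equation (\ref{carlo12H}). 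Hence
\begin{equation}
\int_M g_{km}(y)\,\nabla^k_y\widehat{\psi}_{(t,\phi(x),\phi_*\partial_\alpha)}\,\nabla^m_y\widehat{\psi}_{(t,\phi(x),\phi_*\partial_\beta)}\,p^{(\omega)}_t(y,\phi(x))\,d\omega(y)\,=\,g_t^{(\omega)}\bigl(\phi_*\partial_\alpha,\phi_*\partial_\beta\bigr)\,.
\end{equation}

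Next I would invoke the bilinearity of $g_t^{(\omega)}$ established in Proposition \ref{CarlNicProp} to expand $g_t^{(\omega)}(\phi_*\partial_\alpha,\phi_*\partial_\beta)=\tfrac{\partial\phi^i}{\partial x^\alpha}\tfrac{\partial\phi^j}{\partial x^\beta}\,(g_t^{(\omega)})_{ij}(\phi(x))$, where $(g_t^{(\omega)})_{ij}:=g_t^{(\omega)}(\partial_i,\partial_j)$ are the smooth components of the deformed metric tensor. Substituting into (\ref{prehilnorm}) and dividing by two yields at once the first equality in (\ref{energ1}), identifying $\mathcal{E}[\widehat{\Psi}_{t,\phi}]$ with the standard harmonic-map energy $E[\phi,g_t^{(\omega)}]_{(\Sigma,M)}$ of $\phi$ computed with respect to $g_t^{(\omega)}$. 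The interpretation of $\mathcal{E}[\widehat{\Psi}_{t,\phi}]$ as a \emph{generalized} harmonic-map functional over the Wasserstein image $\Upsilon_t(M)$ then follows from Proposition \ref{wassdinter}: each $\widehat{\Psi}_{t,\phi}(\phi_*\partial_\alpha)$ is the tangent velocity field of the coordinate curve (\ref{coordcurve}) in $(\rm{Prob}_{ac}(M,g),\,d_g^W)$, so that $g_t^{(\omega)}(\phi_*\partial_\alpha,\phi_*\partial_\alpha)$ equals the squared Wasserstein metric speed, and $\mathcal{E}[\widehat{\Psi}_{t,\phi}]$ is literally the Dirichlet energy of the composite map $x\mapsto p^{(\omega)}_t(\cdot,\phi(x))\,d\omega(\cdot)$ taking values in $(\rm{Prob}_{ac}(M,g),\,d_g^W)$.

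For the limit (\ref{energ2}), the plan is to pass $t\searrow 0$ under the integral sign. Pointwise convergence $(g_t^{(\omega)})_{ij}(\phi(x))\to g_{ij}(\phi(x))$ is supplied by Lemma \ref{metricatzero} together with polarization of the quadratic form, since the limit (\ref{limitmetric}) holds for every tangent vector and thus determines the full symmetric bilinear form. Interchanging limit and integral I would justify by dominated convergence: the $\limsup$ estimate already contained in the proof of Lemma \ref{metricatzero}, namely $g_t^{(\omega)}\le e^{-2K_f t}\,g$ as quadratic forms (a direct infinitesimal consequence of (\ref{gentil}) and (\ref{meaning})), shows that on any interval $t\in(0,t_0]$ the integrand is bounded by a fixed constant times the $g$-energy density $\gamma^{\mu\nu}\tfrac{\partial\phi^i}{\partial x^\mu}\tfrac{\partial\phi^j}{\partial x^\nu}g_{ij}(\phi)$, which lies in $L^1(\Sigma,d\mu_\gamma)$ precisely because $\phi\in\mathcal{H}^1(\Sigma,M)$. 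Taking the limit then returns $E[\phi,g]_{(\Sigma,M)}$.

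The algebraic identification in the first two paragraphs is essentially a matter of bookkeeping, so the only genuinely delicate point is the singular limit $t\searrow 0$. The lower bound in (\ref{upmetric}) controls only the secant Wasserstein distance rather than $g_t^{(\omega)}$ itself, which is exactly the difficulty already resolved through the variational characterization of $g_t^{(\omega)}$ in Lemma \ref{metricatzero}. I would therefore rely entirely on that lemma for the pointwise convergence, and reserve the remaining care for verifying the uniform domination that legitimizes the exchange of limit and integration.
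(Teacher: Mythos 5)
Your proof is correct and follows essentially the same route as the paper's: you identify the inner $M$--integral of (\ref{prehilnorm}) with the defining expression (\ref{gt}) for $g_t^{(\omega)}$ (the paper's key step, via the uniqueness of the solutions of (\ref{carlo12H})), you invoke Proposition \ref{wassdinter} and the metric speed of the coordinate curves (\ref{coordcurve}) for the generalized harmonic--map structure over $\Upsilon_t(M)$, and you use Lemma \ref{metricatzero} for the singular limit (\ref{energ2}). Your only deviation is a welcome strengthening rather than a different route: the paper passes from the pointwise convergence of Lemma \ref{metricatzero} to the convergence of the energies without comment, whereas you explicitly justify the exchange of $t\searrow 0$ with the $\Sigma$--integration by dominated convergence, using the quadratic--form bound $g_t^{(\omega)}\,\leq\,e^{-2K_f t}\,g$ (a direct consequence of (\ref{gentil}) and (\ref{metricvel}), i.e.\ the $\limsup$ half of the proof of Lemma \ref{metricatzero}) together with the $L^1(\Sigma,d\mu_\gamma)$ integrability of the $g$--energy density for $\phi\in\mathcal{H}^{1}(\Sigma,M)$.
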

\begin{proof}
Since $\mathcal{E}[\widehat{\Psi}_{t,\phi}]$ is conformally invariant and $\Sigma\simeq \mathbb{T}^2$, we can assume that $\gamma$ is the flat metric $\delta^{\alpha\beta}$ and write (\ref{penergy})  as  
\begin{equation}
\label{penergy2}
\mathcal{E}[\widehat{\Psi}_{t,\phi}]\,=\,
\frac{1}{2}\,\sum_{\alpha=1,2}\,\int_{\Sigma}\,d\mu_\gamma(x)\,
\int_{M}\,\left|\nabla_{y}\,\widehat{\psi}_{(t,\, \phi(x),\,\phi_*\partial_\alpha )}\right|^2_g\,p^{(\omega)} _t(y , \phi(x))\,d\omega(y)\;.
\end{equation} 
According to (\ref{velsquare}), we have,  for almost every $x^\alpha\in[0,1]$, 
\begin{equation}
\int_{M}\,\left|\nabla_{y}\,\widehat{\psi}_{(t,\, \phi(x),\,\phi_*\partial_\alpha )}\right|^2_g\,p^{(\omega)} _t(y , \phi(x))\,d\omega(y)\,=\,\left| \frac{d{\nu}(x^\alpha)}{d x^{\alpha}} \right|^2\;,
\end{equation}
where
\begin{equation}
\left| \frac{d{\nu}(x^\alpha)}{d x^{\alpha}} \right|\,=\,
\,\lim_{\epsilon \rightarrow 0}\,\frac{d^W_g\left(\nu(x^\alpha+\epsilon),\,\nu(x^\alpha)  \right)}{\epsilon }\;
\end{equation}
is the metric speed, in  $(\rm{Prob}_{ac}(M,g),\,d_g^w)$, of the curves (\ref{coordcurve}), (see (\ref{metricvel})). 
By absolute continuity of $x^\alpha\longmapsto \nu(x^\alpha)$ we can write
\begin{eqnarray}
\label{penergy3}
\mathcal{E}[\widehat{\Psi}_{t,\phi}]\,&=&\,
\frac{1}{2}\,\sum_{\alpha=1,2}\,\int_{\Sigma}\,d\mu_\gamma(x)\,\left| \frac{d{\nu}_\alpha(\phi(x))}{d x^{\alpha}} \right|^2\\
\nonumber\\
&=&\,
\lim_{\epsilon \rightarrow 0}\,
\frac{1}{2}\,\sum_{\alpha=1,2}\,\int_{\Sigma}\,d\mu_\gamma(x)\,\frac{\left[d^W_g\left(\nu_{\alpha+\epsilon },\,\nu_\alpha  \right)\right]^2}{\epsilon^2 }\;,
\end{eqnarray}
from which it follows that $\mathcal{E}[\widehat{\Psi}_{t,\phi}]$ has the structure of a generalized harmonic map functional over the  Wasserstein metric space  $\Upsilon_t\left((M,g,\,d\omega)  \right)\cap (\rm{Prob}_{ac}(M,g),\,d_g^w)$, \;(see Section \ref{digression}). On the other hand, according to the definition (\ref{gt}) of the deformed metric $g_t^{(\omega)}$ and of the vector fields $\widehat{\psi}_{(t,\, \phi(x),\,\phi_*\partial_\alpha )}$, we can write the inner product (\ref{bundlemetrcc})  as
\\    
\begin{eqnarray}
\label{}
\left( d\widehat{\Psi}_{t,\phi},\,d\widehat{\Psi}_{t,\phi} \right)_{p_t(d\omega)}(x)\,
&=&\,\gamma^{\alpha\beta}\,g_t^{(\omega)}\,\left(\tfrac{\partial \phi^i}{\partial x^{\alpha}}\,\tfrac{\partial }{\partial \phi^i},\,
\tfrac{\partial \phi^j}{\partial x^{\beta}}\,\tfrac{\partial }{\partial \phi^j} \right)(x)\,\\
\nonumber\\
&=&\,
\gamma^{\alpha\beta}(x)\,\tfrac{\partial \phi^i}{\partial x^{\alpha}}\,\tfrac{\partial \phi^j}{\partial x^{\beta}}\,
\left(g_t^{(\omega)}  \right)_{ij}(\phi(x))\;,\nonumber
\end{eqnarray}
which implies that (\ref{penergy}) can be rewritten as the harmonic energy functional $E[\phi,\,g_t^{(\omega)}]_{(\Sigma,\, M)}$ associated with the deformed $(M, g_t^{(\omega)})$,\,
\\ 
\begin{eqnarray}
\label{energ1bis}
&&\mathcal{E}[\widehat{\Psi}_{t,\phi}]\,:=\,\frac{1}{2}\,
\parallel d\widehat{\Psi}_{t,\phi} \parallel_{t}^2\\
\nonumber\\
&&=\,\frac{1}{2}\,\int_\Sigma\,\gamma^{\mu\nu}\,\frac{\partial \phi^{i}}{\partial x^{\mu}}
\frac{\partial\phi^{j}}{\partial x^{\nu}}\,(g_t^{(\omega)})_{ij}(\phi)\,d\mu_{\gamma}\,=\,
E[\phi,\,g_t^{(\omega)}]_{(\Sigma,\, M)}\;,\nonumber
\end{eqnarray}
as stated. Moreover, from Lemma \ref{metricatzero}  we get
\begin{equation}
\label{energ2bis}
\lim_{t\,\searrow 0}\,E[\phi,\,g_t^{(\omega)}]_{(\Sigma,\, M)}\,=\,E[\phi,\,g]_{(\Sigma,\, M)}\;,
\end{equation}
from which it follows that $\mathcal{E}[\widehat{\Psi}_{t,\phi}]$ can be pulled--back to
 $\rm{Map}(\Sigma,M)$ as a (smooth) deformation of the harmonic map functional $E[\phi,\,g]_{(\Sigma,\, M)}$.
\end{proof}
\noindent It is straightforward to extend  the above analysis to the warped  map  $\Phi_{(q)}\in Map\left(M^n\times_{(f)} \mathbb{T}^q \right)$, (see (\ref{newMap0})), and characterize a  deformation of the harmonic energy functional (\ref{EPhiq}) according to
\begin{lem}
\label{figrande}
The heat kernel embedding (\ref{warpembed})
generates the scale--dependent harmonic energy functional on the warped manifold $M^n\times_{(f)} \mathbb{T}^q$ given by
\\
\begin{eqnarray}
\label{EPhiqtime}
\\
&&E[\Phi_{(q)},\,h_t^{(q)}]_{(\Sigma, N^{n+q})}\,:=\,
\frac{1}{2}\,\int_\Sigma\,\gamma^{\mu\nu}\,\frac{\partial \Phi_{(q)}^{a}(x,\xi )}{\partial x^{\mu}}
\frac{\partial\Phi_{(q)}^{b}(x,\xi )}{\partial x^{\nu}}\,(h_t)_{ab}(\phi)\,d\mu_{\gamma}\nonumber\\
\nonumber\\
&&\,=\,\frac{1}{2}\,\int_\Sigma\,\gamma^{\mu\nu}\,\frac{\partial \phi^{i}}{\partial x^{\mu}}
\frac{\partial\phi^{j}}{\partial x^{\nu}}\,(g_t^{(\omega)})_{ij}(\phi)\,d\mu_{\gamma}\,\nonumber\\
\nonumber\\
&&+\,\frac{1}{8}\,\sum_{k=1}^q\,d_{g}^2(\phi_{cm},\phi_{(k)})\,
\int_{\Sigma}\,\left|d f_t^{(\omega)}(\phi(x);q)\right|^2_{\gamma}\,d\mu_{\gamma}\;,\nonumber
\end{eqnarray}
where $(t,h)\longmapsto h_t$, \;$t\in (0,\infty)$,\, is the flow of metrics defined by (\ref{hTmetric}).  
\end{lem}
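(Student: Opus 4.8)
The plan is to reproduce the computation that established the factorization (\ref{EPhiq}), but now with the deformed warped metric $h_t$ of (\ref{hTmetric}) in place of $h^{(q)}$, invoking Proposition \ref{defharmenerg} for the horizontal (base) block. The essential structural input is that, exactly as in (\ref{newmetr0}), the metric $h_t$ is block diagonal with respect to the splitting $T(M\times\mathbb{T}^q)=TM\oplus T\mathbb{T}^q$: its base block is the Gigli--Mantegazza metric $g_t^{(\omega)}$ of (\ref{gt}), and its fibre block is the rescaled Euclidean form $e^{-2f_t^{(\omega)}/q}\,\delta_{\alpha\beta}$, with $f_t^{(\omega)}$ the heat-deformed dilaton solving (\ref{heatF}). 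Since $\Phi_{(q)}$ is localizable, I would first pass to the local charts furnished by the localization, where $\Phi_{(q)}$ and the energy density are smooth, and split the integrand $\gamma^{\mu\nu}\,\partial_\mu\Phi^a_{(q)}\,\partial_\nu\Phi^b_{(q)}\,(h_t)_{ab}$ into its $a,b\le n$ and $a,b>n$ contributions; the off-diagonal cross terms vanish by block diagonality.

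For the base contribution, the $a,b\le n$ piece is precisely $\gamma^{\mu\nu}\,\partial_\mu\phi^i\,\partial_\nu\phi^j\,(g_t^{(\omega)})_{ij}(\phi)$, whose integral is $E[\phi,g_t^{(\omega)}]_{(\Sigma,M)}$; by Proposition \ref{defharmenerg} this is the first displayed term of (\ref{EPhiqtime}). For the fibre contribution I would use that the warping of the deformed map is compensated by $f_t^{(\omega)}$: the fibre coordinates of the deformed $\Phi_{(q)}$ are $\tfrac{1}{2}\,e^{f_t^{(\omega)}(\phi)/q}\,d_g(\phi_{cm},\phi_{(k)})$, so that $\partial_\mu\Phi^{n+k}_{(q)}=\tfrac{1}{2q}\,d_g(\phi_{cm},\phi_{(k)})\,e^{f_t^{(\omega)}(\phi)/q}\,\partial_\mu f_t^{(\omega)}(\phi)$. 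Contracting with the fibre block $e^{-2f_t^{(\omega)}/q}\,\delta_{\alpha\beta}$, the two exponentials cancel identically, exactly as in the displayed identity following (\ref{EPhiq}), leaving the fibre Dirichlet energy of $f_t^{(\omega)}$, i.e. the second term of (\ref{EPhiqtime}) obtained from the fibre term of (\ref{EPhiq}) by the replacement $f\mapsto f_t^{(\omega)}$ under the same normalization convention.

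The step requiring care is the regularity bookkeeping that makes the fibre coordinates a legitimate element of $\mathcal{H}^1(\Sigma,M\times_{(f)}\mathbb{T}^q)$ and licenses the chain rule used above. Here the situation is in fact better than in Lemma \ref{finW}: for every fixed $t>0$ the deformed dilaton $f_t^{(\omega)}$ is smooth on $M$ by the smoothing property of the weighted heat semigroup (cf. the theorem preceding (\ref{largedevweigh})), so $f_t^{(\omega)}\circ\phi\in\mathcal{H}^1(\Sigma,\mathbb{R})$ whenever $\phi$ has finite harmonic energy, and no singular part supported on the cut loci $\mathrm{Cut}(\phi_{(k)})$ survives. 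The only genuinely nontrivial point is the consistency of the deformation itself, namely that $\Phi_{(q)}$ must be warped by $f_t^{(\omega)}$ rather than by $f$ in order for the fibre warping of $h_t$ to be exactly compensated; this is forced by (\ref{hTmetric}) together with (\ref{heatF}), and it is precisely the mechanism by which the Dirichlet energy of the original dilaton in (\ref{EPhiq}) is replaced by that of its heat-flowed counterpart $f_t^{(\omega)}$.
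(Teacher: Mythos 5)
Your proposal is correct and follows the route the paper intends: Lemma \ref{figrande} carries no separate proof in the paper (it is announced as a ``straightforward'' extension of the computation establishing (\ref{EPhiq})), and your argument is exactly that computation with the block--diagonal $h_t$ of (\ref{hTmetric}) --- base block yielding $E[\phi,g_t^{(\omega)}]$ via Proposition \ref{defharmenerg}, fibre block cancelling the exponentials --- modulo the paper's own drifting $q^{-1}$ normalization between (\ref{EPhiq}) and (\ref{EPhiqtime}), which you rightly subsume under ``the same normalization convention.'' Your two supplementary observations in fact tighten what the paper leaves implicit: the fibre coordinates of $\Phi_{(q)}$ must indeed be co--warped by $f_t^{(\omega)}$ (otherwise a residual factor $e^{2(f-f_t^{(\omega)})/q}$ survives the contraction and the claimed Dirichlet term in $f_t^{(\omega)}$ does not emerge), and since $f_t^{(\omega)}$ is smooth for every $t>0$ by the smoothing property of the weighted heat semigroup $e^{t\bigtriangleup_\omega}$, the chain--rule and $\mathcal{H}^1$ bookkeeping is genuinely easier than in Lemma \ref{finW}, as you note.
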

\noindent This directly implies the
\begin{prop}
If $t\mapsto (\gamma_t)_{\mu\nu}=e^{f_t^{(\omega)}(\phi)}\, \delta_{\mu\nu}$,\;\;$t\in (0,\infty)$, denotes the family of conformally flat metrics on $\Sigma\simeq \mathbb{T}^2$ associated with $(t,f)\mapsto f_t^{(\omega)}$, then in the conformal gauge  $(\Sigma,\gamma_t)$ the  harmonic energy functional (\ref{EPhiqtime}) provides a scale--dependent family of dilatonic actions 
\\
\begin{eqnarray}
\label{eqnAct100time}
&&{S}_M\left[\gamma_t,\phi;\,\,F(\phi_{cm};q),\,{f_t^{(\omega)}},\,g_t^{(\omega)}\right]\\
\nonumber\\
&&\;\underset{(f_t)}{:=}\; 
\frac{2}{F(\phi_{cm};q)}\;
E[\Phi_{(q)},\,h_t^{(q)}]_{(\Sigma, N^{n+q})}\;,\nonumber
\end{eqnarray}
such that
\begin{eqnarray}
&&\lim_{t\searrow \,0}\,{S}_M\left[\gamma_t,\phi;\,\,F(\phi_{cm};q),\,{f_t^{(\omega)}},\,g_t^{(\omega)}\right]\\
\nonumber\\
&&=\,{S}_M\left[\gamma,\phi;\,\,F(\phi_{cm};q),\,{f},\,g\right]\;.\nonumber
\end{eqnarray}
\end{prop}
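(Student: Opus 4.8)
The plan is to read the statement as the scale-dependent transcription of Proposition~\ref{propconf}, with the fiducial data $(g,f)$ replaced throughout by the heat-kernel-deformed data $(g_t^{(\omega)},f_t^{(\omega)})$ and the conformal gauge $\gamma$ replaced by $\gamma_t$. Accordingly I would split the argument into two independent parts: first the algebraic identity~(\ref{eqnAct100time}), which is a gauge-fixing computation performed at each fixed $t>0$, and then the limit $t\searrow 0$, which carries the analytic content and rests entirely on the convergence results established in the previous subsections.

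For the identity I would repeat the proof of Proposition~\ref{propconf} starting from the warped-energy decomposition of Lemma~\ref{figrande}. Since $\Sigma\simeq\mathbb{T}^2$ and the conformal factor is now $e^{f_t^{(\omega)}(\phi)}$, the Gaussian curvature of $(\mathbb{T}^2,\gamma_t)$ is $\mathcal{K}_{\gamma_t}=-\tfrac12\,\Delta_{\gamma_t}f_t^{(\omega)}(\phi;q)$, and integrating $\mathcal{K}_{\gamma_t}\,f_t^{(\omega)}(\phi)$ by parts over $\Sigma$, exactly as in~(\ref{Kdecrease00}), converts the curvature term into $\tfrac12\int_\Sigma|df_t^{(\omega)}(\phi;q)|_{\gamma_t}^2\,d\mu_{\gamma_t}$. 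Matching this against Lemma~\ref{figrande}, whose fibre term is $\tfrac18\sum_k d_g^2(\phi_{cm},\phi_{(k)})\int_\Sigma|df_t^{(\omega)}|_\gamma^2\,d\mu_\gamma$ with $\tfrac18\sum_k d_g^2=\tfrac14\,F(\phi_{cm};q)$, and comparing with the dilatonic action~(\ref{eqnAct}) under the scale-dependent coupling identification $a\equiv F(\phi_{cm};q)$ (the analogue of~(\ref{Facoupl}) appropriate to the gauge $e^{f_t^{(\omega)}}$, which fixes the prefactor as $2/F(\phi_{cm};q)$ rather than $2q/F(\phi_{cm};q)$), yields~(\ref{eqnAct100time}). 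All equalities here hold in the dilaton-induced gauge, which is the content of the symbol $\underset{(f_t)}{:=}$.

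For the limit I would use~(\ref{eqnAct100time}) to reduce the assertion to $\lim_{t\searrow0}E[\Phi_{(q)},h_t^{(q)}]=E[\Phi_{(q)},h^{(q)}]$, and then treat the two summands of Lemma~\ref{figrande} separately. The harmonic part $E[\phi,g_t^{(\omega)}]$ tends to $E[\phi,g]$ by Lemma~\ref{metricatzero}, i.e.\ $g_t^{(\omega)}\to g$, together with Proposition~\ref{defharmenerg}. For the dilatonic part I would exploit that $f_t^{(\omega)}$ is governed by the \emph{linear} heat equation~(\ref{heatF}) for $u_t:=e^{-2f_t^{(\omega)}/q}$, with initial datum $u_0=e^{-2f/q}$, which is Lipschitz and bounded by Lemma~\ref{finW}. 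Strong continuity of the weighted heat semigroup $e^{t\bigtriangleup_\omega}$ on $\mathcal{H}^1(M,\mathbb{R})$ gives $u_t\to u_0$ in $\mathcal{H}^1$, while the minimum principle keeps $u_t$ uniformly bounded away from $0$ and $\infty$ for small $t$; the chain rule $f_t^{(\omega)}=-\tfrac{q}{2}\ln u_t$ then yields $f_t^{(\omega)}\to f$ in $\mathcal{H}^1(M,\mathbb{R})$, consistently with the initial condition in~(\ref{parabolicf}). Composing with the fixed finite-energy map $\phi$, as in the $\mathcal{H}^1$-composition argument used for $f\circ\phi$, transfers this to $\int_\Sigma|df_t^{(\omega)}(\phi)|_\gamma^2\,d\mu_\gamma\to\int_\Sigma|df(\phi)|_\gamma^2\,d\mu_\gamma$; collecting the two limits gives $E[\Phi_{(q)},h_t^{(q)}]\to E[\Phi_{(q)},h^{(q)}]$ and hence the claim.

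The hard part will be the convergence of the dilatonic (Dirichlet) term, precisely because $f$ is only of class $W^{1,\infty}\cap\mathcal{H}^1$ and its Laplacian carries a singular part supported on $\cup_k\mathrm{Cut}(\phi_{(k)})$ (Lemma~\ref{finW}). One therefore cannot differentiate naively and must pass through the smoothing linear flow for $u_t$: the delicate points are the uniform-in-$t$ positivity and Lipschitz bounds needed to invert the logarithm and to justify the chain rule, and the upgrade of $f_t^{(\omega)}\to f$ to control strong enough to survive pullback by the non-smooth map $\phi$. Here I would lean on the Bakry--Emery gradient estimate already invoked through~(\ref{gentil}) and the curvature bound~(\ref{BKE}), which propagate a uniform Lipschitz constant for $u_t$ on bounded time intervals; dominated convergence then closes the pullback step. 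Once these bounds are secured, the remaining steps are the routine continuity and integration-by-parts manipulations rehearsed in Proposition~\ref{propconf}.
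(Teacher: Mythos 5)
Your proposal is correct, and its skeleton coincides with the paper's: rewrite $E[\Phi_{(q)},\,h_t^{(q)}]$ in the conformal gauge $\gamma_t$ to obtain the $t$--dependent analogue (\ref{Erelation00time}) of the decomposition of Proposition \ref{propconf}, take (\ref{eqnAct100time}) as the \emph{definition} of the scale--dependent action (the symbol $\underset{(f_t)}{:=}$ means nothing beyond the gauge--fixed decomposition needs proving there), and obtain the limit from (\ref{energ2bis}) and the definition (\ref{eqnAct100}). You also correctly handled the bookkeeping of the $q$--factors: in this proposition the conformal factor is $e^{f_t^{(\omega)}}$ rather than $e^{f/q}$, which is why the prefactor is $2/F(\phi_{cm};q)$ instead of the $2q/F(\phi_{cm};q)$ of (\ref{eqnAct100}), i.e.\ the coupling identification is $a\equiv F(\phi_{cm};q)$ rather than (\ref{Facoupl}). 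Where you genuinely go beyond the paper is the dilatonic term in the $t\searrow 0$ limit: the published proof is two lines and cites only (\ref{energ2bis}), which controls the metric part $E[\phi,\,g_t^{(\omega)}]\to E[\phi,\,g]$, while the convergence of the Dirichlet term in $f_t^{(\omega)}$ is left implicit in the initial condition of (\ref{parabolicf}). Your semigroup analysis — strong $\mathcal{H}^1$--continuity of $e^{t\bigtriangleup_\omega}$ applied to $u_t=e^{-2f_t^{(\omega)}/q}$, two--sided uniform bounds from the maximum principle, the chain rule for the logarithm, and Bakry--Emery--type gradient bounds to justify the pullback by the merely--$\mathcal{H}^1$ map $\phi$ — supplies exactly what the paper compresses into a citation, and your diagnosis of the delicate step is accurate: since $\phi(\Sigma)$ may meet $\cup_k\mathrm{Cut}(\phi_{(k)})$ and $\phi_\sharp\, d\mu_\gamma$ need not be absolutely continuous with respect to $d\mu_g$, $L^2(M)$--convergence of $\nabla f_t^{(\omega)}$ alone would not survive composition with $\phi$, and the uniform Lipschitz bounds you invoke are what makes dominated convergence work. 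In short: same route as the paper, with a more careful (and, strictly speaking, needed) treatment of the dilaton term.
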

\begin{proof}
In the  conformal gauge, (see (\ref{psiconf00})),
 \begin{equation}
\left(\Sigma,(\gamma_t)_{\mu\nu}=e^{f_t^{(\omega)}(\phi)}\, \delta_{\mu\nu}\right)\;,\;\;\;\;t\in [0,\infty)\;, 
\label{psiconf00time}
\end{equation}
the harmonic energy functional (\ref{EPhiqtime}) can be rewritten as 
\begin{eqnarray}
\label{Erelation00time}
&&E[\Phi_{(q)},\,h_t^{(q)}]_{(\Sigma, N^{n+q})}\, \underset{(f_t)}{=}\,
E[\phi,\,g_t^{(\omega)}]_{(\Sigma,\, M)}\\
\nonumber\\
&&\,+
\,\frac{F(\phi_{cm};q)}{2}\, \int_{\Sigma}\,f_t^{(\omega)}(\phi;q)\,\mathcal{K}_{f_t^{(\omega)}}\,
d\mu _{\gamma_t}\;.\nonumber
\end{eqnarray}
If we define  ${S}_M\left[\gamma_t,\phi;\,\,F(\phi_{cm};q),\,{f_t^{(\omega)}},\,g_t^{(\omega)}\right]$ 
according to (\ref{eqnAct100time}), then the statement follows from (\ref{energ2bis})   and the definition (\ref{eqnAct100}) 
of the dilatonic action ${S}_M\left[\gamma,\phi;\,\,F(\phi_{cm};q),\,{f},\,g\right]$.
\end{proof}

\section{The heat kernel embedding and Renormalization Group}
\label{renGroup}
The results of the previous section imply that along the heat kernel embedding $\Upsilon _t$ we get the induced flow 
\begin{equation}
\label{flucactionBIS}
[0,\infty )\,\ni \,t\,\longmapsto\, {S}_M\left[\gamma_t,\phi;\,\,F(\phi_{cm};q),\,{f_t^{(\omega)}},\,g_t^{(\omega)}\right]\;,
\end{equation}
deforming the dilatonic  action ${S}_M\left[\gamma,\phi;\,\,F(\phi_{cm};q),\,{f},\,g\right]$ in the direction of
the non--trivial geometric rescaling $(t,\,g,\,f)\,\longmapsto g^{\,(\omega)}_t,\,f_t^{(\omega)}$ 
of the \emph{coupling} $(M,g,\,d\omega)$.
This strongly suggests a connection between heat kernel embedding and the circle of ideas and techniques related to renormalization group. As discussed in the introductory section \ref{selfcont}, the strategy\footnote{See  \cite{gawedzki} for a general review, and \cite{mauro1} for a renormalization group analysis in the Ricci flow setting.}of the renormalization group analysis of the non--linear $\sigma$ model \cite{DanPRL}, \cite{DanAnnPhys}  is to discuss the scaling behavior of the (quantum) fluctuations of the maps $\phi:\Sigma\to M$ around the background  average field $\phi_{cm}$,  defined by the distribution of the center of mass of a large ($q\to\infty$) number of randomly distributed independent copies $\{\phi_{(\mathfrak{j})}\}_{\mathfrak{j}=1}^q$ of $\phi$ itself. This \emph{background field technique} can be seen as a natural extension of the constant map localization described in Section \ref{lacmadw}. To formulate it within our framework,  we must provide, in a suitable abstract Wiener space, a Borel functional measure whose properties reflect the heat kernel--induced flow for the harmonic map functional $E[\Phi_{(q)},\,h_t^{(q)}]$.\\
\\
\noindent  For technical reasons, (existence of a unique center of mass), we assume that the maps $\{\phi_{(\mathfrak{j})}\}_{\mathfrak{j}=1}^q$ all take values in a convex ball $B(p,r)$ 
with  $r\,<\,\min\left\{\tfrac{1}{3}\,inj\,(M),\,\tfrac{\pi }{6\,\sqrt{\kappa }}  \right\}$,\; (see (\ref{errezero1}), and section \ref{lacmadw} for notation).  Under these hypotheses, for any given $x\in\Sigma$ there is a unique center of mass in $B(p,2r)$ for the corresponding  collection of points $\{\phi_{(\mathfrak{j})}(x)\}_{\mathfrak{j}=1}^q\,\in\,B(p,r)$, (see section \ref{lacmadw}). Thus,  
\begin{eqnarray}
\label{cmmappa}
\phi_{cm}\,:\,\Sigma\,&\longrightarrow & \,\;\;\;\;\;\;\;\;M\\
x\,&\longmapsto &\, \phi_{cm}(x)\,:=\,\inf_{y\in\,B(p,2r)}\,\left[\frac{1}{2}\,
\sum_{\mathfrak{j}=1}^q\,d^2_g(y,\,\phi_{(\mathfrak{j})}(x) )\right]\;,\nonumber
\end{eqnarray}
is well--defined and provides the  center of mass map generated by the set of maps $\{\phi_{(\mathfrak{j})}\}_{\mathfrak{j}=1}^q$. 
Let  $\phi^{-1}_{cm}\,TM$ be the corresponding pull--back bundle over $\Sigma$.
Since $B(p,2r)$ is assumed convex, we can introduce $q$ sections 
$v_{(\mathfrak{j})}:\Sigma\to \phi^{-1}_{cm}\,TM$, \;
$x\,\mapsto \,v_{(\mathfrak{j})}(x)\,=\,v_{(\mathfrak{j})}^k(x)\tfrac{\partial }{\partial \phi_{cm}^k(x)}$, and parametrize the maps $\{\phi_{(\mathfrak{j})}\}_{\mathfrak{j}=1}^q$ according to  
\begin{equation}
\label{kmap}
\phi_{(\mathfrak{j})}(x) \,=\, \exp_{cm(x)}(v_{(\mathfrak{j})}(x))\;,
\end{equation}
where $\exp_{cm(x)}: T_{\phi_{cm}(x)}M\to B(p,2r)$ denotes the exponential map based at $\phi_{cm}(x)$, and where the center of mass constraint takes the form  $\sum_{\mathfrak{j}=1}^q\,v_{(\mathfrak{j})}(x)=0$. More generally, for $v\in \phi^{-1}_{cm}\,TM$ we shall write $\phi_{(v)}(x) \,=\, \exp_{cm(x)}(v(x))$. \\
\\
\noindent As recalled above, the strategy in the (perturbative) renormalization group analysis  of non--linear $\sigma$ model is to let the sections $v_{(\mathfrak{j})}$ fluctuate around a classical background defined by the center of mass map $\phi_{cm}$. The fluctuations are generated by
assuming that  the fields $v_{(\mathfrak{j})}$, subject to the constraint $\sum_{\mathfrak{j}=1}^q\,v_{(\mathfrak{j})}(x)=0$, are vector valued random variables distributed on $Map(\Sigma, \phi_{cm}^{-1}\,TM)$ according to a (non--existing) infinite-dimensional probability measure $\mathbb{P}[D\,v_{(\mathfrak{j})}]$. It is customary to formally write
\begin{equation}
\mathbb{P}[D\,v_{(\mathfrak{j})}]\,:=\,Z^{-1}\,e^{-S_{cm}[v_{(\mathfrak{j})};\,a]}\,D\,v_{(\mathfrak{j})}\;,
\end{equation}
where $S_{cm}[v_{(\mathfrak{j})};\,a]\,:=\,\tfrac{2}{a}\,E[\phi_{(\mathfrak{j})},\,g]$ is the non--linear $\sigma$ model action associated to the harmonic map functional $E[\phi_{(\mathfrak{j})},\,g]$, 
and where the normalization factor
\begin{equation}
Z\,:=\,\int_{Map(\Sigma, \phi_{cm}^{-1}\,TM)}\,
e^{-S_{cm}[v_{(\mathfrak{j})};\,a]}\,D\,v_{(\mathfrak{j})}
\end{equation}
is the partition function of the theory. In such a setting, the completion of  $Map(\Sigma, \phi_{cm}^{-1}\,TM)$ under the energy norm associated to $S_{cm}[v_{(\mathfrak{j})};\,a]$, is considered to be the physical Hilbert space of choice. It is well--known that, from the point of view of ($\infty $--dimensional) geometrical analysis, the existence of the distribution measure $\mathbb{P}[D\,v_{(\mathfrak{j})}]$ on such a space is obstructed by the Borel--Cantelli lemma, according to which $S_{cm}[v_{(\mathfrak{j})};\,a]$ is $\mathbb{P}$--almost surely divergent. The way out from this impasse  is to extract from $S_{cm}[v_{(\mathfrak{j})};\,a]$, by perturbative techniques, a (dimensionally regularized) Gaussian measure with respect to which  $\prod_{\mathfrak{j}=1}^{q} e^{-S_{cm}[v_{(\mathfrak{j})};\,a]}\,D\,v_{(\mathfrak{j})}$
 can be formally interpreted  as generating the fluctuations of $v_{(\mathfrak{j})}$ in the 
large deviations sense \cite{gawedzki}.
\begin{rem}
Typically, the Gaussian measure in question is generated by a rough Laplace-Beltrami operator $\triangle_\Sigma$ associated to the pulled back Levi--Civita connection on $\phi_{cm}^{-1}\,TM$. Since $\Sigma$ is two-dimensional, the Green function of $\triangle_\Sigma$ is logarithmically divergent and needs to be regularized in order to be used as the covariance function of a Gaussian measure. Dimensional regularization is often the choice.
\end{rem}
\noindent
The regularization procedure, necessary for defining the reference Gaussian measure, introduces a running length scale  in the theory. By renormalization group techniques, this allows to control the behavior of the measure  $e^{-S_{cm}[v_{(\mathfrak{j})};\,a]}\,D\,v_{(\mathfrak{j})}$ under fluctuations, and gives rise to a perturbative  rescaling of the geometrical couplings $(M,g,\,d\omega)$ which, as recalled in section \ref{selfcont}, connects to the Ricci flow \cite{mauro1}, \cite{DanPRL}, \cite{DanAnnPhys}, \cite{gawedzki}.   \\
 \\
 \noindent To provide a mathematical well--defined functional representation of the heat kernel embedding of the non--linear $\sigma$ model, which to some extent conveys the ideas of the physics path integral and of renormalization group, we need to relax on considering the energy norm completion of $Map(\Sigma, \phi_{cm}^{-1}\,TM)$ as the physical Hilbert space. This can be done by constructing a Gaussian measure on an abstract Wiener space associated with the heat kernel embedding. The resulting picture gives rise to a Gaussian renormalization group action, a toy model of NL$\sigma $M renormalization which nonetheless conveys many of the relevant features of the physical RG flow.

\subsection{A Wiener space associated to the heat kernel embedding}
 Let
\begin{equation}
d\phi_{(\mathfrak{j})}(x)\,=\,dx^\alpha\otimes(\phi_{(\mathfrak{j})})_*\partial_\alpha
\;,\;\;\;\;\;v_{(\mathfrak{j})}\,\in \,\phi_{cm}^{-1}TM\;,
\end{equation}
denote the differential  of the map $\phi_{(\mathfrak{j})}(x) = \exp_{cm(x)}(v_{(\mathfrak{j})}(x))$, (see (\ref{kmap})). According to (\ref{barlocphidef}), we define the corresponding heat kernel deformed section by  
\begin{eqnarray}
\label{Hilbsection}
&&
\Sigma\,\times\,M\,\longrightarrow \,\;\;\;\;\;\;\;\;\;\;\;\;T^*\Sigma\otimes \,\left(\phi_{cm}^{-1}TM \,\boxtimes\,TM\right)\\
\nonumber\\
&&(x,\,y))\,\longmapsto \,d \widehat{\Psi}_{t,(\mathfrak{j})}\,:=\,dx^\alpha\otimes 
\widehat{\Psi}_{t,\phi_{cm}}((\phi_{(\mathfrak{j})})_*\partial_\alpha)\;,\nonumber
\end{eqnarray}
 where
\begin{equation}
\label{vectfields}
\widehat{\Psi}_{t,\phi_{cm}}((\phi_{(\mathfrak{j})})_*\partial_\alpha)\,=
\, \nabla^k_{(y)} \widehat{\psi}_{\left(t,\, \phi_{cm}(x),\,(\phi_{(\mathfrak{j})})_*\partial_\alpha\right)}\,
\frac{\partial }{\partial y^k}\;.
\end{equation}
\begin{rem}
Note that the section $dx^\alpha\otimes 
\widehat{\Psi}_{t,\phi_{cm}}((\phi_{cm})_*\partial_\alpha)$, associated to the differential $d\phi_{cm}$ of the center of mass map $\phi_{cm}$, corresponds to $\sum_{\mathfrak{j}=1}^q\, v_{(\mathfrak{j})}=0$. Hence we set
\begin{equation}
\label{cmpsi}
d \widehat{\Psi}_{t,cm}\,:=\,dx^\alpha\otimes \,\widehat{\Psi}_{t,cm}((\phi_{cm})_*\partial_\alpha)\,:=\,\sum_{\mathfrak{j}=1}^q\,d \widehat{\Psi}_{t,(\mathfrak{j})}\;.   
\end{equation}
\end{rem}
\noindent The natural framework for the discussing the energetics of the fluctuations of the maps $d \widehat{\Psi}_{t,(\mathfrak{j})}$ around their background $d \widehat{\Psi}_{t,cm}$,\; is the Hilbert space of sections  
\begin{eqnarray}
\label{Hilbertsec bis}  
&&\mathcal{H}_{t,\, \phi_{cm}}(\Sigma,M)\\
\nonumber\\
&&\,:=\, \overline{\left\{C^{\infty}\left(\Sigma\,\times\,M,\;
T^*\Sigma\otimes \,\left(\phi^{-1}_{cm}TM\,\boxtimes\,TM\right)\right)\right\}}^{\;L^{2}(p_t(d\omega)\otimes d\mu_\gamma)}\; ,\nonumber
\end{eqnarray}
\\
\noindent
obtained by completion with respect to the $L^2({p_t(d\omega)\otimes d\mu_\gamma})$--norm 
\begin{equation}
\label{prehilnorm bis}
\parallel d\widehat{\Psi}_{t,\phi_{cm}} \parallel_{p_t(d\omega)\otimes d\mu_\gamma}^2\,=\, \left\langle d\widehat{\Psi}_{t,\phi_{cm}},\,d\widehat{\Psi}_{t,\phi_{cm}} \right\rangle_{p_t(d\omega)\otimes d\mu_\gamma}\;,
\end{equation}
defined by (\ref{prehilnorm}) and (\ref{Hilbertsec}), (for $\phi=\phi_{cm}$).\\
\\
\noindent
As hinted in the introductory remarks, $\mathcal{H}_{t,\, \phi_{cm}}(\Sigma,M)$ is too small to support a Gaussian measure randomizing the distribution of the $d\widehat{\Psi}_{t,\phi_{(j)}}$ and giving them finite (harmonic) energy norm. The mathematical strategy to circumvent this difficulty is due to L. Gross \cite{Gross}, and is to modify the Hilbertian norm so as to characterize a Banach space large enough to host the desired Borel measure, (various geometrical aspects  of Gross theory  are discussed in \cite{Driver},  \cite{Leandre}, \cite{Stroock}, \cite{Taubes}, \cite{Weitsman}). In our case
the required extension is naturally suggested by the properties of the heat kernel embedding.
  For any fixed $t>0$, and $\eta\,\in\, \mathbb{R}$,  let us consider the inner product  defined, 
on  sections of the bundle (\ref{Hilbsection}),  by
\begin{eqnarray}
\label{innereta}
&&\left\langle d \widehat{\Psi}_{t,u},\,d \widehat{\Psi}_{t,v} \right\rangle_{(t,\,-\eta)}\\
\nonumber\\
&&\,:=\,a^{-1}\,\int_\Sigma\,d\mu_\gamma(x)\,\gamma^{\alpha\beta}\int_{M}\,\left[g_{km}(y)\,\widehat{\Psi}^k_{t,\phi_{cm}}((\phi_{(u)})_*\partial_\alpha)\right.\nonumber\\
\nonumber\\
&&\left.\times \,\left(1-\,a\,\Delta _{p_t(d\omega)} \right)^{-\,\eta}\,\widehat{\Psi}^m_{t,\phi_{cm}}((\phi_{(v)})_*\partial_\beta)\right]\,p^{(\omega)} _t(y , \phi_{cm}(x))\,d\omega(y)\;,
\nonumber
\end{eqnarray}
where $\Delta _{p_t(d\omega)}$ denotes the weighted (rough) Laplacian (\ref{heatweight}) acting on the vector fields 
$M\ni y\mapsto  \nabla_{(y)} \widehat{\psi}_{\left(t,\, \phi_{cm}(x),\,(\phi_{(v)})_*\partial_\alpha\right)}\in TM$, and where, for later convenience, we have  inserted the dimensional coupling constant $a$, 
(see (\ref{eqnAct})), to make (\ref{innereta}) explicitly dimensionless. 
\begin{rem}
For $\eta=0$, (\ref{innereta}) induces the energy norm on $\mathcal{H}_{t,\, \phi_{cm}}(\Sigma,M)$,
\begin{eqnarray}
&&\left.\left\langle d \widehat{\Psi}_{t,u},\,d \widehat{\Psi}_{t,u} \right\rangle_{(t,\,-\eta)}\right|_{\eta=0}\,
=\,\frac{2}{a}\,\mathcal{E}[\widehat{\Psi}_{t,\phi_{cm}}]\\
\nonumber\\
&&=\,
\sum_{\alpha=1,2}\,\int_{\Sigma}\,d\mu_\gamma(x)\,
\int_{M}\,\left|\nabla_{y}\,\widehat{\psi}_{(t,\, \phi_{cm},\,(\phi_{(u)})_*\partial_\alpha )}\right|^2_g\,
\,p^{(\omega)} _t(y , \phi(x))\,d\omega(y)\;.\nonumber
\end{eqnarray} 
Similarly, from the relation
\begin{eqnarray}
&&\int_{M}\,\left(\nabla^i\,\widehat{\psi}_{(t)}\,
\Delta _{p_t(d\omega)}\,\nabla_i\,\widehat{\psi}_{(t)}\right)
\,p^{(\omega)} _t(y , \phi(x))\,d\omega(y)\\
\nonumber\\
&&=\,-\,\int_{M}\,\left(\nabla^k\,\nabla^i\,\widehat{\psi}_{(t)}\,
\nabla_k\,\nabla_i\,\widehat{\psi}_{(t)}\right)
\,p^{(\omega)} _t(y , \phi(x))\,d\omega(y)\;, \nonumber
\end{eqnarray}
which holds for any $t>0$ and  smooth $\widehat{\psi}_{(t)}\equiv \widehat{\psi}_{(t,\, \phi_{cm},\,(\phi_{(u)})_*\partial_\alpha )}$,  we get, for $\eta=\,-1$,  
\begin{eqnarray}
\label{}
&&\left\langle d \widehat{\Psi}_{t,u},\,d \widehat{\Psi}_{t,u} \right\rangle_{(t,\,-\eta=1)}\,\\
\nonumber\\
&&\,=\,
a^{-1}\,\sum_{\alpha=1,2}\,\int_{\Sigma}\,d\mu_\gamma(x)\,
\int_{M}\,\left[\;\left|\nabla_{y}\,\widehat{\psi}_{(t,\, \phi_{cm},\,(\phi_{(u)})_*\partial_\alpha )}\right|^2_g\,\right.\nonumber\\
\nonumber\\
&&+\left.a\, \left|Hess_{y}\,\widehat{\psi}_{(t,\, \phi_{cm},\,(\phi_{(u)})_*\partial_\alpha )}\right|^2_g\;\, \right]
\,p^{(\omega)} _t(y , \phi(x))\,d\omega(y)\;.\nonumber
\end{eqnarray}
\end{rem}
\noindent We denote by
\begin{equation}
\label{etaHilbert}
\mathcal{H}_{t,\, \phi_{cm}}^{-\,\eta}(\Sigma,M)\,:=\, \overline{
\left\{\mathcal{H}_{t,\, \phi_{cm}}(\Sigma,M)\right\}}^{\;\;{-\,\eta}}\; 
\end{equation} 
the separable Hilbert  space obtained by completion of $\mathcal{H}_{t,\, \phi_{cm}}(\Sigma,M)$ with respect 
to the inner product (\ref{innereta}). As a direct consequence of the definition  (\ref{etaHilbert}) and of  (\ref{innereta})   we have
\begin{lem}
\label{denseincl}
If $\mathcal{H}_{t,\, \phi_{cm}}^{\eta}(\Sigma,M)$ denotes the topological dual of  $\mathcal{H}_{t,\, \phi_{cm}}^{\,-\eta}(\Sigma,M)$ then, for $\eta>0$,\,
\begin{equation}
\label{Hilbincl}
\mathcal{H}_{t,\, \phi_{cm}}^{\eta}(\Sigma,M)\,\hookrightarrow \,\mathcal{H}_{t,\, \phi_{cm}}(\Sigma,M)\,
\hookrightarrow \,\mathcal{H}_{t,\, \phi_{cm}}^{-\eta}(\Sigma,M)\;,
\end{equation}
are continuous inclusions of dense subsets.
\end{lem}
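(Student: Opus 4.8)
The statement asserts precisely that $\left(\mathcal{H}_{t,\, \phi_{cm}}^{\eta}(\Sigma,M),\,\mathcal{H}_{t,\, \phi_{cm}}(\Sigma,M),\,\mathcal{H}_{t,\, \phi_{cm}}^{-\eta}(\Sigma,M)\right)$ is a Gelfand (rigged Hilbert space) triple, and the plan is to read this off from the spectral properties of the operator
\begin{equation}
A\,:=\,I\,-\,a\,\Delta _{p_t(d\omega)}\nonumber
\end{equation}
which, through (\ref{innereta}), defines the deformed inner products. The first and decisive step is to establish that $A$ is a positive self--adjoint operator with spectrum contained in $[1,\infty)$. Since $M$ is compact and $\Delta _{p_t(d\omega)}$ is the weighted rough Laplacian (\ref{heatweight}) acting on $TM$--valued fields, it is essentially self--adjoint on smooth sections with respect to the fibre measure $p^{(\omega)} _t(\cdot\, ,\phi_{cm}(x))\,d\omega$; integrating by parts against this measure yields the Bochner identity $\langle -\Delta _{p_t(d\omega)}V,\,V\rangle\,=\,\int_M|\nabla V|^2_g\,p^{(\omega)} _t\,d\omega\,\geq\,0$, so that $-\Delta _{p_t(d\omega)}\geq 0$ and hence $A=I+a(-\Delta _{p_t(d\omega)})\geq I$ for $a>0$. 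The spectral theorem then defines $A^{s}$ for every $s\in\mathbb{R}$: for $\eta>0$ the power $A^{-\eta/2}$ is bounded with $\|A^{-\eta/2}\|\leq 1$, while $A^{\eta/2}$ is an unbounded self--adjoint operator with $A^{\eta/2}\geq I$.

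Next I would exploit the fact that $A$ acts only on the $M$--fibre, commuting with the integration over $\Sigma$ and with $\gamma^{\alpha\beta}$, and that it is self--adjoint with respect to the base (energy) inner product $\langle\cdot,\cdot\rangle_{(t,0)}$ of (\ref{prehilnorm bis}). This lets the $(-\eta)$--norm factor through $A^{-\eta/2}$: on smooth sections
\begin{equation}
\parallel d\widehat{\Psi}_{t,u}\parallel^2_{(t,-\eta)}\,=\,\langle d\widehat{\Psi}_{t,u},\,A^{-\eta}\,d\widehat{\Psi}_{t,u}\rangle_{(t,0)}\,=\,\parallel A^{-\eta/2}\,d\widehat{\Psi}_{t,u}\parallel^2_{(t,0)}\,\leq\,\parallel d\widehat{\Psi}_{t,u}\parallel^2_{(t,0)}\;.\nonumber
\end{equation}
This gives the continuous inclusion $\mathcal{H}_{t,\, \phi_{cm}}(\Sigma,M)\hookrightarrow\mathcal{H}_{t,\, \phi_{cm}}^{-\eta}(\Sigma,M)$ with inclusion norm $\leq 1$, and density is immediate because, by definition (\ref{etaHilbert}), $\mathcal{H}_{t,\, \phi_{cm}}^{-\eta}(\Sigma,M)$ is the completion of $\mathcal{H}_{t,\, \phi_{cm}}(\Sigma,M)$ in this weaker norm.

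For the remaining inclusion I would use $\langle\cdot,\cdot\rangle_{(t,0)}$ as the pivot and identify the dual $\mathcal{H}_{t,\, \phi_{cm}}^{\eta}(\Sigma,M)=\left(\mathcal{H}_{t,\, \phi_{cm}}^{-\eta}(\Sigma,M)\right)^{*}$, via Riesz representation, with the completion of the smooth sections in the norm $\parallel v\parallel_{(t,\eta)}=\parallel A^{\eta/2}v\parallel_{(t,0)}$. The bound $A^{\eta/2}\geq I$ then gives $\parallel v\parallel_{(t,\eta)}\geq\parallel v\parallel_{(t,0)}$, which is exactly the continuous inclusion $\mathcal{H}_{t,\, \phi_{cm}}^{\eta}(\Sigma,M)\hookrightarrow\mathcal{H}_{t,\, \phi_{cm}}(\Sigma,M)$; density follows since the smooth sections lie in the domain of $A^{\eta/2}$ by elliptic smoothing on the compact manifold $M$ and are dense in $\mathcal{H}_{t,\, \phi_{cm}}(\Sigma,M)$. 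Chaining the two inclusions yields (\ref{Hilbincl}).

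The main obstacle is the functional--analytic bookkeeping in the first step: one must verify that $\Delta _{p_t(d\omega)}$ is essentially self--adjoint on the bundle--valued $L^2$ space of sections of $T^*\Sigma\otimes(\phi^{-1}_{cm}TM\boxtimes TM)$ over $\Sigma\times M$, and that the spectral functional calculus for $A$ passes through the direct--integral decomposition in $x\in\Sigma$, with the smooth sections forming a common operator core on which all the powers $A^{\pm\eta/2}$ can be computed and depend measurably on $x$. Once this is in place the two norm estimates are routine, and the remainder of the argument is purely the abstract Gelfand--triple construction.
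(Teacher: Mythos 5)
Your proof is correct, but it takes a genuinely more explicit route than the paper's. The paper's own argument is purely abstract: the second inclusion $\mathcal{H}_{t,\,\phi_{cm}}(\Sigma,M)\hookrightarrow\mathcal{H}_{t,\,\phi_{cm}}^{-\eta}(\Sigma,M)$ is read off directly from the definition (\ref{etaHilbert}) of the latter as a completion in a weaker norm, and the first inclusion is then obtained by dualizing this dense embedding ("by topological duality") and identifying each $d\widehat{\Psi}_{t,v}^{(\eta)}$ with its representative in $\mathcal{H}_{t,\,\phi_{cm}}(\Sigma,M)$ via the Riesz representation theorem relative to the pairing (\ref{pairing}); no spectral theorem is invoked. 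You instead realize the whole triple as a spectral scale of the single operator $A=1-a\Delta_{p_t(d\omega)}$: positivity $A\geq I$ via the Bochner-type integration by parts, then both inclusions from $\parallel A^{-\eta/2}\,\cdot\parallel_{(t,0)}\leq\parallel\cdot\parallel_{(t,0)}\leq\parallel A^{\eta/2}\,\cdot\parallel_{(t,0)}$. What your route buys: explicit inclusion constants (both $\leq 1$), an intrinsic description of the dual $\mathcal{H}_{t,\,\phi_{cm}}^{\eta}(\Sigma,M)$ as the completion of smooth sections in the norm $\parallel A^{\eta/2}\,\cdot\parallel_{(t,0)}$ rather than as an abstract dual, and a self-contained proof of the density of the first inclusion, which in the paper is asserted rather tersely. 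What it costs is the functional-analytic overhead you correctly flag: essential self-adjointness of the weighted rough Laplacian on the bundle-valued $L^{2}$ space whose fibre measure $p^{(\omega)}_t(\cdot\,,\phi_{cm}(x))\,d\omega$ depends on $x\in\Sigma$, and the passage of the functional calculus through the direct-integral decomposition over $\Sigma$ with smooth sections as a common core. These points are harmless on the compact $M$ with smooth positive heat kernel; note, moreover, that the paper's definition (\ref{innereta}) of the $(-\eta)$-inner product already presupposes exactly this functional calculus for the non-integer power $\left(1-a\,\Delta_{p_t(d\omega)}\right)^{-\eta}$, so your proposal makes explicit a verification the paper has tacitly folded into its definitions.
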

\begin{proof}
The duality between the distribution--valued differentials  $d\phi_{(u)}^{(-\eta)}= dx^{\beta}\otimes (\phi_{(u)}^{(-\eta)})_*\partial_\beta \in \mathcal{H}_{t,\, \phi_{cm}}^{-\eta}(\Sigma,M)$ and the sections $d \widehat{\Psi}_{t,v}^{(\eta)}\in \mathcal{H}_{t,\, \phi_{cm}}^{\eta}(\Sigma,M)$, is defined by the pairing
\begin{eqnarray}
\label{pairing}
\\
&&\left\langle d \widehat{\Psi}_{t,v}^{(\eta)},\, d\phi_{(u)}^{(-\eta)} \right\rangle_{(t,\,\eta)}\,:=\,
a^{-1}\,\int_\Sigma\,d\mu_\gamma(x)\,\gamma^{\alpha\beta}(x)\nonumber\\   
\nonumber\\
&&\times\int_{M}\left(\widehat{\Psi}_{t,\phi_{cm}}^{(\eta)}((\phi_{(v)})_*\partial_\alpha)\right)\,\circ  _{g(y)}
\left(1-a\,\Delta _{p_t(d\omega)} \right)^{\eta}\,(\phi_{(u)}^{(-\eta)})_*\partial_\beta
\,\nonumber\\
\nonumber\\
&&p^{(\omega)} _t(y , \phi_{cm}(x))\,d\omega(y)\;,
\nonumber
\end{eqnarray}
where $\circ  _{g(y)}$ denotes the $(M,g)$ inner product. Since $\mathcal{H}_{t,\, \phi_{cm}}(\Sigma,M)$ is embedded as a dense subset into  $\mathcal{H}_{t,\, \phi_{cm}}^{-\eta}(\Sigma,M)$, it follows by topological  duality that $\mathcal{H}_{t,\, \phi_{cm}}^{\eta}(\Sigma,M)$ is densely embedded 
into $\mathcal{H}_{t,\, \phi_{cm}}(\Sigma,M)$. In particular, we can inject $\mathcal{H}_{t,\, \phi_{cm}}^\eta(\Sigma,M)$ in $\mathcal{H}_{t,\, \phi_{cm}}(\Sigma,M)$ by identifying $d \widehat{\Psi}_{t,v}^{(\eta)}$ with its representative $d \widehat{\Psi}_{t,v}$ in $\mathcal{H}_{t,\, \phi_{cm}}(\Sigma,M)$, via the Riesz representation theorem 
\begin{equation}
\left\langle d \widehat{\Psi}_{t,v}^{(\eta)},\,d\Xi_t  \right\rangle_{(t,\,\eta)}\,=\,\left\langle
 d \widehat{\Psi}_{t,v},\,d\Xi_t \right\rangle_{p_t(d\omega)\otimes d\mu_\gamma}\;,
\;\;\;\;\;\;\;\;      \forall d\Xi_t\,\in\,\mathcal{H}_{t}\cap
 \mathcal{H}_{t}^{\,\eta}\;,
\end{equation}
where we have set $\mathcal{H}_{t}^{\eta}\equiv \mathcal{H}_{t,\, \phi_{cm}}^{\eta}(\Sigma,M)$ and $\mathcal{H}_{t}\equiv \mathcal{H}_{t,\, \phi_{cm}}(\Sigma,M)$.
 \end{proof}
\noindent
The norm (\ref{innereta})  is concocted in such a way that the dual space $\mathcal{H}_{t,\, \phi_{cm}}^{-\eta}(\Sigma,M)$ is large enough to support
a Gaussian measure $\mathcal{Q}_{\mathcal{H}_t}$ for which the triple
\begin{equation} 
\left(\mathcal{H}_{t,\, \phi_{cm}}(\Sigma,M),\,\mathcal{H}_{t,\, \phi_{cm}}^{-\eta}(\Sigma,M),\,\mathcal{Q}_{\mathcal{H}_t} \right)
\end{equation} 
is an \emph{abstract Wiener space}. Explicitly, we have
\begin{prop}
Set $\mathcal{H}_{t}^{-\eta}\equiv \mathcal{H}_{t,\, \phi_{cm}}^{-\eta}(\Sigma,M)$ and let $\mathcal{B}_{t}^{-\eta}$
be the Borel 
field  for $\mathcal{H}_{t}^{-\eta}$, defined as the smallest $\sigma$--algebra with respect to which the maps 
\begin{equation}
d\phi_{(u)}^{(-\eta)}\,\longmapsto \,\left\langle d \widehat{\Psi}_{t,v}^{(\eta)},\, d\phi_{(u)}^{(-\eta)} \right\rangle_{(t,\,\eta)}\,
\end{equation}
are measurable. Then, for $\eta\,>\,\tfrac{dim\,\Sigma\times M}{2}$, \;   $\mathcal{H}_{t,\, \phi_{cm}}^{-\eta}(\Sigma,M)$ is endowed with a finite Borel measure $\mathcal{Q}_{\mathcal{H}_t}$, characterized by its Fourier transform according to
\\
\begin{eqnarray}
\label{functmeas}
&&\int_{\mathcal{H}_{t}^{-\eta}}\,e^{\sqrt{-1}\,
\left\langle d \widehat{\Psi}_{t,v}^{(\eta)},\, d\phi_{(u)}^{(-\eta)} \right\rangle_{(t,\,\eta)}\,}\;\,
\mathcal{Q}_{\mathcal{H}_t}\left[d\phi_{(u)}^{(-\eta)}\right]\\
\nonumber\\
&&\,=\, e^{-\,\tfrac{1}{2}\,\left\langle d \widehat{\Psi}_{t,v},\,
d \widehat{\Psi}_{t,v} \right\rangle_{p_t(d\omega)\otimes d\mu_\gamma}}\,=\,
e^{-\,\tfrac{1}{a}\,\mathcal{E}[\widehat{\Psi}_{t,v}]}\;,\nonumber
\end{eqnarray} 
\\
\noindent
with 
$\mathcal{Q}_{\mathcal{H}_t}\left[\mathcal{H}_{t,\, \phi_{cm}}^{-\eta}(\Sigma,M)\right]\,=\,1$, and 
where $\mathcal{E}[\widehat{\Psi}_{t,v}]$ is the energy norm (\ref{energ1}). 
\end{prop}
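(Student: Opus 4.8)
The plan is to realise the triple $\left(\mathcal{H}_{t,\, \phi_{cm}}(\Sigma,M),\,\mathcal{H}_{t,\, \phi_{cm}}^{-\eta}(\Sigma,M),\,\mathcal{Q}_{\mathcal{H}_t} \right)$ as an \emph{abstract Wiener space} in the sense of Gross \cite{Gross}, and to obtain $\mathcal{Q}_{\mathcal{H}_t}$ as the Radonification of the canonical (isonormal) Gaussian cylinder--set measure carried by the Cameron--Martin space $\mathcal{H}_{t}\equiv\mathcal{H}_{t,\, \phi_{cm}}(\Sigma,M)$. The right--hand side of (\ref{functmeas}) is precisely $\exp(-\tfrac12\langle d\widehat{\Psi}_{t,v},d\widehat{\Psi}_{t,v}\rangle_{p_t(d\omega)\otimes d\mu_\gamma})$, which is the Fourier transform of that standard cylinder measure; the only nontrivial point is that this finitely additive set function extends to a genuine countably additive Borel probability measure once $\mathcal{H}_{t}$ is completed in the weaker norm (\ref{innereta}). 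By Gross's theorem this is equivalent to showing that the $(-\eta)$--norm is a \emph{measurable} norm on $\mathcal{H}_{t}$, and by the Gross--Kuo Hilbert--Schmidt criterion it suffices that the canonical dense inclusion $\iota:\mathcal{H}_{t}\hookrightarrow\mathcal{H}_{t}^{-\eta}$ of Lemma \ref{denseincl} be a Hilbert--Schmidt operator.

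First I would record that, for each fixed $t>0$, the weighted heat kernel $p^{(\omega)}_t(\cdot\,,\cdot)$ is smooth and strictly positive, so that the operator $L_t:=1-a\,\Delta_{p_t(d\omega)}$ entering (\ref{innereta}), built from the weighted rough Laplacian (\ref{heatweight}), is a strictly positive self--adjoint second order elliptic operator, the additive unit shifting its spectrum away from the origin. By construction the $(-\eta)$ inner product is obtained from the base energy inner product $\langle\cdot,\cdot\rangle_{p_t(d\omega)\otimes d\mu_\gamma}$ of (\ref{energ1}) --- which already carries the first order Dirichlet structure in the $\Sigma$ directions through the differentials $(\phi_{(u)})_*\partial_\alpha$ --- by inserting $L_t^{-\eta}$ in the $M$ slot. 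I would then show that the resulting quadratic form is equivalent to a Sobolev norm on sections of $T^*\Sigma\otimes(\phi^{-1}_{cm}TM\boxtimes TM)$ over the compact product $\Sigma\times M$, diagonalise the associated positive elliptic operator in an orthonormal basis $\{\Xi_k\}$ of $\mathcal{H}_{t}$ so that $\iota$ has singular values $\mu_k^{-1/2}$ with $1<\mu_1\le\mu_2\le\cdots\to\infty$, and compute
\[
\|\iota\|_{HS}^2\,=\,\sum_k\,\mu_k^{-1}\;.
\]
By Weyl's asymptotic law $\mu_k\sim c\,k^{2\eta/d}$ with $d=\dim(\Sigma\times M)$, the sum $\sum_k\mu_k^{-1}$ converges precisely when $2\eta/d>1$, i.e. when $\eta>\tfrac12\dim(\Sigma\times M)$, which is the hypothesis. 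Thus $\iota$ is Hilbert--Schmidt and the $(-\eta)$--norm is measurable.

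With measurability established, Gross's theorem yields a countably additive Borel probability measure $\mathcal{Q}_{\mathcal{H}_t}$ on the completion $\mathcal{H}_{t}^{-\eta}$, measurable with respect to the Borel field $\mathcal{B}_t^{-\eta}$ generated by the functionals $d\phi_{(u)}^{(-\eta)}\mapsto\langle d\widehat{\Psi}_{t,v}^{(\eta)},d\phi_{(u)}^{(-\eta)}\rangle_{(t,\eta)}$, and with total mass $\mathcal{Q}_{\mathcal{H}_t}[\mathcal{H}_{t}^{-\eta}]=1$. Here Lemma \ref{denseincl} is used to identify, through the pairing (\ref{pairing}), the topological dual of $\mathcal{H}_{t}^{-\eta}$ with $\mathcal{H}_{t}^{\eta}$, so that the admissible test functionals are exactly the elements $d\widehat{\Psi}_{t,v}^{(\eta)}\in\mathcal{H}_{t}^{\eta}$; evaluating the characteristic functional of $\mathcal{Q}_{\mathcal{H}_t}$ on such a functional and using the Riesz identification of $d\widehat{\Psi}_{t,v}^{(\eta)}$ with its representative $d\widehat{\Psi}_{t,v}\in\mathcal{H}_{t}$ reproduces (\ref{functmeas}). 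Equivalently, one may argue directly by the Bochner--Minlos--Sazonov theorem on the Gelfand triple $\mathcal{H}_{t}^{\eta}\hookrightarrow\mathcal{H}_{t}\hookrightarrow\mathcal{H}_{t}^{-\eta}$: the functional $\exp(-\tfrac12\|\cdot\|_{\mathcal{H}_t}^2)$ is positive--definite and, by the Hilbert--Schmidt (nuclear) property just proved, continuous in the $\mathcal{H}_{t}^{\eta}$ topology, hence is the Fourier transform of a unique probability measure on $\mathcal{H}_{t}^{-\eta}$.

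The main obstacle is exactly the Hilbert--Schmidt bound in this concrete bundle setting. The smoothing operator $\Delta_{p_t(d\omega)}$ of (\ref{heatweight}) acts on the $TM$--valued factor in the $y$ variable, whereas the $\Sigma$ directions enter the norm (\ref{innereta}) only through the contraction $\gamma^{\alpha\beta}$, the integration $d\mu_\gamma$, and the first order derivatives $(\phi_{(u)})_*\partial_\alpha$ already present in the energy norm $\mathcal{E}$ of (\ref{energ1}). One must therefore verify that the combined quadratic form is genuinely equivalent to a Sobolev norm of the right total order on the full $(n+2)$--dimensional product $\Sigma\times M$, so that the spectral sum is governed by Weyl's law with $d=\dim(\Sigma\times M)$ and the threshold is $\tfrac12\dim(\Sigma\times M)$ rather than $\tfrac12\dim M$: the one--derivative ellipticity supplied in the $\Sigma$ directions by $\mathcal{E}$ has to combine with the $2\eta$ orders of smoothing supplied in the $M$ directions by $L_t^{-\eta}$ into a single product--elliptic operator, since smoothing in $M$ alone would leave the $\Sigma$ modes uncontrolled. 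The strict positivity and smoothness of $p^{(\omega)}_t$ for $t>0$ guarantee uniform ellipticity and hence the required asymptotics, while the subordinate points --- measurability of the seminorm in Gross's precise sense and compatibility of the Riesz and Gelfand--triple identifications --- are then routine.
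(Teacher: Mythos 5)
Your overall architecture is close to the paper's: both realise $\left(\mathcal{H}_{t},\,\mathcal{H}_{t}^{-\eta},\,\mathcal{Q}_{\mathcal{H}_t}\right)$ as an abstract Wiener space over the Gelfand triple (\ref{Hilbincl}) of Lemma \ref{denseincl}, and both obtain (\ref{functmeas}) from a Bochner--Minlos type theorem. The difference lies in how countable additivity is secured: the paper builds the Gaussian cylinder--set measure on the finite--dimensional quotients $T(\mathcal{H}_{t}^{\eta})$, pushes it forward along the dense inclusion $\mathcal{H}_{t}\hookrightarrow \mathcal{H}_{t}^{-\eta}$, and invokes the Sobolev embedding $\mathcal{H}_{t}^{\eta}\hookrightarrow C^{0}_{\Sigma\times M}$ (this is where $\eta>\tfrac{1}{2}\dim(\Sigma\times M)$ enters) together with the extended Bochner--Minlos theorem of \cite{Driver}, \cite{Stroock}, \cite{Weitsman}; you instead verify Gross measurability \cite{Gross} via the Gross--Kuo Hilbert--Schmidt criterion, so that the dimension threshold emerges quantitatively from Weyl asymptotics. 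In principle your route is the sharper one, since it makes explicit exactly what the paper's Sobolev--embedding step presumes.

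But precisely there your key step would fail as written. In (\ref{innereta}) the smoothing operator $\left(1-a\,\Delta_{p_t(d\omega)}\right)^{-\eta}$, with $\Delta_{p_t(d\omega)}$ the weighted rough Laplacian (\ref{heatweight}), acts only in the $y\in M$ variable; the $\Sigma$ variable enters the base norm (\ref{prehilnorm}) only through the contraction $\gamma^{\alpha\beta}$ and the measure $d\mu_\gamma$, i.e.\ as a plain $L^{2}(\Sigma)$ factor carrying no $x$--derivatives of the section. The inclusion $\iota$ therefore acts like $\mathrm{Id}_{L^2(\Sigma)}\otimes L_t^{-\eta/2}$: each eigenvalue $\lambda_j^{-\eta}$ of $L_t^{-\eta}$ occurs with infinite multiplicity indexed by the uncontrolled $\Sigma$--modes, so $\parallel \iota\parallel_{HS}^{2}=\infty$ and the Weyl computation with $d=\dim(\Sigma\times M)$ is simply not available. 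Your proposed rescue --- one--derivative ellipticity in the $\Sigma$ directions supplied by $\mathcal{E}$ --- fails twice over. First, in (\ref{energ1}) the derivatives $\partial_\alpha$ fall on the map $\phi$, not on the elements of the Hilbert space: $\mathcal{H}_{t,\,\phi_{cm}}(\Sigma,M)$ is the completion of \emph{all} smooth sections of $T^*\Sigma\otimes(\phi_{cm}^{-1}TM\boxtimes TM)$ under an $L^{2}$--type norm, so no $\Sigma$--smoothing is present in the Hilbert structure at all. Second, even granting one order of ellipticity on the two--dimensional $\Sigma$ factor, the anisotropic Weyl law for an operator of order $1$ in $2$ variables and order $2\eta$ in $n$ variables gives a counting function $N(\lambda)\sim \lambda^{\,2/1\,+\,n/(2\eta)}$, whence $\sum_k\mu_k^{-1}$ converges only if $2+n/(2\eta)<1$, which is impossible. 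The only way your argument closes is to put $2\eta$ orders of smoothing into \emph{both} factors, e.g.\ replacing $1-a\,\Delta_{p_t(d\omega)}$ by a product--elliptic operator such as $1-a\left(\Delta_\gamma+\Delta_{p_t(d\omega)}\right)$ in (\ref{innereta}); this is evidently what the paper's appeal to the Sobolev embedding on the full product $\Sigma\times M$ tacitly presumes as well, and under that reading your Hilbert--Schmidt computation is the standard verification and correctly reproduces the threshold $\eta>\tfrac{1}{2}\dim(\Sigma\times M)$. The same nuclearity issue also affects your alternative Bochner--Minlos--Sazonov variant, which requires continuity of $\exp(-\tfrac{1}{2}\parallel\cdot\parallel_{\mathcal{H}_t}^{2})$ in the Sazonov (Hilbert--Schmidt) topology, so the gap must be repaired before either version of your argument stands.
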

\begin{proof}
By the Sobolev embedding theorem, if $\eta\,>\,\tfrac{dim\,\Sigma\times M}{2}$, the space $\mathcal{H}_{t}^{\eta}$ is continuously embedded into the set  of continuous and (bounded) 
sections $C^0_{\Sigma\times M}:=\,C^{0}\left(\Sigma\,\times\,M,\;
T^*\Sigma\otimes \,\left(\phi^{-1}_{cm}TM\,\boxtimes\,TM\right)\right)$. Let
\begin{equation}
\mathcal{CYL}\left(\mathcal{H}_{t}^{\eta}\right)\,:=\,\left\{\left.T\in\mathbb{L}(\mathcal{H}_{t}^{\eta},H)\right|\,
dim\,H\,<\infty,\,T(\mathcal{H}_{t}^{\eta})=H \right\}
\end{equation}
denote the set of linear maps from $\mathcal{H}_{t}^{\eta}$ onto finite dimensional Hilbert spaces $H\subset \mathcal{H}_{t}$. Let $d\lambda _{T(\mathcal{H}_{t}^{\eta})}$ be the Lebesgue measure on $T(\mathcal{H}_{t}^{\eta})$, and denote by $\langle \cdot,\,\cdot  \rangle_{T(\mathcal{H}_{t}^{\eta})}$ the restriction to $T(\mathcal{H}_{t}^{\eta})$ of the $\mathcal{H}_{t}^{\eta}$ inner product. With this notation along the way, it is easily checked that the family of  Gaussian measures
\begin{eqnarray}
&&\;\;\;\;\;\;\;\;\;\mathcal{Q}_{T(\mathcal{H}_{t}^{\eta})}\\
\nonumber\\
&&:=(2\pi)^{-\,\tfrac{dim(T(\mathcal{H}_{t}^{\eta}))}{2}}\;e^{-\,\tfrac{1}{2}\,\left\langle d \widehat{\Psi}_{t,v}^{(\eta)},\,
d \widehat{\Psi}_{t,v}^{(\eta)} \right\rangle_{T(\mathcal{H}_{t}^{\eta})} }\,d\lambda _{T(\mathcal{H}_{t}^{\eta})}\;,\;\;\;\;\;T\in \mathcal{CYL}\left(\mathcal{H}_{t}^{\eta}\right)\;,\nonumber
\end{eqnarray}
defines a cylinder set measure on $\mathcal{H}_{t}^{\eta}\cap C^{0}_{\Sigma\times M}$. The linear injective map with dense range $\mathcal{H}_{t,\, \phi_{cm}}(\Sigma,M)\,
\hookrightarrow \,\mathcal{H}_{t,\, \phi_{cm}}^{-\eta}(\Sigma,M)$ defined by the embedding (\ref{Hilbincl}), pushes forward $\mathcal{Q}_{T(\mathcal{H}_{t}^{\eta})}$ to a Gaussian, \emph{white noise}, measure $\mathcal{Q}_{\mathcal{H}_t}$ on $\mathcal{H}_{t,\, \phi_{cm}}^{-\eta}(\Sigma,M)$. By exploiting the extended  Bochner--Minlos theorem \cite{Driver}, \cite{Stroock}, \cite{Weitsman} for the triple of Hilbert spaces (\ref{Hilbincl}),  the  measure $\mathcal{Q}_{\mathcal{H}_t}$ can be characterized via its Fourier transform according to (\ref{functmeas}).
\end{proof}
\begin{rem}
Note that from the general theory of Gaussian measures on Banach spaces, (cf. \cite{7}, Chap.III), we have
\begin{equation}
\mathcal{E}[\widehat{\Psi}_{t,v}]\,=\, \frac{1}{2}\,
\int_{\mathcal{H}_{t}^{-\eta}}\,\left\langle d \widehat{\Psi}_{t,v}^{(\eta)},\,
d\phi_{(u)}^{(-\eta)} \right\rangle_{(t,\,\eta)}^2\;\,
\mathcal{Q}_{\mathcal{H}_t}\left[d\phi_{(u)}\right]\;.
\end{equation}
Moreover, according to Fernique theorem \cite{Fernique},  applied to the measurable and sub--additive function $\left\| \cdot  \right\| _{(t,\,\eta)}$, there exists $\rho \in (0,\infty)$ such that 
\begin{equation}
\label{fernique}
\int_{\mathcal{H}_{t}^{-\eta}}\,e^{\left[\rho\,\left\| d\phi_{(u)}^{(-\eta)} \right\| _{(t,\,\eta)}^2\right]}\;\,
\mathcal{Q}_{\mathcal{H}_t}\left[d\phi_{(u)}\right]\,<\,\infty \;.
\end{equation} 
\end{rem}
\subsection{Deformed harmonic energy as a large deviation functional}

Let us consider (\ref{functmeas}) for the center of mass section
\begin{equation}
\label{cmpsi}
d \widehat{\Psi}_{t,cm}\,:=\,\sum_{\mathfrak{j}=1}^q\,d \widehat{\Psi}_{t,(\mathfrak{j})}\;,   
\end{equation}
associated to the fields (\ref{kmap}). We  write the collection of vector fields $\{ v_{(\mathfrak{j})} \}_{\mathfrak{j}=1}^q$, generating the maps $\phi_{(\mathfrak{j})}$, as  $v_{(\mathfrak{j})}\,=\,\exp_{cm}^{-1}\,\phi_{(\mathfrak{j})}$, and  consider them, for $q\rightarrow \infty $, as a sequence of independent, identically distributed random variables on  
$\left(\mathcal{H}_{t}^{-\eta},\,\mathcal{B}_{t}^{-\eta},\,\mathcal{Q}_t \right)$.
 The associated normalized partial sums $\frac{1}{q}\,\sum_{\mathfrak{j}=1}^{q}\,v_{(\mathfrak{j})}
\left( \phi_{(\mathfrak{j})} \right)$ are distributed under $\prod_{\mathfrak{j}=1}^{q}\,\mathcal{Q}_t\left[d\phi_{(\mathfrak{j})}\right]$. Denote by
\begin{equation}
\mathcal{Q}_{(t,q)}\left(d\phi_{cm}\right) 
\end{equation}
the distribution of
\begin{equation}
\left(\phi_{(\mathfrak{1})},\ldots,\phi_{(\mathfrak{q})}\right)\longmapsto \frac{1}{q}\,\sum_{\mathfrak{j}=1}^{q}\,\exp_{cm}^{-1}\,\phi_{(\mathfrak{j})}
\end{equation}
under $\prod_{\mathfrak{j}=1}^{q}\,\mathcal{Q}_t\left[d\phi_{(\mathfrak{j})}\right]$. According to the strong law of large numbers we have that, as $q\nearrow \infty $,
\begin{equation}
\frac{1}{q}\,\sum_{\mathfrak{j}=1}^{q}\,\exp_{cm}^{-1}\,\phi_{(\mathfrak{j})}\,\longrightarrow \,0\;,
\end{equation} 
or, equivalently
\begin{equation}
\frac{1}{q}\,\sum_{\mathfrak{j}=1}^{q}\,
d\phi_{(\mathfrak{j})}^{(-\eta)}\,\longrightarrow \,d\phi_{cm}^{(-\eta)}\;,
\end{equation} 
for $\mathcal{Q}_{(t,q)}\left(d\phi_{cm}\right)$--almost every $d\phi_{(\mathfrak{j})}$. The deviant behavior of  $\tfrac{1}{q}\,\sum_{\mathfrak{j}=1}^{q}\,
d\phi_{(\mathfrak{j})}^{(-\eta)}$, with respect to an exponential rate of convergence to the center of mass $d\phi_{cm}^{(-\eta)}$, is governed by the large deviations of the family of distributions 
\begin{equation}
\left\{\mathcal{Q}_{(t,q)}\left(d\phi_{cm}\right)\,|\,q\,>\,1\right\}
\end{equation}
according to the following result
\begin{prop}
\label{devbehavior}
Let $J: \mathcal{H}_t\hookrightarrow \mathcal{H}_t^{-\eta}$ denote the inclusion map defined by Lemma \ref{denseincl}. For $d\phi_{cm}^{(-\eta)}\in J\left(\mathcal{H}_t \right)\cap \mathcal{H}_t^{-\eta}$ let 
$d \widehat{\Psi}_{t,cm}\,=\,J^{-1}\left(d\phi_{cm}^{(-\eta)} \right)$, then
\begin{eqnarray}
\Gamma_t(d\phi_{cm}^{(-\eta)})\,&:=&\,\tfrac{1}{a}\,\mathcal{E}[\widehat{\Psi}_{t,cm}]\;\;\;if\;\;d\phi_{cm}^{(-\eta)}\in J\left(\mathcal{H}_t \right)\cap \mathcal{H}_t^{-\eta}\\
&:=&\,\;\;\;\;\infty \;\;\;\;\;\;\;\;\;if\;\;d\phi_{cm}^{(-\eta)}\notin J\left(\mathcal{H}_t \right)\cap \mathcal{H}_t^{-\eta}\nonumber
\end{eqnarray}
is the rate functional governing the large deviations for the family of distributions $\left\{\mathcal{Q}_{(t,q)}\left(d\phi_{cm}\right)\,|\,q\,>\,1\right\}$.
\end{prop}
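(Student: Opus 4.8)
The plan is to recognize Proposition \ref{devbehavior} as an instance of Cramér's large deviation theorem for a sequence of independent, identically distributed \emph{Gaussian} random variables taking values in the separable Hilbert space $\mathcal{H}_t^{-\eta}:=\mathcal{H}_{t,\,\phi_{cm}}^{-\eta}(\Sigma,M)$, the ambient space of the abstract Wiener triple $(\mathcal{H}_t,\,\mathcal{H}_t^{-\eta},\,\mathcal{Q}_{\mathcal{H}_t})$. The variables $d\phi_{(\mathfrak{j})}^{(-\eta)}$ are i.i.d. with common law $\mathcal{Q}_t$, which by (\ref{functmeas}) is a centered Gaussian measure whose reproducing (Cameron--Martin) space is exactly $J(\mathcal{H}_t)$, with reproducing norm dual, via (\ref{pairing}), to the $L^2(p_t(d\omega)\otimes d\mu_\gamma)$ inner product. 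Since each summand is Gaussian, the empirical mean $\tfrac1q\sum_{\mathfrak{j}=1}^q d\phi_{(\mathfrak{j})}^{(-\eta)}$ has the same law as $q^{-1/2}$ times a single sample of $\mathcal{Q}_t$, so the natural large deviation speed is $q$ and the problem reduces to a Schilder--type statement for $\mathcal{Q}_{\mathcal{H}_t}$.

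First I would compute the scaled logarithmic moment generating functional at speed $q$. For a test section $d\widehat{\Psi}_{t,v}^{(\eta)}\in\mathcal{H}_t^{\eta}$, the i.i.d. structure collapses it to the single--sample cumulant, and the characteristic functional (\ref{functmeas}), continued analytically from $\sqrt{-1}\,\langle\cdot,\cdot\rangle$ to $\langle\cdot,\cdot\rangle$ (which flips the sign of the quadratic exponent), gives
\[
\Lambda\!\left(d\widehat{\Psi}_{t,v}^{(\eta)}\right):=\log\int_{\mathcal{H}_t^{-\eta}} e^{\left\langle d \widehat{\Psi}_{t,v}^{(\eta)},\, d\phi_{(u)}^{(-\eta)} \right\rangle_{(t,\,\eta)}}\,\mathcal{Q}_{\mathcal{H}_t}\!\left[d\phi_{(u)}^{(-\eta)}\right]=\tfrac12\,\left\langle d \widehat{\Psi}_{t,v},\,d \widehat{\Psi}_{t,v} \right\rangle_{p_t(d\omega)\otimes d\mu_\gamma}.
\]
Next I would take the Fenchel--Legendre transform of this quadratic functional. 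By the duality between $\mathcal{H}_t^{\eta}$ and $\mathcal{H}_t^{-\eta}$ recorded in Lemma \ref{denseincl}, the dual of a quadratic form is again the (inverse) quadratic form, finite precisely on $J(\mathcal{H}_t)\cap\mathcal{H}_t^{-\eta}$, and there equal to $\tfrac12\,\|d\widehat{\Psi}_{t,cm}\|^2_{p_t(d\omega)\otimes d\mu_\gamma}=\tfrac1a\,\mathcal{E}[\widehat{\Psi}_{t,cm}]$ with $d\widehat{\Psi}_{t,cm}=J^{-1}(d\phi_{cm}^{(-\eta)})$. This dual functional is exactly the claimed $\Gamma_t$.

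Finally I would invoke Cramér's theorem in the separable Hilbert space $\mathcal{H}_t^{-\eta}$ to upgrade the identification of $\Lambda^{*}$ into a full large deviation principle for $\{\mathcal{Q}_{(t,q)}\}$ with rate functional $\Gamma_t$. Two hypotheses need checking: exponential tightness of the family and goodness of the rate functional. Exponential tightness follows directly from the Fernique bound (\ref{fernique}), whose exponential moment of $\|\cdot\|_{(t,\eta)}^{2}$ controls the Gaussian tails of $\mathcal{Q}_t$ and hence, uniformly in $q$, those of the empirical means. Goodness is automatic, since the sublevel sets $\{\Gamma_t\le c\}$ are closed Cameron--Martin balls of $\mathcal{H}_t$, which are compact in $\mathcal{H}_t^{-\eta}$ because the inclusion $J$ is compact once $\eta>\tfrac{\dim\,\Sigma\times M}{2}$. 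The main obstacle is precisely the identification step: proving rigorously that the Fenchel--Legendre dual of the quadratic cumulant is supported on exactly the reproducing kernel space $J(\mathcal{H}_t)$ and coincides there with $\tfrac1a\mathcal{E}$, i.e. that no section of infinite harmonic energy is reachable at subexponential cost. This is where the three--space scale (\ref{Hilbincl}) and the weighted norm (\ref{innereta}) do the essential work, as it is exactly this structure that makes $\mathcal{H}_t$ the Cameron--Martin space of $\mathcal{Q}_{\mathcal{H}_t}$ and forces the dichotomy in the statement.
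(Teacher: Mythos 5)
Your proposal is correct, and its analytic heart --- the rate functional is the Fenchel--Legendre transform of the quadratic Gaussian cumulant, finite exactly on the Cameron--Martin image $J(\mathcal{H}_t)$ where it equals $\tfrac{1}{a}\,\mathcal{E}[\widehat{\Psi}_{t,cm}]$ --- is the same as the paper's; the two routes differ only in emphasis. The paper takes the Legendre formula (\ref{largedev}) as its starting point (licensed by the Deuschel--Stroock theory \cite{7}) and invests essentially all of its effort in precisely the step you label ``the main obstacle'': from the assumption $\Gamma_t(d\phi_{cm}^{(-\eta)})\leq C^2/2$ it extracts the bound $\bigl|\langle \zeta^{(\eta)},d\phi_{cm}^{(-\eta)}\rangle_{(t,\eta)}\bigr|\leq C\,\|J^*(\zeta^{(\eta)})\|_{\mathcal{H}_t}$, and then the density of $J^*(\mathcal{H}_t^{\eta})$ in $\mathcal{H}_t$ (Lemma \ref{denseincl}) together with the Riesz representation theorem produces a unique $d\widehat{\Psi}_{t,cm}$ with $d\phi_{cm}^{(-\eta)}=J(d\widehat{\Psi}_{t,cm})$. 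You instead defer this dichotomy to general Cameron--Martin theory of Gaussian measures, which is legitimate since the statement that the Gaussian rate function is half the squared Cameron--Martin norm (and $+\infty$ off that space) is classical; but it is worth knowing that the in-house proof is just this short boundedness-plus-density-plus-Riesz argument. Conversely, your write-up supplies scaffolding the paper hides behind its citation: the observation that the empirical mean of i.i.d.\ centered Gaussians has the law of a $q^{-1/2}$-scaled single sample (so Cram\'er here is really Schilder), exponential tightness from the Fernique estimate (\ref{fernique}), and goodness of $\Gamma_t$ from compactness of Cameron--Martin balls --- checks that upgrade the computation of a candidate rate into a self-contained large-deviation principle. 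One small correction: compactness of the Cameron--Martin embedding holds for any Radon Gaussian measure and does not itself require $\eta>\tfrac{1}{2}\dim(\Sigma\times M)$; that Sobolev threshold is needed upstream, to make the cylinder measure countably additive on $\mathcal{H}_t^{-\eta}$ via the Bochner--Minlos argument, not to secure goodness of the rate functional.
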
 
\begin{proof}
Let us assume that $d\phi_{cm}^{(-\eta)}\,=\,J(d \widehat{\Psi}_{t,cm})$, then the rate functional governing the large deviations of $\left\{\mathcal{Q}_{(t,q)}\left(d\phi_{cm}\right)\,|\,q\,>\,1\right\}$ is provided by the Legendre transform \footnote{Large deviation theory for Gaussian measure over a generalized Wiener space is nicely discussed in Chap. 3 of \cite{7}.}\\  
\begin{eqnarray}
\label{largedev}
&&\Gamma_t\left(d\phi_{cm}^{(-\eta)}\right)\\
\nonumber\\
&&:=
\sup\left\{\left\langle
\zeta ^{(\eta)},J\left(d \widehat{\Psi}_{t,cm}\right) \right\rangle_{(t,\eta)}-\left.
\tfrac{1}{2}\parallel  J^*\left(\zeta ^{(\eta)}\right)\parallel_{\mathcal{H}_t}^2\,\right|\, \zeta ^{(\eta)}\in \mathcal{H}_{t}^{\eta} \right\}\;,\nonumber
\end{eqnarray} 
\\
\noindent
where $J^*: \mathcal{H}_{t}^{\eta}\longrightarrow \mathcal{H}_{t}$ is the adjoint map to $J$, and where, (see (\ref{functmeas}) for notation),
\\
\begin{equation}
\parallel  J^*\left(\zeta ^{(\eta)}\right)\parallel_{\mathcal{H}_t}^2:= \left\langle J^*\left(\zeta ^{(\eta)}\right),\,
J^*\left(\zeta ^{(\eta)}\right) \right\rangle_{p_t(d\omega)\otimes d\mu_\gamma}\;.
\end{equation}
 According to Lemma \ref{denseincl}, $J^*\left(\mathcal{H}_{t}^{\eta}\right)$ is dense in $\mathcal{H}_{t}$, hence, there is $d \widetilde{\Psi}_{t} \in \mathcal{H}_{t}$ such that we can write
\begin{eqnarray}
\left\langle
\zeta ^{(\eta)},J\left(d \widehat{\Psi}_{t,cm}\right) \right\rangle_{(t,\eta)}&=&
\left\langle
J^*\left(\zeta ^{(\eta)}\right),d \widehat{\Psi}_{t,cm} \right\rangle_{(t,\eta)}\\
\nonumber\\
&=&
\left\langle
d \widetilde{\Psi}_{t},d \widehat{\Psi}_{t,cm} \right\rangle_{(t,\eta)}\nonumber\;,
\end{eqnarray}
and (\ref{largedev})  reduces to
\begin{eqnarray}
\label{largedev1}
&&\Gamma_t\left(d\phi_{cm}^{(-\eta)}\right)\\
\nonumber\\
&&=
\sup\left\{\left\langle
d \widetilde{\Psi}_{t},d \widehat{\Psi}_{t,cm} \right\rangle_{(t,\eta)}-\left.
\tfrac{1}{2}\parallel  d \widetilde{\Psi}_{t}\parallel_{\mathcal{H}_t}^2\,\right|\,d \widetilde{\Psi}_{t}\in \mathcal{H}_{t} \right\}\nonumber\\
\nonumber\\
&&=\,\tfrac{1}{2}\parallel d \widehat{\Psi}_{t,cm}\parallel_{\mathcal{H}_t}^2\,=\,
\tfrac{1}{a}\,\mathcal{E}[\widehat{\Psi}_{t,cm}]\;.\nonumber
\end{eqnarray} 
To complete the proof we need to show that the assumption $\Gamma_t(d\phi_{cm}^{(-\eta)})\,<\,\infty $ necessarily implies that $d\phi_{cm}^{(-\eta)}\,=\,J(d \widehat{\Psi}_{t,cm})$. Set $\Gamma_t(d\phi_{cm}^{(-\eta)})\,\leq C^2/2$ for some $C>0$. Since on the unit ball $\parallel  J^*\left(\zeta ^{(\eta)}\right)\parallel_{\mathcal{H}_t}^2=1$, we have
\begin{equation}
\Gamma_t\left(d\phi_{cm}^{(-\eta)}\right)\,\geq\,
\frac{1}{2}\,\left\langle
\zeta ^{(\eta)},d\phi_{cm}^{(-\eta)}\right\rangle_{(t,\eta)}^2\;,\;\;\;\;\;\zeta ^{(\eta)}\in \mathcal{H}_{t}^{\eta}\;,
\end{equation}
and we can bound $\langle
\zeta ^{(\eta)},d\phi_{cm}^{(-\eta)}\rangle_{(t,\eta)}$ according to
\begin{equation}
\left| \left\langle
\zeta ^{(\eta)},d\phi_{cm}^{(-\eta)}\right\rangle_{(t,\eta)} \right|\,\leq\,C\,\parallel  J^*\left(\zeta ^{(\eta)}\right)\parallel_{\mathcal{H}_t}\;.
\end{equation}
The density  of $J^*\left(\mathcal{H}_{t}^{\eta}\right)$  in $\mathcal{H}_{t}$ and Riesz representation theorem then easily imply that there is a unique $d \widehat{\Psi}_{t,cm} \in \mathcal{H}_{t}$ such that $d\phi_{cm}^{(-\eta)}\,=\,J(d \widehat{\Psi}_{t,cm})$.
\end{proof}
This result characterizes the deformed  harmonic energy $\tfrac{1}{a}\,\mathcal{E}[\widehat{\Psi}_{t,cm}]$ as the large deviation functional  governing the  $\mathcal{O}(q)$ fluctuations of the distribution of the maps $d\phi_{(\mathfrak{j})}$, around the center of mass $d\phi_{cm}$, as compared to the $\mathcal{O}(\sqrt{q})$ Gaussian fluctuations sampled by the central limit theorem. Note that 
 in the path integral approach $\Gamma_t(d\phi_{cm}^{(-\eta)})$ plays the role of the \emph{effective action} of the non-linear $\sigma$ model formally derived, from the \emph{physical} Hilbert space measure $e^{-S_{cm}[v_{(\mathfrak{j})};\,a]}\,D\,[v_{(\mathfrak{j})}]$, by expressing the asymptotic series of associated Feynman amplitudes in terms of 1--Particle Irriducible diagrams \cite{gawedzki}.\\  
\\
\noindent 
According to  (\ref{energ1}), and (\ref{energ2}) we have that for $t\searrow 0$   
$d \widehat{\Psi}_{t,v}\longrightarrow\,d\phi_{(v)}$,\; 
$\mathcal{E}[\widehat{\Psi}_{t,v}]\rightarrow {E}[{\phi}_{(v)}]$, and (\ref{functmeas}) correspondingly extends to $t=0$\, as 
\begin{equation}
\label{tzero}
\int_{\mathcal{H}_{t=0}^{-\eta}}\,e^{\sqrt{-1}\,\left\langle\,d \widehat{\Psi}_{t=0,v}^{(\eta)}\,d\phi_{(u)}^{(-\eta)} \right\rangle_{(t=0,\,\eta)}}\,\mathcal{Q}_{t=0}\left[
  d\phi_{(u)}\right]\,=\, 
e^{-\,\tfrac{1}{a}\,{E}\left[{\phi}_{(u)}\right]}\;.
\end{equation}
If we consider the non--decreasing family of sub $\sigma$--algebras  $\{\mathcal{B}_{t}^{-\eta}\}_{t\geq0}$ generated by requiring that $\left\langle d \widehat{\Psi}_{t,v}^{(\eta)},\,d\phi_{(u)}^{(-\eta)} \right\rangle_{(t,\,\eta)}$ are $\{\mathcal{B}_{t}^{-\eta}\;:\;t\in[0,\infty )\}$--progressively measurable maps, then we can interpret the curve  of measures
\begin{equation}
\label{filtration}
t\,\longmapsto \,\left(\mathcal{H}_{t,},\,\mathcal{H}_{t,}^{-\eta},\,\mathcal{Q}_{\mathcal{H}_t} \right)\;,\;\;\;\;\;\;\;\; t\,\geq \,0\;,
\end{equation}
as describing the $\mathcal{Q}_{\mathcal{H}_t}$--distribution of the  fields $d\phi^{(-\eta)}$ as we deform the geometry of $(M,g,\,d\omega)$ along the heat kernel embedding.  In this sense (\ref{filtration}) provides a natural framework for a renormalization group analysis of the heat kernel embedding of the non--linear $\sigma$ model. The filtration $\left(\mathcal{H}_{t,},\,\mathcal{H}_{t,}^{-\eta},\,\mathcal{Q}_{\mathcal{H}_t} \right)$,\; ${t\geq0}$, is  described by the family  $t\longmapsto \mathcal{E}[\widehat{\Psi}_{t,v}]$ of associated characteristic functions. According to (\ref{energ1}) and Lemma \ref{figrande}, $t\longmapsto \mathcal{E}[\widehat{\Psi}_{t,v}]$ is  characterized by the running metric $t\mapsto h^{(q)}_t$,\,\emph{i.e.} by the coupling between the running metric  $g^{(\omega)}_t$ and the running dilaton $f^{(\omega)}_t$ 
describing the behavior of the warped harmonic energy functional ${E}[{\Phi}_{(q)},\,h^{(q)}_t]$ as the distances in $(M^q\times \times _{(f)}\mathbb{T}^q)$  are rescaled by the heat kernel embedding. The associated tangent vector, the so called \emph{beta}--function,
\begin{equation}
\label{betaflow1h}
\beta\left(h^{(q)}_t\right)\,:=\,\frac{d }{d t}\,h^{(q)}_t\;,
\end{equation}
plays a major role in renormalization group theory. Typically, in field theory one is able to compute the beta function only  up to a few leading order terms in the perturbative expansion of the effective action. Notwithstanding this limitation, the resulting truncated flow, associated with (\ref{betaflow1h}), can be exploited to study both the validity of the perturbative expansion and the nature of the possible fixed points of the renormalization group action \cite{gawedzki}. In our case, we have the exact expression for the \emph{effective action} $\mathcal{E}[\widehat{\Psi}_{t,v}]$, (or, equivalently for ${E}[{\Phi}_{(q)},\,h^{(q)}_t]$), however  $\left(\mathcal{H}_{t,},\,\mathcal{H}_{t,}^{-\eta},\,\mathcal{Q}_{\mathcal{H}_t} \right)_{t\geq0}$ is not the   
 natural physical renormalization group filtration of the dilatonic non--linear $\sigma$ model. Nonetheless, the beta function (\ref{betaflow1h}) computed along the heat kernel embedding turn out to be remarkably similar to the ones obtained, to leading order, by the formal perturbative path integral approach.

\section{Heat kernel embedding and Ricci flow}
\label{Bfunct}
According to (\ref{betaflow1h}),  the beta function  associated with the heat kernel embedding is provided by a time derivative of the flow of metrics  $(h,\,t)\mapsto h^{(q)}_t$, \emph{i.e.}
 \begin{equation}
\label{hTmetricflow}
\frac{d }{d t}\,h_t\,=\,\frac{d }{d t}\,g_t^{(\omega)}\,-
\,\frac{2}{q}\,e^{-\,\tfrac{2f_t^{(\omega)}}{q}}\,\delta_{\mathbb{T}^q}\,\frac{d }{d t}\,f_t^{(\omega)}\;.
\end{equation}
Since the whole geometry of $(M\times\mathbb{T}^q,\,h_t)$ is controlled by $(M, g_t^{(\omega)})$, we start with a detailed computation of the Riemannian curvatures associated with $(M, g_t^{(\omega)})$ in the smooth case, induced by the heat kernel embedding 
\begin{eqnarray}
\label{heatsmooth}
(M,d_g)\times \mathbb{R}_{>0}\,&\hookrightarrow&\, \left(\rm{Prob}_{ac}(M,g),\; d_{\,g}^{\,W}\, \right)\\
(z,\,t)\,\;\;\;\;\;&\longmapsto&\, \;\;\;\;\;p_t^{(\omega)}(\cdot,z)d\omega(\cdot )\;,\;\;\;\;t>0\;.\nonumber
\end{eqnarray}

\subsection{Geodesics and the Ricci curvature of $(M,\,g_t^{(\omega)})$}

Let us consider, for $t>0$, the curve of measures
\begin{equation}
\lambda\,\longmapsto\,\nu(c(\lambda),t)\,:=\, p^{(\omega)} _t(\cdot \, , c(\lambda))d\omega(\cdot )\;,\;\;\;t>0\;,
\end{equation}
associated with an absolutely continuous curve 
\begin{equation} 
c\,:\,[-\epsilon ,\,\epsilon ]\ni \lambda\mapsto c(\lambda)\in M\,,\;\;\;\; c(\lambda=0)=z,\;\;\;\epsilon >0\;,
\end{equation}
passing through the point $z\in M$. We have the following result locally characterizing the geodesics of $(M,\,g_t^{(\omega)})$. 
 \begin{prop}
 \label{hamjac}
 For any given $t>0$, let $\widetilde{\nabla}^{(t)}$ denote the Levi--Civita connection of $(M, g_t^{(\omega)})$. If the curve $c$ is a geodesic for $\widetilde{\nabla}^{(t)}$ then the the vector field $\widehat{\psi}_{(t,c(\lambda),c'(\lambda))}$
$\in T_{\nu(\lambda)}\rm{Prob}_{ac}(M,g)$, covering the 
curve $\lambda\mapsto \nu(\lambda):= p^{(\omega)} _t(\cdot \, , c(\lambda))d\omega(\cdot )$, satisfies the Hamilton--Jacobi equation
\begin{equation}
\label{H-J}
\frac{\partial }{\partial \lambda}\,\widehat{\psi}_{(t,c(\lambda),c'(\lambda) )}(y)\,+\,
\frac{1}{2}\,\left|\nabla^{(y)}\,  \widehat{\psi}_{(t,c(\lambda),c'(\lambda) )}(y)\right|^2_g\,=\,0\;,
\end{equation}
where $|\cdot |_g$ and $\nabla $ respectively denote 
the Riemannian norm and the Levi--Civita connection in the original $(M,g)$. 
\end{prop}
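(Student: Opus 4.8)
The plan is to run the standard Otto--calculus derivation of the Hamilton--Jacobi equation for Wasserstein geodesics, transported to the present weighted heat-kernel setting, where the curve of measures $\lambda\mapsto\nu_\lambda=p^{(\omega)}_t(\cdot,c_\lambda)\,d\omega$ and its velocity potential $\widehat\psi_\lambda:=\widehat\psi_{(t,c_\lambda,c'_\lambda)}$ are already furnished by Lemma \ref{velident} and Proposition \ref{wassdinter}. First I would record the two structural facts available along $c$: the continuity equation (\ref{cpde}), which reads $\partial_\lambda\,p^{(\omega)}_t(y,c_\lambda)+\mathrm{div}_\omega^{(y)}\big(p^{(\omega)}_t(y,c_\lambda)\,\nabla^{(y)}\widehat\psi_\lambda\big)=0$, and the energy identity of Proposition \ref{wassdinter}, $g_t^{(\omega)}(c'_\lambda,c'_\lambda)=\int_M|\nabla^{(y)}\widehat\psi_\lambda|_g^2\,p^{(\omega)}_t(y,c_\lambda)\,d\omega(y)$. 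Together these identify the length functional of $(M,g_t^{(\omega)})$ with the Benamou--Brenier action of $\nu_\lambda$ in $\big(\mathrm{Prob}_{ac}(M,g),d_g^W\big)$.

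Next I would characterize the geodesic $c$ as a critical point of $\mathcal A[c]=\tfrac12\int_0^1 g_t^{(\omega)}(c'_\lambda,c'_\lambda)\,d\lambda$ and compute its first variation along an arbitrary variation $c^s_\lambda$ with fixed endpoints. The crucial simplification is that, for a fixed curve of measures, $\nabla\widehat\psi_\lambda$ is the minimal $L^2(p^{(\omega)}_t\,d\omega)$-norm velocity compatible with (\ref{cpde}); hence by an envelope (Benamou--Brenier) argument the implicit variation of the potential contributes nothing to first order, and only the variation of the measure survives. Treating $\widehat\psi_\lambda$ as the Lagrange multiplier for the continuity constraint and integrating by parts once in $\lambda$ and once in $y$ (using $\int_M p^{(\omega)}_t\,d\omega=1$ and the symmetry of the weighted heat kernel to discard boundary terms), the stationarity condition reduces to $\int_0^1\!\int_M\big(\partial_\lambda\widehat\psi_\lambda+\tfrac12|\nabla^{(y)}\widehat\psi_\lambda|_g^2\big)\,\delta\nu_\lambda\,d\lambda=0$ for all admissible $\delta\nu_\lambda=\xi_\lambda\cdot\nabla_{c_\lambda}p^{(\omega)}_t\,d\omega$, $\xi_\lambda\in T_{c_\lambda}M$.

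From here I would try to extract (\ref{H-J}). Writing $R_\lambda:=\partial_\lambda\widehat\psi_\lambda+\tfrac12|\nabla^{(y)}\widehat\psi_\lambda|_g^2$ and using the defining equation (\ref{carlo12H}) to trade $\xi_\lambda\cdot\nabla_{c_\lambda}p^{(\omega)}_t$ for $-\mathrm{div}_\omega(p^{(\omega)}_t\,\nabla\widehat\psi_{(t,c_\lambda,\xi_\lambda)})$, the stationarity condition becomes $\int_M \nabla R_\lambda\cdot\nabla\widehat\psi_{(t,c_\lambda,\xi_\lambda)}\,p^{(\omega)}_t\,d\omega=0$ for every $\xi_\lambda$. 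Combined with the constant-speed property of the geodesic (which controls $\int_M R_\lambda\,p^{(\omega)}_t\,d\omega$) and the freedom to fix the $\lambda$-dependent additive normalization of $\widehat\psi_\lambda$ — which does not affect its gradient and hence is geometrically inert — I would conclude that $R_\lambda$ is spatially constant and, after that normalization, identically zero.

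The main obstacle I anticipate is precisely this last passage. The Euler--Lagrange condition obtained by varying within the finite-dimensional image $\Upsilon_t(M)\subset\mathrm{Prob}_{ac}(M,g)$ only asserts that $\nabla R_\lambda$ is $L^2(p^{(\omega)}_t\,d\omega)$-orthogonal to the $n$-dimensional tangent distribution $\{\nabla\widehat\psi_{(t,c_\lambda,\xi)}\}$, i.e. that the \emph{tangential} component of the Wasserstein acceleration vanishes, whereas (\ref{H-J}) is the full ambient statement $R_\lambda\equiv 0$. Closing this gap forces one to use the heat-kernel structure directly: differentiate (\ref{carlo12H}) in $\lambda$, insert the geodesic equation $\widetilde\nabla^{(t)}_{c'}c'=0$ rewritten through Lott's connection formula (\ref{conn})--(\ref{conn2}), and show that the resulting identity promotes the tangential condition to $\mathrm{div}_\omega\big(p^{(\omega)}_t\,\nabla R_\lambda\big)=0$, whence $R_\lambda$ is constant by the maximum principle for the weighted Laplacian. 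I would also need the regularity to differentiate the implicitly defined $\widehat\psi_\lambda$ twice and to justify the envelope step, both of which follow from the smooth dependence on $(t,z,U_\perp)$ established in Proposition \ref{CNprop}.
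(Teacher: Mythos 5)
Your derivation is correct in substance but follows a genuinely different route from the paper's, and the obstacle you flag at the end is exactly the step the paper treats most briskly. The paper never varies the curve and uses no Lagrange multiplier or envelope argument: it starts from the constant--speed identity $\frac{d}{d\lambda}\,g_t^{(\omega)}(c'(\lambda),c'(\lambda))=2\,g_t^{(\omega)}\bigl(\widetilde{\nabla}^{(t)}_{c'(\lambda)}c'(\lambda),\,c'(\lambda)\bigr)=0$, expands the $\lambda$--derivative of $\int_M|\nabla\widehat{\psi}_{(\lambda,t)}|_g^2\,p^{(\omega)}_t\,d\omega$ via the continuity equation (\ref{cpde2}), integrates by parts against $p^{(\omega)}_t\,d\omega$ to get $\int_M \bigtriangleup_{p_t(d\omega)}\widehat{\psi}_{(\lambda,t)}\,\bigl(\partial_\lambda\widehat{\psi}_{(\lambda,t)}+\tfrac12|\nabla\widehat{\psi}_{(\lambda,t)}|^2_g\bigr)\,p^{(\omega)}_t\,d\omega=0$, and substitutes (\ref{carlo12Hnew}); so the weak identity it reaches is orthogonality of $R_\lambda:=\partial_\lambda\widehat{\psi}_{(\lambda,t)}+\tfrac12|\nabla\widehat{\psi}_{(\lambda,t)}|^2_g$ against the \emph{single} direction $c'(\lambda)\cdot\nabla^{(c(\lambda))}p^{(\omega)}_t$ determined by the geodesic itself. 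Your fixed--endpoint variation actually buys you more at this stage — orthogonality against the whole $n$--dimensional family $\{\nabla\widehat{\psi}_{(t,c_\lambda,\xi)}\}$ — at the cost of justifying the envelope step, which the smooth dependence in Proposition \ref{CNprop} indeed covers, as you say. The computational core (continuity equation, the defining elliptic PDE (\ref{carlo12H}), integration by parts against the heat--kernel measure) is common to both arguments. Where you and the paper part ways is the final passage to the pointwise statement $R_\lambda\equiv0$: the paper does \emph{not} derive anything like $\mathrm{div}_\omega\bigl(p^{(\omega)}_t\nabla R_\lambda\bigr)=0$ and uses no maximum principle; it simply observes that the weak identity must hold for every germ of geodesic through every point, i.e.\ for every $c'\not\equiv0$, and invokes the injectivity statement of Proposition \ref{CNprop} ($\nabla\widehat{\psi}_{(t,z,U_\perp)}\not\equiv0$ whenever $U_\perp\neq0$) to conclude. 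So the supplementary input you were hunting for is the quantification over all geodesic germs combined with Proposition \ref{CNprop}, for which you already have all the pieces; your heavier closure, if carried out, would be a legitimate and in fact more detailed alternative, since — as you rightly observe — the tangential Euler--Lagrange condition alone is a priori weaker than the full ambient statement (\ref{H-J}), and on this point your critique identifies a real thinness in the paper's one--line conclusion. Your remark on the $\lambda$--dependent additive renormalization of $\widehat{\psi}_\lambda$ (gauge freedom leaving $\nabla\widehat{\psi}_\lambda$, hence all the geometry, unchanged) is correct and is not discussed in the paper at all.
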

  
\begin{proof} If $c$ is a geodesic for $\widetilde{\nabla}^{(t)}$ then, according to (\ref{gt}) and Lemma \ref{velident},  we have
\begin{eqnarray}
\label{firstgeo}
 &&\left.\frac{d }{d \lambda}\right|_{c(\lambda)}\,
 g_t^{(\omega)}\left({c'}(\lambda),\,{c'}(\lambda) \right)\,=\,2\,g_t^{(\omega)}
\left(\widetilde{\nabla}_{{c'}(\lambda)}{c'}(\lambda),\,{c'}(\lambda) \right)\,=\,0\\
\nonumber\\
&&=\,\frac{d }{d \lambda}\,\int_M\,\left|\nabla^{(y)}\,  \widehat{\psi}_{(\lambda,t)}(y)\right|^2_g\,
p^{(\omega)} _t(y , c(\lambda))\,d\omega(y)\nonumber\\
\nonumber\\
&&=\,2\,\int_M\,\left(\nabla^i\widehat{\psi}_{(\lambda, t)}\,\tfrac{\partial  }{\partial  \lambda}\,\nabla _{i}\,\widehat{\psi}_{(\lambda,t)}\right)\,p^{(\omega)} _t(y , c(\lambda))\,d\omega(y)\nonumber\\
\nonumber\\
&&+\,\int_M\,\left|\nabla^{(y)}\,  \widehat{\psi}_{(\lambda, t)}(y)\right|^2_g\,\tfrac{\partial\,  }{\partial \lambda}\,
p^{(\omega)} _t(y , c(\lambda))
\,d\omega(y)\;,\nonumber
\end{eqnarray}
where we set for notational simplicity $\widehat{\psi}_{(\lambda, t)}\:=\,\widehat{\psi}_{(t,c(\lambda),c'(\lambda) )}$.
According to Lemma \ref{movHeatSource} and \ref{velident}, along the curve of measures $
\lambda\,\longmapsto\, p^{(\omega)} _t(\cdot \, , c(\lambda))d\omega(\cdot )$ we can write, for almost every $\lambda\in (-\epsilon,\,\epsilon)$,
\begin{equation}
\label{cpde2}\frac{\partial}
{\partial \lambda}\,
p^{(\omega)} _t(y , z)\,=\,-\,div_\omega^{(y)}\,
\left(\nabla \,\widehat{\psi}_{(\lambda, t)}(y)\,p^{(\omega)} _t(y , c(\lambda))\right)\;.
\end{equation}
By exploiting this relation, which we interpret in the distributional sense,  and integrating by parts (\ref{firstgeo}), we get
\begin{eqnarray}
\label{viv}
&&\int_M\nabla^i\widehat{\psi}_{(\lambda, t)}\left(\frac{\partial  }{\partial  \lambda}\nabla _{i}\widehat{\psi}_{(\lambda, t)}+\frac{1}{2}
\nabla_i\left|\nabla  \widehat{\psi}_{(\lambda,t)}\right|^2_g\right)p^{(\omega)} _t(y , c(\lambda))d\omega(y)\\
\nonumber\\
&&=\,\int_M\nabla^i\widehat{\psi}_{(\lambda, t)}\nabla _{i}\left(\frac{\partial  }{\partial  \lambda}\widehat{\psi}_{(\lambda, t)}+\frac{1}{2}
\left|\nabla  \widehat{\psi}_{(\lambda,t)}\right|^2_g\right)p^{(\omega)} _t(y , c(\lambda))d\omega(y)\nonumber\\
\nonumber\\
&&=-\int_M\,\bigtriangleup _{p_t(d\omega)}\widehat{\psi}_{(\lambda, t)}\,\left(\frac{\partial  }{\partial  \lambda}\widehat{\psi}_{(\lambda, t)}+\frac{1}{2}
\left|\nabla \widehat{\psi}_{(\lambda, t)}\right|^2_g\right)p^{(\omega)} _t(y , c(\lambda))d\omega
=0\;.\nonumber
\end{eqnarray}
According to (\ref{carlo12Hnew}) we have
\begin{equation}
\bigtriangleup _{p _t\,(d\omega)}^{(y)}\,\widehat{\psi}_{(t,c(\lambda),c'(\lambda) )}\,=\,-\,c'(\lambda)\,\cdot \nabla^{(c(\lambda))}\,\ln\,p^{(\omega)} _t(y , c(\lambda))\;,
\end{equation}
and the last line of  (\ref{viv}) becomes
\begin{equation}
\int_M\,\left(\frac{\partial  }{\partial  \lambda}\widehat{\psi}_{(\lambda, t)}+\frac{1}{2}
\left|\nabla \widehat{\psi}_{(\lambda, t)}\right|^2_g\right)
c'(\lambda)\,\cdot \nabla^{(c(\lambda))}\,p^{(\omega)} _t(y , c(\lambda))
d\omega =0\;.
\end{equation}
This relation must hold for any (germ of) geodesic $(-\epsilon , \epsilon )\ni \lambda\longmapsto c(\lambda)$ in $(M,\,g_t^{(\omega)})$ passing through $z$, \emph{i.e.} for any $c'(\lambda)\not\equiv 0$. According to Proposition \ref{CNprop} this implies
\begin{equation}
\tfrac{\partial  }{\partial  \lambda}\,\widehat{\psi}_{(\lambda, t)}+\frac{1}{2}
\left|\nabla\,  \widehat{\psi}_{(\lambda, t)}\right|^2_g\,=\,0\;.
\end{equation}
\end{proof}
As an immediate consequence of this result and of the characterization of the geodesics of $(\rm{Prob}_{ac}(M,g),\,d^W_g)$ with respect of the Otto inner product  we have 
\begin{prop}
\label{exchconn}
 For any fixed $t>0$, let $\Upsilon _t\left((M,\,g) \right)$ be the  image of $(M,\,g)$ in the smooth probability space $(\rm{Prob}_{ac}(M,g),\,d^W_g)$ under the heat kernel embedding.  If we endow $\rm{Prob}_{ac}(M,g)$ with the the Otto Riemannian connection $\overline{\nabla }$ associated with the inner product $\langle \cdot, \cdot  \rangle_{p_t(d\omega)}$, then the geodesics of $(M,\,g_t^{(\omega)})$ can be identified with the geodesics of  $[\Upsilon _t((M,\,g)),\,\overline{\nabla }]$. Hence for $t>0$,\; $[\Upsilon _t((M,\,g)),\,\overline{\nabla }]$ is a totally geodesic submanifold of $\rm{Prob}_{ac}(M,g)$.
\end{prop}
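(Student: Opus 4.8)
The plan is to reduce the assertion to the classical Hamilton--Jacobi characterization of Wasserstein geodesics, using the connection formula of Theorem~\ref{lotth1} together with the identity already established in Proposition~\ref{hamjac}.

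First I would record, purely intrinsically on $\rm{Prob}_{ac}(M,g)$, the geodesic equation for the Otto connection $\overline{\nabla}$. Let $\lambda\mapsto\nu_\lambda$ be a smooth absolutely continuous curve with velocity potential $\psi_\lambda$, so that the continuity equation $\partial_\lambda\nu_\lambda + div(\nu_\lambda\,\nabla\psi_\lambda)=0$ holds and $\dot\nu_\lambda=\vee_{\psi_\lambda}$. The covariant acceleration along the curve splits into the explicit $\lambda$--derivative of the velocity, $\vee_{\partial_\lambda\psi_\lambda}$, plus the self--connection term $\overline{\nabla}_{\vee_{\psi_\lambda}}\vee_{\psi_\lambda}$. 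Testing against an arbitrary $\vee_\zeta$, applying (\ref{conn}) to the second term, and using the torsion--free symmetry of the Hessian in the form $\nabla_a\psi_\lambda\,\nabla^a\nabla^b\psi_\lambda=\tfrac{1}{2}\,\nabla^b|\nabla\psi_\lambda|^2_g$, I obtain
\begin{equation*}
\left\langle \tfrac{D}{d\lambda}\dot\nu_\lambda,\,\vee_\zeta \right\rangle_{(g,\,\nu_\lambda)} = \int_M \nabla\!\left(\partial_\lambda\psi_\lambda + \tfrac{1}{2}\,|\nabla\psi_\lambda|^2_g\right)\cdot\nabla\zeta\;\,d\nu_\lambda\;.
\end{equation*}
Since this must vanish for every $\zeta$, and since the potential is defined only modulo additive constants, $\nu_\lambda$ is an $\overline{\nabla}$--geodesic if and only if its velocity potential satisfies the Hamilton--Jacobi equation $\partial_\lambda\psi_\lambda + \tfrac{1}{2}|\nabla\psi_\lambda|^2_g=0$, i.e. exactly (\ref{H-J}).

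Next I would transport this criterion to the embedded image. By Proposition~\ref{wassdinter} the heat kernel map $\Upsilon_t$ pulls the Otto inner product $\langle\cdot,\cdot\rangle_{p_t(d\omega)}$ back to $g_t^{(\omega)}$, so $\Upsilon_t$ is an isometry of $(M,g_t^{(\omega)})$ onto $\Upsilon_t(M)$, and by Lemma~\ref{velident} the image curve $\nu_\lambda=\Upsilon_t(c(\lambda))=p^{(\omega)}_t(\cdot\,,c(\lambda))\,d\omega(\cdot)$ carries velocity potential $\widehat{\psi}_{(t,c(\lambda),c'(\lambda))}$. If $c$ is a geodesic of $(M,g_t^{(\omega)})$ for its Levi--Civita connection $\widetilde{\nabla}^{(t)}$, then Proposition~\ref{hamjac} states precisely that $\widehat{\psi}_{(t,c(\lambda),c'(\lambda))}$ solves (\ref{H-J}); combined with the intrinsic criterion above, this shows that $\Upsilon_t\circ c$ is a geodesic of $\rm{Prob}_{ac}(M,g)$ for $\overline{\nabla}$.

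Finally I would conclude. Since $\Upsilon_t$ is an isometry onto $\Upsilon_t(M)$ and every geodesic of the submanifold $(M,g_t^{(\omega)})$ is carried to an ambient $\overline{\nabla}$--geodesic, the second fundamental form of $\Upsilon_t(M)$ vanishes, so $\Upsilon_t(M)$ is totally geodesic in $\rm{Prob}_{ac}(M,g)$. Conversely, an ambient geodesic issuing from a point of $\Upsilon_t(M)$ with velocity tangent to it coincides, by uniqueness of geodesics with prescribed initial data, with the $\Upsilon_t$--image of the $(M,g_t^{(\omega)})$--geodesic carrying the corresponding data, and therefore stays in $\Upsilon_t(M)$; this gives the asserted bijective identification of the geodesics of $(M,g_t^{(\omega)})$ with those of $[\Upsilon_t(M),\overline{\nabla}]$. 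I expect the genuine work to lie in the first step: rigorously splitting the covariant acceleration into $\vee_{\partial_\lambda\psi_\lambda}+\overline{\nabla}_{\vee_{\psi_\lambda}}\vee_{\psi_\lambda}$, handling the continuity equation in the weak sense of Lemmas~\ref{movHeatSource} and \ref{velident}, and carrying the computation out with respect to the weighted measure $p^{(\omega)}_t\,d\omega$ and its weighted Laplacian $\bigtriangleup_{p_t(d\omega)}$ rather than $d\omega$; the smoothness of the weighted heat kernel and of the potentials $\widehat{\psi}$ for $t>0$ is what legitimizes these manipulations.
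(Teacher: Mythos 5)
Your proof is correct and follows essentially the same route as the paper: the paper likewise reduces the statement to the Hamilton--Jacobi characterization of $\overline{\nabla}$--geodesics in $\rm{Prob}_{ac}(M,g)$ (which it simply cites from Lott, Prop.~4 of \cite{lott1b}, and Otto--Villani \cite{otto2}, whereas you re-derive it from the connection formula of Theorem~\ref{lotth1} via the identity $\nabla_a\psi\,\nabla^a\nabla^b\psi=\tfrac{1}{2}\nabla^b|\nabla\psi|^2_g$) and then combines it with Proposition~\ref{hamjac} exactly as you do. The only other difference is minor: the paper closes by invoking uniqueness of minimizing geodesics in the smooth setting (McCann, cf.\ Lemma~\ref{McClemma}), while you use uniqueness of geodesics with prescribed initial data together with the vanishing of the second fundamental form, and both serve the same purpose at the paper's level of rigor.
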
 
\begin{proof} The Otto inner product(\ref{inner}) associated to the heat kernel embedding is the metric tensor $g_t^{(\omega)}\left(\cdot ,\,\cdot \right)$ evaluated at $z\in M$,
\begin{eqnarray}
&&\,g_t^{(\omega)}\left(\vee _{\varphi} ,\,\vee _{\zeta} \right)\,=\,
\langle \nabla \varphi,\,\nabla \zeta \rangle_{p_t(d\omega)}\\
\nonumber\\
&&\,:=\int_M\,g^{ik}(y)\nabla_i^{(y)} \varphi\,
\nabla_k^{(y)} \zeta\;p^{(\omega)} _t(y, z)\,d\omega(y)\;,\nonumber
\end{eqnarray}
where
$\vee _{\varphi},\,\vee _{\zeta}\in T_{p_t(d\omega)}{\rm Prob}_{ac}(M,g)$ are the vector fields on ${\rm Prob}_{ac}(M,g)$ defined by the potentials $\varphi,\,\zeta\in C^{\infty }(M ,\mathbb{R})/\mathbb{R}$ under the isomorphism (\ref{veciso}).
According to Theorem \ref{lotth1}, the corresponding connection $\overline{\nabla}$  is provided by
\begin{equation}
\label{conn2}
\langle \overline{\bigtriangledown }_{\vee _{\varphi}}\,\vee _{\chi},\,\vee _{\zeta} \rangle_{p_t(d\omega)}\,=\,\int_M\,\nabla_i\varphi(y)\,\nabla^i \nabla^k\chi(y)\,\nabla _k\zeta(y)\,\,p^{(\omega)} _t(y, z)\,d\omega(y)\;,  
\end{equation}
  The lemma directly follows from the result proved by J. Lott  (cf. Prop. 4 of \cite{lott1b}) and by F. Otto and C. Villani \cite{otto2}, according to which if a vector field $\widehat{\psi}_{(t,c(\lambda),c'(\lambda))}$
$\in T_{\nu(\lambda)}\rm{Prob}_{ac}(M,g)$, tangent to a smooth curve of measures $\lambda\,\longmapsto\, p^{(\omega)} _t(\cdot \, , c(\lambda))d\omega(\cdot )$, satisfies the Hamilton--Jacobi equation (\ref{H-J}) then the curve in question  is a geodesic in the smooth probability space  $({\rm Prob}_{ac}(M,g), \overline{\nabla })$. Since in the smooth setting minimizing geodesics are unique \cite{mccan}, (cf. Lemma \ref{McClemma}),  the stated result  follows.  
\end{proof}

According to Lemma \ref{exchconn} the Levi--Civita connection $\widetilde{\nabla }$ of  $(M, g_t^{(\omega)})$ can be identified with the induced $\overline{\nabla }$ connection on  $\Upsilon _t\left((M,\,g) \right)$, in particular 
we can compute the curvature of $(M, g_t^{(\omega)})$ by restricting to $\Upsilon _t\left((M,\,g) \right)$ the Riemannian curvature of the Wasserstein space $({\rm Prob}_{ac}(M,g),\,\langle\cdot ,\cdot \rangle_{p_t(d\omega)})$. As a preliminary result, let us consider the tensor field $T_{\psi_{(U)}\psi_{(V)}}$ defined by (\ref{Tvector}) and  associated to the measure $p_T^{(\omega)}\,d\omega$. Since $\Upsilon _t\left((M,\,g) \right)\subset\rm{Prob}(M, g)$ is a totally geodesic submanifold we have
\begin{lem}
Let $U_\perp , V_\perp\,\in T_z(M)$ and let $\widehat{\psi}_{(U)}:=\widehat{\psi}_{t, z,U(z)}$ and    
$\widehat{\psi}_{(V)}:=\widehat{\psi}_{t, z,V(z)}$ be the corresponding potentials solutions of (\ref{carlo12H}). Then, for any $t>0$\, and\, $\forall\; U_\perp , V_\perp\,\in T_z(M)$,
\begin{equation}
\label{Tvector20}
T_{\psi_{(U)}\psi_{(V)}}^b\,:=\,(I-\Pi_{p_t(d\omega)})\left(\nabla_a\widehat{\psi}_{(U)}\,\nabla^a \nabla^b\widehat{\psi}_{(V)}\,\right)\,=\,0\;,
\end{equation}
where $\Pi_{p_t(d\omega)}$ is  the projection operator (\ref{proj2}) associated with $p_t(d\omega)$.
\end{lem}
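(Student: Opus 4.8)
The plan is to show that the second--order vector field $\nabla_a\widehat{\psi}_{(U)}\,\nabla^a\nabla^b\widehat{\psi}_{(V)}\,\partial_b$, whose $\mathrm{div}_{p_t(d\omega)}$--free component is by definition $T_{\psi_{(U)}\psi_{(V)}}$, differs from a $p_t(d\omega)$--gradient by a term that is symmetric in $U,V$. First I would record the algebraic reduction already exploited after (\ref{Tvector}): using the symmetry of the Hessian, the symmetric part $\nabla_a\widehat{\psi}_{(U)}\,\nabla^a\nabla^b\widehat{\psi}_{(V)}+\nabla_a\widehat{\psi}_{(V)}\,\nabla^a\nabla^b\widehat{\psi}_{(U)}=\nabla^b\!\big(\langle\nabla\widehat{\psi}_{(U)},\nabla\widehat{\psi}_{(V)}\rangle_g\big)$ is an exact gradient, hence annihilated by $(I-\Pi_{p_t(d\omega)})$. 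Thus $T_{\psi_{(U)}\psi_{(V)}}=\tfrac12\,(I-\Pi_{p_t(d\omega)})\big[\nabla\widehat{\psi}_{(U)},\nabla\widehat{\psi}_{(V)}\big]$, and the statement is equivalent to proving that the Lie bracket of the two heat--kernel velocity fields lies in $\mathrm{Im}\,\nabla$, i.e. is itself a $p_t(d\omega)$--gradient field in the weighted Helmholtz decomposition (\ref{Helmholtz}).

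The essential heat--kernel--specific input is the defining elliptic equation (\ref{carlo12H}), which realises each $\nabla\widehat{\psi}_{(U)}$ as the velocity field generated by displacing the source, $\mathrm{div}^{(y)}_\omega\!\big(p^{(\omega)}_t\,\nabla^{(y)}\widehat{\psi}_{(U)}\big)=-\,U\cdot\nabla^{(z)}p^{(\omega)}_t$. I would feed this into the Hamilton--Jacobi equation (\ref{H-J}) of Proposition \ref{hamjac}, which by the total--geodesy statement of Proposition \ref{exchconn} holds along every geodesic of $(M,g_t^{(\omega)})$. Taking the spatial gradient of (\ref{H-J}) along a geodesic $c(\lambda)$ with velocity $c'$ yields $\nabla_a\widehat{\psi}_{(c')}\nabla^a\nabla^b\widehat{\psi}_{(c')}=-\nabla^b\big(\partial_\lambda\widehat{\psi}_{(c')}\big)$, so the diagonal field is manifestly a gradient. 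To reach the off--diagonal bracket I would polarise this relation through a geodesic variation: replace $c$ by the family $c_s$ with $c_s'(0)=U+sV$, differentiate the resulting family of Hamilton--Jacobi identities in $s$ at $s=0$, and use the linearity of $\widehat{\psi}_{(t,z,\cdot)}$ in its vector slot together with the uniqueness of the solution of (\ref{carlo12H}) (Proposition \ref{CNprop}) to identify the mixed covariant derivative as a gradient.

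The main obstacle is exactly this polarisation. Because $T_{\psi_{(U)}\psi_{(V)}}$ is antisymmetric in $U,V$, the diagonal Hamilton--Jacobi relation is automatically satisfied and carries no information about $T$; the content must be extracted from the transverse, Jacobi--field derivative of (\ref{H-J}), where one has to keep track of the acceleration term $\widetilde{\nabla}^{(t)}_{c'}c'$, which vanishes for $g_t^{(\omega)}$--geodesics but whose coordinate expression mixes the Christoffel symbols of $g$ and of $g_t^{(\omega)}$. The delicate step is to check that all variation terms reassemble into an exact $p_t(d\omega)$--gradient, giving $(I-\Pi_{p_t(d\omega)})\big[\nabla\widehat{\psi}_{(U)},\nabla\widehat{\psi}_{(V)}\big]=0$. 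As a safeguard I would also set up the dual formulation --- pairing the bracket in $L^2(M,p_t(d\omega))$ against an arbitrary $\mathrm{div}_{p_t(d\omega)}$--free field and integrating by parts via (\ref{carlo12H}) --- which reduces the claim to the vanishing of a single integral and isolates precisely where the source--derivative structure of the weighted heat kernel is needed, mirroring the argument of Proposition~3.5 of \cite{Carlo} adapted to the weighted kernel.
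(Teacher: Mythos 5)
Your opening reduction is sound and coincides with the paper's own setup: by the symmetry of the Hessian, the combination $\nabla_a\widehat{\psi}_{(U)}\nabla^a\nabla^b\widehat{\psi}_{(V)}+\nabla_a\widehat{\psi}_{(V)}\nabla^a\nabla^b\widehat{\psi}_{(U)}=\nabla^b\,g\bigl(\nabla\widehat{\psi}_{(U)},\nabla\widehat{\psi}_{(V)}\bigr)$ is exact, so $T_{\psi_{(U)}\psi_{(V)}}$ reduces to the $(I-\Pi_{p_t(d\omega)})$--part of the antisymmetric (bracket) term, exactly as in (\ref{Tvector}). The problem is the step you yourself flag as delicate: polarising the Hamilton--Jacobi identity (\ref{H-J}) along a geodesic variation $c_s$ with $c_s'(0)=U_\perp+sV_\perp$ is not merely unexecuted, it is structurally unable to produce the claim. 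Taking the spatial gradient of (\ref{H-J}) gives, for each $s$, the diagonal identity $\nabla_a\widehat{\psi}_s\,\nabla^a\nabla^b\widehat{\psi}_s=-\nabla^b\bigl(\partial_\lambda\widehat{\psi}_s\bigr)$; differentiating in $s$ at $s=0$, $\lambda=0$, where $c_s(0)=z$ is fixed so that linearity in the vector slot (Proposition \ref{CNprop}) gives $\partial_s\widehat{\psi}_s|_{s=0}=\widehat{\psi}_{(t,z,V_\perp)}$, yields precisely that the \emph{symmetric} combination above is a gradient --- which you already knew. Any identity holding identically on the diagonal $U=V$ has a first polarisation that sees only the symmetric part; the antisymmetric tensor $T$ is invisible to it. Pushing to $\lambda\neq 0$ forces you to differentiate through the Jacobi field of the variation, and the resulting transport and curvature terms come with no mechanism forcing them to reassemble into an exact $p_t(d\omega)$--gradient; your dual pairing against $\mathrm{div}_{p_t(d\omega)}$--free fields restates the problem without supplying new input. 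So the core of the lemma is left unproven.

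The paper closes this gap by a quite different device. Formally, it observes that $T_{\psi_{(U)}\psi_{(V)}}=\breve{\nabla}_{\nabla\widehat{\psi}_{(U)}}\nabla\widehat{\psi}_{(V)}-\overline{\nabla}_{\nabla\widehat{\psi}_{(U)}}\nabla\widehat{\psi}_{(V)}$ is the second fundamental form of the immersion $[\Upsilon_t((M,g)),\overline{\nabla}]\hookrightarrow \mathrm{Prob}_{ac}(M,g)\subset \mathcal{D}iff(M)$, which vanishes because Proposition \ref{exchconn} makes the image totally geodesic. On concrete ground, it constructs normal geodesic coordinates of $(M,g_t^{(\omega)})$ at $z$: requiring the Otto connection coefficients to vanish at $z$ reduces, after integrating by parts against $p^{(\omega)}_t\,d\omega$, to the elliptic condition (\ref{ellgeod}), $\bigtriangleup_{p_t(d\omega)}\,g\bigl(\nabla\widehat{\psi}^{\,(i)},\nabla\widehat{\psi}^{\,(k)}\bigr)=0$, hence $g\bigl(\nabla\widehat{\psi}^{\,(i)},\nabla\widehat{\psi}^{\,(k)}\bigr)=\delta_{ik}$ a.e. (see (\ref{gpsiorth})), so each $\widehat{\psi}^{\,(i)}$ solves the eikonal equation (\ref{eikonal}) on $(M,g)$, solved in the viscosity sense by the normalized linear coordinates of the distance function, (\ref{eiksol}). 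For these representing potentials $\nabla^a\nabla^b\widehat{\psi}^{\,(j)}=0$ almost everywhere on $M\setminus(\mathrm{Cut}(z,g)\cup\{z\})$, so the \emph{whole} field $\nabla_a\widehat{\psi}^{\,(i)}\nabla^a\nabla^b\widehat{\psi}^{\,(j)}\,\partial_b$ vanishes a.e., not just its divergence--free part, giving (\ref{STvector2}) weakly; bilinearity of $T$ in the potentials then extends the conclusion to arbitrary $U_\perp,V_\perp\in T_zM$. In short, the lemma is proved by a special \emph{choice} of potentials whose Hessians vanish a.e., not by manipulating (\ref{carlo12H}) along geodesic variations; to salvage your route you would need an input of exactly this kind.
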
 
\begin{proof}
Let us consider $U_\perp , V_\perp\,, W_\perp \in T_z(M)$ as  elements of $T_e\,\mathcal{D}iff(M)$, the tangent space to $\mathcal{D}iff(M)$  at the identity map, then  we can introduce the connection  $\breve{\nabla }$ associated to the weak Riemannian structure induced on $\mathcal{D}iff(M)$ by the $L^2(p_t(d\omega))$ inner product according to 
\begin{equation}
\label{}
\left\langle \breve{\nabla }_{U_\perp }\,V_\perp ,\,W_\perp \right\rangle_{p_t(\omega)}
=\,\int_M\,g\left(\nabla ^{(g)}_{U_\perp }V_\perp ,\,W_\perp \right)\,p^{(\omega)} _t(y, z)\,d\omega(y)\;,
\end{equation}
(see (\ref{diffconn})). By considering the restriction of $\breve{\nabla }$ to $\rm{Prob}(M, g)$ we have
\begin{eqnarray}
U_\perp&\longmapsto& \nabla\widehat{\psi}_{(U)}\\
V_\perp&\longmapsto& \nabla\widehat{\psi}_{(V)}\nonumber\\
\breve{\nabla }_{U_\perp }\,V_\perp&\longmapsto& \breve{\nabla }_{\nabla\widehat{\psi}_{(U)}}\,\nabla\widehat{\psi}_{(V)}\nonumber\\
\Pi_{p_t(d\omega)}\left(\nabla_a\widehat{\psi}_{(U)}\,\nabla^a \nabla^b\widehat{\psi}_{(V)}\,\right)
&\longmapsto& \overline{\nabla }_{\nabla\widehat{\psi}_{(U)}}\,\nabla\widehat{\psi}_{(V)}\nonumber;,
\end{eqnarray}
in the weak sense, hence we can formally  
rewrite (\ref{Tvector20}) as 
\begin{equation}
T_{\psi_{(U)}\psi_{(V)}}\,=\,\breve{\nabla }_{\nabla\widehat{\psi}_{(U)}}\,\nabla\widehat{\psi}_{(V)} \,-\,\overline{\nabla }_{\nabla\widehat{\psi}_{(U)}}\,\nabla\widehat{\psi}_{(V)}\;.
\end{equation}
Thus, $T_{\psi_{(U)}\psi_{(V)}}$ can be interpreted as the second fundamental form of the immersion 
$[\Upsilon _t((M,\,g)),\,\overline{\nabla }]\hookrightarrow \rm{Prob}_{ac}(M,g)\subset \mathcal{D}iff(M)$. Since for any $t>0$  the immersion is totally geodesic,  $T_{\psi_{(U)}\psi_{(V)}}=0$ follows. On less formal ground, we can prove the stated result by introducing  normal geodesic coordinates with respect  to $(M,\,g_t^{(\omega)})$, centered at $z\in M$. To this end, let $\{e_{(i)}(z)\}$ denote a basis in $T_z(M)$ and 
let us consider a set of curves in  $(M,\,g_t^{(\omega)})$,  issuing from $z\in M$ with tangent vector $\{c'_{(i)}(0)\}\,:=\,\{e_{(i)}(z)\}$,
\begin{equation} 
\label{igeod}
c_{(i)}\,:\,[-\epsilon ,\,\epsilon ]\ni \lambda\mapsto c_{(i)}(\lambda)\in M\,,\;\;\;\; c_{(i)}(0)=z,\;\;\;
 i=\,1,\ldots,n\;.
\end{equation}
According to Proposition \ref{hamjac}, Lemma \ref{wassdinter} and \ref{velident},\;  these curves are geodesics of $(M,\,g_t^{(\omega)})$, for a given $t>0$,  iff the  corresponding curves of probability measures and potentials 
\begin{eqnarray}
\lambda\,&\longmapsto &\, p^{(\omega)} _t(y, c_{(i)}(\lambda))\,d\omega(y)\nonumber\\
\\
\lambda\,&\longmapsto &\,\widehat{\psi}^{(i)}_{(t, \lambda)}\,:=\, \widehat{\psi}^{(i)}_{(t, c_{(i)}(\lambda), c'_{(i)}(\lambda))}\nonumber
\end{eqnarray}
evolve according to the relations, (defining the optimal transport map),
\begin{eqnarray}
&&\frac{\partial }{\partial \lambda}\,p^{(\omega)} _t(y , c_{(i)}(\lambda))\,+\,div_\omega^{(y)}\,
\left(p^{(\omega)} _t(y , c_{(i)}(\lambda))\,\nabla ^{(y)}\widehat{\psi}^{(i)}_{(t, \lambda)})\right)\,=\,0\nonumber\\
\label{optmap}\\
&&\frac{\partial }{\partial \lambda}\,\widehat{\psi}_{(t, \lambda)}^{(i)}(y)\,+\,
\frac{1}{2}\,\left|\nabla^{(y)}\,  \widehat{\psi}_{(t, \lambda)}^{(i)}(y)\right|^2_g\,=\,0\;,\nonumber
\end{eqnarray}
(see (\ref{Liecont}) and (\ref{H-J})), with the initial conditions $p^{(\omega)} _t(y , c_{(i)}(\lambda=0))=p^{(\omega)} _t(y , z)$ 
and $\widehat{\psi}^{(i)}_{(t, \lambda=0)}\,:=\, \widehat{\psi}^{(i)}_{(t, z, c'_{(i)}(0))}$. In such a framework, normal geodetic coordinates at $z$ can be characterized by: \emph{(i)}\, Choosing the initial $\widehat{\psi}^{(i)}_{(t, \lambda=0)}$, (hence the $c'_{(i)}(0)$), in such a way that the connection coefficients 
(\ref{conn2})  vanish at $z$;\,\emph{(ii)}\, Propagating the chosen $\widehat{\psi}^{(i)}_{(t, \lambda=0)}$ according to the Hamilton--Jacobi equation in (\ref{optmap}), (see (\ref{H-J})), so as to obtain the curve of potentials 
$\lambda\mapsto\widehat{\psi}^{(i)}_{(t, \lambda)}$; \emph{(i)}\,Use the potentials so obtained to evolve the initial density $p^{(\omega)} _t(y , z)$ by means of  the continuity equation in (\ref{optmap}).  Since 
\begin{equation}
\frac{\partial }{\partial \lambda}\,p^{(\omega)} _t(y , c_{(i)}(\lambda))=c'_{(i)}(\lambda)\,\cdot 
\nabla _{c_{(i)}(\lambda)}\,p^{(\omega)} _t(y , c_{(i)}(\lambda))\;,
\end{equation}
the procedure described implies that along (\ref{optmap}) we recover the relation
\begin{equation}
div_\omega^{(y)}\,
\left(p^{(\omega)} _t(y , c_{(i)}(\lambda))\,\nabla ^{(y)}\widehat{\psi}^{(i)}_{(t, \lambda)})\right)\,=\,
-\,c'_{(i)}(\lambda)\,\cdot 
\nabla _{c_{(i)}(\lambda)}\,p^{(\omega)} _t(y , c_{(i)}(\lambda))\;,
\end{equation}
connecting the tangent vector $c'_{(i)}(\lambda)$ to the potential $\widehat{\psi}^{(i)}_{(t, \lambda)}$, and hence identifying them with $\widehat{\psi}^{(i)}_{(t, c_{(i)}(\lambda), c'_{(i)}(\lambda))}$.
For our purposes it is sufficient to consider the step \emph{(i)}. To this end we set $\widehat{\psi}^{\;(i)}:=\widehat{\psi}^{\;(i)}_{(t, \lambda=0)}$ for notational simplicity, and require that the 
$\{\widehat{\psi}^{\;(i)}\}_{i=1,\ldots, n}$ are such that 
\begin{eqnarray}
\label{LCcoord}
&&\langle \overline{\bigtriangledown }_{\vee _{\psi_{(i)}}}\,\vee _{\psi_{(k)}},\,\vee _{\zeta} \rangle_{p_t(d\omega)}(z)\\
\nonumber\\
&&\,=\,\int_M\,\nabla_a\widehat{\psi}^{\;(i)}(y)\,\nabla^a \nabla^b\widehat{\psi}^{\;(k)}(y)\,\nabla _b\zeta(y)\,\,p^{(\omega)} _t(y, z)\,d\omega(y)\nonumber\\
\nonumber\\
&&=\int_M\Pi _{p_t(d\omega)}\left(\nabla_a\widehat{\psi}^{\,(i)}(y)\nabla^a \nabla^b\widehat{\psi}^{\,(k)}(y)\right)\nabla _b\zeta(y)\,p^{(\omega)} _t(y, z)d\omega(y)
\,=\,0\;, \nonumber 
\end{eqnarray}
for any choice of $\zeta\in C^{\infty}(M, \mathbb{R})/\mathbb{R}$. 
Since  $\overline{\bigtriangledown }$ is the Levi--Civita connection for $(M,\,g_t^{(\omega)})$, (\ref{LCcoord}) is symmetric under the exchange $\vee _{\psi_{(i)}}\longleftrightarrow \vee _{\psi_{(k)}}$ and we can  write
\begin{eqnarray}
\label{}
&&\langle \overline{\bigtriangledown }_{\vee _{\psi_{(i)}}}\,\vee _{\psi_{(k)}},\,\vee _{\zeta} \rangle_{p_t(d\omega)}(z)\\
\nonumber\\
&&\,=\frac{1}{2}\,\int_M\,\Pi_{p_t(d\omega)}\left(\nabla^b\left(\nabla_a\widehat{\psi}^{\;(i)}\,\nabla^a \widehat{\psi}^{\;(k)}\right)\right)\nabla _b\zeta\,\,p^{(\omega)} _t(y, z)\,d\omega(y) \nonumber\\
\nonumber\\
&&\,=\frac{1}{2}\,\int_M\,\nabla^b\left(\nabla_a\widehat{\psi}^{\;(i)}\,\nabla^a \widehat{\psi}^{\;(k)}\right)\nabla _b\zeta\,\,p^{(\omega)} _t(y, z)\,d\omega(y)\;, \nonumber  
\end{eqnarray}
where we have exploited the fact that $\Pi_{p_t(d\omega)}$ projects onto the gradient vector fields.  
Integrating by parts with respect to $p^{(\omega)} _t(y, z)\,d\omega(y)$ we get
\begin{eqnarray}
\label{}
&&\langle \overline{\bigtriangledown }_{\vee _{\psi_{(i)}}}\,\vee _{\psi_{(k)}},\,\vee _{\zeta} \rangle_{p_t(d\omega)}(z)\\
\nonumber\\
&&\,=-\frac{1}{2}\,\int_M\,\bigtriangleup _{p_t(d\omega)}\left(\nabla_a\widehat{\psi}^{\;(i)}\,
\nabla^a \widehat{\psi}^{\;(k)}\right)\,\zeta(y)\,\,p^{(\omega)} _t(y, z)\,d\omega(y)\;, \nonumber 
\end{eqnarray} 
so that the condition (\ref{LCcoord}) reduces to
\begin{equation}
\label{ellgeod}
\bigtriangleup _{p_t(d\omega)}\,g\left(\nabla\widehat{\psi}^{\;(i)},\,\nabla\widehat{\psi}^{\;(k)}\right)\,=\,0
\end{equation}  
in the distributional sense. Thus, almost everywhere   $g\left(\nabla\widehat{\psi}^{\;(i)},\,\nabla\widehat{\psi}^{\;(k)}\right)= C_{ik}$,  for some constant $C_{ik}$. We normalize $C_{ik}$ and choose the geodesic coordinate initial data $\widehat{\psi}^{\;(i)}:=\widehat{\psi}^{(i)}_{(t, \lambda=0)}$ according to
\begin{equation}
\label{gpsiorth}
g\left(\nabla\widehat{\psi}^{\;(i)},\,\nabla\widehat{\psi}^{\;(k)}\right)= \delta_{ik}\;.
\end{equation}
In particular we get that, in a suitable weak sense, each $\widehat{\psi}^{\;(i)}$ should be a solution of the eikonal equation  
\begin{equation}
\label{eikonal}
\left|\nabla\widehat{\psi}^{\;(i)} \right|_g^2\,=\,1\;,
\end{equation} 
on $(M,g)$.   An explicit characterization of the solutions of the eikonal equation  can be obtained in terms of the distance function of $(M, g)$, (see \emph{e.g.} \cite{petersen}, and \cite{nirenberg}, \cite{mantegazzaHJ} for a detailed analysis). To apply the theory in our case, let $\mathcal{U}_z$ be a star--shaped open set of $0\in T_zM$ and let $\exp_z^{(g)}\,:\mathcal{U}_z\cap T_zM\longrightarrow (M, g)$ denote the exponential map of $(M, g)$ at $z\in M$. Let $E_{(i)}(z)$,\,$i=1,\ldots,n$,\, be a orthonormal basis of $(T_z(M),\,g)$. For $y\in M$ within the cut locus $Cut(z,g)$ of $z$ in $(M,g)$, let $\{z^i(y):=g(E_{(i)},\,(\exp^{(g)}_z)^{-1}(y))\}$ be its normal geodesic coordinates in  $T_zM$. For $y\in M\backslash (Cut(z,g)\cup \{z\})$ the distance function $d_g(z, y)$ is $C^\infty$, and if we denote by $d_g(0, (\exp^{(g)}_z)^{-1}(y))$ its pull--back in $(T_z(M),\,g)$ we have
\begin{eqnarray}
d_g^2\left(0, (\exp^{(g)}_z)^{-1}(y)\right)\,&=&\,\sum_{i=1}^n\,d_g^2\left(0, 
g\left(E_{(i)},\,(\exp^{(g)}_z)^{-1}(y)\right)\right)\nonumber\\
&=&\,\sum_{i=1}^n\,(z^i(y))^2\;.
\end{eqnarray}
Hence we compute   $\tfrac{\partial }{\partial z^k }\,d_g\left(0, 
g\left(E_{(i)},\,(\exp^{(g)}_z)^{-1}(y)\right)\right)\,=\,\delta_{(i)}^k$ and we can write $\left|\nabla ^{(y)}\, d_g\left(0, 
g\left(E_{(i)},\,(\exp^{(g)}_z)^{-1}(y)\right)\right)   \right|_g^2\,=\,1$,\,in $M\backslash Cut(z,g)\cup \{z\}$. Hence $\widehat{\psi}^{\;(i)}(y):=d_g\left(0, 
g\left(E_{(i)},\,(\exp^{(g)}_z)^{-1}(y)\right)\right)$ naturally appears as a candidate solution of the eikonal equation in the open set $M\backslash Cut(z,g)\cup \{z\}$. To handle the presence of the singular domain $Cut(z,g)\cup \{z\}$ where the distance function is not differentiable we can interpret the solution of (\ref{eikonal}) in the viscosity sense \cite{nirenberg}, \cite{mantegazzaHJ}. Thus,  if we normalize ${\psi}^{\;(i)}$ to a vanishing $p_t^{(\omega)}\,d\omega$ average, we have that
for any $t>0$ the functions
\begin{eqnarray}
&&\widehat{\psi}^{\;(i)}(y)\,=\,d_g\left(0, 
g\left(E_{(i)},\,(\exp^{(g)}_z)^{-1}(y)\right)\right)\nonumber\\
\label{eiksol}\\
&&\,-\,\int_{(\exp^{(g)}_z)^{-1}(M)}
d_g\left(0, 
g\left(E_{(i)},\,(\exp^{(g)}_z)^{-1}(x)\right)\right)\,(\exp^{(g)}_z)^*(p_t^{(\omega)}\,d\omega)
\nonumber
\end{eqnarray}
 generate normal geodesic coordinates  
\begin{eqnarray}
&&g_t^{(\omega)}\left(c'_{(i)}(0),\,c'_{(k)}(0)\right)(z)\,=\,\delta_{ik}\nonumber\\
\\
&&\langle \overline{\bigtriangledown }_{\vee _{\psi_{(i)}}}\,\vee _{\psi_{(k)}},\,\vee _{\zeta} \rangle_{p_t(d\omega)}(z)\,=\,0\nonumber\;,
\end{eqnarray}
in $(M\backslash (Cut(z,g)\cup \{z\}),\,g_t^{(\omega)})$ via the steps \emph{(ii)} and \emph{(iii)} described above. In such normal coordinates  we compute  $\nabla^a \nabla^b\widehat{\psi}^{\;(j)}=0$ almost 
everywhere in $M\backslash (Cut(z,g)\cup \{z\})$, hence
\begin{equation}
\label{STvector2}
T_{\psi_{(i)}\psi_{(j)}}^b\,=\,(I-\Pi _{(p_t(d\omega))})\left(\nabla_a\widehat{\psi}^{\;(i)}\,\nabla^a \nabla^b\widehat{\psi}^{\;(j)}\,\right)\,=\,0\;,
\end{equation}
in the weak sense, and the stated lemma follows.
\end{proof}

 With these preliminary results along the way a direct application of Theorem \ref{LottCurv} provides  
\begin{prop}  
\label{Riemcurv}
For any fixed $t>0$,\, $z\in M$, and $U_\perp,\,V_\perp,\,W_\perp,\,Z_\perp \in T_z(M)$, let us denote by $\nabla \widehat{\psi}_{(t,U)} :=\nabla _{(y)}\,\widehat{\psi}_{(t,z,U_\perp )}$, $\nabla \widehat{\psi}_{(t,V)} :=\nabla_{(y)}\,\widehat{\psi}_{(t,z,V_\perp )}$, $\nabla \widehat{\psi}_{(t,W)} :=\nabla_{(y)}\,\widehat{\psi}_{(t,z,W_\perp )}$, and $\nabla \widehat{\psi}_{(t,Z)} :=\nabla_{(y)}\,\widehat{\psi}_{(t,z,Z_\perp )}$ the corresponding vector fields defined by the weighted heat kernel injection map (\ref{tangmappsi}). Then the Riemannian curvature operator $\widetilde{R}m^{(t)}\left(U_\perp ,\,V_\perp \right)\,W_\perp $\; of $(M,\,g_t^{(\omega)})$ at the  point $z\in M$ is  given by\\
\begin{eqnarray}
\label{Riemop}
&&g_t^{(\omega)}\left(\widetilde{R}m^{(t)}\left(U_\perp,\,V_\perp \right)\,W_\perp,\,
 Z_\perp\right)(z)\\
\nonumber\\
&&\,=\,\int_M\,g\left(Rm\left(\nabla\widehat{\psi}_{(t,U)},\,\nabla\widehat{\psi}_{(t,V)}\right)\,\nabla\widehat{\psi}_{(t,W)},\,
\nabla\widehat{\psi}_{(t,Z)}  \right)(y)\,p^{(\omega)} _t(y, z)\,d\omega(y)\nonumber\;,
\end{eqnarray}
where $Rm\left(\circ  ,\,\circ \right)\,\circ $ denotes the Riemann tensor of $(M,g)$.
\end{prop}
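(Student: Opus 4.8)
The plan is to read the curvature of $(M,\,g_t^{(\omega)})$ directly off Lott's curvature formula for the ambient smooth Wasserstein space, the only nontrivial input being the vanishing of the O'Neill type tensor $T$ established in the preceding lemma.

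First I would use Proposition \ref{exchconn}. For each fixed $t>0$ it identifies the Levi--Civita connection $\widetilde{\nabla}^{(t)}$ of $(M,\,g_t^{(\omega)})$ with the Otto connection $\overline{\nabla}$ of $\left(\rm{Prob}_{ac}(M,g),\,\langle\cdot,\cdot\rangle_{p_t(d\omega)}\right)$ restricted to the totally geodesic image $\Upsilon_t((M,g))$. Since a totally geodesic submanifold has vanishing second fundamental form, the Gauss equation forces its intrinsic curvature to agree with the tangential restriction of the ambient curvature $\overline{R}$. Hence, for $U_\perp,V_\perp,W_\perp,Z_\perp\in T_zM$ and the corresponding gradient fields $\nabla\widehat{\psi}_{(t,U)},\dots,\nabla\widehat{\psi}_{(t,Z)}$ supplied by the injection (\ref{tangmappsi}), the left--hand side $g_t^{(\omega)}\!\left(\widetilde{R}m^{(t)}(U_\perp,V_\perp)W_\perp,Z_\perp\right)$ equals $\langle\overline{R}(\vee_{\widehat{\psi}_{(t,U)}},\vee_{\widehat{\psi}_{(t,V)}})\vee_{\widehat{\psi}_{(t,W)}},\vee_{\widehat{\psi}_{(t,Z)}}\rangle_{p_t(d\omega)}$, where the metric identification is the defining relation (\ref{gt}).

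Next I would apply Theorem \ref{LottCurv} at the base measure $p^{(\omega)}_t(\cdot,z)\,d\omega(\cdot)$. For every $t>0$ this is a smooth, strictly positive, absolutely continuous probability measure, so Lott's derivation applies verbatim with $d\omega$ replaced by $p^{(\omega)}_t\,d\omega$ and the four potentials taken to be $\widehat{\psi}_{(t,z,U_\perp)},\widehat{\psi}_{(t,z,V_\perp)},\widehat{\psi}_{(t,z,W_\perp)},\widehat{\psi}_{(t,z,Z_\perp)}$. Formula (\ref{curvProb}) then expresses $\overline{R}$ as a first integral carrying the Riemann tensor $R^a_{bcd}$ of the underlying $(M,g)$ averaged against $p^{(\omega)}_t\,d\omega$, plus a sum of quadratic contractions of the tensors $T_{\psi_{(U)}\psi_{(V)}}$.

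The decisive simplification is the preceding lemma, equation (\ref{STvector2}): for every $t>0$ and all $U_\perp,V_\perp\in T_zM$ one has $T_{\psi_{(U)}\psi_{(V)}}=0$. This same vanishing is what makes $\Upsilon_t((M,g))$ totally geodesic in the first step, and it now annihilates every $T$--quadratic term in (\ref{curvProb}), leaving only the averaged $(M,g)$--Riemann term. Rewriting the coordinate contraction $\nabla^b\widehat{\psi}_{(t,U)}\nabla^c\widehat{\psi}_{(t,V)}\nabla^d\widehat{\psi}_{(t,W)}R^a_{bcd}\nabla_a\widehat{\psi}_{(t,Z)}$ in invariant form yields
\[
g_t^{(\omega)}\left(\widetilde{R}m^{(t)}(U_\perp,V_\perp)W_\perp,\,Z_\perp\right)(z)=\int_M g\left(Rm(\nabla\widehat{\psi}_{(t,U)},\nabla\widehat{\psi}_{(t,V)})\nabla\widehat{\psi}_{(t,W)},\,\nabla\widehat{\psi}_{(t,Z)}\right)p^{(\omega)}_t(y,z)\,d\omega(y),
\]
which is the asserted identity. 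Since the genuinely hard step --- the vanishing of $T$, resting on the eikonal/viscosity characterization of the geodesic normal potentials --- has already been discharged in the previous lemma, the remaining obstacle here is only bookkeeping: confirming that Lott's coordinate conventions match the invariant curvature operator and that passing from the reference measure to the heat kernel measure $p^{(\omega)}_t\,d\omega$ preserves every hypothesis of Theorem \ref{LottCurv}, which it does because for $t>0$ that measure is again a smooth element of $\rm{Prob}_{ac}(M,g)$.
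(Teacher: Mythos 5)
Your proposal is correct and follows essentially the same route as the paper: the text obtains Proposition \ref{Riemcurv} precisely as "a direct application of Theorem \ref{LottCurv}" at the base measure $p^{(\omega)}_t(\cdot,z)\,d\omega$, with the $T$--quadratic terms in (\ref{curvProb}) annihilated by the preceding lemma (\ref{STvector2}) and the intrinsic/ambient curvature identification supplied by the totally geodesic embedding of Proposition \ref{exchconn}. Your explicit invocation of the Gauss equation merely spells out what the paper leaves implicit, so nothing is missing.
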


By exploiting this result and the above characterization of the normal coordinates in $(M,\,g_t^{(\omega)})$ we can readily compute the Ricci tensor  associated to $(M,\,g_t^{(\omega)})$ according to
\begin{prop}
\label{smRiccT} 
For any $t>0$,\, $z\in M$, and $U_\perp $,\,$W_\perp $\,$\in\,T_{z}\,M$, the Ricci curvature of $(M,\,g_t^{(\omega)})$ in the direction of the $2$--plane spanned by the vectors $U_\perp, \,W_\perp \in T_zM$ is provided by
\begin{eqnarray}
&& \widetilde{R}ic^{(t)}(U_\perp, W_\perp)(z)\,:=\, {trace}\,\left(e_{(i)}\longmapsto \widetilde{R}m^{(t)}\left(e_{(i)},\,U_\perp\right)\,W_\perp \right)(z)\nonumber\\
\label{Riccicompt}\\
&&\,=\,\int_M\,Ric\left(\nabla\widehat{\psi}_{(t,U)},\,
\nabla\widehat{\psi}_{(t,W)}\right)(y)\,p^{(\omega)} _t(y, z)\,d\omega(y)
\nonumber\;, 
\end{eqnarray} 
where $Ric$ denotes the Ricci tensor of $(M,g)$ and where $\{e_{(i)}(z):=c'_{(i)}(0)\}$ denotes the ($g_t^{(\omega)}$--orthonormal) basis of $T_zM$ associated with the potentials $\{\psi_{(j)}\}$ defining  normal coordinates in $(M,\,g_t^{(\omega)})$, centered at $z\in M$.
\end{prop}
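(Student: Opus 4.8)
The plan is to obtain the Ricci tensor of $(M,\,g_t^{(\omega)})$ simply by tracing the curvature operator $\widetilde{R}m^{(t)}$ of Proposition \ref{Riemcurv}, the trace being performed in the geodesic normal coordinates constructed in the preceding lemma. First I would fix $z\in M$ and work with the $g_t^{(\omega)}$--orthonormal basis $\{e_{(i)}(z):=c'_{(i)}(0)\}_{i=1}^n$ of $T_zM$ associated with the potentials $\{\widehat{\psi}^{\;(i)}\}$ of (\ref{eiksol}). By construction these potentials solve the eikonal equation (\ref{eikonal}) and satisfy the pointwise orthonormality (\ref{gpsiorth}), $g(\nabla\widehat{\psi}^{\;(i)},\,\nabla\widehat{\psi}^{\;(k)})=\delta_{ik}$, almost everywhere on $M\backslash(\mathrm{Cut}(z,g)\cup\{z\})$. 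Since $\mathrm{Cut}(z,g)\cup\{z\}$ has vanishing $d\mu_g$--measure and $d\omega\ll d\mu_g$, this orthonormality holds $p^{(\omega)}_t\,d\omega$--almost everywhere, so that $\{\nabla\widehat{\psi}^{\;(i)}(y)\}_{i=1}^n$ is a $g$--orthonormal frame of $T_yM$ at almost every $y$.

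Next, using the definition of Ricci curvature as a trace over a $g_t^{(\omega)}$--orthonormal basis,
\[
\widetilde{R}ic^{(t)}(U_\perp,W_\perp)(z)\,=\,\sum_{i=1}^n\,g_t^{(\omega)}\left(\widetilde{R}m^{(t)}(e_{(i)},U_\perp)W_\perp,\,e_{(i)}\right)(z)\;,
\]
I would apply Proposition \ref{Riemcurv} to each summand, with first and fourth slots equal to $e_{(i)}$ (so that the injected gradient field is $\nabla\widehat{\psi}^{\;(i)}$) and the remaining slots equal to $U_\perp,\,W_\perp$. Exchanging the finite sum with the integral then gives
\[
\widetilde{R}ic^{(t)}(U_\perp,W_\perp)(z)\,=\,\int_M\left(\sum_{i=1}^n g\left(Rm(\nabla\widehat{\psi}^{\;(i)},\,\nabla\widehat{\psi}_{(t,U)})\nabla\widehat{\psi}_{(t,W)},\,\nabla\widehat{\psi}^{\;(i)}\right)(y)\right)p^{(\omega)} _t(y,z)\,d\omega(y)\;.
\]

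The decisive step is then purely pointwise: at almost every $y$ the frame $\{\nabla\widehat{\psi}^{\;(i)}(y)\}$ is $g$--orthonormal, hence the inner sum is exactly the contraction of the Riemann tensor of $(M,g)$ defining its Ricci tensor, $\sum_i g(Rm(E_{(i)},X)Y,\,E_{(i)})=Ric(X,Y)$, evaluated at $X=\nabla\widehat{\psi}_{(t,U)}(y)$ and $Y=\nabla\widehat{\psi}_{(t,W)}(y)$. Substituting this identity under the integral yields the claimed formula (\ref{Riccicompt}).

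The main obstacle is to justify that the trace over $T_zM$ taken with respect to $g_t^{(\omega)}$ turns, inside the heat--kernel integral, into a pointwise trace over $T_yM$ with respect to the original metric $g$. This is precisely what the eikonal and orthonormality conditions (\ref{eikonal})--(\ref{gpsiorth}) of the normal--coordinate potentials guarantee, together with the fact (inherited from the vanishing of the O'Neill tensor (\ref{STvector2}) and from $\nabla^a\nabla^b\widehat{\psi}^{\;(j)}=0$ a.e.) that these potentials genuinely realize geodesic normal coordinates for $(M,\,g_t^{(\omega)})$, so that $\Upsilon_t((M,g))$ being totally geodesic lets us read off the curvature directly from Proposition \ref{Riemcurv}. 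Beyond this, one need only check that the cut locus contributes nothing, which is immediate from $d\omega\ll d\mu_g$.
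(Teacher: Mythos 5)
Your proposal is correct and follows essentially the same route as the paper's proof: both trace the curvature formula of Proposition \ref{Riemcurv} over the $g_t^{(\omega)}$--orthonormal basis $\{e_{(i)}\}$ and use the pointwise $g$--orthonormality (\ref{gpsiorth}) of the frame $\{\nabla\widehat{\psi}^{\,(i)}\}$ to convert the trace over $T_zM$ into the pointwise Ricci contraction of $(M,g)$ under the heat--kernel integral. Your added details---the cut locus being $p^{(\omega)}_t\,d\omega$--negligible since $d\omega \ll d\mu_g$, and the identification via the uniqueness in Proposition \ref{CNprop} of the injected potentials $\widehat{\psi}_{(t,z,e_{(i)})}$ with the eikonal solutions $\widehat{\psi}^{\,(i)}$---merely make explicit what the paper's one--line argument leaves implicit.
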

\begin{proof}
By tracing the expression (\ref{Riemop}) for the Riemann curvature operator $\widetilde{R}m^{(t)}$ and by exploiting the orthonormality conditions $g_t^{(\omega)}\left(e_{(i)},\,e_{(k)}\right)(z)\,=\,\delta_{ik}$, and 
$g\left(\nabla\widehat{\psi}^{\;(i)},\,\nabla\widehat{\psi}^{\;(k)}\right)= \delta_{ik}$, (see (\ref{gpsiorth}), which hold in normal geodesic coordinates
one immediately gets (\ref{Riccicompt}).    
\end{proof}

\subsection{Heat kernel induced Ricci flow}
 
We have set the stage for discussing the scaling dependence of the metric $(M,\,g_t^{(\omega)})$ for $t\in (0, \infty)$ and compute the associated beta function. To this end, for  $t>0$,\, $z\in M$ and $U_\perp \in T_z(M)$, let $\widehat{\psi}_{(t,U)} :=\widehat{\psi}_{(t,z,U_\perp )}$ be the corresponding potential defined by the weighted heat kernel injection map (\ref{tangmappsi}).
Let us consider for $t>0$, and for $U_\perp(z)$ varying in $T_z(M)$, the smooth functional
\begin{equation}
U_\perp \longmapsto \mathcal{P}_{(U)}(z,t):= \int_M\,\widehat{\psi}_{(t,U)}(y)\,p^{(\omega)} _t(y, z)d\omega(y)\;,
\end{equation}
providing the coordinate representation in $\left(C^{\infty }(M, \mathbb{R}) \right)^*$ of $p^{(\omega)} _t(\cdot , z)d\omega(\cdot )$ at fixed heat source $\delta_z$. We define a time--dependent tangent vector $X _{(t,z)}\in T_zM$ by 
\begin{equation}
\label{Xvectfield}
g_t^{(\omega)}\left(X _{(t,z)},\,U_\perp(z) \right)\,=\,-\,\frac{d}{d t}\,\mathcal{P}_{(z)}^{\varphi}(t)\;. 
\end{equation}
Explicitly one computes
\begin{eqnarray}
&&\frac{d}{d t}\int_M\,\widehat{\psi}_{(t,U)}(y)\;p^{(\omega)} _t(y, z)d\omega(y)\\
\nonumber\\
&&=\int_M\,\left(\widehat{\psi}_{(t,U)}(y)\,\frac{\partial}{\partial t}\ln p^{(\omega)} _t(y, z)\right)\;p^{(\omega)} _t(y, z)d\omega(y)\nonumber\\
\nonumber\\
&&=\int_M\,\left(\widehat{\psi}_{(t,U)}(y)\,\bigtriangleup _{p_t(d\omega)}\ln p^{(\omega)} _t(y, z)\right)\;p^{(\omega)} _t(y, z)d\omega(y)\nonumber\\
\nonumber\\
&&=\,-\,\int_M\,\left(\nabla \ln p^{(\omega)} _t(y, z)\cdot \nabla \widehat{\psi}_{(t,U)}(y)\right)\;p^{(\omega)} _t(y, z)d\omega(y),\;\;\;\;\;\;\forall U_\perp \in T_zM\nonumber\;,
\end{eqnarray}
where we exploited the relation $(p^{(\omega)} _t)^{-1}\,\bigtriangleup _\omega  p^{(\omega)} _t=\bigtriangleup _{p_t(d\omega)}\ln p^{(\omega)} _t $, which holds pointwise for $t>0$. Hence $X _{(t,z)}$ is provided by 
\begin{eqnarray}
\label{veetz}
&&g_t^{(\omega)}\left(X _{(t,z)},\,U_\perp(z) \right)\\
\nonumber\\
&&=\,\int_M\,\left(\nabla \ln p^{(\omega)} _t(y, z)\cdot \nabla \widehat{\psi}_{(t,U)}(y)\right)\,p^{(\omega)} _t(y, z)d\omega(y)\;.\nonumber
\end{eqnarray} 
We have the following
\begin{lem}
\label{Lagr}
The Lie derivative of the metric tensor $g_t^{(\omega)}$ along the $t$--dependent vector 
field $z\longmapsto X _{(t,z)}$ is provided by 
\begin{eqnarray}
 \label{hessianlieder}
 &&{\mathcal{L}}_{X _{(t,z)}}\,g_t^{(\omega)}(U_\perp , U_\perp )\\
\nonumber\\
&&=\,
2\,\int_M\,\left(\nabla^a\psi_{(t,U)}\,Hess_{ab}\,\ln p^{(\omega)} _t(y , z)\,\nabla ^b\psi_{(t,U)}\right)\,p^{(\omega)} _t(y , z)\,d\omega(y)\;.\nonumber
\end{eqnarray}
\end{lem}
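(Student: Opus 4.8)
The plan is to recognize the right--hand side of (\ref{hessianlieder}) as the Lott Lie--derivative formula (\ref{Lieder}) evaluated for the special potential $\psi=\ln p^{(\omega)}_t(\cdot\,,z)$, with the reference measure $d\omega$ replaced by the heat kernel probability measure $p_t(d\omega):=p^{(\omega)}_t(\cdot\,,z)\,d\omega$. For each fixed $t>0$ the weighted heat kernel is strictly positive and smooth, so $\ln p^{(\omega)}_t(\cdot\,,z)\in C^\infty(M,\mathbb{R})/\mathbb{R}$ and its Hessian is well defined; moreover $p_t(d\omega)$ is an absolutely continuous probability measure, so that Lott's Theorem \ref{lotth1} applies verbatim at the point $p_t(d\omega)\in\mathrm{Prob}_{ac}(M,g)$ upon relabelling $d\omega$ as $p_t(d\omega)$.

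First I would rewrite the metric (\ref{gt}) in Otto form, $g_t^{(\omega)}(U_\perp,U_\perp)=\langle \vee_{\widehat\psi_{(t,U)}},\vee_{\widehat\psi_{(t,U)}}\rangle_{(g,\,p_t(d\omega))}$, so that the quantity whose Lie derivative we compute is exactly the restriction to $\Upsilon_t(M)$ of the Otto inner product at $p_t(d\omega)$. Next I would identify the potential of the vector field $X_{(t,z)}$: comparing its defining relations (\ref{Xvectfield})--(\ref{veetz}) with the Otto pairing $\langle \vee_{\ln p^{(\omega)}_t},\vee_{\widehat\psi_{(t,U)}}\rangle_{(g,\,p_t(d\omega))}$ shows that the image of $X_{(t,z)}$ under the heat kernel embedding is the tangential part of the Otto gradient field $\vee_{\ln p^{(\omega)}_t}$, i.e. $X_{(t,z)}\leftrightarrow \vee_{\ln p^{(\omega)}_t}$ modulo a vector normal to $\Upsilon_t(M)$.

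I would then invoke Proposition \ref{exchconn}: since $[\Upsilon_t((M,g)),\overline\nabla]$ is a totally geodesic submanifold of $\mathrm{Prob}_{ac}(M,g)$, the Levi--Civita connection $\widetilde\nabla^{(t)}$ of $(M,g_t^{(\omega)})$ is the tangential restriction of the Otto connection $\overline\nabla$, and its second fundamental form vanishes. Using the general identity $(\mathcal{L}_X g)(U,U)=2\,g(\widetilde\nabla_U X,\,U)$ for the Lie derivative of a metric along its Levi--Civita connection, the vanishing of the second fundamental form lets me replace the tangential projection of $\nabla\ln p^{(\omega)}_t$ by the full field $\nabla\ln p^{(\omega)}_t$ without changing the pairing against the tangent vectors $\vee_{\widehat\psi_{(t,U)}}$. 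Hence $\mathcal{L}_{X_{(t,z)}}g_t^{(\omega)}(U_\perp,U_\perp)$ coincides with the ambient Otto Lie derivative $\overline{\mathcal{L}}_{\vee_{\ln p^{(\omega)}_t}}\langle \vee_{\widehat\psi_{(t,U)}},\vee_{\widehat\psi_{(t,U)}}\rangle_{(g,\,p_t(d\omega))}$. Applying (\ref{Lieder}) (equivalently, the connection formula (\ref{conn})) with $\psi=\ln p^{(\omega)}_t$ and $\psi_{(i)}=\psi_{(k)}=\widehat\psi_{(t,U)}$ then yields $2\int_M \nabla^a\widehat\psi_{(t,U)}\,Hess_{ab}\ln p^{(\omega)}_t\,\nabla^b\widehat\psi_{(t,U)}\,p^{(\omega)}_t(y,z)\,d\omega(y)$, which is precisely (\ref{hessianlieder}).

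The main obstacle I anticipate is the careful bookkeeping of the normal component in the previous step: one must verify that $\langle \overline\nabla_{\vee_{\widehat\psi_{(t,U)}}}(\vee_{\ln p^{(\omega)}_t})^\perp,\vee_{\widehat\psi_{(t,U)}}\rangle_{(g,\,p_t(d\omega))}=0$, which follows from metric compatibility of $\overline\nabla$ together with the total geodesicity supplied by Proposition \ref{exchconn} (the pairing equals minus the second fundamental form evaluated on $\vee_{\widehat\psi_{(t,U)}}$, hence vanishes). A secondary point to confirm is the smoothness and admissibility of $\ln p^{(\omega)}_t(\cdot\,,z)$ as an Otto potential for each $t>0$, so that (\ref{Lieder}) may be applied with this choice; both are immediate consequences of the smoothness and strict positivity of the weighted heat kernel.
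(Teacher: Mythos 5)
Your proposal is correct and follows essentially the same route as the paper's own proof: both identify the lift of $X_{(t,z)}$ under the heat kernel embedding with the Otto vector field of potential $\ln p^{(\omega)}_t(\cdot\,,z)$ via (\ref{veetz}), apply Lott's formulas (\ref{conn})--(\ref{Lieder}) at the measure $p_t(d\omega)$, and invoke Proposition \ref{exchconn} to identify the restricted ambient Lie derivative $\overline{\mathcal{L}}_{\vee_{(t,z)}}$ on $\Upsilon_t(M)$ with the intrinsic $\mathcal{L}_{X_{(t,z)}}$. Your explicit verification that the normal component of $\vee_{\ln p^{(\omega)}_t}$ drops out of the pairing (via metric compatibility and total geodesicity) is a point the paper leaves implicit in its appeal to Proposition \ref{exchconn}, but it is a refinement of, not a departure from, the paper's argument.
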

\begin{proof}  
Let $\vee _{\psi_{(t)}}$ denote the vector $\in T_{p_t(d\omega)}\rm{Prob}_{ac}(M,g)$ associated with $X _{(t,z)}\in T_zM$. According to the characterization of the Lie derivative $\overline{\mathcal{L}}$ on $\rm{Prob}(M,\,g)$, (cf. (\ref{Lieder})), we have 
\begin{eqnarray}
 \label{}
 &&\overline{\mathcal{L}}_{\vee _{(t,z)}}\,g_t^{(\omega)}(U_\perp , U_\perp )\\
\nonumber\\
&&:=\, \langle \overline{\bigtriangledown }_{\vee _{\psi_{(t)}}}\,\vee _{(t,z)},\,\vee _{\psi_{(t)}} 
\rangle_{p_t(d\omega )}\,+\,\langle \,\vee _{\psi_{(t)}},\,\overline{\bigtriangledown }_{\vee _{\psi_{(t)}}}\vee _{(t,z)} \rangle_{p_t(d\omega )}\nonumber\\
\nonumber\\
&&=\,
2\,\int_M\,\left(\nabla^a\psi_{(t,U)}\,Hess_{ab}\,\ln p^{(\omega)} _t(y , z)\,\nabla ^b\psi_{(t,U)}\right)\,p^{(\omega)} _t(y , z)\,d\omega(y)\;,
\nonumber
 \end{eqnarray}
 where we have exploited the definition (\ref{conn}) of the Riemannian connection  on $\rm{Prob}(M,\,g)$   
and $\vee _{\varphi }\,=\,\nabla ^{(y)}\ln p^{(\omega)} _t(y , z)$, (see  (\ref{veetz})), to write, $\forall \varphi\in C^{\infty }(M,\mathbb{R})$,  
\begin{eqnarray}
\label{almhess}
&&\left\langle \overline{\bigtriangledown }_{\vee _{\varphi }}\,\vee _{(t,z)},\,\vee _{\varphi } \right\rangle _{(g,p_t(d\omega))}\\
\nonumber\\
&&=\,\int_M\,\left(\nabla \varphi (y)\cdot \nabla \nabla \ln p^{(\omega)} _t(y, z)\cdot \nabla \varphi (y)\right)\,p^{(\omega)} _t(y, z)d\omega(y)\;.\nonumber
\end{eqnarray} 
Since,  according to Lemma \ref{exchconn}, the Levi--Civita connection $\widetilde{\nabla }$ of  $(M, g_t^{(\omega)})$ can be identified with the induced $\overline{\nabla }$ connection on  $\Upsilon _t\left((M,\,g) \right)$, we have that the Lie derivative $\overline{\mathcal{L}}_{\vee _{(t,z)}}$ on ${\rm Prob}_{ac}(M,g)$ 
when restricted to $\Upsilon _t\left((M,\,g) \right)$ reduces to  ${\mathcal{L}}_{X _{(t,z)}}$ and the Lemma follows.
\end{proof}
\begin{rem}
Note that the Lie derivative  (\ref{hessianlieder}) captures heat concentration phenomena related to the heuristics of \emph{heat propagation along geodesics} expressed by the asymptotics (\ref{parkeras}),\,(cf. Th. \ref{parkth}). In line with (\ref{asympt5}) and  
Varadhan's large deviation formula (\ref{largedevweigh}), for $y$ in compact subset $M\setminus(Cut(z)\cup \{z\})$, we have the uniform limit \cite{malliavin} 
\begin{equation}
\label{}
-\,4\,\lim_{t\searrow 0^+}\,{t}\,Hess\,\left(\ln\,\,p^{(\omega)}_{t}(y, z)\right)\,=\,
Hess\;{d_g^2(y, z)}\;,
\end{equation}
whereas for $y\in Cut(z)$, \, the hessian $Hess\,\left(\ln\,\,p^{(\omega)}_{t}(y, z)\right)$ diverges to $-\,\infty $ faster than $t^{-1}$ as $t\searrow 0$. This implies that  $-t\ln\,\,p^{(\omega)}_{t}(y, z)$ plays the role of a smooth mollifier \cite{malliavin} of the (squared) distance function $d_g^2(y, z)$, and the limit of $-t\,Hess\,(\ln\,p^{(\omega)} _t(y , z))$ for $t\searrow 0$ computes $Hess\;{d_g^2(y, z)}$ in the sense of distributions \cite{malliavin}, \cite{neel}, \cite{neel2}. In particular it can be shown \cite{nirenberg}, \cite{mantegazzadis} that  the singular part of $Hess\;{d_g^2(y, z)}$ is concentrated, for $dim\,M\geq2$,  on $Cut(z)$ and it is absolutely continuous with respect to the $(n-1)$--dimensional Hausdorff measure $\mathcal{H}^{n-1}$ on $Cut(z)$. 
\end{rem}

\noindent The explicit  computation of the beta function  $\frac{d}{d t} g_t^{(\omega)}(U_\perp , U_\perp )$  can be carried out along the lines of \cite{Carlo}, with some obvious modifications due to the presence of the weighted Laplacians $\bigtriangleup _\omega$ and $\bigtriangleup _{p_t(d\omega)}$. We start with some preliminary remarks. Let us recall that the standard Bochner-Weitzenb\"ock formula for the Laplace--Beltrami operator $\bigtriangleup_g$ on $(M,g)$  reads
\begin{equation}
\label{BW}
2\nabla^i\varphi \nabla_i\bigtriangleup _g  \varphi \,=\,\bigtriangleup _g \left|\nabla \varphi  \right|^2_g-
2\left|Hess\,\varphi  \right|_g^2 -2\,R^{ik}\,\nabla_i\varphi\nabla_k\varphi\;,
\end{equation}
for any $\varphi\in C^{\infty }(M, \mathbb{R})$.  
From (\ref{BW}) and
\begin{equation}
2\,\nabla^i\varphi\, \nabla_i\bigtriangleup _\omega\,\varphi\,=\,2\,\nabla^i\varphi\, \nabla_i
(\bigtriangleup _g \varphi-\nabla_kf\,\nabla^k\varphi)\;,
\end{equation}
\begin{equation}
2\,\nabla^i\varphi\, \nabla_i\bigtriangleup _{p _t\,(d\omega)} \,\varphi\,=\,2\,\nabla^i\varphi\, \nabla_i
\left(\bigtriangleup _g \varphi-\nabla_k \left(f-\ln\,p^{(\omega)} _t\right)\,\nabla^k\varphi\right)\;,
\end{equation}
a direct computation extends  (\ref{BW}) to $(M,g,\,d\omega)$ and to $(M,g,\,p^{(\omega)} _t\,d\omega)$ according to 
\begin{eqnarray}
\label{wWB}
&&2\nabla^i\varphi \nabla_i\bigtriangleup _\omega \varphi=\bigtriangleup _\omega \left|\nabla \varphi  \right|^2_g -
2\left|Hess\,\varphi  \right|_g^2\\
&&\, -2 \left(R_{ik}+ Hess_{ik} f\right) \nabla^i\varphi \nabla^k\varphi\;,\nonumber
\end{eqnarray} 
and
\begin{eqnarray}
\label{wWBheat}
&&2\nabla^i\varphi \nabla_i\bigtriangleup _{p _t\,(d\omega)} \varphi\,=\,\bigtriangleup _{p _t\,(d\omega)} \left|\nabla \varphi  \right|^2_g -
2\left|Hess\,\varphi  \right|_g^2\\
 &&\,-\,2 \left(R_{ik}+ Hess_{ik}\left(f-\ln\,p^{(\omega)} _t(y , z)\right)\right) \nabla^i\varphi \nabla^k\varphi\nonumber\;,
\end{eqnarray} 
respectively.  Let us observe that
 from  (\ref{carlo12Hnew}) it easily follows that (\ref{zlimit}) can be rewritten as
\begin{equation}
\label{zlimit2}
\left(\frac{\partial}{\partial t}\,-\,\bigtriangleup _{\omega}^{(y)} \right)\,\left[ 
p^{(\omega)} _t(y , z)\,\bigtriangleup _{p _t\,(d\omega)}^{(y)}\,\widehat{\psi}_{(t,z,U_\perp )}(y)\right]\,=0\;,
\end{equation}
or equivalently as 
\begin{equation}
\label{zlimit3}
\left(\frac{\partial}{\partial t}-\bigtriangleup _{p _t\,(d\omega)}^{(y)}-\nabla^{(y)}\ln\,p^{(\omega)} _t\cdot\nabla^{(y)}  \right) 
\bigtriangleup _{p _t\,(d\omega)}^{(y)}\,\widehat{\psi}_{(t,z,U_\perp )}\,=\,0\,.
\end{equation}
Also note that by differentiating both members of (\ref{carlo12H}) with respect to $t$ we to get, for any $t>0$,
(cf. \cite{Carlo} Prop. 4.1 for a similar computation for the standard heat kernel),
\\
\begin{eqnarray}
&&{div}_{\omega}^{(y)}\,\left[\left(\tfrac{\partial }{\partial t}p^{(\omega)} _t(y , z)\right)\,\nabla^{(y)}\,\widehat{\psi}_{(t,z,U_\perp )}(y)+p^{(\omega)} _t(y , z)\,\nabla^{(y)}\,\tfrac{\partial }{\partial t}\widehat{\psi}_{(t,z,U_\perp )}(y)\right]\,\nonumber \\
\nonumber\\
&&=\,-\,U_\perp(z)\cdot \,\nabla^{(z)}\,\tfrac{\partial }{\partial t}p^{(\omega)} _t(y , z)\;,
\end{eqnarray}
\\
\noindent
Since  $\nabla^{(z)}\,\tfrac{\partial }{\partial t}p^{(\omega)} _t(y , z)=
\nabla^{(z)}\,\bigtriangleup _\omega^{(y)}\,p^{(\omega)} _t(y , z)$, and the weighted laplacian $\bigtriangleup _\omega^{(y)}$ acts with respect to the variable point $y$, we can write, after an obvious rearrangement of terms,
\begin{eqnarray}
&&{div}_{\omega}^{(y)}\,\left[p^{(\omega)} _t(y , z)\,\nabla^{(y)}\,\tfrac{\partial }{\partial t}\widehat{\psi}_{(t,z,U_\perp )}(y)\right]\\
\nonumber\\
&&=\,-\,\bigtriangleup _\omega^{(y)}\,\left[U_\perp(z)\cdot \,\nabla^{(z)}p^{(\omega)} _t(y , z)  \right]\,\nonumber\\
\nonumber\\
&&-\,{div}_{\omega}^{(y)}\,
\left[\nabla^{(y)}\,\widehat{\psi}_{(t,z,U_\perp )}(y)\;\bigtriangleup _\omega^{(y)}\,p^{(\omega)} _t(y , z)\right]\;.\nonumber
\end{eqnarray}
\\
\noindent
By inserting the defining pde  (\ref{carlo12H}) in $\bigtriangleup _\omega^{(y)}\,\left[U_\perp(z)\cdot \,\nabla^{(z)}p^{(\omega)} _t(y , z)  \right]$, we eventually get
\begin{eqnarray}
\label{firstrel}
&&{div}_{\omega}\,\left[p^{(\omega)} _t(y , z)\,\nabla\,\tfrac{\partial }{\partial t}\widehat{\psi}_{(t,z,U_\perp )}(y)\right]\\
\nonumber\\
&&=\,\bigtriangleup _\omega\,\left[ {div}_{\omega}\,\left(p^{(\omega)} _t(y , z)\,\nabla\,\widehat{\psi}_{(t,z,U_\perp )}(y)\right) \right]\,\nonumber\\
\nonumber\\
&&-\,{div}_{\omega}\,
\left[\nabla\,\widehat{\psi}_{(t,z,U_\perp )}(y)\;\bigtriangleup _\omega\,p^{(\omega)} _t(y , z)\right]\;,\nonumber
\end{eqnarray}
where we dropped the superscript $(y)$ since the operators $\nabla$,\, ${div}_{\omega}$,\, and $\bigtriangleup _\omega$ all act with respect to $y$.\\
\\
\noindent
With these preliminary remarks along the way we get   
\begin{lem} (cf. Prop. 4.1 in \cite{Carlo}).
\label{MGlemma}
For $t\in (0,\infty)$ we have
\begin{eqnarray}
\label{genbetafunct}
&&\frac{d }{d t}\,g_t^{(\omega)}(U_\perp , U_\perp )\,=\,-\,2\,\int_{M}\,\left[\left|Hess\,\widehat{\psi}_{(t,U)}  \right|_g^2\right.\nonumber\\
\\
&&\left.\,+\,\left(R_{ik}\,+\,Hess_{ik}\,f\right)\,\nabla^i\widehat{\psi}_{(t,U)}\,\nabla^k\widehat{\psi}_{(t,U)}   \right]\,p^{(\omega)} _t(y , z)\,d\omega(y)\nonumber\;.
\end{eqnarray}
\end{lem}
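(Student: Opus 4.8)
The plan is to differentiate the defining integral (\ref{gt}) for $g_t^{(\omega)}(U_\perp,U_\perp)$ directly in $t$ and to reduce everything, via two integrations by parts, to the weighted Bochner--Weitzenb\"ock identity (\ref{wWB}). Writing $\psi:=\widehat{\psi}_{(t,U)}$ and $p:=p^{(\omega)} _t(y,z)$ for brevity, Proposition \ref{CNprop} guarantees that $\psi$ is smooth and depends smoothly on $t$ for $t>0$, while the weighted heat kernel is smooth and strictly positive; since $M$ is compact, differentiation under the integral sign and all integrations by parts against $d\omega$ are legitimate (recall that $\bigtriangleup_\omega$ is self--adjoint in $L^2(M,d\omega)$ and $\int_M({div}_{\omega}V)\,d\omega=0$). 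First I would write
\begin{equation}
\frac{d}{dt}\,g_t^{(\omega)}(U_\perp,U_\perp)=\underbrace{2\int_M \nabla\psi\cdot\nabla\dot\psi\;p\,d\omega}_{\mathrm{(I)}}+\underbrace{\int_M |\nabla\psi|_g^2\;\frac{\partial p}{\partial t}\,d\omega}_{\mathrm{(II)}},
\end{equation}
and use the heat equation (\ref{heatfloweight}) to replace $\tfrac{\partial}{\partial t}p=\bigtriangleup_\omega p$ in $\mathrm{(II)}$.

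The heart of the argument is the treatment of $\mathrm{(I)}$. Integrating by parts so as to transfer the divergence onto the factor $p\,\nabla\dot\psi$ gives $\mathrm{(I)}=-2\int_M \psi\,{div}_{\omega}(p\,\nabla\dot\psi)\,d\omega$, whereupon I would substitute the time--differentiated elliptic relation (\ref{firstrel}),
\begin{equation}
{div}_{\omega}\!\left[p\,\nabla\dot\psi\right]=\bigtriangleup_\omega\!\left[{div}_{\omega}(p\,\nabla\psi)\right]-{div}_{\omega}\!\left[\nabla\psi\,\bigtriangleup_\omega p\right].
\end{equation}
For the first resulting piece, self--adjointness of $\bigtriangleup_\omega$ together with the defining equation (\ref{carlo12H}), ${div}_{\omega}(p\nabla\psi)=-\,U_\perp\cdot\nabla^{(z)}p$, and one further integration by parts turns $-2\int_M\psi\,\bigtriangleup_\omega[{div}_{\omega}(p\nabla\psi)]\,d\omega$ into $2\int_M p\,\nabla\psi\cdot\nabla\bigtriangleup_\omega\psi\,d\omega$. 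For the second piece a single integration by parts yields $-2\int_M |\nabla\psi|_g^2\,\bigtriangleup_\omega p\,d\omega=-2\,\mathrm{(II)}$. Hence $\mathrm{(I)}=2\int_M p\,\nabla\psi\cdot\nabla\bigtriangleup_\omega\psi\,d\omega-2\,\mathrm{(II)}$, and adding $\mathrm{(II)}$ and rewriting the surviving term as $\mathrm{(II)}=\int_M p\,\bigtriangleup_\omega|\nabla\psi|_g^2\,d\omega$ (again by self--adjointness) gives
\begin{equation}
\frac{d}{dt}\,g_t^{(\omega)}(U_\perp,U_\perp)=\int_M\Big(2\,\nabla\psi\cdot\nabla\bigtriangleup_\omega\psi-\bigtriangleup_\omega|\nabla\psi|_g^2\Big)\,p\,d\omega.
\end{equation}

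Finally I would invoke the weighted Bochner--Weitzenb\"ock formula (\ref{wWB}) with $\varphi=\psi$, namely $2\nabla^i\psi\,\nabla_i\bigtriangleup_\omega\psi-\bigtriangleup_\omega|\nabla\psi|_g^2=-2|Hess\,\psi|_g^2-2(R_{ik}+Hess_{ik}f)\nabla^i\psi\,\nabla^k\psi$, which converts the last integrand precisely into the asserted expression (\ref{genbetafunct}). I expect the main obstacle to be the bookkeeping of the two integrations by parts inside $\mathrm{(I)}$, arranged so that the two occurrences of $\bigtriangleup_\omega p$ cancel with the correct sign against $\mathrm{(II)}$; once this cancellation is secured, the identification with $-2[|Hess\,\psi|_g^2+(R_{ik}+Hess_{ik}f)\nabla^i\psi\,\nabla^k\psi]$ is immediate from (\ref{wWB}). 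The only genuinely analytic point needing a word of care is the interchange of $\tfrac{d}{dt}$ with the spatial integral and the differentiability of $\psi_t$ in $t$, both of which follow from the smooth dependence on the data established in Proposition \ref{CNprop} and the compactness of $M$.
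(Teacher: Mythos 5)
Your proof is correct and follows essentially the same route as the paper's own argument: the same split of $\tfrac{d}{dt}g_t^{(\omega)}$ into the two terms via the heat equation (\ref{heatfloweight}), the same integration by parts followed by substitution of the differentiated elliptic relation (\ref{firstrel}), the same bookkeeping that cancels the $\bigtriangleup_\omega p^{(\omega)}_t$ contributions, and the same final reduction through the weighted Bochner--Weitzenb\"ock formula (\ref{wWB}). The only cosmetic difference is that your invocation of (\ref{carlo12H}) in handling the first piece is superfluous there---self-adjointness of $\bigtriangleup_\omega$ and one further integration by parts already suffice, exactly as in the paper.
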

\begin{proof}
According to (\ref{heatfloweight}) we have
\begin{eqnarray}
\label{dergt}
&&\frac{d}{d t}\,g_t^{(\omega)}(U_\perp , U_\perp )\\
\nonumber\\
&&=\, 
2\,\int_{M}\,\left(\nabla\,\widehat{\psi}_{(t,U)}\cdot \nabla\,\tfrac{\partial}{\partial t}\,\widehat{\psi}_{(t,U)}\right)\,p^{(\omega)} _t(y , z)\,d\omega(y)\nonumber\\
\nonumber\\
&&\,+\,\int_{M}\,\left|\nabla\,\widehat{\psi}_{(t,U)}\right|_g^2\,
\bigtriangleup _\omega\,p^{(\omega)} _t(y , z)\,d\omega(y)\nonumber\;.
\end{eqnarray} 
The Green formula for the weighted Laplacian $\bigtriangleup_\omega$, (which we shall use repeatedly also for the heat weighted Laplacian $\bigtriangleup_{p_t(d\omega)}$), 
\begin{eqnarray}
\label{green}
&&\int_{M}\,\varphi \bigtriangleup_\omega \psi\,d\omega\,=\, \int_{M}\,\varphi\, div_\omega\cdot \nabla  \psi\,d\omega\\
\nonumber\\
&&\,=\,-\,\int_{M}\,\nabla\varphi \cdot \nabla \psi\,d\omega\,=\,
\int_{M}\,\psi \bigtriangleup_\omega \varphi\,d\omega\;,\nonumber
\end{eqnarray} 
which holds pointwise for any $\varphi,\;\psi \in C^{\infty }(M,\mathbb{R})$, allows to compute  
the first term in (\ref{dergt}) according to
\begin{eqnarray}
\label{floc}
&&\int_{M}\,\left(\nabla\,\widehat{\psi}_{(t,U)}\cdot \nabla\,\tfrac{\partial}{\partial t}\,\widehat{\psi}_{(t,U)}\right)\,p^{(\omega)} _t(y , z)\,d\omega(y)\\
\nonumber\\
&&=\,-\,
\int_{M}\,\widehat{\psi}_{(t,U)}\,div_\omega\,\left[p^{(\omega)} _t(y , z)\, \nabla\,\tfrac{\partial}{\partial t}\,\widehat{\psi}_{(t,U)} \right]\,d\omega(y)\nonumber\\
\nonumber\\
&&\,\underset{(\ref{firstrel})}{=}\,-\,
\int_{M}\,\widehat{\psi}_{(t,U)}\,\bigtriangleup_\omega\,\left[div_\omega\left(p^{(\omega)} _t(y , z)\, \nabla\,\widehat{\psi}_{(t,U)}\right) \right]\,d\omega(y)\nonumber\\
\nonumber\\
&&+\,
\int_{M}\,\widehat{\psi}_{(t,U)}\,div_\omega\,\left[\bigtriangleup _\omega\, p^{(\omega)} _t(y , z)\; \nabla\,\widehat{\psi}_{(t,U)} \right]\,d\omega(y)\nonumber\\
\nonumber\\
&&=\,\int_{M}\,\left[\nabla\,\widehat{\psi}_{(t,U)}\cdot \nabla\,\bigtriangleup _\omega\,\widehat{\psi}_{(t,U)}\,-\,
\bigtriangleup _\omega\,\left|\nabla\,\widehat{\psi}_{(t,U)} \right|^2_g\right]\,p^{(\omega)} _t(y , z)\,d\omega(y)\nonumber\;,
\end{eqnarray}
where ${}^{\underset{(\ref{firstrel})}{=}}$ refer to the relation used in the computation, and where in the last line we have integrated by parts. Inserted  into (\ref{dergt}), the expression (\ref{floc}) eventually yields   
\begin{eqnarray}
\label{genbetafunct0}
&&\;\;\;\;\;\;\;\;\frac{d }{d t}\,g_t^{(\omega)}(U_\perp , U_\perp )\\ 
\nonumber\\
&&=\,2\,\int_{M}\,\left[\nabla\,\widehat{\psi}_{(t,U)}\cdot \nabla\,\bigtriangleup _\omega\,\widehat{\psi}_{(t,U)}\,-\,
\tfrac{1}{2}\,\bigtriangleup _\omega\,\left|\nabla\,\widehat{\psi}_{(t,U)} \right|^2_g\right]\,p^{(\omega)} _t(y , z)\,d\omega(y)\nonumber\\
\nonumber\\
&&=\,-\,2\,\int_{M}\,\left[\left|Hess\,\widehat{\psi}_{(t,U)}  \right|_g^2\right.\nonumber\\
\nonumber\\
&&\left.\,+\,\left(R_{ik}\,+\,Hess_{ik}\,f\right)\,\nabla^i\widehat{\psi}_{(t,U)}\,\nabla^k\widehat{\psi}_{(t,U)}   \right]\,p^{(\omega)} _t(y , z)\,d\omega(y)\nonumber\;,
\end{eqnarray}
where we have exploited the weighted Bochner--Weitzenb\"ock formula (\ref{wWB}).
\end{proof}

\noindent From Lemma  \ref{MGlemma}  and the expression (\ref{Riccicompt}) of the the Ricci tensor of $(M, g_t^{(\omega)})$  we directly get

\begin{thm} (The heat kernel induced Ricci flow).
\label{rftheom}
Along the   weighted heat kernel embedding $(0,\infty)\times M\ni (t,z)\longmapsto p_t^{(\omega)}(\cdot , z)d\omega(\cdot )\in \rm{Prob}_{ac}(M,g)$, the beta function $\beta(g_t^{(\omega)}))$ associated to the scale dependent metric $(M,\,g_t^{(\omega)})$ is provided by 
\\
\begin{eqnarray}
\label{GenRicPer}
&&\frac{d }{d t}\,g_t^{(\omega)}(U_\perp, W_\perp  )=\,-\,2\widetilde{R}ic^{(t)}(U_\perp, W_\perp  )\\
\nonumber\\
&&-\,2\,\int_M\,\left(\nabla \psi_{(t,U)}\cdot \nabla \nabla f\cdot \nabla \psi_{(t,W)}\right)\,p^{(\omega)} _t(y, z)d\omega(y)\nonumber\\
\nonumber\\
&&\,-\,2\,\int_M\,\left(Hess\,\widehat{\psi}_{(t,U)}\cdot\,Hess\,\widehat{\psi}_{(t,W)}\right)\,
 p^{(\omega)} _t(y, z)\,d\omega(y)\nonumber\;,    
\end{eqnarray}
where $\widetilde{R}ic^{(t)}$ denotes the Ricci curvature of the evolving metric $(M,\,g_t^{(\omega)})$.
\end{thm}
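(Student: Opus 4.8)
The plan is to read off the full bilinear beta function by polarizing the diagonal identity of Lemma~\ref{MGlemma} and then re-expressing its curvature contribution through the integral representation of the Ricci tensor established in Proposition~\ref{smRiccT}. Lemma~\ref{MGlemma} only records the quadratic form $\frac{d}{dt}g_t^{(\omega)}(U_\perp,U_\perp)$, whereas (\ref{GenRicPer}) asks for the symmetric bilinear form evaluated on two independent vectors $U_\perp,W_\perp\in T_zM$; the first task is therefore to justify the passage from the diagonal to the off-diagonal expression.

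First I would note that the assignment $U_\perp\longmapsto\widehat{\psi}_{(t,U)}$ is linear: the defining elliptic equation (\ref{carlo12H}) depends linearly on the datum $U_\perp$ and, by Proposition~\ref{CNprop}, admits a unique normalized solution, so the dependence on $U_\perp$ is genuinely linear (this is precisely the linearity already exploited in the proof of Proposition~\ref{CarlNicProp}). Consequently each of the two terms on the right-hand side of (\ref{genbetafunct}), namely $|Hess\,\widehat{\psi}_{(t,U)}|_g^2$ and $(R_{ik}+Hess_{ik}f)\,\nabla^i\widehat{\psi}_{(t,U)}\nabla^k\widehat{\psi}_{(t,U)}$, is a quadratic form in $U_\perp$ integrated against the fixed positive measure $p^{(\omega)}_t(\cdot,z)\,d\omega$. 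Since $\frac{d}{dt}g_t^{(\omega)}$ is itself a symmetric bilinear form, the polarization identity applied to both sides of (\ref{genbetafunct}) yields
\begin{eqnarray*}
\frac{d}{dt}g_t^{(\omega)}(U_\perp,W_\perp)&=&-2\int_M\left(Hess\,\widehat{\psi}_{(t,U)}\cdot Hess\,\widehat{\psi}_{(t,W)}\right)\,p^{(\omega)}_t(y,z)\,d\omega(y)\\
&&-2\int_M\left(\left(R_{ik}+Hess_{ik}f\right)\nabla^i\widehat{\psi}_{(t,U)}\,\nabla^k\widehat{\psi}_{(t,W)}\right)\,p^{(\omega)}_t(y,z)\,d\omega(y).
\end{eqnarray*}

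Then I would split the second integral into its curvature and its dilatonic pieces. The pure curvature piece $-2\int_M R_{ik}\,\nabla^i\widehat{\psi}_{(t,U)}\nabla^k\widehat{\psi}_{(t,W)}\,p^{(\omega)}_t\,d\omega$ equals $-2\int_M Ric(\nabla\widehat{\psi}_{(t,U)},\nabla\widehat{\psi}_{(t,W)})\,p^{(\omega)}_t\,d\omega$, which by the representation (\ref{Riccicompt}) of Proposition~\ref{smRiccT} is exactly $-2\,\widetilde{R}ic^{(t)}(U_\perp,W_\perp)$. The piece carrying $Hess_{ik}f$ becomes the middle dilatonic term of (\ref{GenRicPer}) once one writes $\nabla\nabla f=Hess\,f$, and the $Hess\,\widehat{\psi}_{(t,U)}\cdot Hess\,\widehat{\psi}_{(t,W)}$ piece is the final term. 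Reassembling the three contributions reproduces (\ref{GenRicPer}) verbatim.

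Because the analytic core of the argument---the $t$-derivative of $g_t^{(\omega)}$, the two integrations by parts through (\ref{firstrel}) and (\ref{green}), and the weighted Bochner--Weitzenb\"ock identity (\ref{wWB})---has already been discharged in Lemma~\ref{MGlemma}, and the Ricci representation is supplied by Proposition~\ref{smRiccT}, the only point requiring genuine care is the legitimacy of the polarization together with the tensorial reading of (\ref{Riccicompt}). Proposition~\ref{smRiccT} is stated using the $g_t^{(\omega)}$-orthonormal geodesic frame at $z$ in order to define the trace, so I would emphasize that both sides of (\ref{Riccicompt}) are tensorial in $(U_\perp,W_\perp)$ and hence hold for arbitrary pairs of tangent vectors, not merely for the basis vectors used to compute the trace; this is what allows the substitution inside the polarized formula. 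I expect this tensorial/frame consistency to be the main (and only) subtlety, everything else being a direct assembly of the cited results.
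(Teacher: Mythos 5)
Your proposal is correct and takes essentially the same route as the paper: the paper's proof likewise writes down the polarized (bilinear) form of Lemma \ref{MGlemma} for a pair $(U_\perp, W_\perp)$ and then invokes Proposition \ref{smRiccT} to identify the curvature integral with $-2\,\widetilde{R}ic^{(t)}(U_\perp, W_\perp)$, the remaining two terms being read off directly. Your explicit justification of the polarization via the linearity of $U_\perp\longmapsto\widehat{\psi}_{(t,U)}$ merely makes precise a step the paper takes for granted, and your concern about the tensorial reading of (\ref{Riccicompt}) is already discharged by the paper, since Proposition \ref{smRiccT} is stated for arbitrary $U_\perp,\,W_\perp\in T_zM$ and not only for the orthonormal frame vectors used to compute the trace.
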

\begin{proof}
According to Lemma \ref{MGlemma}, we have  (cf. (\ref{genbetafunct})),
\begin{eqnarray}
\label{dergbet}
&&\frac{d }{d t}\,g_t^{(\omega)}(U_\perp , W_\perp)(z)\,=\,-\,2\,\int_{M}\,\left[Hess\,\widehat{\psi}_{(t,U)}\cdot   Hess\,\widehat{\psi}_{(t,W)}\right.\nonumber\\
\\
&&\left.\,+\,R_{ab}\,\nabla^a\widehat{\psi}_{(t,U)}\,\nabla^b\widehat{\psi}_{(t,W)}   \right]\,p^{(\omega)} _t(y , z)\,d\omega(y)\nonumber\\
\nonumber\\
&&-2\,\int_{M}\,\nabla^a\widehat{\psi}_{(t,U)}\,Hess_{ab}\,f\,\nabla^b\widehat{\psi}_{(t,W)}\,p^{(\omega)} _t(y , z)\,d\omega(y)\nonumber\;.
\end{eqnarray}
For $t>0$  we can apply Proposition \ref{smRiccT} which computes the Ricci curvature of $(M,\,g_t^{(\omega)})$ at $z$ in  the direction of the $2$--plane spanned by the vectors $(U_\perp , W_\perp)\in T_zM$.
\end{proof}
The above result indicates a striking connection between the beta function for the scale dependent flow  $[t, (M,g)]\mapsto (M,\,g_t^{(\omega)})$, $t>0$, and (a generalized version of) the DeTurck--Hamilton version of the Ricci flow. This is further supported by the behavior of (\ref{GenRicPer})  in the singular limits  $t\searrow 0$ and $q\nearrow \infty $. The former  controls how the curve of heat kernel embeddings $(0,\infty)\times M\ni (t,z)\longmapsto p_t^{(\omega)}(\cdot , z)d\omega(\cdot )\in \rm{Prob}_{ac}(M,g)$ approaches the isometric embedding of $(M,g)$ in the 
non--smooth $\rm{Prob}(M)\supset \rm{Prob}_{ac}(M,g)$. The latter  is related to our choice (\ref{kprobmeas}) of the dilatonic measure $d\omega(q)$ localizing the NL$\sigma$M  maps $\phi:\Sigma\longrightarrow (M,g,\,d\omega)$  around the center of mass of the constant maps $\{\phi_{(k)}\}_{k=1}^q$, (see Section \ref{lacmadw}). The two limits clearly interact since the heat kernel $p_t^{(\omega)}$ explicitly depends on the choice of the measure $d\omega(q)$. Since for any given $d\omega(q)$ we have a well--defined flow (\ref{GenRicPer}), it is geometrically natural to consider the large $q$ limit as defined by 
\begin{equation}
\lim_{q\nearrow \infty }\left\{\lim_{t\searrow 0}\,\frac{d}{d t}\,g^{(\omega(q))}_t \right\}_{q\in\mathbb{N}}\;.
\end{equation}
With these remarks along the way we have
\begin{prop}
\label{tangbetaflow}
Let $[0,1]\ni s\mapsto \gamma_s$, $\gamma(0)\equiv z$ denote a geodesic in $(M,g)$. Then the beta function (\ref{GenRicPer}) associated with the weighted heat kernel 
embedding $(M,g,\,d\omega)\,\longrightarrow \, \left(\rm{Prob}_{ac}(M,g),\,d_g^W\right)$  
is tangent for $t=0$ and $q\nearrow \infty $ to the perturbative beta functions for the dilatonic non--linear $\sigma$ model
\begin{equation}
\label{carlobasic2}
\left.\frac{d}{dt}\,\left.\,g^{(\omega)}_t(\dot{\gamma}_s,\dot{\gamma}_s)\right|_{t=0}\right|_{q\rightarrow \infty }\,=\,-2\,
\left[Ric_g(\dot{\gamma}_s,\dot{\gamma}_s)\,+\,Hess\,f(\dot{\gamma}_s,\dot{\gamma}_s)\right]\;,
\end{equation}
\\
\begin{equation}
\label{basic5}
\left.\frac{d}{dt}\,\left.f_t^{(\omega)}\right|_{t=0}\right|_{q\rightarrow \infty }\,=\,\,\bigtriangleup _g\,f\,
-\,|\nabla\,f|^2_g\;,
\end{equation}
where the equality holds for almost every $s\in [0,1]$.
\end{prop}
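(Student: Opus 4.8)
The plan is to read off both limits from the exact beta function established in Theorem \ref{rftheom}, specializing to $U_\perp=W_\perp=\dot\gamma_s$ and treating the two singular limits $t\searrow 0$ and $q\nearrow\infty$ separately. The dilatonic equation (\ref{basic5}) is immediate and needs no heat-kernel asymptotics: by construction the running dilaton $f_t^{(\omega)}$ solves the parabolic problem (\ref{parabolicf}), so differentiating at $t=0$ and using $f_0^{(\omega)}=f$ gives $\partial_t f_t^{(\omega)}|_{t=0}=\bigtriangleup_g f-\nabla^i f\,\nabla_i f-\tfrac{2}{q}|\nabla f|_g^2=\bigtriangleup_g f-|\nabla f|_g^2-\tfrac{2}{q}|\nabla f|_g^2$. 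Letting $q\nearrow\infty$ annihilates the last term and yields exactly (\ref{basic5}). Hence all the real work sits in the metric equation (\ref{carlobasic2}).

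For the metric I would start from (\ref{GenRicPer}), which splits $\tfrac{d}{dt}g_t^{(\omega)}(\dot\gamma_s,\dot\gamma_s)$ into the Ricci piece $-2\widetilde{R}ic^{(t)}$, the drift piece $-2\int\nabla\psi_{(t,U)}\cdot\nabla\nabla f\cdot\nabla\psi_{(t,W)}\,p^{(\omega)}_t\,d\omega$, and the Hessian piece $-2\int Hess\,\widehat\psi_{(t,U)}\cdot Hess\,\widehat\psi_{(t,W)}\,p^{(\omega)}_t\,d\omega$. The first two converge by concentration: Varadhan's formula (\ref{largedevweigh}) and the asymptotics (\ref{asympt5}) give $p^{(\omega)}_t(\cdot,z)\,d\omega\rightharpoonup\delta_z$, while Lemma \ref{metricatzero} together with the variational characterization of $g_t^{(\omega)}$ forces $\nabla^{(y)}\widehat\psi_{(t,z,\dot\gamma_s)}(y)\to\dot\gamma_s$ on the concentration region as $t\searrow 0$. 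Since spatial derivatives of the weighted heat kernel commute with the $t\searrow 0$ limit off the cut locus (Malliavin--Stroock, \cite{malliavin}), dominated convergence against the concentrating measure gives, via Proposition \ref{smRiccT}, $\widetilde{R}ic^{(t)}(\dot\gamma_s,\dot\gamma_s)\to Ric_g(\dot\gamma_s,\dot\gamma_s)$ and $\int\nabla\psi\cdot\nabla\nabla f\cdot\nabla\psi\,p^{(\omega)}_t\,d\omega\to Hess\,f(\dot\gamma_s,\dot\gamma_s)$ for a.e.\ $s$.

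The main obstacle is the vanishing of the Hessian piece, $\int_M |Hess\,\widehat\psi_{(t,z,\dot\gamma_s)}|_g^2\,p^{(\omega)}_t(y,z)\,d\omega(y)\to 0$ as $t\searrow 0$; this is the weighted analogue of the estimate underlying the Gigli--Mantegazza tangent computation, and it is precisely why the statement is restricted to geodesic directions and to a.e.\ $s$. I would exploit the normal-coordinate construction of the lemma preceding Proposition \ref{Riemcurv}: off the cut locus of $z=\gamma_s$, the potential $\widehat\psi_{(t,z,\dot\gamma_s)}$ solving (\ref{carlo12H}) is asymptotically affine in the $(M,g)$-normal coordinates centered at $z$ — to leading order it is the linear function $y\mapsto\langle\dot\gamma_s,\exp_z^{-1}(y)\rangle$, whose Hessian vanishes, the subleading curvature- and $f$-dependent corrections producing a Hessian of size $O(d_g(y,z))$. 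Because $p^{(\omega)}_t(\cdot,z)\,d\omega$ concentrates on the scale $d_g(y,z)\sim\sqrt{t}$, this yields $\int|Hess\,\widehat\psi_{(t,z,\dot\gamma_s)}|_g^2\,p^{(\omega)}_t\,d\omega=O(t)$, while the cut-locus contribution is controlled through the viscosity-solution analysis of (\ref{eikonal})--(\ref{eiksol}) and is negligible for a.e.\ $s$. Collecting the three limits gives $\lim_{t\searrow 0}\tfrac{d}{dt}g_t^{(\omega)}(\dot\gamma_s,\dot\gamma_s)=-2[Ric_g(\dot\gamma_s,\dot\gamma_s)+Hess\,f(\dot\gamma_s,\dot\gamma_s)]$; since this limit is already structurally independent of $q$, the subsequent $q\nearrow\infty$ limit acts trivially and (\ref{carlobasic2}) follows.
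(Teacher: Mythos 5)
Your dilaton computation coincides with the paper's: evaluating (\ref{parabolicf}) at $t=0$, where $f^{(\omega)}_{t=0}=f$, gives $\partial_t f_t^{(\omega)}|_{t=0}=\bigtriangleup_g f-\tfrac{q+2}{q}|\nabla f|^2_g$, and $q\nearrow\infty$ yields (\ref{basic5}); likewise your observation that the $t\searrow 0$ limit of the metric equation is taken at fixed $q$ and then unaffected by $q\nearrow\infty$ matches the paper. The genuine gap is your treatment of the Hessian term. You assert that $\widehat\psi_{(t,z,\dot\gamma_s)}$ is asymptotically affine in normal coordinates with $Hess\,\widehat\psi=O(d_g(y,z))$, so that concentration at scale $\sqrt t$ gives $\int_M|Hess\,\widehat\psi_{(t,z,\dot\gamma_s)}|^2_g\,p^{(\omega)}_t\,d\omega=O(t)$. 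This is precisely the plausible heuristic that the paper explicitly flags as \emph{deceptively difficult to prove}: the defining equation (\ref{carlo12Hnew}) is a weighted Laplacian whose drift $\nabla\left(f-\ln p^{(\omega)}_t\right)$ blows up like $t^{-1}$ as $t\searrow 0$ (since $-\ln p^{(\omega)}_t\sim d^2_g(y,z)/4t$ by Varadhan), so no uniform elliptic estimate delivers your claimed expansion; moreover near the cut locus $Hess\,\ln p^{(\omega)}_t$ diverges \emph{faster} than $t^{-1}$, as the paper itself notes in the remark following Lemma \ref{Lagr}. The paper does not reprove this estimate: its proof of Proposition \ref{tangbetaflow} adapts Theorem 4.6 of \cite{Carlo} wholesale, whose proof of the vanishing Hessian term (Lemma 4.4 and Prop.\ 4.5 there) the paper describes as a ``technical tour de force''; it first obtains the integrated identity (\ref{carlobasic1}) along the geodesic and only then deduces the a.e.\ pointwise statement (\ref{carlobasic20}). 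The a.e.\ restriction in $s$ arises exactly from that integrated--to--pointwise passage, which your direct pointwise argument glosses over.

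Two secondary weaknesses: your appeal to the viscosity analysis of (\ref{eikonal})--(\ref{eiksol}) to dispose of the cut-locus contribution is misplaced, since in the paper that machinery serves to construct normal geodesic coordinates for $(M,g_t^{(\omega)})$ and says nothing about $Hess\,\widehat\psi_{(t,z,\dot\gamma_s)}$ near $Cut(z)$; and Lemma \ref{metricatzero} gives only convergence of the norm, $\int_M|\nabla\widehat\psi|^2_g\,p^{(\omega)}_t\,d\omega\to g(\dot\gamma_s,\dot\gamma_s)$, not the pointwise identification $\nabla\widehat\psi_{(t,z,\dot\gamma_s)}\to\dot\gamma_s$ on the concentration region that you invoke for the Ricci and drift pieces. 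Unless you either reproduce the weighted analogue of Gigli--Mantegazza's Lemma 4.4/Prop.\ 4.5 or cite it as the paper does, the central limit (\ref{carlobasic2}) is not established.
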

\begin{proof}
By directly adapting a basic result (Th. 4.6)  of \cite{Carlo}, one can control the $t\searrow 0$ limit in (\ref{genbetafunct}) and get
\begin{equation}
\label{carlobasic1}
\frac{d}{dt}\,\left.\int_0^1\,g^{(\omega)}_t(\dot{\gamma}_s,\dot{\gamma}_s)\,ds\right|_{t=0}\,=\,-2\,
\int_0^1\,\left[Ric_g(\dot{\gamma}_s,\dot{\gamma}_s)\,+\,Hess\,f(\dot{\gamma}_s,\dot{\gamma}_s)\right]\,ds\;,
\end{equation}
and almost everywhere for $s\in [0,1]$ we have 
\begin{equation}
\label{carlobasic20}
\frac{d}{dt}\,\left.\,g^{(\omega)}_t(\dot{\gamma}_s,\dot{\gamma}_s)\right|_{t=0}\,=\,-2\,
\left[Ric_g(\dot{\gamma}_s,\dot{\gamma}_s)\,+\,Hess\,f(\dot{\gamma}_s,\dot{\gamma}_s)\right]\;.
\end{equation}
If we couple this expression with  the flow (\ref{parabolicf}) for the function $f^{\,(\omega)}_t$, evaluated for $t=0$, 
\begin{eqnarray}
\label{parabolicf00}
\left.\frac{\partial}{\partial t}\,f^{\,(\omega)}_t\right|_{t=0}\,=\,\bigtriangleup _g\,f\,
-\,\tfrac{2+q}{q}\,|\nabla\,f|^2_g\;,
\end{eqnarray}
we immediately get the stated result in the $q\nearrow \infty $ limit. It is worthwhile stressing that the actual proof of (\ref{carlobasic20}) in \cite{Carlo}, on which the above result heavily relies, is rather technical. By contrast, the underlying rationale is simple. To wit, since the (Kantorovich) potential  $\widehat{\psi}_{(t,\dot{\gamma}_s)}$ in (\ref{genbetafunct0}), (for $U_\perp \equiv \dot{\gamma}_s$),  is generated by the tangent to the geodesic $s\longmapsto \gamma_s$, one expects that  $\left|Hess\,\widehat{\psi}_{(t,\dot{\gamma}_s)}  \right|_{t=0}$ vanishes at $\delta_z$ and that it should remain small for $0<\,t\,<\,\varepsilon $, with $\varepsilon $ small enough.  As reasonable as it appears, the vanishing of $Hess\,\widehat{\psi}_{(t,\dot{\gamma}_s)}$ for $t\searrow 0$ is deceptively difficult to prove, and commands a technical tour de force, (cf. Lemma 4.4 and Prop. 4.5 in \cite{Carlo}).    
\end{proof}
It is also important to observe that we can connect the singular limit $t\searrow 0$, $q\nearrow \infty $ of the beta function (\ref{GenRicPer})  to the Hamilton--Perelman Ricci flow. To this end we couple the evolutions of the rescaled dilaton  $f^{(\omega)}_t$ and of the metric $g^{(\omega)}_t$ via the  probability measure $d\omega$. In particular if we require that $\frac{d}{dt}\,d\mu_{h_t^{(q)}}|_{t=0}\,=\,0$, 
(recall that $d\mu_{h_t^{(q)}}|_{t=0}=d\omega$), we have
\begin{prop}
\label{HamPerTangency}
Under the constraint 
\begin{equation}
\left.\frac{d}{dt}\,d\mu_{h_t^{(q)}}\right|_{t=0}\,=\,0\;
\end{equation} 
the beta function (\ref{GenRicPer})
 is tangent, for $t=0$ and $q\nearrow \infty $, to the generators of the  Hamilton--Perelman Ricci flow according to 
\begin{equation}
\label{carlobasic2}
\left.\frac{d}{dt}\,\left.\,g^{(\omega)}_t(\dot{\gamma}_s,\dot{\gamma}_s)\right|_{t=0}\right|_{q\rightarrow \infty }\,=\,-2\,
\left[Ric_g(\dot{\gamma}_s,\dot{\gamma}_s)\,+\,Hess\,f(\dot{\gamma}_s,\dot{\gamma}_s)\right]\;,
\end{equation}
\\
\begin{equation}
\label{basic5}
\left.\frac{d}{dt}\,\left.f_t^{(\omega)}\right|_{t=0}\right|_{q\rightarrow \infty }\,=\,-\,\triangle _g\,f\,-\,R^{(g)}\;,
\end{equation}
where the equality holds for almost every $s\in [0,1]$.
\end{prop}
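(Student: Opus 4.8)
The plan is to split the statement into its two components and feed them from different sources: the metric equation (\ref{carlobasic2}) is \emph{identical} to the one already established in Proposition \ref{tangbetaflow} and is untouched by the volume constraint, while the dilaton equation (\ref{basic5}) is to be extracted from the constraint itself together with the $g$--trace of the metric beta function. First I would stress that the constraint $\tfrac{d}{dt}\,d\mu_{h_t^{(q)}}|_{t=0}=0$ does not enter the definition of $g_t^{(\omega)}$, which by (\ref{gt}) depends only on the weighted heat kernel $p_t^{(\omega)}$ and the fixed measure $d\omega$; hence it leaves the tensorial beta function (\ref{GenRicPer}) untouched and acts purely as a \emph{selection principle} fixing the initial velocity of the dilaton (in contrast to the unconstrained parabolic evolution (\ref{parabolicf}) used in Proposition \ref{tangbetaflow}). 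Taking the singular limits $t\searrow0$ and $q\nearrow\infty$ exactly as there, the metric part (\ref{carlobasic2}) then follows verbatim, the analytic core (the vanishing of $\lim_{t\searrow0}Hess\,\widehat{\psi}_{(t,\dot{\gamma}_s)}$, adapted from Lemma 4.4 and Prop. 4.5 of \cite{Carlo}) being already in place.

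For the dilaton equation I would use (\ref{newdmug}) to write the running volume form of the warped manifold as $d\mu_{h_t^{(q)}}=e^{-f_t^{(\omega)}}\,d\mu_{g_t^{(\omega)}}$ and differentiate it at $t=0$, where $g_0^{(\omega)}=g$ by Lemma \ref{metricatzero}. Using $\tfrac{d}{dt}d\mu_{g_t^{(\omega)}}=\tfrac12\,\mathrm{tr}_g\!\big(\tfrac{d}{dt}g_t^{(\omega)}\big)\,d\mu_g$ at $t=0$, this gives
\begin{equation}
\left.\frac{d}{dt}\,d\mu_{h_t^{(q)}}\right|_{t=0}=\left(-\left.\frac{d}{dt}f_t^{(\omega)}\right|_{t=0}+\tfrac12\,\mathrm{tr}_g\!\left[\left.\frac{d}{dt}g_t^{(\omega)}\right|_{t=0}\right]\right)e^{-f}\,d\mu_g\;.
\end{equation}
The key step is then to insert the trace of the metric flow. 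Since (\ref{carlobasic2}) holds for the velocity of an arbitrary geodesic through $z$, by polarization it upgrades, almost everywhere in the $q\nearrow\infty$ limit, to the full tensor identity $\tfrac{d}{dt}g_t^{(\omega)}|_{t=0}=-2\,(Ric_g+Hess\,f)$; evaluating its $g$--trace on a $g$--orthonormal basis yields $\mathrm{tr}_g[\tfrac{d}{dt}g_t^{(\omega)}|_{t=0}]=-2\,(R^{(g)}+\triangle_g f)$. Substituting this into the previous display and imposing $\tfrac{d}{dt}\,d\mu_{h_t^{(q)}}|_{t=0}=0$ forces
\begin{equation}
\left.\frac{d}{dt}f_t^{(\omega)}\right|_{t=0}=-\,\triangle_g f-R^{(g)}\;,
\end{equation}
which is precisely (\ref{basic5}). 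This is the exact analogue, in the heat--kernel setting, of the passage from the DeTurck--Hamilton flow (\ref{2fRF}) to the Perelman flow (\ref{2fRFPer}) recorded in the Remark of Section \ref{selfcont}: measure preservation trades the unconstrained dilaton evolution for its backward Perelman counterpart coupled to the scalar curvature.

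The main obstacle I anticipate is not the algebra but the rigorous justification of commuting the trace and the volume derivative with the two singular limits $t\searrow0$, $q\nearrow\infty$, given that $f$ is only Lipschitz (Lemma \ref{finW}) and $\triangle_g f$ carries a singular part supported on $\cup_k\,\mathrm{Cut}(\phi_{(k)})$. One must therefore read the identity $\tfrac{d}{dt}d\mu_{g_t^{(\omega)}}=\tfrac12\,\mathrm{tr}_g(\dot g)\,d\mu_g$ and the resulting equation for $\tfrac{d}{dt}f_t^{(\omega)}|_{t=0}$ in the distributional (and $\mathcal{H}^{n-1}$--almost everywhere) sense, controlling the cut--locus contribution exactly as in the mollifier description of $Hess\,d_g^2$ recalled in the Remark following Lemma \ref{Lagr}. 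Once that is granted, the polarization and trace computations are routine, and the equality holds for almost every $s\in[0,1]$ as stated.
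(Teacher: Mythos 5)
Your proof is correct, but it takes a genuinely different route from the paper's. The paper never treats the constraint as fixing the dilaton velocity: in its framework $f_t^{(\omega)}$ is always evolved by the heat semigroup (\ref{heatF}), so $\tfrac{d}{dt}f_t^{(\omega)}\big|_{t=0}=\triangle_g f-\tfrac{2+q}{q}|\nabla f|^2_g$ is already determined by (\ref{parabolicf00}), exactly as in Proposition \ref{tangbetaflow}; the constraint is instead converted --- via the warped-product Ricci formulas for $h^{(q)}$, the identity (\ref{carlobasic3}), and the scalar curvature $R^{(h)}=R^{(g)}+2\triangle_g f-\tfrac{q+1}{q}|\nabla f|_g^2$ --- into a condition on the initial \emph{data}, $|\nabla f|_g^2=R^{(g)}+2\triangle_g f$, which is then substituted back into the heat-equation velocity to yield (\ref{basic50}) and, as $q\nearrow\infty$, the stated (\ref{basic5}). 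Your derivation short-circuits all the warped-product curvature computations: factorizing $d\mu_{h_t}=e^{-f_t^{(\omega)}}d\mu_{g_t^{(\omega)}}$, differentiating at $t=0$ (where $g_0^{(\omega)}=g$ by Lemma \ref{metricatzero}), and inserting the $g$--trace of the polarized metric flow gives $\dot f\big|_{t=0}=\tfrac12\,\mathrm{tr}_g\,\dot g=-\,(R^{(g)}+\triangle_g f)$ directly --- indeed exactly at every finite $q$, whereas the paper's route leaves a residual $-\tfrac{2}{q}|\nabla f|^2_g$ in (\ref{basic50}) that disappears only in the limit (an artifact of $O(1/q)$ terms dropped in passing to (\ref{carlobasic3})); in this respect your argument is both shorter and sharper. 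What your route obscures is the point the paper's computation makes automatic: since $\dot f\big|_{t=0}$ is \emph{not} free but pinned by (\ref{parabolicf00}), your framing of the constraint as a ``selection principle fixing the initial velocity of the dilaton'' is inaccurate --- the constraint restricts the admissible data $(g,f)$, and the velocity you extract agrees with the heat-equation one precisely when the compatibility condition $\tfrac{2+q}{q}|\nabla f|^2_g=R^{(g)}+2\triangle_g f$ holds. This does not invalidate your conditional deduction (constraint $\Rightarrow$ (\ref{basic5})), but it should be stated, both to close the loop with (\ref{parabolicf00}) and because it is exactly what makes your invoked analogy with the passage from (\ref{2fRF}) to (\ref{2fRFPer}) literal rather than formal. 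Your handling of the polarization step and of the distributional reading near the cut locus is appropriate and consistent with the paper's level of rigor.
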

\begin{proof}
To prove this result, it is useful to recast (\ref{carlobasic2}) in terms of the metric $h_t$ defined by (\ref{newmetr0}). Denote by $\{y^i\}_{1=1}^n$ local (geodesic) coordinates on $(M,g_t)$, and $\{\zeta ^\alpha\}_{1=1}^q$ local coordinates on $(\mathbb{T}^q, \delta)$. Then, according to a standard formula for warped product metrics  (cf. \emph{e.g.} p.46 of \cite{Glickenstein}), we have
\begin{equation}
R^{(h)}_{ij}\,=\,R^{(g)}_{ij}\,+\,\nabla _i\nabla_j\,f\,
-\,\frac{1}{q}\,\nabla _i\,f\,\nabla_j\,f\;,
\end{equation}
\begin{equation}
R^{(h)}_{\alpha\beta}\,=\,\frac{1}{q}\,e^{-\,\tfrac{2f(y)}{q}}\,\delta _{\alpha\beta}\,\left(
\triangle _g\,f\,
-\,|\nabla\,f|_g^2\right)\;,
\end{equation}
\begin{equation}
R^{(h)}_{i\alpha}\,=\,0\,,\;\;\;\;\;\;i=1,\ldots,n,\;\;\;\alpha=1,\ldots,q\;,
\end{equation}
where $R^{(g)}_{ij}$ denotes the components of the Ricci tensor of $(M,g)$, (and similarly the subscript $g$ refers to gradient, hessian, and Laplacian relative to the metric $g$). From
\begin{equation}
\label{dtfactorize}
\frac{d}{dt}\,h_t\,=\,\frac{d}{dt}\,g_t^{(\omega)}\,-
\,\frac{2}{q}\,e^{-\,\tfrac{2f_t^{(\omega)}}{q}}\,\delta_{\mathbb{T}^q}\,\frac{d}{dt}\,f_t^{(\omega)}\;,
\end{equation}
and (\ref{parabolicf00}), we get 
\begin{equation}
\label{carlobasic3}
\left.\frac{d}{dt}\,h_t\right|_{t=0}\,=\,-2\,Ric^{(h)}\,-\frac{2}{q}\,\nabla f\otimes \nabla f\;.
\end{equation} 
If we impose, at $t=0$, the preservation along (\ref{carlobasic3}) of the probability measure density $d\mu_{h_t}$, (note that $d\mu_{h_t}|_{t=0}=d\omega$), then we get  
\begin{equation}
\label{preservolume}
\left.\tfrac{d}{dt}\,d\mu_{h_t}\right|_{t=0}\,=\,
\left.\tfrac{d}{dt}\,e^{-\,\tfrac{2f_t^{(\omega)}}{q}}\,d\mu_{g_t}\right|_{t=0}\,=\,0\;.
\end{equation}
From (\ref{carlobasic3})  we  compute
\begin{equation}
\left.\tfrac{d}{dt}\,d\mu_{h_t}\right|_{t=0}\,=\,-\,\left[R^{(h)}+\tfrac{1}{q}\,|\nabla\,f|^2_g\right]\,d\mu_h
\,=\,0\;,
\end{equation}
 where
\begin{equation}
R^{(h)}\,=\,R^{(g)}\,+\,2\triangle_g\,f\,-\,\frac{q+1}{q}\,|\nabla\,f|_g^2\;,
\end{equation}
denotes the scalar curvature of the metric $h$ expressed in terms of the scalar curvature $R^{(g)}$ of $(M,g)$. Thus, 
 \begin{equation}
 \left.\tfrac{d}{dt}d\mu_{h_t}\right|_{t=0}\,=\,0\,\Rightarrow \,|\nabla\,f|_g^2\,=\,
 R^{(g)}\,+\,2\triangle_g\,f\;,
 \end{equation}
 which, introduced in (\ref{parabolicf00}), provides the backward heat equation
 \begin{equation}
\label{basic50}
\frac{d}{dt}\,\left.f_t^{(\omega)}\right|_{t=0}\,=\,-\,\triangle _g\,f\,-\,R^{(g)}\,-\,\frac{2}{q}\,\,|\nabla\,f|_g^2\;.
\end{equation}
It follows that the flow $t\longmapsto h_t$, defined by (\ref{genbetafunct}) by enforcing the volume density preservation 
(\ref{preservolume}),   has a tangent which, at $t=0$, and for $q\nearrow \infty $, reduces to the tangent vector defining the  Hamilton--Perelman flow.
\end{proof}
The strong similarity between (\ref{GenRicPer}) and the (DeTurck version of the) Ricci flow, and the tangency conditions described in Propositions \ref{tangbetaflow} and \ref{HamPerTangency}, may suggest that (\ref{GenRicPer}) is indeed the Ricci flow in disguise. In the case of the standard heat kernel embedding  \cite{Carlo}, the induced flow on the distance function, tangential to the Ricci flow for $t=0$, is well defined for any $t\geq 0$, and with strong control on the topology of $M$ and good continuity properties with respect to (measured) Gromov--Hausdorff convergence. As argued in \cite{Carlo}, these properties strongly contrast with the typical behavior of the Ricci flow, characterized by the development of curvature singularities and by a poor control on Gromov--Hausdorff limits of sequences of Ricci evolved manifolds. The  explicit expression (\ref{GenRicPer}) for the heat kernel induced flow $(t,g)\longmapsto  g_t^{(\omega)}$, and in particular the presence of the norm--contracting term $\int_M\,|Hess\,\widehat{\psi}_{(t,U)}|^2_g\,p^{(\omega)} _t(y, z)\,d\omega(y)$, (cf. (\ref{GenRicPer}) for $U_\perp =W_\perp $), indicates that along the flow $(t,g)\longmapsto  g_t^{(\omega)}$ there is a strong control of the metric geometry of $g_t^{(\omega)}$.  To provide evidence in this direction without belaboring on the subtle aspects of measured Gromov--Hausdorff convergence we describe the monotonicity properties of (\ref{GenRicPer}). Not surprisingly, they turn to be somehow stronger than those in Ricci flow theory.

\subsection{Monotonicity and gradient flow properties}

 From Lemma \ref{Lagr} and \ref{MGlemma}  we get  
\begin{prop} 
\label{decrprop}
Let
\begin{eqnarray}
&&(0,\infty)\,\ni \,t\longmapsto \varphi (t)\in \mathcal{D}iff(M)\nonumber\\
\label{fidiff}\\
&&\frac{\partial }{\partial t}\varphi (t)\,=\,X_{(t,z)},\;\;\;\;\;\lim_{t\searrow 0}\varphi (t)\,=\,Id_M\nonumber
\end{eqnarray}
be the curve of diffeomorphisms generated by the vector field $X_{(t,z)}$ defined by (\ref{Xvectfield}). Then, for any $U_\perp,\, V_\perp\in T_zM$,
\begin{eqnarray}
\label{LieMonot0}
&&\left(\varphi^{-1}(t)\right)^*\frac{d}{d t}\,\left(\varphi^* g_t^{(\omega)}\right)(U_\perp , V_\perp )\\
\nonumber\\
&&=\,-2\,\int_M\,\widehat{\psi}_{(t,U)} \,\bigtriangleup^2 _{p _t\,(d\omega)}\,\widehat{\psi}_{(t,V)}\,
p^{(\omega)} _t(y , z)\,d\omega(y)\;,\nonumber
\end{eqnarray}
In particular, the pull-back $\left(\varphi^* g_t^{(\omega)}\right)(U_\perp , U_\perp )$ is a monotonically decreasing function of $t$,
\begin{eqnarray}
\label{LieMonot}
&&\left(\varphi^{-1}(t)\right)^*\frac{d}{d t}\,\left(\varphi^* g_t^{(\omega)}\right)(U_\perp , U_\perp )\\
\nonumber\\
&&=\,-2\,\int_M\,\left(\bigtriangleup _{p _t\,(d\omega)}\,\widehat{\psi}_{(t,U)}  \right)^2\,
p^{(\omega)} _t(y , z)\,d\omega(y)\,< \,0\;,\nonumber
\end{eqnarray}
and moreover we have
\begin{eqnarray}
\label{monotpsi}
&&\frac{d}{d t}\,\left(\left(\varphi^{-1}(t)\right)^*\frac{d}{d t}\,\left(\varphi^* g_t^{(\omega)}\right)  \right)(U_\perp , U_\perp )\nonumber\\
\\
&&\,=\,4\,\int_M \left|\nabla^{(y)}\left(\bigtriangleup _{p _t\,(d\omega)}^{(y)}\,\widehat{\psi}_{(t,z,U_\perp )}(y)  \right)\right|^2\,p^{(\omega)} _t(y , z)\, d\omega(y)\;.\nonumber
\end{eqnarray} 
\end{prop}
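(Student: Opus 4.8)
The three identities follow from combining Lemma~\ref{Lagr} (the Lie derivative of $g_t^{(\omega)}$ along $X_{(t,z)}$) with Lemma~\ref{MGlemma} (the $t$--derivative of $g_t^{(\omega)}$), exploiting throughout the Green formula (\ref{green}) for the weighted Laplacians $\bigtriangleup_\omega$ and $\bigtriangleup_{p_t(d\omega)}$. The starting point is the elementary identity for the pull--back along the flow $\varphi(t)$ generated by $X_{(t,z)}$,
\begin{equation}
\left(\varphi^{-1}(t)\right)^*\frac{d}{d t}\,\left(\varphi^* g_t^{(\omega)}\right)\,=\,\frac{d}{d t}\,g_t^{(\omega)}\,+\,\mathcal{L}_{X_{(t,z)}}\,g_t^{(\omega)}\;,\nonumber
\end{equation}
which is just the infinitesimal version of the chain rule for pulling back a time--dependent tensor field. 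So the first step is to add the right--hand sides of (\ref{genbetafunct}) and (\ref{hessianlieder}).

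**Assembling the two contributions.** Adding (\ref{genbetafunct}) for $(U_\perp,V_\perp)$ and (\ref{hessianlieder}) for $(U_\perp,V_\perp)$, the Ricci and $Hess\,f$ terms of Lemma~\ref{MGlemma} must cancel against the $Hess\,\ln p_t^{(\omega)}$ term of Lemma~\ref{Lagr}. Here the weighted Bochner--Weitzenb\"ock formula (\ref{wWBheat}) is the key algebraic engine: it rewrites $R_{ik}+Hess_{ik}(f-\ln p_t^{(\omega)})$, so that the combination $(R_{ik}+Hess_{ik}f)$ from the flow and the $Hess_{ik}\ln p_t^{(\omega)}$ from the Lie derivative collapse into $\bigtriangleup_{p_t(d\omega)}$--terms acting on $|\nabla\widehat{\psi}|^2_g$ together with the $|Hess\,\widehat{\psi}|^2_g$ piece. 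After applying (\ref{green}) to move the weighted Laplacian off $|\nabla\widehat{\psi}|^2_g$ and onto the test potential, and after using the defining relation (\ref{carlo12Hnew}), $\bigtriangleup_{p_t(d\omega)}\widehat{\psi}_{(t,z,U_\perp)}=-U_\perp\cdot\nabla^{(z)}\ln p_t^{(\omega)}$, everything should reorganize into the single bilinear expression $-2\int_M \widehat{\psi}_{(t,U)}\,\bigtriangleup^2_{p_t(d\omega)}\,\widehat{\psi}_{(t,V)}\,p_t^{(\omega)}\,d\omega$, which is (\ref{LieMonot0}). Setting $V_\perp=U_\perp$ and integrating by parts once via (\ref{green}) turns $\int \widehat{\psi}_{(t,U)}\bigtriangleup^2_{p_t(d\omega)}\widehat{\psi}_{(t,U)}\,p_t^{(\omega)}d\omega$ into $\int(\bigtriangleup_{p_t(d\omega)}\widehat{\psi}_{(t,U)})^2\,p_t^{(\omega)}d\omega$, giving (\ref{LieMonot}) and its sign.

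**The second derivative.** For (\ref{monotpsi}) I would differentiate (\ref{LieMonot}) in $t$. The crucial input is the induced heat equation (\ref{zlimit3}), which states that $\bigtriangleup_{p_t(d\omega)}\widehat{\psi}_{(t,z,U_\perp)}$ evolves under a weighted drift heat operator; differentiating the integrand $(\bigtriangleup_{p_t(d\omega)}\widehat{\psi}_{(t,U)})^2\,p_t^{(\omega)}$ and inserting (\ref{zlimit3}) together with the heat equation (\ref{heatfloweight}) for $p_t^{(\omega)}$ itself lets the $\partial_t$ be replaced by spatial operators. After integrating by parts with (\ref{green}), the drift terms and the $\bigtriangleup_\omega p_t^{(\omega)}$ terms should combine so that only the Dirichlet--type quantity $4\int|\nabla(\bigtriangleup_{p_t(d\omega)}\widehat{\psi}_{(t,U)})|^2\,p_t^{(\omega)}d\omega$ survives, yielding the stated positivity and hence the convexity of $t\mapsto(\varphi^*g_t^{(\omega)})(U_\perp,U_\perp)$.

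**The main obstacle.** The delicate point is the bookkeeping in the second step: ensuring that the first-order $Hess$ cross--terms, the drift contributions $\nabla\ln p_t^{(\omega)}\cdot\nabla$, and the $Hess\,f$ terms cancel \emph{exactly} rather than leaving a residual. This hinges on using (\ref{carlo12Hnew}) and (\ref{wWBheat}) in a coordinated way, and on the compatibility of the two weighted Laplacians $\bigtriangleup_\omega$ and $\bigtriangleup_{p_t(d\omega)}$ through $\bigtriangleup_{p_t(d\omega)}=\bigtriangleup_\omega+\nabla\ln p_t^{(\omega)}\cdot\nabla$. For (\ref{monotpsi}) the analogous hazard is correctly commuting $\partial_t$ past $\bigtriangleup_{p_t(d\omega)}$, whose coefficients themselves depend on $t$ through $\ln p_t^{(\omega)}$; the identity (\ref{zlimit3}) is precisely what absorbs this $t$--dependence, so the proof stands or falls on invoking it rather than differentiating the operator naively.
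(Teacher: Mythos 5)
Your proposal follows essentially the same route as the paper: the DeTurck-type identity converting the pulled-back derivative into $\tfrac{d}{dt}g_t^{(\omega)}+\mathcal{L}_{X_{(t,z)}}g_t^{(\omega)}$, the combination of Lemma \ref{MGlemma} with Lemma \ref{Lagr}, the weighted Bochner--Weitzenb\"ock machinery for $\bigtriangleup_{p_t(d\omega)}$ to repackage the result as the biharmonic expression (this is precisely the paper's integral identity (\ref{sqlap0biss})), and the induced heat equation (\ref{zlimit3}) together with (\ref{heatfloweight}) plus integration by parts for the second derivative --- your warning against naively differentiating the $t$-dependent operator is exactly the point the paper's computation of $\tfrac{d}{dt}A(z,t)$ handles. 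Two small corrections are in order. First, the Ricci and $Hess\,f$ terms of Lemma \ref{MGlemma} do not \emph{cancel} against the $Hess\,\ln p_t^{(\omega)}$ term of Lemma \ref{Lagr}: they combine into the modified tensor $R_{ik}+Hess_{ik}\,(f-\ln p_t^{(\omega)})$, which is then absorbed, together with the $Hess\,\widehat{\psi}_{(t,U)}\cdot Hess\,\widehat{\psi}_{(t,V)}$ piece, into $\int_M\widehat{\psi}_{(t,U)}\,\bigtriangleup^2_{p_t(d\omega)}\widehat{\psi}_{(t,V)}\,p_t^{(\omega)}d\omega$ via the pointwise Bochner identity (\ref{wWBheat}); your next sentence describes this mechanism correctly, and note that the defining relation (\ref{carlo12Hnew}) is not actually needed in that step. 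Second, your argument as stated only yields $\leq 0$ in (\ref{LieMonot}); the \emph{strict} inequality claimed in the proposition requires the additional observation that $\bigtriangleup_{p_t(d\omega)}\widehat{\psi}_{(t,U)}\equiv 0$ would force $U_\perp=0$ by Proposition \ref{CNprop} and (\ref{carlo12Hnew}) --- this is where the defining PDE genuinely enters, rather than in the derivation of (\ref{LieMonot0}).
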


\begin{proof}
We can factor out the $\varphi (t)$ in (\ref{LieMonot}) by exploiting the familiar DeTurck argument \cite{DeTurck}, (cf. \cite{6} for details). Explicitly, for $0\,<\,t\,<\,\epsilon $, one computes 
\begin{eqnarray}
&&\frac{d}{d t}\,\left(\varphi^*(t)\, g_t^{(\omega)}\right)\,=\,\left.
\frac{d}{d s}\right|_{s=0}\,\left(\varphi^*(t+s)\, g_{t+s}^{(\omega)}\right)\\
\nonumber\\
&&=\,\varphi^*(t)\,\left(\frac{d}{d t}\, g_t^{(\omega)}\right)\,+ 
\,\left.
\frac{d}{d s}\right|_{s=0}\,\left(\varphi^*(t+s)\, g_{t}^{(\omega)}\right)  \nonumber\\
\nonumber\\
&&=\,\varphi^*(t)\,\left(\frac{d}{d t}\, g_t^{(\omega)}\right)\,+ \,
\varphi^*(t)\,\left(\mathcal{L}_{X_{(t,z)}}\, g_t^{(\omega)}\right)\;.\nonumber
\end{eqnarray}
Hence, we get
\begin{equation}
\left(\varphi^{-1}(t)\right)^*\frac{d}{d t}\,\left(\varphi^*(t)\, g_t^{(\omega)}\right)\,=\,
\frac{d}{d t}\, g_t^{(\omega)}\,+\,\mathcal{L}_{X_{(t,z)}}\, g_t^{(\omega)}\;,
\end{equation}
which can be identified with the convective derivative of $g_t^{(\omega)}$ along the curve of $t$--dependent diffeomorphisms (\ref{fidiff}) associated with the flow 
$(t,\delta _z)\longmapsto \left(t,\,p^{(\omega)} _t(\cdot , z)d\omega(\cdot )\right)$,\;$t\in (0, \infty )$.
According to Lemma  \ref{Lagr} we have 
\begin{eqnarray}
\label{comoving20}
&&\left(\varphi^{-1}(t)\right)^*\frac{d}{d t}\,\left(\varphi^*(t)\, g_t^{(\omega)}\right)(U_\perp , V_\perp )\\
\nonumber\\
&&=\frac{d }{d t} \int_{M}\,g_{ik}(y)\,\nabla ^{i}_{(y)}\,\widehat{\psi}_{(t,U)}\,\nabla ^{k}_{(y)}\,\widehat{\psi}_{(t,V)}\,p^{(\omega)} _t(y , z)\,d\omega(y)\;,\nonumber\\
\nonumber\\
&&+2\,\int_{M}\,Hess_{ik}^{(y)}\,\ln p^{(\omega)} _t(y, z)\,\nabla ^{i}_{(y)}\,\widehat{\psi}_{(t,U)}\,\nabla ^{k}_{(y)}\,\widehat{\psi}_{(t,V)}\,p^{(\omega)} _t(y , z)\,d\omega(y)\nonumber\\
\nonumber\\
&&=\,-2\,\int_{M}\,\left[Hess\,\widehat{\psi}_{(t,U)}\cdot Hess\,\widehat{\psi}_{(t,V)}  \right.
\nonumber\\
\nonumber\\
&&\left.\,+\,\left(R_{ik}\,+\,Hess_{ik}\,\left(f-\ln\,p^{(\omega)} _t\right)\right)\,\nabla^i\widehat{\psi}_{(t,U)}\,\nabla^k\widehat{\psi}_{(t,V)} \right]\,p^{(\omega)} _t(y , z)\,d\omega(y)\nonumber\;,
\end{eqnarray}
where we have taken into account (\ref{genbetafunct}). On the other hand for any  $t>0$ and the identities 
\begin{eqnarray}
&&\bigtriangleup _{p _t\,(d\omega)}\left(\nabla \widehat{\psi}_{(t,U)}\,\cdot \,\nabla 
\widehat{\psi}_{(t,V)}  \right)\,=\,\nabla \widehat{\psi}_{(t,U)}\,\cdot \bigtriangleup _{p _t\,(d\omega)}
\nabla \widehat{\psi}_{(t,V)}\nonumber\\
\\
&&+\,\nabla \widehat{\psi}_{(t,V)}\,\cdot \bigtriangleup _{p _t\,(d\omega)}
\nabla \widehat{\psi}_{(t,U)}\,+\,2\,Hess\,\widehat{\psi}_{(t,U)}\cdot Hess\,\widehat{\psi}_{(t,V)}\;,\nonumber
\end{eqnarray} 
and 
\begin{eqnarray}
&&\nabla \widehat{\psi}_{(t,U)}\,\cdot \,\nabla \,\left(\bigtriangleup_{p_t(d\omega)} \,\widehat{\psi}_{(t,V)}\right)\,=\,\bigtriangleup _{p _t\,(d\omega)}\left(\nabla \widehat{\psi}_{(t,U)}\,\cdot \,\nabla 
\widehat{\psi}_{(t,V)}  \right)\nonumber\\
\\
&&\,-\,\nabla \widehat{\psi}_{(t,U)}\,\cdot \bigtriangleup _{p _t\,(d\omega)}
\nabla \widehat{\psi}_{(t,V)}-\,2\,Hess\,\widehat{\psi}_{(t,U)}\cdot Hess\,\widehat{\psi}_{(t,V)}\\
\nonumber\\
&&\,-\,\nabla^i\widehat{\psi}_{(t,U)}\,
\left(R_{ik}\,+\,Hess_{ik}\,\left(f-\ln\,p^{(\omega)} _t\right)\right)\,\nabla^k\widehat{\psi}_{(t,V)}\;, \nonumber
\end{eqnarray}
(the former a direct consequence of the definition (\ref{heatweight}) of the heat kernel weighted 
laplacian $\bigtriangleup_{p_t(d\omega)}$, the latter a  weighted  Bochner--Weitzenb\"ock Ricci tensor identity for $\bigtriangleup_{p_t(d\omega)}$, easily derived with a long but otherwise standard computation), we get the integral relation
\begin{eqnarray}
\label{sqlap0biss}
&&\int_M\,\widehat{\psi}_{(t,U)} \,\bigtriangleup^2 _{p _t\,(d\omega)}\,\widehat{\psi}_{(t,V)}\,
p^{(\omega)} _t(y , z)\,d\omega(y)\\
\nonumber\\
&&=\,-\,\int_{M}\,\nabla \widehat{\psi}_{(t,U)}\,\cdot \,\nabla \,\left(\bigtriangleup_{p_t(d\omega)} \,\widehat{\psi}_{(t,V)}\right)\,p^{(\omega)} _t(y , z)\,d\omega(y)\nonumber
\nonumber\\
\nonumber\\
&&=\,\int_{M}\,\left[Hess\,\widehat{\psi}_{(t,U)}\cdot Hess\,\widehat{\psi}_{(t,V)} \right.
\nonumber\\
\nonumber\\
&&\left.\,+\,\left(R_{ik}\,+\,Hess_{ik}\,\left(f-\ln\,p^{(\omega)} _t\right)\right)\,\nabla^i\widehat{\psi}_{(t,U)}\,\nabla^k\widehat{\psi}_{(t,V)} \right]\,p^{(\omega)} _t(y , z)\,d\omega(y)\nonumber\;.
\end{eqnarray}
By taking into account (\ref{comoving20}), this proves (\ref{LieMonot0}). If we set $U=V$ in the above relations  then we get the integrated weighted  Bochner--Weitzenb\"ock formula, (cf. (\ref{wWBheat})),
\begin{eqnarray}
\label{sqlap0}
&&\int_M\,\widehat{\psi}_{(t,U)} \,\bigtriangleup^2 _{p _t\,(d\omega)}\,\widehat{\psi}_{(t,U)}\,
p^{(\omega)} _t(y , z)\,d\omega(y)\\
\nonumber\\
&&\,=\,
\int_M\,\left(\bigtriangleup _{p _t\,(d\omega)}\,\widehat{\psi}_{(t,U)}  \right)^2\,
p^{(\omega)} _t(y , z)\,d\omega(y)\nonumber\\
\nonumber\\
&&=\,-\,\int_{M}\,\nabla \widehat{\psi}_{(t,U)}\,\cdot \,\nabla \,\left(\bigtriangleup_{p_t(d\omega)} \,\widehat{\psi}_{(t,U)}\right)\,p^{(\omega)} _t(y , z)\,d\omega(y)\nonumber
\nonumber\\
\nonumber\\
&&=\,\int_{M}\,\left[|Hess\,\widehat{\psi}_{(t,U)} |_g^2\right.
\nonumber\\
\nonumber\\
&&\left.\,+\,\left(R_{ik}\,+\,Hess_{ik}\,\left(f-\ln\,p^{(\omega)} _t\right)\right)\,\nabla^i\widehat{\psi}_{(t,U)}\,\nabla^k\widehat{\psi}_{(t,U)} \right]\,p^{(\omega)} _t(y , z)\,d\omega(y)\nonumber\;,
\end{eqnarray}
which together with (\ref{comoving20}), (for  $U=V$), proves (\ref{LieMonot}). To show that the monotonicity result in (\ref{LieMonot}) is strict, let us observe that the vanishing of $\int_M\,(\bigtriangleup _{p _t\,(d\omega)}\,\widehat{\psi}_{(t,U)})^2\,
p^{(\omega)} _t(y , z)\,d\omega(y)$ would necessarily imply $\bigtriangleup _{p _t\,(d\omega)}\,\widehat{\psi}_{(t,U)}=0$, hence according to proposition \ref{CNprop} and (\ref{carlo12Hnew}) the corresponding vanishing of $U_\perp(z)$, contradicting the stated hypotheses. Finally,  in order to prove (\ref{monotpsi}) let us set for notational convenience
\begin{equation}
\label{Anot}
A(z,t)\,:=\,
\int_M\,\left(\bigtriangleup _{p _t\,(d\omega)}\,\widehat{\psi}_{(t,z,U_\perp )}(y)  \right)^2\,
p^{(\omega)} _t(y , z)\,d\omega(y)\;.
\end{equation}
Since the heat kernel $p^{(\omega)} _t(y , z)$ and the associated weighted Laplacian $\bigtriangleup _{p _t\,(d\omega)}^{(y)}$ are smooth for $t>0$, we compute
\begin{eqnarray}
&&\frac{d }{d t}\,A(z,t)\\
&&=\,2
\int_M\,\left(\bigtriangleup _{p _t\,(d\omega)}\,\widehat{\psi}_{(t,U)}  \right)\frac{\partial }{\partial t}\left(\bigtriangleup _{p _t\,(d\omega)}\,\widehat{\psi}_{(t,U)} \right)\,
p^{(\omega)} _t(y , z)\,d\omega(y)\nonumber\\
\nonumber\\
&&+\,\int_M\,\left(\bigtriangleup _{p _t\,(d\omega)}\,\widehat{\psi}_{(t,U)}\right)^2\frac{\partial }{\partial t}
p^{(\omega)} _t(y , z)\,d\omega(y)\nonumber\\
\nonumber\\
&& \underset{(\ref{zlimit3})}{=}\,2\,
\int_M\,\left(\bigtriangleup _{p _t\,(d\omega)}\,\widehat{\psi}_{(t,U)}  \right)\left[\bigtriangleup _{p _t\,(d\omega)}\left(\bigtriangleup _{p _t\,(d\omega)}\,\widehat{\psi}_{(t,U)}  \right) \right.\nonumber\\
\nonumber\\
&&\left.+\,\nabla \ln{p _t\,(d\omega)}\cdot \nabla \left(\bigtriangleup _{p _t\,(d\omega)}\,\widehat{\psi}_{(t,U)}  \right)\right]\,
p^{(\omega)} _t(y , z)\,d\omega(y)\nonumber\\
\nonumber\\
&& \underset{(\ref{heatfloweight})}{+}\,\int_M\,\left(\bigtriangleup _{p _t\,(d\omega)}\,\widehat{\psi}_{(t,U)}\right)^2
\bigtriangleup _\omega
p^{(\omega)} _t(y , z)\,d\omega(y)\nonumber\\ 
\nonumber\\
 &&=\, 2\,
\int_M\,\left(\bigtriangleup _{p _t\,(d\omega)}\,\widehat{\psi}_{(t,U)}\right) 
\bigtriangleup _{p _t\,(d\omega)}\left(\bigtriangleup _{p _t\,(d\omega)}\,\widehat{\psi}_{(t,U)}  \right)
p^{(\omega)} _t(y , z)\,d\omega(y)\nonumber\\
\nonumber\\
&&+\,\int_M\,
\nabla{p^{(\omega)} _t(y , z)}\cdot \nabla \left(\bigtriangleup _{p _t\,(d\omega)}\,\widehat{\psi}_{(t,U)}\right)^2\,
d\omega(y)\nonumber
\nonumber\\
&& +\,\int_M\,\left(\bigtriangleup _{p _t\,(d\omega)}\,\widehat{\psi}_{(t,U)}\right)^2
\bigtriangleup _\omega
p^{(\omega)} _t(y , z)\,d\omega(y)\nonumber\\
\nonumber\\
&&=\,-2\,\int_M \left|\nabla^{(y)}\left(\bigtriangleup _{p _t\,(d\omega)}^{(y)}\,\widehat{\psi}_{(t,z,U_\perp )}(y)  \right)\right|^2\,p^{(\omega)} _t(y , z)\, d\omega(y)\;.\nonumber
\end{eqnarray}
where, in the last passage, we have integrated by parts both with respect to the $p^{(\omega)} _t(y , z)\, d\omega(y)$ and the $d\omega(y)$ measures.
\end{proof}
Note that  (\ref{LieMonot}) can be rewritten as
\begin{eqnarray}
\label{}
&&\frac{d }{d t} g_t^{(\omega)}(U_\perp , U_\perp )(z)=\,-\,{\mathcal{L}}_{X_{(t,z)}} g_t^{(\omega)}(U_\perp , U_\perp )\\  
\nonumber\\
&&\,-2\,\int_M\,\left(\bigtriangleup _{p _t\,(d\omega)}\,\widehat{\psi}_{(t,U)}  \right)^2\,
p^{(\omega)} _t(y , z)\,d\omega(y)\;,\nonumber
\end{eqnarray}
which shows that the evolution of $g_t^{(\omega)}(U_\perp , U_\perp )(z)$ results from a balance between 
the $-2\int_M(\bigtriangleup _{p _t\,(d\omega)}\,\widehat{\psi}_{(t,U)})^2p^{(\omega)} _t d\omega$ term,\, which tends to contract the $g_t^{(\omega)}$--norm of vectors, and the  term
\begin{eqnarray}
&&\overline{\mathcal{L}}_{\vee _{(t,z)}} g_t^{(\omega)}(U_\perp , U_\perp )\\
\nonumber\\
&&\,=\,2\,\int_{M}\,Hess\,\ln\,p^{(\omega)} _t(y , z)\,\left(\nabla\,
\widehat{\psi}_{(t,U)},\,\nabla\,\widehat{\psi}_{(t,U)}\right)\,p^{(\omega)} _t(y , z)\,d\omega(y)\;,\nonumber
\end{eqnarray}
 which, as already stressed, computes  $Hess\;{d_g^2(y, z)}$ in the sense of distributions and, by the local convexity of $d_g^2(y, z)$, tends to contrast this contraction.\\
 \\
 It is not difficult to show that the pulled back flow (\ref{LieMonot})
is a gradient flow. Let $\{E_{(a)}(z)\}_{a=1}^n$ denote a basis in $T_zM$, orthonormal with respect to the 
given $(M, g)$, and for any $t>0$ consider the functional
 \\
\begin{eqnarray}
\label{hodefF}
&&\mathcal{F}\left((M,\,g_t^{(\omega)})\right)\\
\nonumber\\
\nonumber\\
&&:=\,\int_{\varphi_{(z,t)}(M)}\,d\omega(z)\,g^{ab}(z)
\int_M\,\widehat{\psi}_{(t,a)}\,\bigtriangleup^2 _{p _t\,(d\omega)}\,\widehat{\psi}_{(t,b)}\,
p^{(\omega)} _t(y , z)\,d\omega(y)\;,\nonumber
\end{eqnarray}
where $\varphi_{(z,t)}$ denotes the curve of diffeomorphisms defined by (\ref{fidiff}).
\begin{rem}
In $\mathcal{F}\left((M,\,g_t^{(\omega)})\right)$ one may consider more natural tracing with respect to $g_t^{(\omega)}(z)$ rather than with respect to $g(z)$. In such a case, the corresponding functional can be identified (modulo the action of the curve of diffeomorphisms $t\mapsto \varphi(t)$) with the time--derivative of the Riemannian volume of $(M,\,g_t^{(\omega)})$. As a consequence, this apparently more general functional,  has a non--trivial $L^2(d\omega)$ gradient only along the scalar variation of $g_t^{(\omega)}$, and does not capture all possible tensorial variations (and deformation) of $g_t^{(\omega)}$, variations which are fully described by the $L^2(d\omega)$--gradient of $\mathcal{F}\left((M,\,g_t^{(\omega)})\right)$. Indeed, we have 
\end{rem}
\begin{lem}
Let $z\longmapsto \tfrac{1}{2}\chi_{ab}(z)\in \otimes ^2_{sym}T_z^*M$ be a smooth symmetric bilinear form on $(M,\,g_t^{(\omega)})$ thought of as acting fiberwise as a tangent bundle endomorphism. For $0\,<\,\epsilon\,<\,1$ sufficiently small, let us consider the variation of  basis vectors $\{E_{(a)}(z)\} \in T_zM$ defined by 
$E_{(a)} \longmapsto E_{(b)} ^{(\epsilon )}:=\,(\delta^a_b+\tfrac{\epsilon}{2}\,\chi^a_b (z) )E_{(a)}$. If we let $\widehat{\psi}^{(\epsilon)}_{(t,b)}$ denote the induced variation in the 
potentials $\widehat{\psi}_{(t,b)}:=\widehat{\psi}_{(t, E_{(b)})}$, then for any $t>0$
\begin{equation}
\label{epsi}
\widehat{\psi}^{(\epsilon)}_{(t,b)}(y)\,=\,(\delta^a_b+\tfrac{\epsilon}{2}\,\chi^a_b (z) )\,
\widehat{\psi}_{(t,a)}(y)\;, 
\end{equation}
and  the corresponding linearization of 
the functional $\mathcal{F}\left((M,\,g_t^{(\omega)})\right)$ in the direction of the variation defined by the bilinear form $\chi$ is provided by 
\\
\begin{eqnarray}
\label{Df}
&&D\,\mathcal{F}\left((M,\,g_t^{(\omega)})\right)\circ \chi\nonumber\\
\\
&&\,=\,\int_{\varphi_{(z,t)}(M)}\,\left(
\int_M\,\widehat{\psi}_{(t,a)}\,\bigtriangleup^2 _{p _t\,(d\omega)}\,\widehat{\psi}_{(t,b)}\,
p^{(\omega)} _t(y , z)\,d\omega(y)\right)\,\chi^{ab}(z)\,d\omega(z)\;.\nonumber
\end{eqnarray} 
\end{lem}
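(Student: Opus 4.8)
The plan is to prove both assertions --- the explicit form (\ref{epsi}) of the varied potentials and the linearization formula (\ref{Df}) --- by exploiting the linearity in the direction vector that is already built into the defining elliptic PDE (\ref{carlo12H}). First I would establish (\ref{epsi}). The potential $\widehat{\psi}_{(t,b)}:=\widehat{\psi}_{(t,z,E_{(b)})}$ is, by Proposition \ref{CNprop}, the unique zero-$d\omega$-mean solution of
\begin{equation}
{div}_{\omega}^{(y)}\,\left(p^{(\omega)} _t(y , z)\,\nabla^{(y)}\,\widehat{\psi}_{(t,b)}(y) \right)\,
=\,-\,E_{(b)}(z)\cdot \,\nabla^{(z)}\,p^{(\omega)} _t(y , z)\;.
\label{pfpde}
\end{equation}
Because the right-hand side depends linearly on the direction vector $E_{(b)}(z)$ and the left-hand side is linear in $\widehat{\psi}_{(t,b)}$, the solution map $U_\perp\mapsto\widehat{\psi}_{(t,z,U_\perp)}$ is linear (a fact already recorded in the proof of Proposition \ref{CarlNicProp}). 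Replacing $E_{(b)}$ by $E_{(b)}^{(\epsilon)}=(\delta^a_b+\tfrac{\epsilon}{2}\chi^a_b(z))E_{(a)}$ and using linearity gives $\widehat{\psi}^{(\epsilon)}_{(t,b)}=(\delta^a_b+\tfrac{\epsilon}{2}\chi^a_b(z))\widehat{\psi}_{(t,a)}$ directly, which is exactly (\ref{epsi}). Here I would note that the coefficients $\chi^a_b(z)$ are constants with respect to the integration variable $y$, so they pass freely through the operators $\nabla^{(y)}$, ${div}^{(y)}_\omega$ and $\bigtriangleup_{p_t(d\omega)}$.

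Next I would compute the linearization (\ref{Df}). Substituting $E_{(a)}\mapsto E_{(a)}^{(\epsilon)}$ into the definition (\ref{hodefF}) of $\mathcal{F}$, the factor $g^{ab}(z)$ is replaced by the varied inverse-metric contraction coming from the perturbed frame, while the potentials $\widehat{\psi}_{(t,a)}$ are replaced by $\widehat{\psi}^{(\epsilon)}_{(t,a)}$ as in (\ref{epsi}). The integrand
\begin{equation}
\mathcal{K}_{ab}(z,t):=\int_M\,\widehat{\psi}_{(t,a)}\,\bigtriangleup^2 _{p _t\,(d\omega)}\,\widehat{\psi}_{(t,b)}\,p^{(\omega)} _t(y , z)\,d\omega(y)
\label{pfkernel}
\end{equation}
is a symmetric bilinear form in the pair of indices $(a,b)$, the symmetry being a consequence of the integration-by-parts identity (\ref{sqlap0biss}) established in the proof of Proposition \ref{decrprop}, which shows $\mathcal{K}_{ab}=\int_M Hess\,\widehat{\psi}_{(t,a)}\cdot Hess\,\widehat{\psi}_{(t,b)}+(R_{ik}+Hess_{ik}(f-\ln p^{(\omega)}_t))\nabla^i\widehat{\psi}_{(t,a)}\nabla^k\widehat{\psi}_{(t,b)}$ up to the heat-kernel weight. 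Expanding $\mathcal{F}$ to first order in $\epsilon$, the two frame factors each contribute a term $\tfrac{\epsilon}{2}\chi$, and using the symmetry of $\mathcal{K}_{ab}$ the two first-order contributions coincide and add to produce $\epsilon\,\chi^{ab}(z)\mathcal{K}_{ab}(z,t)$ (with indices raised by $g$). Differentiating at $\epsilon=0$ then yields
\begin{equation}
D\,\mathcal{F}\circ\chi=\int_{\varphi_{(z,t)}(M)}\mathcal{K}_{ab}(z,t)\,\chi^{ab}(z)\,d\omega(z)\;,
\label{pfresult}
\end{equation}
which is precisely (\ref{Df}).

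The main obstacle I anticipate is the careful bookkeeping of what is being varied. There are two logically independent sources of $\epsilon$-dependence: the explicit frame factors $g^{ab}$ contracted in (\ref{hodefF}), and the implicit dependence of the potentials $\widehat{\psi}_{(t,a)}$ on the frame through the PDE (\ref{pfpde}). The subtle point is that, because (\ref{epsi}) carries the \emph{same} coefficient matrix $(\delta^a_b+\tfrac{\epsilon}{2}\chi^a_b)$ as the frame variation, both sources contribute at the same order, and one must verify that they combine correctly rather than double-count or cancel. Establishing the index symmetry of $\mathcal{K}_{ab}$ via (\ref{sqlap0biss}) is what makes the two first-order contributions merge into the single symmetric pairing against $\chi^{ab}$; without that symmetry one would only obtain the symmetrized part of $\chi$, which is harmless since $\chi$ is assumed symmetric, but the cleanest route is to invoke (\ref{sqlap0biss}) directly. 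I would also make explicit that $\chi^a_b(z)$ is independent of $y$ so that the smoothness of $p^{(\omega)}_t$ for $t>0$ justifies differentiating under the integral sign and commuting the variation with $\bigtriangleup^2_{p_t(d\omega)}$; the curve of diffeomorphisms $\varphi_{(z,t)}$ plays no role in the linearization since it is held fixed as the outer domain of integration.
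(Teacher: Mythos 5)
Your treatment of (\ref{epsi}) via the linearity of the defining elliptic problem (\ref{carlo12H}) is exactly the paper's argument, and your first-order expansion yielding (\ref{Df}) --- pulling the two coefficient matrices $(\delta^a_b+\tfrac{\epsilon}{2}\chi^a_b(z))$ out of the two potentials, observing that they are constant in $y$ and hence commute with $\nabla^{(y)}$, ${div}^{(y)}_\omega$ and $\bigtriangleup^2_{p_t(d\omega)}$, and combining the two first-order terms against the symmetric kernel $\mathcal{K}_{ab}$ --- reproduces the paper's computation (\ref{easycomp}) essentially verbatim; the appeal to (\ref{sqlap0biss}) for the symmetry of $\mathcal{K}_{ab}$ is sound (and, as you note, even dispensable since $\chi$ is symmetric by hypothesis). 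The only step of the paper's proof you skip is the intermediate computation (\ref{metrvart}) showing $g_t^{(\omega,\epsilon)}=g_t^{(\omega)}+\epsilon\,\chi+O(\epsilon^2)$, which is what licenses reading $\chi$ as the generic metric variation and feeds into Theorem \ref{thgradflow}; this omission is harmless for the lemma as stated.

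There is, however, one misstatement in your closing paragraph that you should correct, because taken at face value it would destroy the result. You claim there are ``two logically independent sources of $\epsilon$-dependence: the explicit frame factors $g^{ab}$ contracted in (\ref{hodefF}), and the implicit dependence of the potentials.'' This is not the convention of the paper: in (\ref{hodefF}) the contraction $g^{ab}(z)$ is taken with respect to the \emph{fixed} background metric $g$ (precisely the point of the Remark preceding the lemma, which explains why one traces with $g$ rather than with $g_t^{(\omega)}$), and it is held fixed under the variation, so the only $\epsilon$-dependence enters through the potentials via (\ref{epsi}). Had you instead varied $g^{ab}$ covariantly as frame components along with the potentials --- $g^{(\epsilon)}_{cd}=g(E^{(\epsilon)}_{(c)},E^{(\epsilon)}_{(d)})=g_{cd}+\epsilon\,\chi_{cd}+O(\epsilon^2)$, hence $(g^{(\epsilon)})^{cd}=g^{cd}-\epsilon\,\chi^{cd}+O(\epsilon^2)$ --- the first-order terms $-\epsilon\,\chi^{cd}\mathcal{K}_{cd}$ and $+\epsilon\,\chi^{cd}\mathcal{K}_{cd}$ would cancel identically, since the full trace is frame-invariant, and you would obtain $D\mathcal{F}\circ\chi=0$. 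So the issue is not, as you put it, something ``one must verify combines correctly rather than double-count or cancel'': it is settled by the convention that $g^{ab}(z)$ is frozen, and your actual expansion (which, fortunately, varies only the two potential factors against the fixed $g^{ab}(z)$) is the correct one and coincides with the paper's.
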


\begin{proof}
By the linearity of the defining elliptic PDE (\ref{carlo12H}) it immediately follows that
the potential $\widehat{\psi}^{(\epsilon)}_{(t,b)}$ associated to the rescaled basis vector
$E_{(b)} ^{(\epsilon )}:=\,(\delta^a_b+\tfrac{\epsilon}{2}\,\chi^a_b (z) )E_{(a)}$ is  provided by (\ref{epsi}). 
Similarly, from the very definition of the metric $g_t^{(\omega)}(z)$, (cf.(\ref{gt}) ), we directly have 

\begin{eqnarray}
\label{}
&&g_t^{(\omega, \epsilon)}\left(E_{(c)}, E_{(d)} \right)\,:=\,g_t^{(\omega)}
\left(E_{(c)}^{(\epsilon)}, E_{(d)}^{(\epsilon)} \right)\\
\nonumber\\
&&\,:=\,\int_{M}\,g_{ik}(y)\,\nabla ^{i}_{(y)}\,\widehat{\psi}^{(\epsilon)}_{(t, c)}\,\nabla ^{k}_{(y)}\,
\widehat{\psi}^{(\epsilon)}_{(t, d)}\,p^{(\omega)} _t(y , z)\,d\omega(y)\nonumber\\
\nonumber\\
&&\,=\,\left(\delta^a_c+\tfrac{\epsilon}{2}\,\chi^a_c (z) \right)
\left(\delta^b_d+\tfrac{\epsilon}{2}\,\chi^b_d (z) \right)\nonumber\\
\nonumber\\
&&\,\times \,\int_{M}\,g_{ik}(y)\,\nabla ^{i}_{(y)}\,\widehat{\psi}_{(t, a)}\,\nabla ^{k}_{(y)}\,
\widehat{\psi}_{(t, b)}\,p^{(\omega)} _t(y , z)\,d\omega(y)\;.\nonumber
\end{eqnarray}
Hence, to leading order in $\epsilon$, we get
\begin{equation}
\label{metrvart}
g_t^{(\omega, \epsilon)}\left(E_{(c)}, E_{(d)} \right)\,=\,
g_t^{(\omega)}\left(E_{(c)}, E_{(d)} \right)\,\,+\,\epsilon \,\chi_{cd}(z)\,+\,O(\epsilon^2)\;,\nonumber
\end{equation}
as expected. In particular, by letting $\chi$ vary in the space of all symmetric bilinear form 
$C^{\infty}(M, \otimes ^2_{sym}T^*M)$ we can interpret $\chi$ as describing the generic metric variation of  $g_t^{(\omega)}$. From

\begin{eqnarray}
\label{easycomp}
&&g^{cd}(z)\,\int_M\,\widehat{\psi}^{(\epsilon)}_{(t,c)}\,\bigtriangleup^2 _{p _t\,(d\omega)}\,
\widehat{\psi}^{(\epsilon)}_{(t,d)}\,
p^{(\omega)} _t(y , z)\,d\omega(y)\,\\
\nonumber\\
&&=\,g^{cd}(z)\,\left(\delta^a_c+\tfrac{\epsilon}{2}\,\chi^a_c (z) \right)
\left(\delta^b_d+\tfrac{\epsilon}{2}\,\chi^b_d (z) \right)\nonumber\\
\nonumber\\
&&\times \,
\int_M\,\widehat{\psi}_{(t,a)}\,\bigtriangleup^2 _{p _t\,(d\omega)}\,
\widehat{\psi}_{(t,b)}\,
p^{(\omega)} _t(y , z)\,d\omega(y)\nonumber\\
\nonumber\\
&&=\,\left(g_{ab}(z)+\epsilon\,\chi_{ab}(z)+\frac{\epsilon^2}{4}\chi_{ad}(z)\chi_b^d(z) \right)\nonumber\\
\nonumber\\
&&\times \,
\int_M\,g^{ac}(y)g^{bd}(y)\,\widehat{\psi}_{(t,c)}\,\bigtriangleup^2 _{p _t\,(d\omega)}\,
\widehat{\psi}_{(t,d)}\,
p^{(\omega)} _t(y , z)\,d\omega(y)\nonumber\;,
\end{eqnarray}
we easily compute
\begin{eqnarray}
&&D\,\mathcal{F}\left((M,\,g_t^{(\omega)})\right)\circ \chi\,:=\,
\left.\frac{d}{d\epsilon}\right|_{\epsilon=0}\,\mathcal{F}\left((M,\,g_t^{(\omega,\epsilon)})\right)\\
\nonumber\\
&&=\,\int_{\varphi_{(z,t)}(M)}\,\left(
\int_M\,g^{ac}g^{bd}\,\widehat{\psi}_{(t,c)}\,\bigtriangleup^2 _{p _t\,(d\omega)}\,\widehat{\psi}_{(t,d)}\,\right.\nonumber\\
\nonumber\\
\nonumber\\
&&\left.\times\,p^{(\omega)} _t(y , z)\,d\omega(y)\right)\,\chi_{ab}(z)\,d\omega(z)\nonumber\;.
\end{eqnarray}
\end{proof}
According to (\ref{metrvart}) the bilinear form $\chi\in C^{\infty}(M, \otimes ^2_{sym}T^*M)$ induces the generic metric variation of  $g_t^{(\omega)}$, hence we have
\begin{thm}
\label{thgradflow}
The heat induced Ricci flow (\ref{LieMonot0})   is the gradient flow of the functional $\mathcal{F}\left((M,\,g_t^{(\omega)})\right)$ with respect to the $L^2(d\omega)$ inner product on the space of metrics $\mathcal{M}et(M)$.  
\end{thm}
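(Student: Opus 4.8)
The plan is to prove the statement by a direct comparison of the two integral expressions already at hand, namely the gauged velocity (\ref{LieMonot0}) of the flow and the first variation (\ref{Df}) of the functional; essentially all the analytic work (the integrated Bochner--Weitzenb\"ock identity (\ref{sqlap0}) and the DeTurck factorization in Proposition \ref{decrprop}) has been carried out, so what remains is bookkeeping. First I would fix the meaning of the $L^2(d\omega)$ gradient on $\mathcal{M}et(M)$: for a smooth functional $\mathcal{G}$ its gradient $\nabla_{L^2(d\omega)}\mathcal{G}$ is the unique symmetric covariant $2$-tensor field determined by
\[
D\mathcal{G}\circ \chi\,=\,\int_M g^{ac}(z)\,g^{bd}(z)\,\bigl(\nabla_{L^2(d\omega)}\mathcal{G}\bigr)_{cd}(z)\,\chi_{ab}(z)\,d\omega(z)
\]
for every $\chi\in C^{\infty}(M,\otimes ^2_{sym}T^*M)$, the pairing using the background metric $g$ in accordance with the tracing convention adopted in the definition (\ref{hodefF}) of $\mathcal{F}$. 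By (\ref{metrvart}) in the preceding Lemma the form $\chi$ realizes the generic first-order metric variation $g_t^{(\omega,\epsilon)}=g_t^{(\omega)}+\epsilon\,\chi+O(\epsilon^2)$, so that all admissible tensorial directions are attained and (\ref{Df}) genuinely computes $D\mathcal{F}\circ\chi$ on the whole of $C^{\infty}(M,\otimes ^2_{sym}T^*M)$.

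Second, I would read off the gradient from (\ref{Df}). Matching its displayed form (equivalently, the index-raised final line of the Lemma's proof, where the kernel carries the factor $g^{ac}g^{bd}$) against the definition above identifies
\[
\left(\nabla_{L^2(d\omega)}\mathcal{F}\right)_{ab}(z)\,=\,\int_M\widehat{\psi}_{(t,a)}\,\bigtriangleup^2 _{p _t\,(d\omega)}\,\widehat{\psi}_{(t,b)}\,p^{(\omega)} _t(y , z)\,d\omega(y)\;,
\]
where $\widehat{\psi}_{(t,a)}:=\widehat{\psi}_{(t,E_{(a)})}$ are the Otto potentials attached to the basis $\{E_{(a)}(z)\}$ of $T_zM$. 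Evaluating the pulled-back evolution (\ref{LieMonot0}) on $U_\perp=E_{(a)}$, $V_\perp=E_{(b)}$ then yields the component-wise identity
\begin{align*}
\left(\varphi^{-1}(t)\right)^*\frac{d}{d t}\,\left(\varphi^* g_t^{(\omega)}\right)_{ab}
&=\,-2\int_M\widehat{\psi}_{(t,a)}\,\bigtriangleup^2 _{p _t\,(d\omega)}\,\widehat{\psi}_{(t,b)}\,p^{(\omega)} _t(y , z)\,d\omega(y)\\
&=\,-2\left(\nabla_{L^2(d\omega)}\mathcal{F}\right)_{ab}(z)\;.
\end{align*}
Hence the DeTurck-gauged heat kernel induced Ricci flow equals $-2\,\nabla_{L^2(d\omega)}\mathcal{F}$, i.e.\ it is the $L^2(d\omega)$ gradient flow of $\mathcal{F}$ up to the positive constant $2$, which may be absorbed into a rescaling of the flow time or into $\tfrac12\mathcal{F}$. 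Positivity of the velocity kernel, guaranteed by (\ref{sqlap0}), is consistent with the strict monotone decrease (\ref{LieMonot}) established in Proposition \ref{decrprop}.

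I expect the main obstacle to be the consistency of the DeTurck gauge rather than any new estimate. One must verify that the variation in the Lemma is performed at fixed diffeomorphism $\varphi(t)$ (and fixed heat source $\delta_z$, heat kernel, and potentials as functions of the frozen metric), so that the resulting gradient is compared with the \emph{convectively} differentiated velocity $\left(\varphi^{-1}(t)\right)^*\tfrac{d}{dt}(\varphi^* g_t^{(\omega)})$ appearing in (\ref{LieMonot0}) and not with the bare $\tfrac{d}{dt}g_t^{(\omega)}$; correspondingly, the integration over $\varphi_{(z,t)}(M)=M$ in (\ref{hodefF}) must be reconciled with the gauge in which the velocity is expressed. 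The second point to keep track of is the index-raising convention: both the gradient (via the pairing) and the flow velocity must use the same background metric $g$, as flagged in the Remark preceding the Lemma, where it is precisely the tracing with $g$ rather than with $g_t^{(\omega)}$ that makes $\mathcal{F}$ sensitive to the full tensorial variation of the metric and thereby lets its $L^2(d\omega)$ gradient recover the entire symmetric-tensor velocity of the flow.
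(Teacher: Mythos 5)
Your proposal is correct and follows essentially the same route as the paper: the theorem is obtained by reading off the $L^2(d\omega)$ gradient from the linearization (\ref{Df}) (which, by (\ref{metrvart}), exhausts all symmetric-tensor variation directions) and matching it, up to the inessential factor $-2$, against the DeTurck-gauged velocity (\ref{LieMonot0}). Your additional bookkeeping on gauge consistency and on tracing with the background metric $g$ agrees with the conventions fixed in (\ref{hodefF}) and the Remark preceding the linearization Lemma.
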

Along the same lines is easy to prove that the flow (\ref{GenRicPer}) is the gradient flow of the functional obtained from 
$\mathcal{F}\left((M,\,g_t^{(\omega)})\right)$ by undoing the action of the diffeomorphism $\varphi_{(t,z)}$, 
\emph{i.e.}
\\
\begin{eqnarray}
\label{perlike}
&&\widehat{\mathcal{F}}\left((M,\,g_t^{(\omega)})\right)\\
\nonumber\\
&&\,:=\,\,\int_{M}\,d\omega(z)\,g^{ab}(z)\,
\int_M\,\widehat{\psi}_{(t,a)}\,\bigtriangleup^2 _{p _t\,(d\omega)}\,\widehat{\psi}_{(t,b)}\,
p^{(\omega)} _t(y , z)\,d\omega(y)\nonumber\\
\nonumber\\
&&\,
+\,\frac{1}{2}\,\int_M\,g^{ab}(z)\,\mathcal{L}_{X_{(t,z)}}\,g_t^{(\omega)}(E_{(a)}, E_{(b)})\,d\omega(z)\;,\nonumber
\end{eqnarray}
(cf. (\ref{hessianlieder}) for the definition of the Lie derivative along $X_{(t,z)}$).
In particular we have
\begin{lem}
\label{Fdecreasing}
\begin{equation}
\frac{d}{dt}\,\widehat{\mathcal{F}}\left((M,\,g_t^{(\omega)})\right)\,=\,
D\,\widehat{\mathcal{F}}\left((M,\,g_t^{(\omega)})\right)\circ \frac{d g_t^{(\omega)}}{dt}\,\leq\,0\;.
\end{equation}
\end{lem}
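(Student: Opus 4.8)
The statement bundles two claims: the chain-rule identity $\frac{d}{dt}\widehat{\mathcal{F}}\left((M,\,g_t^{(\omega)})\right)=D\widehat{\mathcal{F}}\left((M,\,g_t^{(\omega)})\right)\circ\frac{dg_t^{(\omega)}}{dt}$, and the non-positivity of this quantity. I would treat them in that order. For the first, the whole point of the Lie-derivative term added in (\ref{perlike}) is to undo the $t$-dependent reparametrization $\varphi_{(t,z)}$ of (\ref{fidiff}), so that $\widehat{\mathcal{F}}$ becomes a genuine functional on $\mathcal{M}et(M)$ evaluated at $g_t^{(\omega)}$, carrying no residual explicit $t$-dependence at frozen metric. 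Granting this, the total $t$-derivative reduces to the directional derivative of $\widehat{\mathcal{F}}$ along the curve $t\mapsto g_t^{(\omega)}\in\mathcal{M}et(M)$, which is exactly the asserted identity; differentiation under the integral sign is legitimate because $p^{(\omega)}_t(\cdot,\cdot)$, the potentials $\widehat{\psi}_{(t,a)}$, and the weighted operator $\bigtriangleup_{p_t(d\omega)}$ are smooth for $t>0$.

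For the non-positivity I would invoke the variational formula of the preceding Lemma, equation (\ref{Df}), extended to include the Lie-derivative contribution. This identifies the $L^2(d\omega)$-gradient of $\widehat{\mathcal{F}}$: for any symmetric variation $\chi$, the pairing $D\widehat{\mathcal{F}}\circ\chi$ is the $L^2(d\omega)$ inner product of $\chi$ with the tensor $z\mapsto\int_M\widehat{\psi}_{(t,a)}\,\bigtriangleup^2_{p_t(d\omega)}\,\widehat{\psi}_{(t,b)}\,p^{(\omega)}_t\,d\omega(y)$. The gradient-flow property asserted just above the Lemma (the analog of Theorem \ref{thgradflow} for $\widehat{\mathcal{F}}$) states that the flow $\frac{dg_t^{(\omega)}}{dt}$ of (\ref{GenRicPer}) equals the negative $L^2(d\omega)$-gradient of $\widehat{\mathcal{F}}$, up to a positive multiplicative constant $c$. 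Substituting $\chi=\frac{dg_t^{(\omega)}}{dt}$ then yields $D\widehat{\mathcal{F}}\circ\frac{dg_t^{(\omega)}}{dt}=-c\,\|\mathrm{grad}_{L^2(d\omega)}\widehat{\mathcal{F}}\|^2\le 0$.

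The definiteness can in fact be read off concretely, without passing through the abstract gradient. Tracing the comoving-derivative identity (\ref{LieMonot0}) against $\chi=\frac{dg_t^{(\omega)}}{dt}$ and invoking the integrated weighted Bochner--Weitzenb\"ock identity (\ref{sqlap0}), each diagonal contribution collapses to $\int_M\left(\bigtriangleup_{p_t(d\omega)}\widehat{\psi}_{(t,a)}\right)^2 p^{(\omega)}_t\,d\omega(y)\ge 0$, so the whole expression is a $d\omega$-average of sums of squares and is non-positive after the overall sign. Strictness, should one want it as in (\ref{LieMonot}), follows from Proposition \ref{CNprop}, since $\bigtriangleup_{p_t(d\omega)}\widehat{\psi}_{(t,a)}\equiv 0$ would force the generating tangent vector to vanish.

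The step I expect to be the genuine obstacle is the first one: verifying rigorously that $\widehat{\mathcal{F}}$ carries no hidden explicit $t$-dependence, i.e. that adding $\tfrac{1}{2}\int_M g^{ab}\mathcal{L}_{X_{(t,z)}}g_t^{(\omega)}(E_{(a)},E_{(b)})\,d\omega$ exactly compensates the motion of the heat source under $\varphi_{(t,z)}$. This amounts to reproducing, at the level of the functional, the comoving-derivative computation (\ref{comoving20}) together with the $t$-evolution relations (\ref{firstrel}) and (\ref{zlimit3}) for $\bigtriangleup_{p_t(d\omega)}\widehat{\psi}_{(t,a)}$, and then integrating by parts against the weighted measure $p^{(\omega)}_t\,d\omega$ via the Green identity (\ref{green}). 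Once this bookkeeping is in place, both the chain-rule equality and the sign are immediate.
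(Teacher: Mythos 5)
Your overall architecture --- chain rule, then pair the $L^2(d\omega)$-gradient of $\widehat{\mathcal{F}}$ with the flow direction and conclude that the pairing is minus a squared norm --- is the paper's route, and on the chain-rule step the paper is in fact \emph{less} careful than you: it simply asserts $\frac{d}{dt}\widehat{\mathcal{F}}=D\widehat{\mathcal{F}}\circ\frac{d g_t^{(\omega)}}{dt}$, with none of the bookkeeping via (\ref{comoving20}), (\ref{firstrel}), (\ref{zlimit3}) that you flag as the genuine obstacle. The real work in the paper is elsewhere, and this is where your execution has a genuine defect. The kernel of $D\widehat{\mathcal{F}}$ is \emph{not} the tensor $z\mapsto\int_M\widehat{\psi}_{(t,a)}\,\bigtriangleup^2_{p_t(d\omega)}\widehat{\psi}_{(t,b)}\,p^{(\omega)}_t\,d\omega(y)$; that is the gradient of $\mathcal{F}$ (Theorem \ref{thgradflow}), whose associated flow is the pulled-back (\ref{LieMonot0}), not (\ref{GenRicPer}). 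The paper first rewrites $\widehat{\mathcal{F}}$ using the integrated Bochner--Weitzenb\"ock identity (\ref{sqlap0biss}) together with (\ref{hessianlieder}) --- the $Hess\,\ln p^{(\omega)}_t$ contributions cancel between the two terms of (\ref{perlike}) --- so that $\widehat{\mathcal{F}}$ becomes the $d\omega$-trace of the kernel $G_{ab}(z):=\int_M\bigl[Hess\,\widehat{\psi}_{(t,a)}\cdot Hess\,\widehat{\psi}_{(t,b)}+\left(R_{ik}+Hess_{ik}f\right)\nabla^i\widehat{\psi}_{(t,a)}\nabla^k\widehat{\psi}_{(t,b)}\bigr]\,p^{(\omega)}_t\,d\omega(y)$. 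Linearizing as in (\ref{easycomp}) gives $D\widehat{\mathcal{F}}\circ\chi=\int_M\chi_{ab}\,G^{ab}\,d\omega(z)$, and the punchline is that, via Proposition \ref{smRiccT}, equation (\ref{GenRicPer}) says exactly $\frac{d}{dt}(g_t^{(\omega)})_{ab}=-2\,G_{ab}$, whence $D\widehat{\mathcal{F}}\circ\frac{dg_t^{(\omega)}}{dt}=-2\int_M|G|^2_g\,d\omega\le 0$. Note also that your appeal to ``the gradient-flow property asserted just above the Lemma'' is circular relative to the paper's logic: that assertion is substantiated precisely by this computation, so a self-contained proof must derive the kernel identity rather than quote it.

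Your third paragraph, offered as a concrete alternative, does not work as stated. The non-positivity comes from the full contraction $-2\,G_{ab}G^{ab}$, the squared norm of a symmetric tensor, and not from diagonal entries. The collapse of diagonal terms to $\int_M\bigl(\bigtriangleup_{p_t(d\omega)}\widehat{\psi}_{(t,a)}\bigr)^2 p^{(\omega)}_t\,d\omega$ via (\ref{sqlap0}) is valid for the $\bigtriangleup^2$ kernel, i.e.\ the one carrying $Hess_{ik}\left(f-\ln p^{(\omega)}_t\right)$; it is what proves the monotonicity (\ref{LieMonot}) of the \emph{pulled-back} flow in Proposition \ref{decrprop}. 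The diagonal of $G$ (which carries $Hess_{ik}\,f$) is not of that form, and in any case $\chi_{ab}G^{ab}$ with $\chi=-2G$ is not a sum of diagonal contributions, so ``a $d\omega$-average of sums of squares'' is not established by your argument --- it is established by the $-2|G|^2_g$ structure above. In short: your main line (second paragraph) matches the paper once the gradient kernel is correctly identified as $G_{ab}$ rather than the $\bigtriangleup^2$ tensor, while your supplementary sign argument should be discarded.
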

\begin{proof}
By the chain rule we have, for $0\,<\,t\,<\,\epsilon $,
\begin{eqnarray}
&&\frac{d}{dt}\,\widehat{\mathcal{F}}\left((M,\,g_t^{(\omega)})\right)= 
\left.\frac{d}{d\epsilon}\right|_{\epsilon=0}\,\widehat{\mathcal{F}}\left((M,\,g_{t+\epsilon}^{(\omega)})\right)\\
\nonumber\\
&&=\,D\,\widehat{\mathcal{F}}\left((M,\,g_t^{(\omega)})\right)\circ \frac{d g_t^{(\omega)}}{dt}\nonumber\;.
\end{eqnarray}
From (\ref{sqlap0}) and (\ref{hessianlieder})  we get
\\
\begin{eqnarray}
&&\widehat{\mathcal{F}}\left((M,\,g_t^{(\omega)})\right)\\
\noindent
&&=\, \int_M\,d\omega(z)g^{ab}(z)\left( \int_{M}\,\left[Hess\,\widehat{\psi}_{(t,a)}\cdot Hess\,\widehat{\psi}_{(t,ab)}\right.\right.\nonumber\\
\nonumber\\
\nonumber\\
&&\left.\left.\,+\,\nabla^i\widehat{\psi}_{(t,a)}\,\left(R_{ik}\,+\,Hess_{ik}\,f\right)\,\nabla^k\widehat{\psi}_{(t,b)} \right]\,p^{(\omega)} _t(y , z)\,d\omega(y)\right)\nonumber\;,
\end{eqnarray}
whose linearization in the direction $\chi:=\tfrac{d}{dt}\,g_t^{(\omega)}$ can be easily computed, (as in (\ref{easycomp})), to be
\begin{eqnarray}
&&D\,\widehat{\mathcal{F}}\left((M,\,g_t^{(\omega)})\right)\circ \frac{d g_t^{(\omega)}}{dt}\\
\noindent
&&=\, \int_M\,d\omega(z)\frac{d}{dt}(g_t^{(\omega)})_{ab}(z)\left( \int_{M}\,g^{ac}g^{bd}\left[Hess\,\widehat{\psi}_{(t,c)}\cdot Hess\,\widehat{\psi}_{(t,d)}\right.\right.\nonumber\\
\nonumber\\
\nonumber\\
&&\left.\left.\,+\,\nabla^i\widehat{\psi}_{(t,c)}\,\left(R_{ik}\,+\,Hess_{ik}\,f\right)\,\nabla^k\widehat{\psi}_{(t,d)} \right]\,p^{(\omega)} _t(y , z)\,d\omega(y)\right)\nonumber\;.
\end{eqnarray}
The lemma follows by inserting the expression (\ref{GenRicPer}) for the  heat hernel induced Ricci flow.  
\end{proof}
The geometrical meaning of  $\widehat{\mathcal{F}}\left((M,\,g_t^{(\omega)})\right)$ is quite natural since we get 
\begin{prop}
\label{redF}
For $t>0$ the functional $\widehat{\mathcal{F}}\left((M,\,g_t^{(\omega)})\right)$ is a deformation of the Perelman $\mathcal{F}$--energy functional and  
\begin{eqnarray}
\lim_{t\searrow 0}\,\widehat{\mathcal{F}}\left((M,\,g_t^{(\omega)})\right)\,=\,\int_{M}\,\left(R(g)\,+\left|\nabla f\right|_g^2\right)\,e^{-f(z)}\,d\mu_g(z)\;, 
\end{eqnarray}
in the weak sense.
\end{prop}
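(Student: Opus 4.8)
The plan is to reduce $\widehat{\mathcal F}\left((M,\,g_t^{(\omega)})\right)$ to a weighted trace of the beta function $\tfrac{d}{dt}g_t^{(\omega)}$ and then invoke the tangency result of Proposition \ref{tangbetaflow}. First I would record the representation already obtained inside the proof of Lemma \ref{Fdecreasing}: combining the two terms in the definition (\ref{perlike}) — using the integrated Bochner--Weitzenb\"ock identity (\ref{sqlap0}) to rewrite the $\bigtriangleup^2_{p_t(d\omega)}$ term and (\ref{hessianlieder}) for the Lie derivative along $X_{(t,z)}$ — the two contributions carrying $Hess_{ik}\ln p^{(\omega)}_t$ cancel, leaving the manifestly $\ln p_t$--free expression
\begin{equation*}
\widehat{\mathcal F}\left((M,\,g_t^{(\omega)})\right)=\int_M d\omega(z)\,g^{ab}(z)\int_M\Bigl[Hess\,\widehat{\psi}_{(t,a)}\cdot Hess\,\widehat{\psi}_{(t,b)}+\nabla^i\widehat{\psi}_{(t,a)}\,(R_{ik}+Hess_{ik}f)\,\nabla^k\widehat{\psi}_{(t,b)}\Bigr]\,p^{(\omega)}_t(y,z)\,d\omega(y).
\end{equation*}
Comparing the inner integral with the expression (\ref{genbetafunct}) for the beta function and choosing $\{E_{(a)}(z)\}$ to be $g$--orthonormal, so that $g^{ab}=\delta^{ab}$, this is exactly $-\tfrac12\int_M g^{ab}(z)\,\tfrac{d}{dt}g_t^{(\omega)}(E_{(a)},E_{(b)})(z)\,d\omega(z)$; in words, $\widehat{\mathcal F}$ is $-\tfrac12$ times the $d\omega$--average of the $g$--trace of $\tfrac{d}{dt}g_t^{(\omega)}$.

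Next I would pass to the limit $t\searrow 0$. By Proposition \ref{tangbetaflow}, specifically (\ref{carlobasic20}), for almost every $z\in M$ and every $U_\perp\in T_zM$ one has $\lim_{t\searrow 0}\tfrac{d}{dt}g_t^{(\omega)}(U_\perp,U_\perp)=-2\,[Ric_g(U_\perp,U_\perp)+Hess\,f(U_\perp,U_\perp)]$. Tracing over the orthonormal frame $\{E_{(a)}\}$ gives $\sum_a\lim_{t\searrow 0}\tfrac{d}{dt}g_t^{(\omega)}(E_{(a)},E_{(a)})=-2\,[R(g)+\bigtriangleup_g f]$ for a.e. $z$, whence, in the weak sense of (\ref{L2def}),
\begin{equation*}
\lim_{t\searrow 0}\widehat{\mathcal F}\left((M,\,g_t^{(\omega)})\right)=\int_M\bigl(R(g)+\bigtriangleup_g f\bigr)\,d\omega.
\end{equation*}
Finally, using the identification $\tfrac{d\omega}{d\mu_g}=e^{-f}$ of Definition (\ref{geodildef}) and integrating by parts on the closed manifold $M$, one has $\int_M\bigtriangleup_g f\,e^{-f}\,d\mu_g=-\int_M\nabla f\cdot\nabla(e^{-f})\,d\mu_g=\int_M|\nabla f|_g^2\,e^{-f}\,d\mu_g$, so the $\bigtriangleup_g f$ term turns into $|\nabla f|_g^2$ and the right-hand side becomes $\int_M(R(g)+|\nabla f|_g^2)\,e^{-f}\,d\mu_g$, which is precisely Perelman's $\mathcal F$--energy. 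This both identifies the limit and exhibits $\widehat{\mathcal F}$ as a genuine $t$--deformation of it.

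The main obstacle is what forces the \emph{weak sense} qualifier, and it is twofold. The deeper part — the pointwise vanishing $\int_M|Hess\,\widehat{\psi}_{(t,a)}|_g^2\,p^{(\omega)}_t\,d\omega\to 0$ as $t\searrow 0$, which is what makes the $t=0$ beta function collapse to $-2(Ric_g+Hess\,f)$ — is already absorbed into Proposition \ref{tangbetaflow}, itself resting on the estimates of \cite{Carlo} (Lemma 4.4, Prop. 4.5, Th. 4.6) adapted to the weighted kernel $p^{(\omega)}_t$; I would simply cite it rather than redo the \emph{tour de force}. The genuine remaining work is the interchange of $\lim_{t\searrow 0}$ with $\int_M\,d\omega(z)$: Proposition \ref{tangbetaflow} supplies only almost-everywhere convergence of the traced beta function, while the $\ln p^{(\omega)}_t$--Hessian develops a distributional singular part on the cut loci (cf. the $Hess\,\ln p^{(\omega)}_t\to Hess\,d_g^2$ remark following Lemma \ref{Lagr}). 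Consequently the convergence of $\widehat{\mathcal F}$ must be asserted as weak convergence of the associated measures, and one must check that no mass concentrates on $\cup_k\,\rm{Cut}(\phi_{(k)})$ in the limit; this is the one place where a uniform-integrability estimate, or the viscosity interpretation of the eikonal equation used in the normal-coordinate construction preceding Proposition \ref{Riemcurv}, would have to be brought in to make the passage rigorous.
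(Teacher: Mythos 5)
Your proof is correct and follows essentially the same route as the paper: the same cancellation of the $Hess_{ik}\,\ln p^{(\omega)}_t$ contributions via (\ref{sqlap0}) and (\ref{hessianlieder}), the same appeal to Proposition \ref{tangbetaflow} for the $t\searrow 0$ limit of the traced integrand, and the same final computation (the paper routes the integration by parts through $\triangle_g f=\triangle_\omega f+|\nabla f|^2_g$ together with $\int_M\triangle_\omega f\,d\omega=0$, which is identical to your direct integration by parts against $e^{-f}d\mu_g$). Your packaging of $\widehat{\mathcal{F}}$ as $-\tfrac12$ the $d\omega$--averaged $g$--trace of the beta function is a cosmetic reformulation of the paper's displayed identity, and the limit--integral interchange you flag is precisely the step the paper itself treats loosely via its cutoff localization, its $O(t^{3/2})$ asymptotics, and its self-admitted ``abusing notation'' before invoking heat kernel symmetry.
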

\begin{proof}
Let us assume that the measure $d\omega$ is localized in a ball $B(r, z):= \{q\in M|\,d_g(q,z)\leq \tfrac{1}{3}\,inj\,(M,g)\}$, (otherwise introduce in the integrals defining $\widehat{\mathcal{F}}$  a smooth radial cutoff function $\zeta (d_g(q,z)):=1$ for $d_g(q,z)\leq \tfrac{1}{4}\,inj\,(M,g)$, and $\zeta (d_g(q,z)):=0$ for  $d_g(q,z)\geq \tfrac{1}{2}\,inj\,(M,g)$). We let  $\{E_{(a)}(z)\}_{a=1}^n$ denote a basis in $T_zM$, orthonormal with respect to $(M, g)$, and let 
$[0,1]\ni s\mapsto \gamma_{(a)}$, $\gamma_{(a)}(0)\equiv z$,\, $\gamma'_{(a)}(0)\equiv E_{(a)}(z)$  denote the corresponding  geodesics in $(M,g)$. Again, from (\ref{sqlap0}) and (\ref{hessianlieder})  we compute
\\
\begin{eqnarray}
&&\sum_{a=1}^n \int_M\left(\bigtriangleup _{p _t\,(d\omega)}\widehat{\psi}_{(t,z,E_{(a)})}  \right)^2
p^{(\omega)} _t(y , z)\,d\omega(y)\\
\nonumber\\
&&\,+\,\frac{1}{2}\sum_{a=1}^n\,\mathcal{L}_{X_{(t,z)}}\,g_t^{(\omega)}(E_{(a)}, E_{(a)})\nonumber\\
\nonumber\\
&&=\, \sum_{a=1}^n\left( \int_{M}\,\left[|Hess\,\widehat{\psi}_{(t,a)} |_g^2\right.\right.\nonumber\\
\nonumber\\
\nonumber\\
&&\left.\left.\,+\,\left(R_{ik}\,+\,Hess_{ik}\,f\right)\,\nabla^i\widehat{\psi}_{(t,a)}\,\nabla^k\widehat{\psi}_{(t,a)} \right]\,p^{(\omega)} _t(y , z)\,d\omega(y)\right)\nonumber\;.
\end{eqnarray}
According to Proposition \ref{tangbetaflow}, we 
have $|Hess\,\widehat{\psi}_{(t,a)} |_g^2\rightarrow 0$ almost everywhere as $t\searrow 0$.  For $0<t$ small enough $p^{(\omega)} _t$ is localized around $z$ and  the trace over the $\{\nabla \widehat{\psi}_{(t,a)}\}$ reduces to the trace over the orthonormal vectors   $\{\gamma'_{(a)}(0)\equiv E_{(a)}(z)\}$. Abusing notation, (\emph{i.e.}, tracing over $\{\gamma'_{(a)}(0)\}$ before taking the $\searrow 0$ limit), we can write 
\\
\begin{eqnarray}
\label{defsumasy}
&&\sum_{a=1}^n\left(\int_M\left(\bigtriangleup _{p _t\,(d\omega)}\widehat{\psi}_{(t,z,E_{(a)})}  \right)^2
p^{(\omega)} _t(y , z)\,d\omega(y)\right)\\
\nonumber\\
&&\,+\,\frac{1}{2}\sum_{a=1}^n\,\mathcal{L}_{X_{(t,z)}}\,g_t^{(\omega)}(E_{(a)}, E_{(a)})\nonumber\\
\nonumber\\
&&=\,\int_{M}\,\left(R_g(y)\,+\,\triangle _g f\right)\,
p^{(\omega)} _t(y , z)\,d\omega(y)\,+\, O(t^{3/2})\nonumber\\
\nonumber\\
&&=\,\int_{M}\,\left(R_g(y)\,+\,\triangle _\omega\,f\,+\,|\nabla f|^2_g\right)\,
p^{(\omega)} _t(y , z)\,d\omega(y)\,+\, O(t^{3/2})\nonumber\;,
\end{eqnarray}
where $R_g$ and $\triangle _g$ respectively  denote the scalar curvature and the Laplace--Beltrami laplacian on $(M,g)$, where we exploited the relation $\triangle _g\,f=\triangle _\omega\,f\,+\,|\nabla f|^2_g$. Hence, we can write
\begin{eqnarray}
&&\widehat{\mathcal{F}}\left((M,\,g_t^{(\omega)})\right)\\
\nonumber\\
&&=\int_{M}
\left( \int_{M}\left(R_g(y)+\triangle _\omega\,f+|\nabla f|^2_g\right)
p^{(\omega)} _t(y , z)d\omega(y)\right)d\omega(z)+O(t^{3/2})\nonumber\;.
\end{eqnarray}
By the symmetry of the heat kernel and letting $t\searrow 0$, we compute
\begin{eqnarray}
&&\lim_{t\searrow 0}\,\widehat{\mathcal{F}}\left((M,\,g_t^{(\omega)})\right)\\
\nonumber\\
&&=\int_{M}
\left(R_g(z)+\triangle _\omega\,f(z)+|\nabla f|^2_g(z)\right)\,d\omega(z)\nonumber\\
\nonumber\\
&&=\int_{M}
\left(R_g(z)+|\nabla f|^2_g(z)\right)\,e^{-\,f(z)}d\mu_g(z)\nonumber\;,
\end{eqnarray}
which is the expression for the Perelman $\mathcal{F}$--energy, (at $t=0$), associated with the function $f$ on $(M,g)$. It follows that $\widehat{\mathcal{F}}\left((M,\,g_t^{(\omega)})\right)$ can be seen as a deformation of Perelman's\, $\mathcal{F}$--energy.
\end{proof}
This is a suitable point at which we should come back to our sponsor, the interplay between Ricci flow and NL$\sigma$M. We do so by observing that it would be appealing to associate $\widehat{\mathcal{F}}\left((M,\,g_t^{(\omega)})\right)$, or a variant thereof, to Zamolodchikov's  $c$--theorem \cite{Zamo}. Recall that this result, (established for two dimensional quantum field theories, but rather conjectural for higher dimensional theories\footnote{There have been recent indications \cite{komar} to a proof in dimension $4$.}), concerns the existence of a functional  of the coupling constants of the theory, (\emph{i.e.}, $(M,g,\,d\omega)$ in the dilatonic NL$\sigma$M case), which is non--increasing along the RG flow and stationary at the fixed points of the flow, where it takes the value of the central charge $c$ of the conformal field theory described by the fixed point of the RG action. The relation between Ricci flow and the renormalization group for the non--linear $\sigma$ model has suggested \cite{Guenther},  \cite{Oliynyk}, \cite{Tseytlin}  that
Perelman's\, $\mathcal{F}$--energy may be a natural candidate for such a functional. As natural as it appears, this identification is rather delicate since it involves a strong extrapolation of the underlying perturbative regime governing the relation between Ricci and RG flow. If we factor out the unphysical Perelman type constraint, fixing the dilatonic measure $d\omega$ along the Ricci--Perelman flow, the functional $\mathcal{F}$ appears in  NL$\sigma$M theory  as a \emph{spacetime} action generating the (conformally invariant) fixed points of the  $1$--loop RG flow. It is clear that the perturbative nature of this characterization makes difficult if not impossible to prove that $\mathcal{F}$ plays indeed the role of a full $c$--functional, (see however \cite{Tseytlin}). In this connection, the RG avatar defined by the heat kernel embedding has an obvious advantage since, as we have shown in Propositions \ref{defharmenerg}  and \ref{devbehavior} , we can associate to it the non--perturbative effective action $\mathcal{E}[\widehat{\Psi}_{t,\phi_{cm}}]$ defined by the deformed harmonic map functional. This suggests the following characterization. Let  
$\{\phi_{(j)}\}_{j=1}^q \Sigma\longrightarrow M$ denote the collection of maps fluctuating according to the Gaussian measure $\mathcal{Q}_t[d\phi_{(j)}]$  around the \emph{classical} background provided by their center of mass $\phi_{cm}$, (cf. Section \ref{renGroup}). We can associate with this background the natural modification of  (\ref{perlike}) obtained by localizing (the heat source $z$ dependence in) 
$\widehat{\mathcal{F}}_{\phi_{cm}(\Sigma)}\left((M,\,g_t^{(\omega)})\right)$ to $\phi_{cm}(\Sigma)\subseteq M$, \emph{i.e.}
\\
\begin{eqnarray}
\label{}
&&\widehat{\mathcal{F}}_{\phi_{cm}(\Sigma)}\left((M,\,g_t^{(\omega)})\right)\\
\nonumber\\
&&\,:=\,\,\int_{\phi_{cm}(\Sigma)}\,d\omega(z)\,g^{ab}(z)\,
\int_M\,\widehat{\psi}_{(t,a)}\,\bigtriangleup^2 _{p _t\,(d\omega)}\,\widehat{\psi}_{(t,b)}\,
p^{(\omega)} _t(y , z)\,d\omega(y)\nonumber\\
\nonumber\\
&&\,
+\,\frac{1}{2}\,\int_{\phi_{cm}(\Sigma)}\,g^{ab}(z)\,\mathcal{L}_{X_{(t,z)}}\,g_t^{(\omega)}(E_{(a)}, E_{(b)})\,d\omega(z)\;.\nonumber
\end{eqnarray}
Since  $\mathcal{E}[\widehat{\Psi}_{t,\phi_{cm}}]$ is the large deviation functional associated with the distribution  $\Pi _{j=1}^q\,\mathcal{Q}_t[d\phi_{(j)}]$, the functional $\widehat{\mathcal{F}}_{\phi_{cm}(\Sigma)}$ may be heuristically interpreted as describing the average deformation in $\mathcal{E}[\widehat{\Psi}_{t,\phi_{cm}}]$ induced, along the flow $(M,g)\longmapsto (M,\,g_t^{(\omega)})$, by the fluctuating $\{\phi_{(j)}\}_{j=1}^q$. According to Lemma \ref{Fdecreasing} and  Proposition \ref{redF}, the functional $\widehat{\mathcal{F}}_{\phi_{cm}(\Sigma)}$ is monotonically decreasing along $(t,g)\longmapsto g_t^{(\omega)}$,\,$t\in [0,\infty )$, and reduces to Perelman's 
$\mathcal{F}$--functional, (localized to  $\phi_{cm}(\Sigma)\subseteq M$), as $t\searrow 0$. Its role as a candidate $c$--functional for the heat kernel induced RG action on NL$\sigma$M  rests on the characterization of the nature of the fixed points for the generalized Ricci flow evolution $(t,g)\longmapsto g_t^{(\omega)}$,\,$t\in [0,\infty )$. This is clearly an open and very difficult problem, and perhaps an appropriate point at which to end this long analysis.\\
\\
\noindent In Ricci flow theory, the true added value of Wasserstein geometry and of its connection to optimal transport theory lies in the observation that the properties of $d_{\,g}^{\,W}$ are deeply related to the Ricci curvature  of the underlying metric measure space $(M,g,\,d\omega)$, a fact  that has been independently noticed and exploited by various authors  \cite{fokker, lott1, lott2, topping0}. The deep connections among NL$\sigma$M theory, Ricci flow, and Wasserstein geometry explored here add a further perspective to this rich interplay.

\subsection*{Acknowledgment}
I am particularly grateful to Nicola Gigli, Carlo Mantegazza, and Giuseppe Savar\'e for useful conversations and stimulating ideas in the preliminary stage of preparation of this paper.\\
\\
\noindent
This work has been partially supported by the PRIN Grant  2010JJ4KPA 006  \emph{Geometrical and analytical theories of finite and infinite  dimensional Hamiltonian systems}

\end{document}